\title{Term Graph Representations for Cyclic Lambda-Terms%
       \footnote{This work was started, and in part carried out, 
                 within the framework of the project NWO~project \emph{Realising Optimal Sharing (ROS)}, project number 612.000.935, 
                 under the direction of Vincent von Oostrom and Doaitse Swierstra.}}
\author{Clemens Grabmayer
\institute{Department of Philosophy\\
           Utrecht University\\
           The Netherlands}
\email{clemens@phil.uu.nl}
\and
Jan Rochel 
\institute{Department of Computing Sciences\\
           Utrecht University\\
           The Netherlands}
\email{jan@rochel.info}
}
\newcommand\itemizeprefs{
  \setlength\itemsep{-0.2ex}
  \vspace{-0.4ex}
}
\newcommand{\inMath}[1]{\ensuremath{#1}}
\newcommand\vcentered[1]{\raisebox{-0.5\height}{#1}}
\newcommand\graphs[1]{\par\vspace{2mm}\hfill{#1}\ \vspace{2mm}\par\noindent}
\newcommand\graph[2]{\vcentered{$#1$:~~} \vcentered{\fig{#2}}\hfill}
\newcommand\state{\inMath{s}}
\newcommand\partialTo\rightharpoonup
\newcommand{\tuple}[1]{\langle #1 \rangle}
\newcommand\tuplespace{\hspace*{0.5pt}}
\newcommand\pair[2]{\tuple{#1, \tuplespace #2}}
\newcommand\fig[1]{\includegraphics[scale=0.84]{figs/{{#1}}}}
\newcommand\figsmall[1]{\includegraphics[scale=0.8]{figs/{{#1}}}}
\newcommand{\lambdacalculus}{$\lambda$\nb-calculus}
\newcommand{\lambdaterm}{$\lambda$\nb-term}
\newcommand{\lambdaterms}{\lambdaterm{s}}
\newcommand{\lambdaabstraction}{$\lambda$\nb-ab\-strac\-tion}
\newcommand{\lambdaabstractions}{\lambdaabstraction{s}}
\newcommand{\Sbisimilarity}{$\snlvarsucc$\nb-bi\-si\-mi\-la\-rity}
\newcommand{\Sbisimilar}{$\snlvarsucc$\nb-bi\-si\-mi\-lar}
\let\oldlambda\lambda
\renewcommand\lambda{\inMath\oldlambda}
\let\oldalpha\alpha
\renewcommand\alpha{\inMath\oldalpha}
\let\oldmu\mu
\renewcommand\mu{\inMath\oldmu}
\newcommand{\stxtletrec}{\ensuremath{\text{\normalfont\sf letrec}}}
\newcommand{\sslabs}{\lambda}
\newcommand{\slabs}[1]{\sslabs{#1}}
\newcommand{\labs}[2]{\slabs{#1}.\,{#2}}
\newcommand{\sslapp}{\hspace*{-1pt}@\hspace*{-1pt}}
\newcommand{\slapp}{\hspace*{1.5pt}}
\newcommand{\lapp}[2]{{#1}\slapp{#2}}
\newcommand{\snlvar}{\mathsf{0}}
\newcommand{\snlvarsucc}{\mathsf{S}}
\newcommand{\strailer}{\mathsf{T}}
\newcommand{\ssin}{{\textbf{in}}}
\newcommand{\sletrec}{\textbf{letrec}}
\newcommand{\letrecin}[2]{\sletrec\;{#1}\;\ssin\;{#2}}
\newcommand{\arecvar}{f}
\newcommand{\brecvar}{g}
\def\a#1{\reflectbox{$\m@th#1{\lambda}$}}
\def\adbmal{\inMath{\mathpalette{\a}{}}}
\newcommand{\avar}{x}
\newcommand{\bvar}{y}
\newcommand{\cvar}{z}
\newcommand{\dvar}{u}
\newcommand{\asig}{\Sigma}
\newcommand{\bsig}{\Delta}
\newcommand{\asigacc}{\Sigma'}
\newcommand{\sarity}{{\mit ar}}
\newcommand{\arity}{\funap{\sarity}}
\newcommand{\afunsym}{\mathsf{f}}
\newcommand{\CRS}{${\text{CRS}}$}
\newcommand{\apath}{\pi}
\newcommand{\apathi}{\indap{\pi}}
\newcommand{\apathacc}{\pi'}
\newcommand\unfold\bigtriangledown
\newcommand{\funin}{\mathrel{:}}
\newcommand{\funap}[2]{{#1}({#2})}
\newcommand{\bfunap}[3]{{#1}({#2},\hspace*{0.5pt}{#3})}
\newcommand{\indap}[2]{#1_{#2}}
\newcommand{\sdefdby}{{:=}}
\newcommand{\defdby}{\mathrel{\sdefdby}}
\newcommand{\length}[1]{\left|{#1}\right|}
\newcommand{\nb}{\nobreakdash}
\newcommand{\bs}{\boldsymbol}
\newcommand{\sdom}{\textrm{dom}}
\newcommand{\dom}{\funap{\sdom}}
\newcommand{\simage}{\textrm{im}}
\newcommand{\image}{\funap{\simage}}
\newcommand{\mcdots}{\hspace*{1pt}\cdots\hspace*{2pt}}
\newcommand{\sfuncomp}{\circ}
\newcommand{\scompfuns}[2]{{#1}\mathrel{\sfuncomp}{#2}}
\newcommand{\compfuns}[2]{\funap{\scompfuns{#1}{#2}}}
\newcommand{\srewrelcomp}{{\cdot}}
\newcommand{\scomprewrels}[2]{{#1}\mathrel{\srewrelcomp}{#2}}
\newcommand{\comprewrels}[2]{\mathrel{\scomprewrels{#1}{#2}}}
\newcommand{\sproji}{\indap{\pi}}
\newcommand{\eqcl}[2]{\indap{[{#1}]}{#2}}
\newcommand{\eqclin}[3]{\bpap{[{#1}]}{#2}{#3}}
\newcommand{\existsst}[2]{\exists{#1}.\;{#2}}
\newcommand{\forallst}[2]{\forall{#1}.\;{#2}}
\newcommand{\existsstzero}[1]{\exists{#1}.\;}
\newcommand{\forallstzero}[1]{\forall{#1}.\;}
\newcommand{\emptyword}{\epsilon}
\newcommand{\sstringcon}{\hspace*{1pt}}
\newcommand{\stringcon}[2]{{#1}{\sstringcon}{#2}}
\newcommand{\srestrictfunto}[2]{{#1}{\mid}_{#2}}
\newcommand{\restrictfunto}[2]{\funap{\srestrictfunto}}
\newcommand{\sidfun}{\text{\normalfont id}}
\newcommand{\sidfunon}{\indap{\sidfun}}
\newcommand{\idfunon}[1]{\funap{\sidfunon{#1}}}
\newcommand{\srestrictto}[2]{{#1}\!\mid_{#2}}
\newcommand{\restrictto}[2]{\funap{\srestrictto}}
\newcommand{\slogand}{{\wedge}}
\newcommand{\logand}{\mathrel{\slogand}}
\newcommand{\slogor}{{\vee}}
\newcommand{\logor}{\mathrel{\slogor}}
\newcommand{\ssbinrelcomp}{\cdot}
\newcommand{\sbinrelcomp}[2]{{#1}\mathrel{\ssbinrelcomp}{#2}}
\newcommand{\binrelcomp}[2]{\mathrel{\sbinrelcomp{#1}{#2}}}
\newcommand{\subap}[2]{{#1}_{#2}}
\newcommand{\supap}[2]{{#1}^{#2}}
\newcommand{\bpap}[3]{{#1}_{#2}^{#3}}
\newcommand{\pbap}[3]{{#1}^{#2}_{#3}}
\newcommand{\descsetexpmid}{\mathrel{\vert}}
\newcommand{\descsetexpBigmid}{\mathrel{\Big\vert}}
\newcommand{\descsetexp}[2]{\left\{{#1}\descsetexpmid{#2}\right\}}
\newcommand{\descsetexpBig}[2]{\Bigl\{{#1}\descsetexpBigmid{#2}\Bigr\}}
\newcommand{\setexp}[1]{\left\{{#1}\right\}}
\newcommand{\spowersetof}{\powerset}
\newcommand{\powersetof}{\funap{\spowersetof}}
\newcommand{\factorset}[2]{{#1}/_{#2}}
\newcommand{\aset}{A}
\newcommand{\bset}{B}
\newcommand{\spo}{\le}
\newcommand{\spoi}{\indap{\spo}}
\newcommand{\poi}[1]{\mathrel{\spoi{#1}}}
\newcommand{\txtlub}{l.u.b.}
\newcommand{\txtglb}{g.l.b.}
\newcommand{\slub}{\bigsqcup}
\newcommand{\sglb}{\bigsqcap}
\newcommand{\lub}[1]{{\slub {#1}}}
\newcommand{\glb}[1]{{\sglb {#1}}}
\newcommand{\punc}[1]{\inMath{\hspace*{3pt}{#1}}}
\newcommand{\nats}{\mathbb{N}}
\definecolor{azure}{rgb}{0.94,1.00,1.00}
\definecolor{blue}{rgb}{0,0,0.5}
\definecolor{brown}{rgb}{.75,.25,.25}
\definecolor{cyan}{rgb}{0.25,0.88,0.82}
\definecolor{chocolate}{rgb}{0.82,0.41,0.12}
\definecolor{darkcyan}{rgb}{0.5,0,1}
\definecolor{darkgreen}{rgb}{0,0.39,0}
\definecolor{darkmagenta}{rgb}{0.5,0,0.5}
\definecolor{firebrick}{RGB}{175,25,25}
\definecolor{forestgreen}{rgb}{0.13,0.55,0.13}
\definecolor{lightcyan}{rgb}{0.88,1.00,1.00}
\definecolor{lightpink}{rgb}{1.00,0.71,0.76}
\definecolor{lightyellow}{rgb}{1.00,1.00,0.88}
\definecolor{lightgoldenrod}{rgb}{0.83,0.97,0.51}
\definecolor{lightgoldenrodyellow}{rgb}{0.98,0.98,0.82}
\definecolor{lightskyblue}{rgb}{0.53,0.81,0.98}
\definecolor{moccasin}{rgb}{1.00,0.89,0.71}
\definecolor{magenta}{rgb}{1,0,1}
\definecolor{navyblue}{rgb}{0,0,0.5}
\definecolor{orange}{rgb}{1.0,0.65,0.0}
\definecolor{orangered}{rgb}{1.0,0.27,0.0}
\definecolor{palegreen}{rgb}{0.60,0.98,0.60}
\definecolor{powderblue}{rgb}{0.69,0.88,0.90}
\definecolor{purple}{rgb}{1,0.5,1}
\definecolor{royalblue}{RGB}{65,105,225}
\definecolor{mediumblue}{RGB}{0,0,205}
\definecolor{cornflowerblue}{RGB}{100,149,237}
\definecolor{springgreen}{rgb}{0.0,1.0,0.5}
\definecolor{turquoise}{rgb}{0.25,0.88,0.82}
\definecolor{snow}{rgb}{1.00,0.98,0.98}
\definecolor{tan}{rgb}{0.82,0.71,0.55}
\definecolor{red}{rgb}{1,0,0}
\newcommand{\safun}{f}
\newcommand{\atg}{G}
\newcommand{\atgi}{\indap{\atg}}
\newcommand{\atgacc}{G'}
\newcommand{\atgiso}{\textbf{\textit{G}}}
\newcommand{\succsoford}[2]{({#1}{#2})}
\newcommand{\succsofordin}[3]{({#1}{#2})^{#3}}
\newcommand{\verts}{V}
\newcommand{\vertsof}{\funap{\verts\!}}
\newcommand{\svlab}{\mathit{lab}}
\newcommand{\vlab}{\funap{\svlab}}
\newcommand{\svargs}{\mathit{args}}
\newcommand{\vargs}{\funap{\svargs}}
\newcommand{\sroot}{\mathit{r}}
\newcommand{\vertsacc}{V'}
\newcommand{\svlabacc}{\mathit{lab}'}
\newcommand{\vlabacc}{\funap{\svlabacc}}
\newcommand{\svargsacc}{\mathit{args}'}
\newcommand{\srootacc}{\mathit{r}'}
\newcommand{\vertsi}{\indap{V}}
\newcommand{\vertsiof}[1]{\funap{\vertsi{#1}\!}}
\newcommand{\svlabi}{\indap{\mathit{lab}}}
\newcommand{\vlabi}[1]{\funap{\svlabi{#1}}}
\newcommand{\svargsi}{\indap{\mathit{args}}}
\newcommand{\vargsi}[1]{\funap{\svargsi{#1}}}
\newcommand{\srooti}{\indap{\mathit{r}}}
\newcommand{\avert}{v}
\newcommand{\bvert}{w}
\newcommand{\averti}{\indap{\avert}}
\newcommand{\bverti}{\indap{\bvert}}
\newcommand{\bvertbp}{\bpap{\bvert}}
\newcommand{\avertacc}{v'}
\newcommand{\bvertacc}{w'}
\newcommand{\avertacci}{\indap{\avertacc}}
\newcommand{\bvertacci}{\indap{\bvertacc}}
\newcommand{\siglambda}{\supap{\asig}{\lambda}}
\newcommand{\siglambdai}[1]{\pbap{\asig}{\sslabs}{{#1}}}
\newcommand{\siglambdaij}[2]{\pbap{\asig}{\sslabs}{{#1},{#2}}}
\newcommand{\stgsucc}{{\rightarrowtail}}
\newcommand{\tgsucc}{\mathrel{\stgsucc}}
\newcommand{\stgsucci}{\indap{\rightarrowtail}}
\newcommand{\tgsucci}[1]{\mathrel{\stgsucci{#1}}}
\newcommand{\stgsuccstar}{{\rightarrowtail^*}}
\newcommand{\tgsuccstar}{\mathrel{\stgsuccstar}}
\newcommand{\stgsuccacc}{\stgsucc'}
\newcommand{\tgsuccacc}{\mathrel{\stgsuccacc}}
\newcommand{\stgsuccacci}{\indap{\rightarrowtail'}}
\newcommand{\tgsuccacci}[1]{\mathrel{\stgsuccacci{#1}}}
\newcommand{\stgsuccis}[2]{{{}^{#2}\hspace*{-1.5pt}\stgsucc_{#1}}}
\newcommand{\tgsuccis}[2]{\mathrel{\stgsuccis{#1}{#2}}}
\newcommand{\stgsuccisstar}[2]{(\stgsuccis{#1}{#2})^*}
\newcommand{\stgsuccins}[3]{{{}^{#3}\hspace*{-1.5pt}\stgsucc_{#1}^{#2}}}
\newcommand{\snodels}{{\#\text{\normalfont del}}}
\newcommand{\nodels}{\bfunap{\snodels}}
\newcommand{\spathto}{\rightarrowtail\hspace*{-5pt}^*}
\newcommand{\pathto}{\mathrel{\spathto}}
\newcommand{\sahom}{h}
\newcommand{\ahom}{\funap{\sahom}}
\newcommand{\sahomext}{\bar{\sahom}}
\newcommand{\ahomext}{\funap{\sahomext}}
\newcommand{\sahomextext}{\bar{\sahomext}}
\newcommand{\sbhom}{g}
\newcommand{\saiso}{i}
\newcommand{\saohom}{H}
\newcommand{\sbohom}{I}
\newcommand{\aohom}{\funap{\saohom}}
\newcommand{\bohom}{\funap{\sbohom}}
\newcommand{\siso}{{\sim}}
\newcommand{\iso}{\mathrel{\sim}}
\newcommand{\sisoi}{\indap{\siso}}
\newcommand{\isoi}[1]{\mathrel{\sisoi{#1}}}
\newcommand{\sbisim}{%
    \setbox0=\hbox{\kern-.1ex{$\leftrightarrow$}\kern-.1ex}
    \setbox1=\vbox{\hbox{\raise .1ex \box0}\hrule}%
    \inMath{\mathrel{\hbox{\kern.1ex\box1\kern.1ex}}}
  }
\newcommand{\bisim}{\mathrel{\sbisim}}
\newcommand{\bisimi}[1]{\mathrel{\sbisim_{#1}}}
\newcommand{\sbisims}{\supap{\sbisim}}
\newcommand{\bisims}[1]{\mathrel{\sbisims{#1}}}
\newcommand{\sbisimS}{\sbisims{\snlvarsucc}}
\newcommand{\bisimS}{\mathrel{\sbisimS}}
\newcommand{\sbisimis}{\bpap{\sbisim}}
\newcommand{\bisimis}[2]{\mathrel{\sbisimis{#1}{#2}}}
\newcommand{\sbisimssubscript}{\supap{\sbisimsubscript}}
\newcommand{\sbisimSsubscript}{\sbisimssubscript{\snlvarsucc}}
\newcommand{\sbisimsubscript}{
    \setbox0=\hbox{\kern-.1ex{$\leftrightarrow$}\kern-.1ex}
    \setbox1=\vbox{\hbox{\raise .1ex \box0}\hrule}%
    \inMath{\mathrel{\hbox{\scalebox{0.75}{\box1}}}}
  }
\newcommand{\sfunbisim}{%
    \setbox0=\hbox{\kern-.1ex{$\rightarrow$}\kern-.1ex}
    \setbox1=\vbox{\hbox{\raise .1ex \box0}\hrule}%
    {\hbox{\kern.1ex\box1\kern.1ex}}
  }
\newcommand{\funbisim}{\mathrel{{\sfunbisim}}}
\newcommand{\sfunbisimi}{\indap{\sfunbisim}}
\newcommand{\funbisimi}[1]{\mathrel{\sfunbisimi{#1}}}
\newcommand{\sfunbisims}{\supap{\sfunbisim}}
\newcommand{\funbisims}[1]{\mathrel{\sfunbisims{#1}}}
\newcommand{\sfunbisimis}{\bpap{\sfunbisim}}
\newcommand{\funbisimis}[2]{\mathrel{\sfunbisimis{#1}{#2}}}
\newcommand{\sfunbisimS}{\sfunbisims{\snlvarsucc}}
\newcommand{\sconvfunbisim}[1][]{%
    \setbox0=\hbox{\kern-.1ex{$\leftarrow$}\kern-.1ex}
    \setbox1=\vbox{\hbox{\raise .1ex \box0}\hrule}%
    \mathrel{\hbox{\kern.1ex\box1\kern.1ex}}
  }
\newcommand{\convfunbisim}{\mathrel{\sconvfunbisim}}
\newcommand{\sconvfunbisimi}{\indap{\sconvfunbisim}}
\newcommand{\convfunbisimi}[1]{\mathrel{\sconvfunbisimi{#1}}}
\newcommand{\sconvfunbisims}{\supap{\sconvfunbisim}}
\newcommand{\convfunbisims}[1]{\mathrel{\sconvfunbisims{#1}}}
\newcommand{\sconvfunbisimis}{\bpap{\sconvfunbisim}}
\newcommand{\convfunbisimis}[2]{\mathrel{\sconvfunbisimis{#1}{#2}}}
\newcommand{\sSfunbisim}{\sfunbisim^{\snlvarsucc}}
\newcommand{\Sfunbisim}{\mathrel{\sSfunbisim}}
\newcommand{\sfunbisimsubscript}{
    \setbox0=\hbox{\kern-.1ex{$\rightarrow$}\kern-.1ex}
    \setbox1=\vbox{\hbox{\raise .1ex \box0}\hrule}%
    \inMath{\mathrel{\hbox{\scalebox{0.75}{\box1}}}}
  }
\newcommand{\abisim}{R}
\newcommand{\scollC}[1]{{{\text{\small\textbar}}\hspace{-0.73ex}\downarrow}_{#1}}
\newcommand{\collC}[1]{\funap{\scollC{#1}}}
\newcommand{\sScope}{\mathit{Sc}}
\newcommand{\Scope}{\funap{\sScope}}
\newcommand{\sScopemin}{\mathit{Sc}^{-}}
\newcommand{\Scopemin}{\funap{\sScopemin}}
\newcommand{\sScopei}{\indap{\mathit{Sc}}}
\newcommand{\Scopei}[1]{\funap{\sScopei{#1}}}
\newcommand{\sbinders}{\mathrm{bds}}
\newcommand{\binders}{\funap{\sbinders}}
\newcommand{\sscopeforgetfullhotgsi}[1]{\sScope\mathit{F}_{#1}^{\sslabs}}
\newcommand{\scopeforgetfullhotgsi}[1]{\funap{\sscopeforgetfullhotgsi{#1}}}
\newcommand{\sabspreforgetfullaphotgsi}[1]{{\sabspre\mathit{F}_{#1}}\hspace*{-0.1em}^{(\sslabs)}}
\newcommand{\abspreforgetfullaphotgsi}[1]{\funap{\sabspreforgetfullaphotgsi{#1}}}
\newcommand{\lambdahotg}{$\lambda$\nb-ho-term-graph}
\newcommand{\lambdahotgs}{\lambdahotg{s}}
\newcommand{\alhotg}{{\cal G}}
\newcommand{\alhotgi}[1]{{\cal G}_{#1}}
\newcommand{\alhotgacc}{{\cal G}'}
\newcommand{\lambdaaphotg}{$\lambda$\nb-ap-ho-term-graph}
\newcommand{\lambdaaphotgs}{\lambdaaphotg{s}}
\newcommand{\alaphotg}{{\cal G}}
\newcommand{\alaphotgi}[1]{{\cal G}_{#1}}
\newcommand{\alaphotgiso}{\bs{\cal G}}
\newcommand{\alaphotgacciso}{\bs{\cal G}^{\bs{\prime}}}
\newcommand{\lambdatg}{$\lambda$\nb-term-graph}
\newcommand{\lambdatgs}{\lambdatg{s}}
\newcommand{\altg}{G}
\newcommand{\altgi}{\indap{\altg}}
\newcommand{\altgacc}{G'}
\newcommand{\altgiso}{\textbf{\textit{G}}}
\newcommand{\altgacciso}{\textbf{\textit{G}}^{\bs{\prime}}}
\newcommand{\seag}{\text{\normalfont eag}}
\newcommand{\sfbl}{\text{\normalfont fbl}}
\newcommand{\eagscope}{eager scope}
\newcommand{\fb}{fully back-linked}
\newcommand{\classlhotgsi}{\pbap{\cal H}{\sslabs}}
\newcommand{\classlaphotgsi}[1]{\subap{\cal H}{#1}{\hspace*{-0.1em}}^{(\sslabs)}}
\newcommand{\classeaglhotgsi}{{}^{\seag}\pbap{\cal H}{\sslabs}}
\newcommand{\classeaglaphotgsi}[1]{{}^{\seag}\subap{\cal H}{#1}{\hspace*{-0.1em}}^{(\sslabs)}}
\newcommand{\classlhotgsisoi}{\pbap{\bs{\cal H}}{\bs{\sslabs}}}
\newcommand{\classlaphotgsisoi}[1]{\subap{\bs{\cal H}}{{#1}}{\hspace*{-0.1em}}^{(\bs{\sslabs})}}
\newcommand{\classeaglaphotgsisoi}[1]{{}^{\seag}\subap{\bs{\cal H}}{{#1}}{\hspace*{-0.1em}}^{(\bs{\sslabs})}}
\newcommand{\classltgsi}[1]{\subap{\cal T}{#1}{\hspace*{-0.2em}}^{(\sslabs)}}
\newcommand{\classltgsij}[2]{\subap{\cal T}{#1,#2}{\hspace*{-0.55em}}^{(\sslabs)}\!}
\newcommand{\classeagltgsij}[2]{{}^{\seag}\subap{\cal T}{#1,#2}{\hspace*{-0.55em}}^{(\sslabs)}\!}
\newcommand{\classfblltgsij}[2]{{}^{\sfbl}\subap{\cal T}{#1,#2}{\hspace*{-0.55em}}^{(\sslabs)}\!}
\newcommand{\classltgsisoij}[2]{\subap{\bs{\cal T}}{\hspace*{-0.2em}{#1},{#2}}{\hspace*{-0.55em}}^{(\bs{\sslabs})}\!}
\newcommand{\classeagltgsisoij}[2]{{}^{\seag}\subap{\bs{\cal T}}{\hspace*{-0.2em}{#1},{#2}}{\hspace*{-0.55em}}^{(\bs{\sslabs})}\!}
\newcommand{\classlltgsij}[2]{\subap{\cal T}{#1,#2}{\hspace*{-0.45em}}^{[\hspace*{-1.25pt}(\sslabs)\hspace*{-1.25pt}]}\!}
\newcommand{\classlltgsisoij}[2]{\subap{\bs{\cal T}}{\hspace*{-0.2em}{#1},{#2}}{\hspace*{-0.55em}}^{[\hspace*{-1.25pt}(\bs{\sslabs})\hspace*{-1.25pt}]}\!}
\newcommand{\classtgssiglambdai}{\subap{\cal T}}
\newcommand{\classtgssiglambdaij}[2]{\subap{\cal T}{{#1},{#2}}}
\newcommand{\slhotgstolaphotgsi}{\indap{A}}
\newcommand{\lhotgstolaphotgsi}[1]{\funap{\slhotgstolaphotgsi{#1}}}
\newcommand{\slaphotgstolhotgsi}{\indap{B}}
\newcommand{\laphotgstolhotgsi}[1]{\funap{\slaphotgstolhotgsi{#1}}}
\newcommand{\slhotgsisotolaphotgsisoi}{\indap{\textit{\textbf{A}}}}
\newcommand{\slaphotgsisotolhotgsisoi}{\indap{\textit{\textbf{B}}}}
\newcommand{\slaphotgstoltgsij}[2]{G_{#1,#2}}
\newcommand{\laphotgstoltgsij}[2]{\funap{\slaphotgstoltgsij{#1}{#2}}}
\newcommand{\sltgstolaphotgsij}[2]{{\cal G}_{#1,#2}}
\newcommand{\ltgstolaphotgsij}[2]{\funap{\sltgstolaphotgsij{#1}{#2}}}
\newcommand{\slaphotgsisotoltgsisoij}[2]{\textit{\textbf{G}}_{#1,#2}}
\newcommand{\laphotgsisotoltgsisoij}[2]{\funap{\slaphotgsisotoltgsisoij{#1}{#2}}}
\newcommand{\sltgsisotolaphotgsisoij}[2]{\bs{\cal G}_{#1,#2}}
\newcommand{\sabspre}{P}
\newcommand{\abspre}{\funap{\sabspre}}
\newcommand{\sabsprei}{\indap{P}}
\newcommand{\absprei}[1]{\funap{\sabsprei{#1}}}
\newcommand{\sabspreacc}{P'}
\newcommand{\abspreacc}{\funap{\sabspreacc}}
\newcommand{\absprefix}{ab\-strac\-tion-pre\-fix}
\newcommand{\apre}{p}
\newcommand{\bpre}{q}
\newcommand{\cpre}{r}
\newcommand{\dpre}{s}
\newcommand{\aprei}{\indap{\apre}}
\newcommand{\bprei}{\indap{\bpre}}
\newcommand{\cprei}{\indap{\cpre}}
\newcommand{\dprei}{\indap{\dpre}}
\newcommand{\apreacc}{p'}
\newcommand{\prele}{\le}
\newcommand{\prelt}{<}
\newcommand{\prege}{\ge}
\newcommand{\aclass}{{\cal K}}
\newcommand{\classtgsminover}{\funap{{\text{\normalfont TG}^{-}}}}
\newcommand{\classtgsover}{\funap{{\text{\normalfont TG}}}}
\tikzstyle{->>>} =
\tikzset{
  funbisim/.style={
    decoration={funbisim, amplitude=0.25ex},
    decorate,
    funbisim options/.style={#1}    
  }}
\tikzset{
  bisim/.style={
    decoration={bisim, amplitude=0.25ex},
    decorate,
    bisim options/.style={#1}    
  }}
\newtheorem{theorem}{Theorem}[section]
\newtheorem{corollary}[theorem]{Corollary}
\newtheorem{lemma}[theorem]{Lemma}
\newtheorem{proposition}[theorem]{Proposition}
\newtheorem{definition}[theorem]{Definition}
\newtheorem{remark}[theorem]{Remark}
\newtheorem{example}[theorem]{Example}
\newcommand{\qedsym}{\inMath{\Box}}
\newcommand{\qed}{\hspace*{\fill}\qedsym}
\newenvironment{proof}{\noindent{\normalfont{\emph{Proof}.}}}{\qed\medskip}
\newenvironment{proofof}[1]{\noindent{\normalfont{\emph{#1}}.}}{\qed\medskip}
\renewcommand{\emph}[1]{{\em #1}}
\begin{document}
\maketitle

\begin{abstract}
We study various representations for cyclic \lambda-terms as higher-order or as first-order term graphs.
We focus on the relation between `$\lambda$-higher-order term graphs'
(\lambdahotg{s}), which are first-order term graphs endowed with a well-behaved scope function, 
and their representations 
as `\lambdatg{s}', which are plain first-order term graphs with scope-delimiter vertices 
that meet certain scoping requirements. 
Specifically we tackle the question: 
Which class of first-order term graphs
admits a faithful embedding of \lambdahotg{s} in the sense that
(i)~the homomorphism-based sharing-order on \lambdahotg{s} is preserved and reflected,
and 
(ii)~the image of the embedding corresponds closely to a natural class (of \lambdatg{s})
     that is closed under homomorphism?

We systematically examine whether a number of classes of \lambdatg{s} have this property,  
and we find a particular class of \lambdatg{s} that satisfies this criterion.
Term graphs of this class are built from application, abstraction, variable, and scope-delimiter vertices,
and have the characteristic feature that the latter two kinds of vertices have back-links to the corresponding abstraction.

This result puts a handle on the concept of subterm sharing for
higher-order term graphs, both theoretically and algorithmically:
We obtain an easily implementable method for obtaining 
the maximally shared form of \lambdahotg{s}.
Also, we open up the possibility to pull back properties 
from first-order term graphs to \lambdahotg{s}.
In fact we prove this for the property 
of the sharing-order successors of a given term graph
to be a complete lattice with respect to the sharing order.

This report extends the paper \cite{grab:roch:2013:TERMGRAPH}
for the workshop TERMGRAPH~2013.
\end{abstract}

\section{Introduction}\label{sec:intro}
%
Cyclic lambda-terms typically represent infinite \lambda-terms. In this report we study term graph
representations of cyclic \lambda-terms and their respective notions of
homomorphism, or functional bisimulation.

The context in which the results presented in this paper play a central role is
our research on subterm sharing as present in terms of languages such as the
\lambdacalculus\ with \stxtletrec\ \cite{peyt:jone:1987,ario:blom:1997}, 
with recursive definitions \cite{ario:klop:1994}, or languages with $\mu$\nb-recursion~\cite{ario:klop:1996},
and our interest in describing maximal sharing in such settings. 
Specifically we want to obtain concepts and methods as follows:
\begin{itemize}\itemizeprefs
\item an efficient test for term equivalence with respect to \alpha-renaming and unfolding;
\item a notion of `maximal subterm sharing' for terms in the respective language;
\item the efficient computation of the maximally shared form of a term;
\item a sharing (pre-)order on unfolding-equivalent terms.
\end{itemize}\vspace*{-1ex}
Now our approach is to split the work into a part that concerns properties
specific to concrete languages, and into a part that deals with aspects that
are common to most of the languages with constructs for expressing subterm sharing. 
To this end we set out to find classes of term graphs 
that facilitate faithful interpretations of terms in such languages as 
(higher-order, and eventually first-order) term graphs,
and that are `well-behaved' in the sense that maximally shared term graphs do always exist.
In this way the task can be divided into two parts:
an investigation of sharing for term graphs with higher-order features (the aim of this paper),
and a study of language-specific aspects of sharing (the aim of a further paper).

Here we study a variety of classes of term graphs for denoting cyclic \lambdaterms,
term graphs with higher-order features and their first-order `implementations'. 
All higher-order term graphs we consider are built from three kinds of vertices,
which symbolize applications, abstractions, and variable occurrences, respectively. 
They also carry features that describe notions of \emph{scope}, 
which are subject to certain conditions that guarantee the meaningfulness of the term graph
(that a \lambdaterm\ is denoted), and in some cases are crucial to define \emph{binding}. 
The first-order implementations do not have these additional features,
but they may contain scope-delimiter vertices. 

\enlargethispage{2ex}
In particular we study the following three kinds (of classes) of term graphs:
\renewcommand{\descriptionlabel}[1]%
      {\hspace{\labelsep}{\emph{#1}}}
\begin{description}\itemizeprefs
\item[$\lambda$-higher-order-term-graphs {\normalfont (Section~\ref{sec:lambdahotgs})}]
  are extensions of first-order term graphs by adding a scope function that assigns a set of vertices, its scope, to every
  abstraction vertex. There are two variants,
  one with and one without an edge (a \emph{back-link}) from each variable occurrence to
  its corresponding abstraction vertex. The class
  with back-links is related to \emph{higher-order term graphs}
  as defined by Blom in \cite{blom:2001}, 
  and in fact is an adaptation of that concept for the purpose of representing \lambdaterms. 
\item[abstraction-prefix based $\lambda$-higher-order-term-graphs {\normalfont (Section~\ref{sec:lambdaaphotgs})}]
  do not have a scope function but assign, to each vertex $\bvert$,
  an abstraction prefix consisting of a word of abstraction vertices 
  that includes those abstractions for which $\bvert$ is in their scope
  (it actually lists all abstractions for which $\bvert$ is in their `extended scope' \cite{grab:roch:2012}). 
  Abstraction prefixes are aggregations of scope information that is relevant for and locally available at individual vertices. 
\item[$\lambda$-term-graphs with scope delimiters {\normalfont (Section~\ref{sec:ltgs})}]
  are plain first-order term graphs intended to represent higher-order term graphs of the two sorts above,
  and in this way stand for \lambdaterms. 
  Instead of relying upon additional features for describing scopes, 
  they use scope-delimiter vertices to signify the end of scopes. 
  Variable occurrences as well as scoping delimiters may
  or may not have back-links to their corresponding abstraction vertices.
\end{description}
%
%
Each of these classes induces a notion of homomorphism
(functional bisimulation) and bisimulation. 
Homomorphisms increase sharing in term graphs, 
and in this way induce a sharing order.
They preserve the unfolding semantics of term graphs%
  \footnote{While this is well-known for first-order term graphs, it can also be proved
   for the higher-order term graphs considered here.},
and therefore are able to preserve \lambdaterms\ that are denoted by term graphs in the unfolding semantics. 
Term graphs from the classes we consider always represent finite or infinite \lambdaterms,
and in this sense are not `meaningless'.
But this is not shown here. Instead, we lean on motivating examples, intuitions,
and the concept of higher-order term graph from \cite{blom:2001}.  

We 
establish a bijective correspondence between the former two classes, and a correspondence between the latter
two classes that is `almost bijective' 
(bijective up to sharing or unsharing of scope delimiter vertices).
All of these correspondences preserve and reflect the sharing order.
Furthermore, we systematically investigate which specific class of \lambdatg{s}
is closed under homomorphism and renders the mentioned correspondences possible.
We prove (in Section~\ref{sec:not:closed}) that this can only hold for a class 
in which both variable-occurrence and scope-delimiter vertices
have back-links to corresponding abstractions, 
and establish (in Section~\ref{sec:closed}) that the subclass containing only 
\lambdatg{s} with eager application of scope-closure satisfies these properties.
For this class the correspondences allow us:
\begin{itemize}\itemizeprefs
\item 
  to transfer properties known for first-order term graphs, such as the existence of a maximally shared form,
  from \lambdatg{s} to the corresponding classes of higher-order \lambdatg{s};
\item 
  to implement maximal sharing for higher-order \lambdatg{s} (with eager scope
  closure) via bisimulation collapse of the corresponding first-order
  \lambdatg{s} (see algorithm in Section~\ref{sec:conclusion}).
\end{itemize}



This report is an extended version of 
the work-in-progress paper \cite{grab:roch:2013:TERMGRAPH} in the 
proceedings \cite{proc:TERMGRAPH:2013} of the workshop TERMGRAPH~2013.
Here we provide detailed proofs of the main results.
Furthermore, Section~\ref{sec:ltgs} has been extended with the concept
of `\lambdatg\ (with scope delimiters) up to \Sbisimilarity' (see Definition~\ref{def:lambdatg:up:to:S:siglambdaij})
that helps to represent \lambdahotgs\ as equivalence classes (with respect to `\Sbisimilarity') of first-order term graphs.
Finally, Section~\ref{sec:transfer} has been added here, in which the complete lattice structure 
of the sets of sharing-order successors of a given term graph with respect to the sharing order 
is transferred from the first-order \lambdatgs\ with scope delimiters
back to the higher-order \lambdahotgs.

\section{Preliminaries}
  \label{sec:prelims}

By $\nats$ we denote the natural numbers including zero.
For words $w$ over an alphabet $A$ we denote the length of $w$ by $\length{w}$.
For a function $\safun \funin A \to B $ 
we denote by $\dom{\safun}$ the domain, and by $\image{\safun}$ the image of $\safun$;
and for $A_0\subseteq A$ we denote by $\srestrictfunto{\safun}{A_0}$ the restriction of $\safun$ to $A_0$. 

Let $\asig$ be a signature with arity function $\sarity \funin \asig \to \nats$.
A \emph{term graph over $\asig$} (or a \emph{$\asig$-term graph})
is a tuple $\tuple{\verts,\svlab,\svargs,\sroot}$ 
where $\verts$ is a set of \emph{vertices},
$\svlab \funin \verts \to \asig$ the \emph{(vertex) label function},
$\svargs \funin \verts \to \verts^*$ the \emph{argument function} 
  that maps every vertex $\avert$ to the word $\vargs{\avert}$ consisting of the $\arity{\vlab{\avert}}$ successor vertices of $\avert$
  (hence it holds that $\length{\vargs{\avert}} = \arity{\vlab{\avert}}$),
and $\sroot$, the \emph{root}, is a vertex in $\verts$.
Note the fact that term graphs may have infinitely many vertices.
We say that such a term graph is \emph{root-connected}
if every vertex is reachable from the root by a path that arises by repeatedly going from a vertex to one of its successors.
We denote by $\classtgsover{\asig}$ and by $\classtgsminover{\asig}$
the class of all root-connected term graphs over $\asig$, and 
the class of all term graphs over $\asig$, respectively. 
In this notation we have already anticipated the meaning in which we will use the expression `term graph' from now on, namely as follows.

\emph{Note:}\label{note:root-connected}
  By a `term graph' we will mean, from now on, always a root-connected term graph,
  except in a few situations in which we explicitly state otherwise, and will
  refer, for example, to a term graph $\atg\in\classtgsminover{\asig}$ (for some signature $\asig$),
  which then does not need to be root-connected.

Let $\atg$ be a term graph over signature $\asig$. 
As useful notation for picking out any vertex or the $i$-th vertex 
from among the ordered successors of a vertex $\avert$ in $\atg$
we define the (not indexed) edge relation ${\stgsucc} \subseteq \verts\times\verts$,
and for each $i\in\nats$ the indexed edge relation \inMath{{\stgsucci{i}} \subseteq {\verts\times\verts}},
between vertices by stipulating that:
\begin{align*}
  \bvert \tgsucci{i} \bvertacc
    \;\;\funin\,&\Longleftrightarrow\;\;
  \existsst{\bverti{0},\ldots,\bverti{n}\in\verts}
           {\,\vargs{\bvert} = \bverti{0}\ldots\bverti{n}
                \;\logand\;
              \bvertacc = \bverti{i}}
  \\
  \bvert \tgsucc \bvertacc 
    \;\;\funin\,& \Longleftrightarrow\;\;
  \existsst{i\in\nats}{\, \bvert \stgsucci{i} \bvertacc} 
\end{align*}
holds for all $\bvert,\bvertacc\in\verts$.
We write 
$\bvert \tgsuccis{i}{f} \bvertacc$
if 
$\bvert \tgsucci{i} \bvertacc 
   \logand
 \vlab{\bvert} = f$
holds for $\bvert,\bvertacc\in\verts$, $i\in\nats$, $f\in\asig$,
to indicate the label at the source of an edge.
A \emph{path} in $\atg$ is a tuple 
$\tuple{\bverti{0},k_1,\bverti{1},k_2,\bverti{2},k_3,\ldots,k_{n-1},\bverti{n-1},k_{n},\bverti{n}}$
where $\bverti{0},\bverti{1},\bverti{2},\ldots,\bverti{n-1},\bverti{n}\in\verts$ and $n,k_1,k_2,k_3,\ldots,k_{n-1},k_{n}\in\nats$
such that 
$\bverti{0} \tgsucci{k_1} \bverti{1} \tgsucci{k_2} \bverti{2} \tgsucci{k_3} \cdots \tgsucci{k_{n}} \bverti{n}$
holds; paths will usually be denoted in the latter form, using indexed edge relations. 
An \emph{access path} of a vertex $\bvert$ of $\atg$ is
a path that starts at the root of $\atg$, ends in $\bvert$, and does not visit any vertex twice. 
Note that every vertex $\bvert$ has at least one access path: 
since every vertex in a term graph is reachable from the root, 
there is a path $\apath$ from $\sroot$ to $\bvert$;
then an access path of $\bvert$ can be obtained from $\apath$ 
by repeatedly cutting out \emph{cycles}, that is,
parts of the path between different visits to one and the same vertex. 

In the sequel, let $\atgi{1} = \tuple{\vertsi{1},\svlabi{1},\svargsi{1},\srooti{1}}$,
  $\atgi{2} = \tuple{\vertsi{2},\svlabi{2},\svargsi{2},\srooti{2}}$
be term graphs over signature $\asig$. 

A \emph{homomorphism}, also called a \emph{functional bisimulation}, 
from $\atgi{1}$ to $\atgi{2}$ 
is a morphism from the structure
$\tuple{\vertsi{1},\svlabi{1},\svargsi{1},\srooti{1}}$
to the structure
$\tuple{\vertsi{2},\svlabi{2},\svargsi{2},\srooti{2}}$,
that is, a function 
$ \sahom \funin \vertsi{1} \to \vertsi{2}$ 
such that, for all $\avert\in\vertsi{1}$ it holds:
\begin{equation}\label{eq:def:homom}
\left.\qquad
\begin{aligned}
  \ahom{\srooti{1}} 
    & = \srooti{2} 
  & & & & & (\text{roots})
  \\
    \vlabi{1}{\avert}
    & = \vlabi{2}{\ahom{\avert}} 
  & & & & & (\text{labels})  
  \\
  \funap{\sahomext}{\vargsi{1}{\avert}}
    & = \vargsi{2}{\ahom{\avert}}
  & & & & & (\text{arguments})
\end{aligned}
\qquad\right\}
\end{equation}
where $\sahomext$ is the homomorphic extension of $\sahom$ to words over $\vertsi{1}$, that is, to the function 
$\sahomext \funin \vertsi{1}^* \to \vertsi{2}^*$, 
$ \averti{1}\mcdots\averti{n}   \mapsto \ahom{\averti{1}}\mcdots\ahom{\averti{n}} $.  
In this case we write $\atgi{1} \funbisimi{\sahom} \atgi{2}$,
or $\atgi{2} \convfunbisimi{\sahom} \atgi{1}$.
And we write
$\atgi{1} \funbisim \atgi{2}$,
or for that matter $\atgi{2} \convfunbisim \atgi{1}$,
if there is a homomorphism (a functional bisimulation) from $\atgi{1}$ to $\atgi{2}$.

An \emph{isomorphism} between $\atgi{1}$ and $\atgi{2}$ is a bijective homomorphism 
$\saiso \funin \vertsi{1} \to \vertsi{2}$ from $\atgi{1}$ to $\atgi{2}$
(it follows from the homomorphism conditions \eqref{eq:def:homom}
 that then also the inverse function $\saiso^{-1} \funin \vertsi{1} \to \vertsi{2}$ is a homomorphism).
In case that there is an isomorphism between $\atgi{1}$ and $\atgi{2}$,
we write $\atgi{1} \iso \atgi{2}$, and say that $\atgi{1}$ and $\atgi{2}$ are \emph{isomorph}. 
The relation $\siso$ is an equivalence relation on $\classtgsover{\asig}$.
For every term graph $\atg$ over $\asig$ we denote the bisimulation equivalence class 
$\eqcl{\atg}{\siso}$ by (using the boldface letter) $\atgiso$. 

Let $\afunsym\in\asig$. 
An \emph{$\afunsym$\nb-homo\-mor\-phism} between $\atgi{1}$ and $\atgi{2}$ 
is a homomorphism $\sahom$ between $\atgi{1}$ and $\atgi{2}$ that identifies, or `shares',
only vertices with the label $\afunsym$, that is,
$\sahom$ has the property that
$\ahom{\bverti{1}} = \ahom{\bverti{2}} 
   \;\Rightarrow\;
 \vlabi{1}{\bverti{1}} = \vlabi{1}{\bverti{2}} = \afunsym$
holds for all $\bverti{1},\bverti{2}\in\vertsi{1}$.
If $\sahom$ is an $\afunsym$\nb-homo\-mor\-phism between $\atgi{1}$ and $\atgi{2}$,
then we write 
$\atgi{1} \funbisimis{\sahom}{\afunsym} \atgi{2}$ or $\atgi{2} \convfunbisimis{\sahom}{\afunsym} \atgi{1}$,
or dropping $\sahom$,
$\atgi{1} \funbisims{\afunsym} \atgi{2}$ or $\atgi{2} \convfunbisims{\afunsym} \atgi{1}$.


A \emph{bisimulation} between  
$\atgi{1}$ and $\atgi{2}$ 
is a term graph 
$ \atg\in\classtgsminover{\asig} $ 
such that
$ \atgi{1} \convfunbisim \atg \funbisim \atgi{2}$
holds
(note that $\atg$ does not have to be root-connected).
We write
$\atgi{1} \bisim \atgi{2}$
if there is a bisimulation between $\atgi{1}$ and $\atgi{2}$.
In this case we say that $\atgi{1}$ and $\atgi{2}$ are \emph{bisimilar}.

The notion of bisimilarity so defined
stays the same when bisimulations between $\atgi{1}$ and $\atgi{2}$ 
are restricted 
to root-connected term graphs,
or to term graphs 
$\atg = \tuple{\abisim,\svlab,\svargs,\sroot} \in\classtgsminover{\asig}$
(again restriction to $\classtgsover{\asig}$ is possible)
with $\abisim \subseteq \vertsi{1}\times\vertsi{2}$ 
 and $\sroot = \pair{\srooti{1}}{\srooti{2}}$
such that
$ \atgi{1} \convfunbisimi{\sproji{1}} \atg \funbisimi{\sproji{2}} \atgi{2}$
where $\sproji{1}$ and $\sproji{2}$ are projection functions, defined, for $i\in\setexp{1,2}$,
by $ \sproji{i} \funin \vertsi{1}\times\vertsi{2} \to \vertsi{i} $,
$\pair{\averti{1}}{\averti{2}} \mapsto \averti{i}$.
The latter restriction is closely related to the following more commonly used formulation of bisimulation.

Alternatively, bisimulations for term graphs can be defined directly 
as relations on the vertex sets, 
obtaining the same notion of bisimilarity.
In this formulation,
a bisimulation between $\atgi{1}$ and $\atgi{2}$ is
a relation $\abisim \subseteq \vertsi{1}\times\vertsi{2}$ such that
the following conditions hold, for all $\pair{\avert}{\avertacc}\in\abisim$:
\begin{center}
$  
\begin{array}{ccc}
  \pair{\srooti{1}}{\srooti{2}} \in \abisim
    & \hspace*{9ex} & (\text{roots})
  \\[0.75ex]
  \vlabi{1}{\avert} = \vlabi{2}{\avertacc} 
    & & (\text{labels})  
  \\[0.75ex]
  \pair{\vargsi{1}{\avert}}{\vargsi{2}{\avertacc}} \in \abisim^*
    & & (\text{arguments})
\end{array}
$
\end{center}
where $\abisim^* \defdby \descsetexp{\pair{\averti{1}\cdots\averti{k}}{\avertacci{1}\cdots\avertacci{k}}}
                                    {\averti{1},\ldots,\averti{k}\in\vertsi{1},\, 
                                     \avertacci{1},\ldots,\avertacci{k}\in\vertsi{2}
                                     \text{ for $k\in\nats$ such that}
                                     \pair{\averti{i}}{\avertacci{i}}\in\abisim
                                     \text{ for all $1\le i\le k$}} 
       \,$.

Let $\afunsym\in\asig$. 
An \emph{$\afunsym$\nb-bi\-si\-mu\-lation} between $\atgi{1}$ and $\atgi{2}$
is a term graph $\atg\in\classtgsminover{\asig}$ such
that
$ \atgi{1} \convfunbisims{\afunsym} \atg \funbisims{\afunsym} \atgi{2}$
holds.
If there is an $\afunsym$\nb-bi\-si\-mu\-lation $\atg$ between $\atgi{1}$ and $\atgi{2}$,
we say that $\atgi{1}$ and $\atgi{2}$ are \emph{$\afunsym$\nb-bi\-si\-mi\-lar},
and write $\atgi{1} \bisims{\afunsym} \atgi{2}$, 
or even $\atgi{1} \bisimis{\atg}{\afunsym} \atgi{2}$ indicating the bisimulation $\atg$. 

Bisimilarity $\sbisim$ is an equivalence relation on the class $\classtgsover{\asig}$ of term graphs over a signature $\asig$.
The homomorphism (functional bisimulation) relation $\sfunbisim$
is a preorder on term graphs over a given signature $\asig$.
It induces a partial order on the isomorphism equivalence classes of term graphs over $\asig$,
where anti-symmetry is implied by item~(\ref{prop:funbisim:anti:symmetric:item:ii}) of the following proposition.

The following proposition is an easy, but useful, reformulation of the definition
of homomorphism. 

\begin{proposition}\label{prop:hom:tgs}
  Let $\atgi{i} = \tuple{\vertsi{i},\svlabi{i},\svargsi{i},\srooti{i}}$,
  for $i\in\setexp{1,2}$ be term graphs over signature $\asig$.
  Let $\sahom \funin \vertsi{1} \to \vertsi{2}$ be a function. 
  Then $\atgi{1} \funbisimi{\sahom} \atgi{2}$ holds,
  that is, $\sahom$ is a homomorphism (functional bisimulation) between $\atgi{1}$ and $\atgi{2}$,
  if and only if, for all $\bvert,\bverti{1}\in\vertsi{1}$, $\bverti{2}\in\vertsi{2}$, and $k\in\nats$, 
  the following four statements hold:
  \begin{equation*}
  \begin{array}{ccc}
    \ahom{\srooti{1}} = \srooti{2}
      & & \text{\normalfont (roots)}
    \\[0.75ex]
      \vlab{\bvert} = \vlab{\ahom{\bvert}}
      & & \text{\normalfont (labels)}
    \\[0.75ex]
    \ahom{\bverti{1}} = \bverti{2}
                \;\,\logand\,\;
    \forallst{\bvertacci{1}\in\vertsi{1}} 
             {\bverti{1} \tgsucci{k} \bvertacci{1} 
                  \;\,\Rightarrow\,\;
              \existsst{\bvertacci{2}\in\vertsi{2}}
                       {\bverti{2} \tgsucci{k} \bvertacci{2}
                          \;\,\logand\,\;
                        \ahom{\bvertacci{1}} = \bvertacci{2}
                        }      
               }
      & & \text{\normalfont (arguments-forward)}
    \\[0.75ex]
    \ahom{\bverti{1}} = \bverti{2}
      \;\,\logand\,\;                
    \forallst{\bvertacci{2}\in\vertsi{2}}
             {\bverti{2} \tgsucci{k} \bvertacci{2} 
                  \;\,\Rightarrow\,\;
              \existsst{\bvertacci{1}\in\vertsi{1}}
                       {\bverti{1} \tgsucci{k} \bvertacci{1}
                          \;\,\logand\,\;
                        \ahom{\bvertacci{1}} = \bvertacci{2}
                        }      
               } 
      & & \text{\normalfont (arguments-backward)}
  \end{array}                       
  \end{equation*}
\end{proposition}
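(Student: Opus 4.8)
The plan is to recognize this as a reformulation obtained by unfolding definitions: the (roots) and (labels) clauses are literally the defining clauses of \eqref{eq:def:homom} (with the subscripts on the label function left implicit), so all the content lies in showing that, for each fixed $\avert \in \vertsi{1}$, the word equation $\ahomext{\vargsi{1}{\avert}} = \vargsi{2}{\ahom{\avert}}$ from \eqref{eq:def:homom} is equivalent to the conjunction of the (arguments-forward) and (arguments-backward) clauses read at $\bverti{1} = \avert$, $\bverti{2} = \ahom{\avert}$. Here the conjunct $\ahom{\bverti{1}} = \bverti{2}$ appearing in those two clauses is to be understood as a selecting hypothesis, so that only the instances with $\bverti{2} = \ahom{\bverti{1}}$ carry content.

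Before treating either direction, I would record the precise meaning of the indexed edge relation: by its definition $\avert \tgsucci{k} \avertacc$ holds exactly when $k < \length{\vargs{\avert}}$ and the $k$-th letter of $\vargs{\avert}$ equals $\avertacc$. Consequently the indices $k$ admitting an outgoing $\tgsucci{k}$-edge from $\avert$ are precisely $\setexp{0, \ldots, \length{\vargs{\avert}} - 1}$, and a word equation $\ahomext{\vargsi{1}{\avert}} = \vargsi{2}{\ahom{\avert}}$ asserts simultaneously that the two argument words have equal length and that they agree letter-by-letter after applying $\sahom$.

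For the implication from \eqref{eq:def:homom} to the characterization, I would simply read the word equation off at each position: if $\avert \tgsucci{k} \bvertacci{1}$, then $k$ is a legal index of $\vargsi{1}{\avert}$, hence also of $\vargsi{2}{\ahom{\avert}}$, and the $k$-th letter of the latter is $\ahom{\bvertacci{1}}$; taking $\bvertacci{2}$ to be that letter gives (arguments-forward), and the symmetric reading (passing from a legal index of $\vargsi{2}{\ahom{\avert}}$ back to one of $\vargsi{1}{\avert}$) gives (arguments-backward).

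For the converse I would reconstruct the word equation from the two clauses. Applied at $\avert$ and $\ahom{\avert}$, (arguments-forward) shows that every index legal for $\vargsi{1}{\avert}$ is legal for $\vargsi{2}{\ahom{\avert}}$ with the corresponding letters matching under $\sahom$, whence $\length{\vargsi{1}{\avert}} \le \length{\vargsi{2}{\ahom{\avert}}}$; (arguments-backward) yields the reverse inequality. Equal length together with positionwise agreement under the letterwise action of $\sahom$ is exactly $\ahomext{\vargsi{1}{\avert}} = \vargsi{2}{\ahom{\avert}}$. I do not expect a genuine obstacle, as the argument is a routine unfolding; the only point deserving care is that equality of the two argument-word lengths must be \emph{derived} (jointly from the forward and backward clauses, or alternatively from the (labels) clause via equal arities) rather than assumed.
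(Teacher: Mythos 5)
Your proof is correct: the paper itself offers no proof of this proposition, presenting it as "an easy, but useful, reformulation of the definition," and your unfolding argument (word equation on argument strings $\Leftrightarrow$ forward plus backward clauses, with length equality derived from the two clauses jointly or from equal arities via (labels)) is exactly the routine verification the paper leaves implicit. Your reading of the conjunct $\ahom{\bverti{1}} = \bverti{2}$ as a selecting hypothesis is also the intended interpretation, without which the stated clauses would be vacuously contradictory.
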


\begin{proposition}\label{prop:hom:tgs:paths}
  Let $\atgi{i} = \tuple{\vertsi{i},\svlabi{i},\svargsi{i},\srooti{i}}$,
  for $i\in\setexp{1,2}$ be term graphs over signature $\asig$.
  Let $\sahom \funin \vertsi{1} \to \vertsi{2}$ be a homomorphism between $\atgi{1}$ and $\atgi{2}$.
  Then the following statements hold:
  \begin{enumerate}[(i)]
      
    \item{}\label{prop:hom:tgs:paths:item:i}
      Every path 
      $\apath \funin \bverti{0} \tgsucci{k_1} \bverti{1} \tgsucci{k_2} \bverti{2} \tgsucci{k_3} \cdots \tgsucci{k_{n-1}} \bverti{n-1} \tgsucci{k_{n}} \bverti{n}$
      in $\atgi{1}$
      has an image $\ahom{\apath}$ under $\sahom$ in $\atgi{2}$
      in the sense that 
      $\ahom{\apath} \funin 
         \ahom{\bverti{0}} 
           \tgsucci{k_1} 
         \ahom{\bverti{1}} 
           \tgsucci{k_2} 
         \ahom{\bverti{2}}
           \tgsucci{k_3}
             \cdots
           \tgsucci{k_{n-1}} 
         \ahom{\bverti{n-1}} 
           \tgsucci{k_{n}} 
         \ahom{\bverti{n}}$.
    \item{}\label{prop:hom:tgs:paths:item:ii}
      For every $\bverti{0}\in\vertsi{0}$ and $\bvertacci{0}\in\vertsi{1}$ with $\ahom{\bverti{0}} = \bvertacci{0}$
      it holds that every path
      $\apathacc \funin 
         \bvertacci{0} 
           \tgsucci{k_1} 
         \bvertacci{1} 
           \tgsucci{k_1} 
         \bvertacci{2} 
           \tgsucci{k_2}
             \cdots 
           \tgsucci{k_{n-1}}
         \bvertacci{n-1} 
           \tgsucci{k_{n-1}} 
         \bvertacci{n}$
      in $\atgi{2}$
      has a pre-image under $\sahom$ in $\atgi{1}$ that starts in $\bverti{0}$:
      a unique unique path
      $\apath \funin 
         \bverti{0} 
           \tgsucci{k_1} 
         \bverti{1} 
           \tgsucci{k_2} 
         \bverti{2} 
           \tgsucci{k_3}
             \cdots
           \tgsucci{k_{n-1}}   
         \bverti{n-1} 
           \tgsucci{k_{n}} 
         \bverti{n}$
      in $\atgi{1}$
      such that $\apathacc = \ahom{\apath}$ holds in the sense of (\ref{prop:hom:tgs:paths:item:i}).
    \item{}\label{prop:hom:tgs:paths:item:iii}
      $\ahom{\vertsi{1}} = \vertsi{2}$, that is, $\sahom$ is surjective.
  \end{enumerate}
\end{proposition}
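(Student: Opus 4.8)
The plan is to derive all three items from two facts about a homomorphism $\sahom \funin \vertsi{1}\to\vertsi{2}$: first, that it both preserves and reflects indexed edges in a single step --- precisely the content of the \emph{(arguments)} clause of~\eqref{eq:def:homom}, rephrased as the \emph{(arguments-forward)} and \emph{(arguments-backward)} conditions of Proposition~\ref{prop:hom:tgs}; and second, that each indexed edge relation $\tgsucci{k}$ is \emph{deterministic}, i.e.\ a vertex has at most one $k$\nb-th successor, since the successors of $\avert$ are the letters of the word $\vargs{\avert}\in\verts^*$. Items~(\ref{prop:hom:tgs:paths:item:i}) and~(\ref{prop:hom:tgs:paths:item:ii}) then follow by induction on the length of the path, and item~(\ref{prop:hom:tgs:paths:item:iii}) is a direct consequence of~(\ref{prop:hom:tgs:paths:item:ii}) together with the root-connectedness of $\atgi{2}$.

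For~(\ref{prop:hom:tgs:paths:item:i}) I would induct on the length $n$ of the path $\apath$. The case $n = 0$ is immediate. For the single-step building block, the \emph{(arguments)} equation $\funap{\sahomext}{\vargsi{1}{\bvert}} = \vargsi{2}{\ahom{\bvert}}$ says exactly that the $k$\nb-th letter of $\vargsi{2}{\ahom{\bvert}}$ is $\ahom{\bvertacc}$ whenever the $k$\nb-th letter of $\vargsi{1}{\bvert}$ is $\bvertacc$; hence $\bvert \tgsucci{k} \bvertacc$ implies $\ahom{\bvert} \tgsucci{k} \ahom{\bvertacc}$. Applying this to each of the $n$ steps of $\apath$ and concatenating the single-step images yields the claimed image path $\ahom{\apath}$.

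For~(\ref{prop:hom:tgs:paths:item:ii}) I would lift the target path $\apathacc$ step by step, again by induction on its length, maintaining the invariant that the already-lifted vertex $\bverti{j}$ satisfies $\ahom{\bverti{j}} = \bvertacci{j}$. Starting from the hypothesis $\ahom{\bverti{0}} = \bvertacci{0}$, suppose $\ahom{\bverti{j}} = \bvertacci{j}$ and $\bvertacci{j} \tgsucci{k_{j+1}} \bvertacci{j+1}$. The \emph{(arguments-backward)} condition of Proposition~\ref{prop:hom:tgs} then furnishes a vertex $\bverti{j+1}$ with $\bverti{j} \tgsucci{k_{j+1}} \bverti{j+1}$ and $\ahom{\bverti{j+1}} = \bvertacci{j+1}$, which both extends the lift and restores the invariant; this produces a pre-image path $\apath$ with $\apathacc = \ahom{\apath}$ in the sense of~(\ref{prop:hom:tgs:paths:item:i}). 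Uniqueness is where determinism of $\tgsucci{k}$ enters: since $\bverti{0}$ is fixed and each subsequent $\bverti{j+1}$ is the \emph{unique} $k_{j+1}$\nb-th successor of $\bverti{j}$, the whole sequence $\bverti{0},\bverti{1},\ldots,\bverti{n}$ is forced, so $\apath$ is the only pre-image path starting at $\bverti{0}$.

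Finally, for~(\ref{prop:hom:tgs:paths:item:iii}) I would take an arbitrary $\bvertacc \in \vertsi{2}$ and use that $\atgi{2}$ is root-connected to obtain a path from $\srooti{2}$ to $\bvertacc$. Since $\ahom{\srooti{1}} = \srooti{2}$ by the \emph{(roots)} condition, item~(\ref{prop:hom:tgs:paths:item:ii}) applied with $\bverti{0} \defdby \srooti{1}$ and $\bvertacci{0} \defdby \srooti{2}$ lifts this path to one in $\atgi{1}$ whose final vertex $\bverti{n}$ satisfies $\ahom{\bverti{n}} = \bvertacc$; hence $\bvertacc \in \image{\sahom}$, and $\sahom$ is surjective. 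The only genuinely delicate point in the whole argument is the uniqueness half of~(\ref{prop:hom:tgs:paths:item:ii}): it hinges on reading off determinism of the indexed edge relation from the word-valued argument function, and this is exactly what distinguishes the \emph{indexed} reflection of edges (which is functional) from the unindexed one (which need not be).
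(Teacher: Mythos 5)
Your proof is correct and follows essentially the same route as the paper: induction on path length using the (arguments-forward) condition for item~(i), induction with the (arguments-backward) condition for item~(ii), and item~(iii) obtained by lifting an access path of a given vertex in $\atgi{2}$ via item~(ii) together with root-connectedness and the (roots) condition. Your explicit treatment of the uniqueness claim in~(ii) --- observing that each indexed edge relation $\stgsucci{k}$ is a partial function because successors are the letters of the word $\vargs{\avert}$ --- is a point the paper's proof leaves implicit, and it is argued correctly.
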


\begin{proof}
  Statement~(\ref{prop:hom:tgs:paths:item:i})
  can be shown by induction on the length of $\apath$, 
  using the $\text{(argument-forward)}$ property of $\sahom$ from Proposition~\ref{prop:hom:tgs}.
  Analogously, statement~(\ref{prop:hom:tgs:paths:item:ii})
  can be established by induction on the length of $\apathacc$, 
  using the $\text{(argument-backward)}$ property of $\sahom$ from Proposition~\ref{prop:hom:tgs}.
  Statement~(\ref{prop:hom:tgs:paths:item:iii})
  follows by applying, for given $\bvertacc\in\vertsi{2}$,
  the statement of (\ref{prop:hom:tgs:paths:item:ii}) to an access path $\apathacc$ of $\bvertacc$ in $\atgi{2}$
  (which exists because of our implicit assumption on term graphs to be root-connected, see page~\pageref{note:root-connected}).
\end{proof}

\begin{proposition}\label{prop:funbisim:anti:symmetric}
  Let $\atg$, $\atgi{1}$, and $\atgi{2}$ be term graphs over a signature $\asig$.
  \begin{enumerate}[(i)]
    \item{}\label{prop:funbisim:anti:symmetric:item:i}
      If $\atg \funbisimi{\sahom} \atg$, then $\sahom = \sidfunon{\verts}$, 
      where $\sidfunon{\verts}$ is the identical function on the set $\verts$ of vertices of $\atg$.
    \item{}\label{prop:funbisim:anti:symmetric:item:ii}
      If $\atgi{1} \funbisimi{\sahom} \atgi{2}$ and $\atgi{2} \funbisimi{\sbhom} \atgi{1}$ hold,
      then $\sahom$ and $\sbhom$ are invertible, $\sahom^{-1} = \sbhom$,
      and consequently $\atgi{1} \iso \atgi{2}$.   
  \end{enumerate}
\end{proposition}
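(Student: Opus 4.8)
The plan is to prove~(\ref{prop:funbisim:anti:symmetric:item:i}) first, exploiting root-connectedness, and then to obtain~(\ref{prop:funbisim:anti:symmetric:item:ii}) as a short formal consequence via composition of homomorphisms. For~(\ref{prop:funbisim:anti:symmetric:item:i}), suppose $\atg = \tuple{\verts,\svlab,\svargs,\sroot} \funbisimi{\sahom} \atg$. I would show that $\ahom{\avert} = \avert$ for every $\avert\in\verts$ by induction on the length of an access path of $\avert$, which exists because $\atg$ is root-connected (see the Note on page~\pageref{note:root-connected}). In the base case the access path has length $0$, so $\avert = \sroot$, and the (roots) condition gives $\ahom{\sroot} = \sroot$. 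In the inductive step, if $\avert$ has an access path of length $n+1$, then its penultimate vertex $\bvert$ carries the prefix of that path as an access path of length $n$, with $\bvert \tgsucci{k} \avert$ for some $k\in\nats$, so $\ahom{\bvert} = \bvert$ by the induction hypothesis. The (arguments) condition in~\eqref{eq:def:homom} then yields $\funap{\sahomext}{\vargs{\bvert}} = \vargs{\ahom{\bvert}} = \vargs{\bvert}$; since $\sahomext$ applies $\sahom$ letterwise and $\avert$ is the $k$-th letter of $\vargs{\bvert}$, comparing $k$-th letters gives $\ahom{\avert} = \avert$. Hence $\sahom = \sidfunon{\verts}$.

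For~(\ref{prop:funbisim:anti:symmetric:item:ii}), I would first record the routine fact that the composition of two homomorphisms is again a homomorphism: the (roots), (labels), and (arguments) conditions in~\eqref{eq:def:homom} are all preserved under composition, using that the homomorphic extension of a composition of vertex maps equals the composition of their homomorphic extensions. Thus from $\atgi{1} \funbisimi{\sahom} \atgi{2}$ and $\atgi{2} \funbisimi{\sbhom} \atgi{1}$ we obtain a homomorphism $\scompfuns{\sbhom}{\sahom}$ from $\atgi{1}$ to $\atgi{1}$ and a homomorphism $\scompfuns{\sahom}{\sbhom}$ from $\atgi{2}$ to $\atgi{2}$. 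Applying~(\ref{prop:funbisim:anti:symmetric:item:i}) to each gives $\scompfuns{\sbhom}{\sahom} = \sidfunon{\vertsi{1}}$ and $\scompfuns{\sahom}{\sbhom} = \sidfunon{\vertsi{2}}$. Consequently $\sahom$ and $\sbhom$ are mutually inverse bijections, so $\sahom$ is invertible with $\sahom^{-1} = \sbhom$; being a bijective homomorphism, $\sahom$ is then an isomorphism, whence $\atgi{1} \iso \atgi{2}$.

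The only genuine content lies in~(\ref{prop:funbisim:anti:symmetric:item:i}), and its crux is the interplay of root-connectedness with the induction on access paths: root-connectedness is exactly what forces $\sahom$ to act as the identity step by step along a path from the root, pinning down each vertex once its predecessor is pinned down. Without it the statement would fail, since a self-homomorphism could collapse a component unreachable from the root. Part~(\ref{prop:funbisim:anti:symmetric:item:ii}) is then purely formal, resting on~(\ref{prop:funbisim:anti:symmetric:item:i}) and the closure of homomorphisms under composition.
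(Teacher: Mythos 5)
Your proof is correct and follows essentially the same route as the paper's: part~(i) by induction on the length of an access path (the paper uses the shortest one, but either works, and you supply the details the paper only sketches), and part~(ii) by composing the two homomorphisms into self-homomorphisms and applying part~(i).
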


\begin{proof}
  For statement~($\ref{prop:funbisim:anti:symmetric:item:i}$),
  suppose that $\atg \funbisimi{\sahom} \atg$ holds for some homomorphism $\sahom$ from $\atg$ to itself.
  The fact that $\ahom{\avert} = \idfunon{\verts}{\avert}$ holds
  for all vertices $\avert$ of $\atg$
  can be established by induction on the length of the shortest access path of $\avert$ in $\atg$.
  Note that we make use here of the root-connectedness of $\atg$ (assumed implicitly, see Section~\ref{sec:prelims})
  in the form of the assumption that every vertex can be reached by an access path.
  In order to show statement~($\ref{prop:funbisim:anti:symmetric:item:ii}$),
  note that
  $\atgi{1} \funbisimi{\sahom} \atgi{2}$ and $\atgi{2} \funbisimi{\sbhom} \atgi{1}$
  entail 
  $\atgi{1} \funbisimi{\scompfuns{\sahom}{\sbhom}} \atgi{1}$ for homomorphisms $\sahom$ and $\sbhom$ between $\atgi{1}$ and $\atgi{2}$.
  From this $\scompfuns{\sahom}{\sbhom} = \sidfunon{\vertsi{1}}$,
  where $\vertsi{1}$ the set of vertices of $\atgi{1}$, follows by ($\ref{prop:funbisim:anti:symmetric:item:i}$).
  Since analogously
  $\scompfuns{\sahom}{\sbhom} = \sidfunon{\vertsi{2}}$ follows,
  where $\vertsi{2}$ is the set of vertices of $\atgi{2}$,
  the further claims follow.
\end{proof}

We will refer to $\sfunbisim$ as the \emph{sharing preorder}, 
and to the induced relation on isomorphism equivalence classes as the \emph{sharing order}.
Note that, different from e.g.\ \cite{terese:2003}, 
we use the order relation $\sfunbisim$ in the same direction as $\le\,$:
if $\atgi{1} \funbisim \atgi{2}$, then $\atgi{2}$ is greater or equal to $\atgi{1}$ 
with respect to the ordering $\sfunbisim$ (indicating that sharing is typically increased from $\atgi{1}$ to $\atgi{2}$). 

Let $\aclass\subseteq\classtgsover{\asig}$ be a class of term graphs over signature~$\asig$. 
For $\atg\in\classtgsover{\asig}$ we will use the notation:
\begin{align}\label{eq:def:eqclin:succsofordin}
  \eqclin{\altgiso}{\sbisimsubscript}{\aclass}
    & \defdby
  \descsetexp{\altgacciso}{\altgacc\in\aclass,\, \altg \bisim \altgacc}  
  &
  \succsofordin{\altgiso}{\,\sfunbisim}{\aclass}
    & \defdby
  \descsetexp{\altgacciso}{\altgacc\in\aclass,\, \altg \funbisim \altgacc}   
\end{align}
for the bisimulation equivalence class $\eqclin{\altgiso}{\sbisimsubscript}{\aclass}$ of $\altgiso$ (the $\siso$\nb-equivalence class of $\altg$)
relativized to (the $\siso$\nb-equivalence classes of term graphs in) $\aclass$,
and respectively, the class $\succsofordin{\altgiso}{\,\sfunbisim}{\aclass}$
of all $\siso$\nb-equivalence classes that are reachable from $\altgiso$ via functional bisimulation
and that contain a term graph in $\aclass$. 
For $\aclass = \classtgsover{\asig}$ we drop the superscript $\aclass$,
and thus write
$\eqcl{\altgiso}{\sbisimsubscript}$ for $\eqclin{\altgiso}{\sbisimsubscript}{\classtgsover{\asig}}$,
and 
$\succsoford{\altgiso}{\,\sfunbisim}$ for $\succsofordin{\altgiso}{\,\sfunbisim}{\classtgsover{\asig}}$. 

A partial ordered set $\pair{\aset}{\le}$ is a \emph{complete lattice} 
if every subset of $\aset$ possesses a \txtlub\ and a \txtglb.

\begin{proposition}
    \label{prop:funbisim:succs:of:tgs:iso:complete:lattice}
  Let $\asig$ be a signature, and $\atg$ be a term graph over $\asig$.
  The bisimulation equivalence class 
  $\eqcl{\atgiso}{\sbisimsubscript}$ 
  of the isomorphism equivalence class $\atgiso$ of $\atg$
  is ordered by homomorphism $\sfunbisim$ 
  such that 
  $\pair{\succsoford{\atgiso}{\,\sfunbisim}}{\sfunbisim}$ is a complete lattice.
\end{proposition}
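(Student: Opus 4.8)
\noindent
The plan is to identify $\pair{\succsoford{\atgiso}{\,\sfunbisim}}{\sfunbisim}$ with the lattice of \emph{congruences} on the vertex set of $\atg$, ordered by inclusion, and to show that the latter is complete. Fix $\atg = \tuple{\verts,\svlab,\svargs,\sroot}$, and call an equivalence relation $R$ on $\verts$ a congruence if $\bvert \mathrel{R} \bvertacc$ implies both $\vlab{\bvert} = \vlab{\bvertacc}$ and that, for every $i$, the $i$-th successors of $\bvert$ and $\bvertacc$ are again $R$-related. For such an $R$ the quotient $\atg / R$ is a well-defined term graph (the argument function descends precisely because $R$ is a congruence), it is root-connected since $\atg$ is, and the canonical surjection $q_R$ is a homomorphism, so $\atg \funbisim (\atg/R)$. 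Conversely, for any homomorphism $\atg \funbisimi{\sahom} \altgacc$ the relation $\ker\sahom$ is a congruence, and since $\sahom$ is surjective by Proposition~\ref{prop:hom:tgs:paths}(\ref{prop:hom:tgs:paths:item:iii}) the induced map exhibits $\altgacc \iso \atg/{\ker\sahom}$. Hence every isomorphism class in $\succsoford{\atgiso}{\,\sfunbisim}$ is represented by a quotient $\atg/R$.

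\smallskip\noindent
The next step is to upgrade this to an order isomorphism between $\pair{\succsoford{\atgiso}{\,\sfunbisim}}{\sfunbisim}$ and the congruences on $\verts$ ordered by $\subseteq$. The key auxiliary fact is that a homomorphism between term graphs is \emph{unique} if it exists: by root-connectedness and the determinacy of successors, the value of any homomorphism on a vertex is forced along an access path, exactly as in the proof of Proposition~\ref{prop:funbisim:anti:symmetric}(\ref{prop:funbisim:anti:symmetric:item:i}). Using this I would show: (i) the map sending $R$ to the isomorphism class of $\atg/R$ is injective, for if $\iota$ is an isomorphism $\atg/R_1 \iso \atg/R_2$ then $\scompfuns{\iota}{q_{R_1}}$ and $q_{R_2}$ are two homomorphisms $\atg \to \atg/R_2$, hence equal, whence $R_1 = \ker q_{R_1} = \ker(\scompfuns{\iota}{q_{R_1}}) = \ker q_{R_2} = R_2$; and (ii) $R_1 \subseteq R_2$ iff $(\atg/R_1) \funbisim (\atg/R_2)$, the forward direction via the evident collapsing map and the backward direction by composing a witnessing homomorphism with $q_{R_1}$ and again invoking uniqueness. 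Thus $\sfunbisim$ both preserves and reflects $\subseteq$, giving the order isomorphism; it remains to show that the congruences on $\verts$ form a complete lattice under inclusion.

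\smallskip\noindent
For this I would verify two closure facts. First, a non-empty intersection $\bigcap_{j} R_j$ of congruences is again a congruence, since the label and successor conditions are inherited pointwise; so arbitrary non-empty meets exist and equal intersections. Second, there is a greatest congruence $R_{\max}$: the union of all relations on $\verts$ satisfying the label-and-successor condition again satisfies it, and, as it contains the diagonal and is closed under converse and relational composition, it is an equivalence relation, hence the largest congruence. Now a subset of the complete lattice of all relations on $\verts$ that is closed under non-empty intersection and has a greatest element is itself complete: the greatest lower bound of a family $S$ is $\bigcap S$ (and is $R_{\max}$ when $S = \emptyset$), while its least upper bound is $\bigcap \descsetexp{R}{R\text{ a congruence and } R' \subseteq R \text{ for all } R' \in S}$, a non-empty intersection because $R_{\max}$ lies in it.

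\smallskip\noindent
I expect the main obstacle to lie in the middle step rather than in the lattice computation: one must keep track of \emph{isomorphism classes} of quotients, not the quotients themselves, and it is precisely the uniqueness of homomorphisms that makes the correspondence injective and, crucially, guarantees that $\subseteq$ is \emph{reflected} by $\sfunbisim$ rather than merely preserved. Once the lattice of congruences is in place, the bottom element corresponds to the discrete congruence and hence to $\atgiso$ itself, and the top to $R_{\max}$, i.e.\ to the bisimulation collapse of $\atg$ — its maximally shared form.
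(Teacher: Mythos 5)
Your proof is correct, but it takes a genuinely different route from the paper, which in fact contains no proof of this proposition: Remark~\ref{rem:prop:funbisim:succs:of:tgs:iso:complete:lattice} instead derives it from Theorem~3.19 of \cite{ario:klop:1996} and Theorem~13.2.20 of \cite{terese:2003}, which assert the complete-lattice property for the \emph{entire} bisimulation equivalence class $\eqcl{\atgiso}{\sbisimsubscript}$, the proposition being its restriction to $\succsoford{\atgiso}{\,\sfunbisim}$ (a sound step, since $\succsoford{\atgiso}{\,\sfunbisim}$ is the principal filter of $\atgiso$ in that lattice, and a principal filter of a complete lattice is again a complete lattice --- a point the remark leaves implicit). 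You instead prove the restricted statement from scratch: surjectivity of homomorphisms (Proposition~\ref{prop:hom:tgs:paths}(\ref{prop:hom:tgs:paths:item:iii})) shows that every $\sfunbisim$\nb-successor of $\atgiso$ is the class of a quotient $\atg/R$ by a congruence $R$; uniqueness of homomorphisms out of a root-connected term graph --- which you justify by the same access-path induction as Proposition~\ref{prop:funbisim:anti:symmetric}(\ref{prop:funbisim:anti:symmetric:item:i}), and correctly identify as the linchpin --- makes $R \mapsto \eqcl{\atg/R}{\siso}$ injective and order-reflecting, hence an order isomorphism onto $\pair{\succsoford{\atgiso}{\,\sfunbisim}}{\sfunbisim}$; and the congruences form a Moore family (closed under non-empty intersections, with a greatest element $R_{\max}$, whose construction via unions, converses and relational compositions you verify correctly), hence a complete lattice. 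As to what each approach buys: the paper's citation is brief and situates the result in the literature, but it rests on a strictly stronger theorem (the members of a whole bisimilarity class are not all quotients of a single graph, so constructing their bounds takes genuinely more work than your intersection-of-congruences argument), and it leaves the restriction step to the reader; your proof is elementary and self-contained, works verbatim for term graphs with infinitely many vertices, and exhibits the bottom and top of the lattice explicitly, namely $\atgiso$ itself and the bisimulation collapse $\eqcl{\atg/R_{\max}}{\siso}$.
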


\begin{remark}\label{rem:prop:funbisim:succs:of:tgs:iso:complete:lattice}\normalfont
  The statement of Proposition~\ref{prop:funbisim:succs:of:tgs:iso:complete:lattice}
  is a restriction to sets of $\sfunbisim$\nb-successors of the statements 
  Theorem~3.19 in \cite{ario:klop:1996}
  and Theorem~13.2.20 in \cite{terese:2003},
  which state the complete lattice property for entire bisimulation equivalence classes (of $\siso$\nb-equivalence classes)
  of term graphs.
\end{remark}


Let $\aclass \subseteq\classtgsover{\asig}$ be a subclass of the term graphs over $\asig$, for a signature $\asig$.
We say that $\aclass$ is \emph{closed under homorphism}, or \emph{closed under functional bisimulation}, 
if $\atg \funbisim \atgacc$  
   for $\atg,\atgacc\in\classtgsover{\asig}$ with $\atg\in\aclass$
   implies $\atgacc\in\aclass$.
We say that $\aclass$ is \emph{closed under bisimulation}
if $\atg \bisim \atgacc$
   for $\atg,\atgacc\in\classtgsover{\asig}$ with $\atg\in\aclass$
   implies $\atgacc\in\aclass$.
Note these concepts are invariant under considering other signatures $\asigacc$
with $\aclass\subseteq\classtgsover{\asigacc}$.

\section{$\lambda$-higher-order-Term-Graphs}
  \label{sec:lambdahotgs}

By $\siglambda$ we designate the signature
$\setexp{ \sslapp, \sslabs }$ 
with $\arity{\sslapp} = 2$, and
     $\arity{\sslabs} = 1$. 
By $\siglambdai{i}$, for $i\in\setexp{0,1}$, we denote the extension
$\siglambda \cup \setexp{ \snlvar }$ of $\siglambda$
where $\arity{\snlvar} = i$.     
The classes of term graphs over $\siglambdai{0}$ and $\siglambdai{1}$
are denoted by $\classtgssiglambdai{0}$ and $\classtgssiglambdai{1}$, respectively. 
   
Let $\atg = \tuple{\verts,\svlab,\svargs,\sroot}$ be a term graph over a signature
extending $\siglambda$ or $\siglambdai{i}$, for $i\in\setexp{0,1}$.   
By $\vertsof{\sslabs}$ we designate the set of 
\emph{abstraction vertices} of $\atg$,
that is, the subset of $\verts$ consisting of all vertices with label $\sslabs$;
more formally,
$\vertsof{\sslabs} \defdby \descsetexp{\avert\in\verts}{\vlab{\avert} = \sslabs}$.
Analogously, the sets $\vertsof{\sslapp}$ and $\vertsof{\snlvar}$ 
of \emph{application vertices} and \emph{variable vertices} of $\atg$
are defined as the sets consisting of all vertices in $\verts$ with label $\sslapp$ or label $\snlvar$, respectively.  
Whether the variable vertices have an outgoing edge depends on the value of $i$.
The intention is to consider two variants of term graphs, one with and one
without variable back-links to their corresponding abstraction.

A `$\lambda$\nb-higher-order-term-graph'
consists of a $\siglambdai{i}$\nb-term-graph together with a scope function 
that maps abstraction vertices to their scopes (`extended scopes' in \cite{grab:roch:2012}), which are subsets of the set of vertices.
%
%

\begin{definition}[\lambdahotg]\label{def:lambdahotg}\normalfont
  Let $i\in\setexp{0,1}$. 
  A \emph{\lambdahotg}
  (short for \emph{$\lambda$\nb-higher-order-term-graph}) 
  \emph{over $\siglambdai{i}$},
  is a five-tuple $\alhotg = \tuple{\verts,\svlab,\svargs,\sroot,\sScope}$
  where $\atgi{\alhotg} = \tuple{\verts,\svlab,\svargs,\sroot}$ is a $\siglambdai{i}$\nb-term-graph,
  called the term graph \emph{underlying} $\alhotg$,
  and $\sScope \funin \vertsof{\sslabs} \to \powersetof{\verts}$
  is the \emph{scope function} of $\alhotg$ 
  (which maps an abstraction vertex $\avert$ to a set of vertices called its \emph{scope})
  that together with $\atgi{\alhotg}$ fulfills
  the following conditions:
  For all $k\in\setexp{0,1}$,
  all vertices $\bvert,\bverti{0},\bverti{1}\in\verts$,
  and all abstraction vertices $\avert,\averti{0},\averti{1}\in\vertsof{\sslabs}$
  it holds:
\begin{align*}
       \;\; & \Rightarrow\;\;
  \sroot\notin\Scopemin{\avert}
  & & \text{(root)} 
  \displaybreak[0]\\
       \;\; & \Rightarrow\;\;
  \avert \in \Scope{\avert} 
  & & \text{(self)}
  \displaybreak[0]\\ 
  \averti{1}\in\Scopemin{\averti{0}}
      \;\; & \Rightarrow\;\;
  \Scope{\averti{1}} \subseteq \Scopemin{\averti{0}} 
  & & \text{(nest)}
  \displaybreak[0]\\
  \bvert \tgsucci{k} \bverti{k}
    \;\logand\;
  \bverti{k}\in\Scopemin{\avert}
       \;\; & \Rightarrow\;\;  
  \bvert\in\Scope{\avert} 
  & & \text{(closed)}
  \displaybreak[0]\\
  \bvert\in\vertsof{\snlvar}
      \;\; & \Rightarrow\;\;
  \existsst{\averti{0}\in\vertsof{\sslabs}}{\bvert\in\Scopemin{\averti{0}}}  
  & & \text{(scope)$_0$} 
  \displaybreak[0]\\[0.5ex]
  \bvert\in\vertsof{\snlvar}
    \;\logand\;
  \bvert \tgsucci{0} \bverti{0}
      \;\; & \Rightarrow\;\;
  \left\{\hspace*{1pt}   
  \begin{aligned}[c]    
    & \bverti{0}\in\vertsof{\sslabs}
         \;\logand\;
    \\[-0.25ex]
    & \;\logand\; 
      \forall \avert\in\vertsof{\sslabs}.
    \\[-0.5ex]
    & \phantom{\;\logand\;} \hspace*{3ex}  
        (\bvert\in\Scope{\avert} \Leftrightarrow \bverti{0}\in\Scope{\avert})
  \end{aligned}  
  \right.
  & & \text{(scope)$_1$} 
\end{align*}%
where $\Scopemin{\avert} \defdby \Scope{\avert}\setminus\setexp{\avert}$.  
Note that if $i=0$, then (scope)$_1$ is trivially true and hence superfluous,
and if $i=1$, then (scope)$_0$ is redundant, because it follows from (scope)$_1$ in this case.  
For $\bvert\in\verts$ and $\avert\in\vertsof{\sslabs}$
we say that $\avert$ is a \emph{binder for} $\bvert$ if $\bvert\in\Scope{\avert}$,
and we designate by
$\binders{\bvert}$  
the set of binders of $\bvert$.

The classes of \lambdahotg{s} over $\siglambdai{0}$ and $\siglambdai{1}$ 
will be denoted by $\classlhotgsi{0}$ and $\classlhotgsi{1}$.
\end{definition}

See Figure~\ref{fig:lambdahotg} for two different \lambdahotg{s} over $\siglambdai{i}$
both of which represent the same term in the \lambdacalculus\ with $\stxtletrec$, namely
$\letrecin{\arecvar = \labs{\avar}{\lapp{(\labs{\bvar}{\lapp{\bvar}{(\lapp{\avar}{\brecvar})}})}{(\labs{\cvar}{\lapp{\brecvar}{\arecvar}})}},\; 
           \brecvar = \labs{\dvar}{\dvar}}
          {\arecvar}$.

\begin{figure}[t]
\graphs{
  \graph{\alhotgi{0}}{running_scopes_eager}
  \graph{\alhotgi{1}}{running_scopes_non_eager}
}
\vspace*{-2ex}
\caption{\label{fig:lambdahotg}
$\alhotgi{0}$ and $\alhotgi{1}$ are \lambdahotg{s} in $\classlhotgsi{i}$ whereby the
dotted back-link edges are present for $i=1$, but absent for
$i=0$. The underlying term graphs of $\alhotgi{0}$ and $\alhotgi{1}$ are
identical but their scope functions (signified by the shaded areas) differ. 
While in $\alhotgi{0}$ scopes are chosen as small as possible, which we refer to as `eager scope closure', 
in $\alhotgi{1}$ some scopes are closed only later in the graph.
}
\end{figure}

The following lemma states some basic properties of the scope function in \lambdahotg{s}.
Most importantly, scopes in \lambdahotg{s} are properly nested, in analogy with
scopes in finite \lambdaterms. 

\begin{lemma}\label{lem:lhotg}
  Let $i\in\setexp{0,1}$, and let $\alhotg = \tuple{\verts,\svlab,\svargs,\sroot,\sScope}$ be a \lambdahotg\ over $\siglambdai{i}$. 
  Then the following statements hold for all $\bvert\in\verts$ and $\avert,\averti{1},\averti{2}\in\vertsof{\sslabs}$:
  \begin{enumerate}[(i)]
    \item{}\label{lem:lhotg:item:i}
      If $\bvert\in\Scope{\avert}$, then $\avert$ is visited on every access path of $\bvert$,
      and all vertices on access paths of $\bvert$ after $\avert$ are in $\Scopemin{\avert}$.
      Hence (since $\atgi{\alhotg}$ is a term graph, every vertex has an access path) $\binders{\bvert}$ is finite.
    \item{}\label{lem:lhotg:item:ii}   
      If $\Scope{\averti{1}}\cap\Scope{\averti{2}} \neq \emptyset$ for $\averti{1} \neq \averti{2}\,$,
      then $\Scope{\averti{1}} \subseteq \Scopemin{\averti{2}}$ or $\Scope{\averti{2}} \subseteq \Scopemin{\averti{1}}$.
      As a consequence,
      if\/~$\Scope{\averti{1}} \cap \Scope{\averti{2}} \neq \emptyset$,
      then 
      $\Scope{\averti{1}} \subsetneqq \Scope{\averti{2}}$
      or $\Scope{\averti{1}} = \Scope{\averti{2}}$
      or $\Scope{\averti{2}} \subsetneqq \Scope{\averti{1}}$.
    \item{}\label{lem:lhotg:item:iii}
      If\/ $\binders{\bvert} \neq \emptyset$, 
      then $\binders{\bvert} = \setexp{\averti{0},\ldots,\averti{n}}$ 
      for $\averti{0},\ldots,\averti{n}\in\vertsof{\sslabs}$
      and $\Scope{\averti{n}} \subsetneqq \Scope{\averti{n-1}} \ldots \subsetneqq \Scope{\averti{0}}$.
  \end{enumerate}
\end{lemma}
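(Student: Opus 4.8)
The plan is to establish statement~(\ref{lem:lhotg:item:i}) first and then to derive~(\ref{lem:lhotg:item:ii}) and~(\ref{lem:lhotg:item:iii}) from it with comparatively little extra effort; the main obstacle is~(\ref{lem:lhotg:item:i}) itself, whose proof rests on a \emph{backward} induction along a fixed access path of $\bvert$ driven by the (closed) condition. The idea is that, reading such a path from $\bvert$ back towards the root, as long as $\avert$ has not yet been encountered the current vertex lies in $\Scopemin{\avert}$; since its successor on the path already lies in $\Scopemin{\avert}$, the (closed) condition forces the current vertex into $\Scope{\avert}$ as well, and (root) finally rules out reaching $\sroot$ without having passed through $\avert$.

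More precisely, I would fix an arbitrary access path $\sroot = \bverti{0} \tgsucci{k_1} \bverti{1} \tgsucci{k_2} \cdots \tgsucci{k_n} \bverti{n} = \bvert$ and assume $\bvert \in \Scope{\avert}$. Then I prove, by downward induction on $j$ from $n$ to $0$, the invariant: either $\avert \in \setexp{\bverti{j},\ldots,\bverti{n}}$, or $\bverti{j} \in \Scopemin{\avert}$. The base case $j = n$ is immediate ($\bvert = \avert$ gives the first disjunct, otherwise $\bvert \in \Scopemin{\avert}$ by (self)). In the step, if $\avert$ has not yet been met then $\bverti{j} \in \Scopemin{\avert}$, and since $\bverti{j-1} \tgsucci{k_j} \bverti{j}$, the (closed) condition yields $\bverti{j-1} \in \Scope{\avert}$, which is the required disjunction at $j-1$. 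At $j = 0$ the second disjunct would give $\sroot \in \Scopemin{\avert}$, contradicting (root); hence $\avert$ must occur on the path, i.e.\ $\avert$ is visited. Since access paths do not repeat vertices, $\avert$ occurs exactly once, say at position $m$; for every $j > m$ the first disjunct fails, so the invariant delivers $\bverti{j} \in \Scopemin{\avert}$, which is precisely the claim that every vertex after $\avert$ lies in $\Scopemin{\avert}$. Finiteness of $\binders{\bvert}$ is then immediate, since every binder of $\bvert$ is visited on the finite path, so $\binders{\bvert}$ is contained in the finite set of vertices occurring on that path.

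For statement~(\ref{lem:lhotg:item:ii}) I would pick $\bvert \in \Scope{\averti{1}} \cap \Scope{\averti{2}}$ with $\averti{1} \neq \averti{2}$. By~(\ref{lem:lhotg:item:i}) both $\averti{1}$ and $\averti{2}$ are visited on a fixed access path of $\bvert$, and as the path does not repeat vertices one of them strictly precedes the other; say $\averti{1}$ before $\averti{2}$. Then $\averti{2}$ occurs after $\averti{1}$ on that access path of $\bvert$, so again by~(\ref{lem:lhotg:item:i}) we get $\averti{2} \in \Scopemin{\averti{1}}$, and (nest) upgrades this to $\Scope{\averti{2}} \subseteq \Scopemin{\averti{1}}$ (the opposite order gives the symmetric inclusion). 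For the stated consequence, $\Scopemin{\averti{1}} \subsetneqq \Scope{\averti{1}}$ always holds by (self), and since $\averti{1} \in \Scope{\averti{1}} \setminus \Scopemin{\averti{1}}$ we in fact obtain the proper inclusion $\Scope{\averti{2}} \subsetneqq \Scope{\averti{1}}$; together with the trivial case $\averti{1} = \averti{2}$ this yields the trichotomy.

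Finally, statement~(\ref{lem:lhotg:item:iii}) combines finiteness from~(\ref{lem:lhotg:item:i}) with the nesting from~(\ref{lem:lhotg:item:ii}). Writing $\binders{\bvert} = \setexp{\averti{0},\ldots,\averti{n}}$ (finite, and nonempty by assumption), any two of these abstractions have overlapping scopes, both containing $\bvert$, so their scopes are comparable by~(\ref{lem:lhotg:item:ii}). A small separate observation rules out equal scopes for distinct abstractions: if $\Scope{\averti{i}} = \Scope{\averti{j}}$ with $\averti{i} \neq \averti{j}$, then $\averti{i} \in \Scope{\averti{j}}$ forces $\averti{i} \in \Scopemin{\averti{j}}$, whence (nest) gives $\Scope{\averti{i}} \subseteq \Scopemin{\averti{j}} \subsetneqq \Scope{\averti{j}}$, contradicting the assumed equality. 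Hence the scopes of the binders are pairwise \emph{strictly} nested, i.e.\ linearly ordered by $\subsetneqq$ on the finite set $\binders{\bvert}$, and after relabelling the enumeration we obtain $\Scope{\averti{n}} \subsetneqq \cdots \subsetneqq \Scope{\averti{0}}$.
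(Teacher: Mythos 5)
Your proof is correct and follows essentially the same route as the paper's: statement~(i) by walking an access path backwards from $\bvert$, using (closed) repeatedly and (root) to force a first visit to $\avert$; statement~(ii) from~(i) via the order of visits plus (nest); and statement~(iii) as a consequence of~(i) and~(ii), where you merely spell out the strictness argument that the paper leaves implicit. (Only a cosmetic slip: in your base case, $\bvert\in\Scopemin{\avert}$ for $\bvert\neq\avert$ follows from the definition $\Scopemin{\avert} = \Scope{\avert}\setminus\setexp{\avert}$, not from (self).)
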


\begin{proof}
  Let $i\in\setexp{0,1}$, and let $\alhotg = \tuple{\verts,\svlab,\svargs,\sroot,\sScope}$ 
  be a \lambdahotg\ over $\siglambdai{i}$. 
  
  For showing (\ref{lem:lhotg:item:i}),
  let $\bvert\in\verts$ and $\avert\in\vertsof{\sslabs}$ 
  be such that $\bvert\in\Scope{\avert}$.
  Suppose that 
  $\apath \funin 
     \sroot = \bverti{0} 
       \tgsucci{k_1} 
     \bverti{1} 
       \tgsucci{k_2} 
     \bverti{2} 
       \tgsucci{k_3}
         \cdots 
       \tgsucci{k_{n}} 
     \bverti{n} = \bvert$
  is an access path of $\bvert$. 
  If $\bvert = \avert$, then nothing remains to be shown.
  Otherwise $\bverti{n} = \bvert \in\Scopemin{\avert}$, and, if $n>0$, then by (closed) it follows that $\bverti{n-1}\in\Scope{\avert}$.
  This argument can be repeated to find subsequently smaller $i$
  with $\bverti{i}\in\Scope{\avert}$ and $\bverti{i+1},\ldots,\bverti{n}\in\Scopemin{\avert}$. 
  We can proceed as long as $\bverti{i}\in\Scopemin{\avert}$.
  But since, due to (root), $\bverti{0} = \sroot\notin\Scopemin{\avert}$,  
  eventually we must encounter an $i_0$ such that
  such that $\bverti{i_0+1},\ldots,\bverti{n}\in\Scopemin{\avert}$
  and $\bverti{i_0}\in\Scope{\avert}\setminus\Scopemin{\avert}$.
  This implies $\bverti{i_0} = \avert$, showing that $\avert$ is visited on $\apath$.
  
  For showing (\ref{lem:lhotg:item:ii}),
  let $\bvert\in\verts$ and $\averti{1},\averti{2}\in\vertsof{\sslabs}$,
  $\averti{1} \neq \averti{2}$ 
  be such that
  $\bvert\in\Scope{\averti{1}}\cap\Scope{\averti{2}}$.
  Let $\apath$ be an access path of $\bvert$. 
  Then it follows by (\ref{lem:lhotg:item:i}) 
  that both $\averti{1}$ and $\averti{2}$ are visited on $\apath$,
  and that, depending on whether $\averti{1}$ or $\averti{2}$ is visited first on $\apath$,
  either $\averti{2}\in\Scopemin{\averti{1}}$ or $\averti{1}\in\Scopemin{\averti{2}}$. 
  Then due to (nest) it follows that either 
  $\Scope{\averti{2}}\subseteq\Scopemin{\averti{1}}$ holds or $\Scope{\averti{1}}\subseteq\Scopemin{\averti{2}}$.
  
  Finally, statement (\ref{lem:lhotg:item:iii}) is an easy consequence of statement (\ref{lem:lhotg:item:ii}).
\end{proof}

\begin{remark}\normalfont
  The notion of \lambdahotg\ is an adaptation of the notion of 
  `higher-order term graph' by Blom \cite[Def.\ 3.2.2]{blom:2001}
  for the purpose of representing finite or infinite \lambdaterms\ 
  or cyclic \lambdaterms, that is, terms in the \lambdacalculus\ with $\stxtletrec$.
  In particular, 
  \lambdahotg{s} over $\siglambdai{1}$ correspond closely to
  higher-order term graphs over signature~$\siglambda$. 
  But they differ in the following respects:
  \renewcommand{\descriptionlabel}[1]%
      {\hspace{\labelsep}{\emph{#1{:}}}}
\small
  \begin{description}\itemizeprefs
    \item[Abstractions]
      Higher-order term graphs in \cite{blom:2001} 
      are graph representations of finite or infinite terms 
      in Combinatory Reduction Systems (\CRS{s}).
      They typically contain abstraction vertices with label~$\Box$ 
      that represent \CRS\nb-ab\-strac\-tions.
      In contrast,
      \lambdahotg{s} have abstraction vertices with label~$\sslabs$ that denote
      $\lambda$\nb-abstractions.  
    \item[Signature] 
      Whereas higher-order term graphs in \cite{blom:2001}
      are based on an arbitrary \CRS\nb-signature,
      \lambdahotg{s} over $\siglambdai{1}$ 
      only contain 
      the application symbol $\sslapp$ and the vari\-able-occur\-rence symbol $\snlvar$
      in addition to the abstraction symbol $\sslabs$. 
    \item[Variable back-links and variable occurrence vertices]  
      In the formalization of higher-order term graphs in \cite{blom:2001}
      there are no explicit vertices that represent variable occurrences.
      Instead, variable occurrences are represented by back-link edges to abstraction vertices.
      Actually, in the formalization chosen in \cite[Def.\hspace*{2pt}3.2.1]{blom:2001}, 
      a back-link edge does not directly target the abstraction vertex $\avert$ it refers to, 
      but ends at a special variant vertex $\bar{\avert}$ of $\avert$. 
      (Every such variant abstraction vertex $\bar{\avert}$ could be looked upon as a variable vertex
       that is shared by all edges that represent occurrences of the variable bound by the abstraction vertex $\avert$.) 
      
      In \lambdahotg{s} over $\siglambdai{1}$
      a variable occurrence is represented by a variable-occurrence vertex
      that as outgoing edge has a back-link to the abstraction vertex that binds the occurrence. 
    \item[Conditions on the scope function]
      While the conditions (root), (self), (nest), and (closed) on the scope function in higher-order term graphs
      in \cite[Def.\hspace*{2pt}3.2.2]{blom:2001} correspond directly to the respective conditions in Definition~\ref{def:lambdahotg},
      the difference between the condition (scope) there and (scope)$_1$ in Definition~\ref{def:lambdahotg}
      reflects the difference described in the previous item.
    \item[Free variables]      
      Whereas the higher-order term graphs in \cite{blom:2001} cater for the
      presence of free variables, free variables have been excluded from the basic format of \lambdahotg{s}. 
  \end{description}
\end{remark}

\begin{definition}[homomorphism, bisimulation]\label{def:homom:lambdahotg}\normalfont
  Let $i\in\setexp{0,1}$. 
  Let $\alhotgi{1}$ and $\alhotgi{2}$ be \lambdahotg{s} over $\siglambdai{i}$ with
  $\alhotgi{k} = \tuple{\vertsi{k},\svlabi{k},\svargsi{k},\srooti{k},\sScopei{k}}$
  for $k\in\setexp{1,2}$. 

  A \emph{homomorphism}, also called a \emph{functional bisimulation}, 
  from $\alhotgi{1}$ to $\alhotgi{2}$ 
  is a morphism from the structure
  $\tuple{\vertsi{1},\svlabi{1},\svargsi{1},\srooti{1},\sScopei{1}}$
  to the structure
  $\tuple{\vertsi{2},\svlabi{2},\svargsi{2},\srooti{2},\sScopei{2}}$,
  that is, a function 
  $ \sahom \funin \vertsi{1} \to \vertsi{2}$ 
  such that, for all $\avert\in\vertsi{1}$ 
  the conditions (labels), (arguments), and (roots) in 
  in \eqref{eq:def:homom} are satisfied, and additionally,
  for all $\avert\in\vertsiof{1}{\sslabs}$:
\begin{equation}\label{eq:def:homom:lambdahotg}
\begin{aligned}
  \funap{\sahomextext}{\Scopei{1}{\avert}}
    & = \Scopei{2}{\ahom{\avert}}
  & & & & & (\text{scope functions})
\end{aligned}
\end{equation}
where $\sahomextext$ is the homomorphic extension of $\sahom$ to sets over $\vertsi{1}$, that is, to the function 
$\bar{\sahom} \funin \powersetof{\vertsi{1}} \to \powersetof{\vertsi{2}}$, 
$ A \mapsto \descsetexp{\ahom{a}}{a\in A} $.  
If there exists a homomorphism (a functional bisimulation) $\sahom$ from $\alhotgi{1}$ to $\alhotgi{2}$,
then we write $\alhotgi{1} \funbisimi{\sahom} \alhotgi{2}$
or $\alhotgi{2} \convfunbisimi{\sahom} \alhotgi{1}$,
or, dropping $\sahom$ as subscript, 
$\alhotgi{1} \funbisim \alhotgi{2}$ or $\alhotgi{2} \convfunbisim \alhotgi{1}$.

An \emph{isomorphism} from $\alhotgi{1}$ to $\alhotgi{2}$ is a
homomorphism from $\alhotgi{1}$ to $\alhotgi{2}$ that, as a function from $\vertsi{1}$ to $\vertsi{2}$, is bijective. 
If there exists an isomorphism $\saiso \funin \vertsi{1} \to \vertsi{2}$ from $\alhotgi{1}$ to $\alhotgi{2}$,
then we say that $\alhotgi{1}$ and $\alhotgi{2}$ are \emph{isomorphic},
and write $\alhotgi{1} \isoi{\saiso} \alhotgi{2}$
or, dropping $\saiso$ as subscript, $\alhotgi{1} \iso \alhotgi{2}$.
The property of existence of an isomorphism between two \lambdahotg{s}\ forms an equivalence relation.
We denote by $\classlhotgsisoi{0}$ and $\classlhotgsisoi{1}$
the isomorphism equivalence classes of \lambdahotg{s} over $\siglambdai{0}$ and $\siglambdai{1}$, respectively. 

A \emph{bisimulation} between  
$\alhotgi{1}$ and $\alhotgi{2}$ 
is a (\lambdahotg\nb-like) structure
$ \alhotg = \tuple{\abisim,\svlab,\svargs,\sroot,\sScope}$ 
where $\tuple{\abisim,\svlab,\svargs,\sroot}\in\classtgsminover{\asig}$,
$\abisim \subseteq \vertsi{1}\times\vertsi{2}$ 
 and $\sroot = \pair{\srooti{1}}{\srooti{2}}$
such that
$ \alhotgi{1} \convfunbisimi{\sproji{1}} \alhotg \funbisimi{\sproji{2}} \alhotgi{2}$
where $\sproji{1}$ and $\sproji{2}$ are projection functions, defined, for $i\in\setexp{1,2}$,
by $ \sproji{i} \funin \vertsi{1}\times\vertsi{2} \to \vertsi{i} $,
$\pair{\averti{1}}{\averti{2}} \mapsto \averti{i}$.
If there exists a bisimulation $\abisim$ between $\alhotgi{1}$ and $\alhotgi{2}$,
then we write $\alhotgi{1} \bisimi{\abisim} \alhotgi{2}$, 
or just $\alhotgi{1} \bisim \alhotgi{2}$. 
\end{definition}

\section{Abstraction-prefix based $\lambda$-h.o.-term-graphs}
  \label{sec:lambdaaphotgs}

By an `abstraction-prefix (function) based $\lambda$\nb-higher-order-term-graph' we will
mean a term-graph over $\siglambdai{i}$ for $i\in\setexp{0,1}$ that is endowed
with a correct abstraction prefix function. Such a function $\sabspre$ maps abstraction vertices
$\bvert$ to words $\abspre{\bvert}$ consisting of all those abstraction vertices that have $\bvert$ in their scope,
in the order of their nesting from outermost to innermost abstraction vertex.
(Note again that `scope' here corresponds to `extended scope' in the terminology of \cite{grab:roch:2012}, see \ldots).
If $\abspre{\bvert} = \averti{1}\mcdots\averti{n}$,
then $\averti{1},\ldots,\averti{n}$ are the abstraction vertices that have $\bvert$ in their scope,
with $\averti{1}$ the outermost and $\averti{n}$ the innermost such abstraction vertex.

So the conceptual difference between the scope functions of \lambdahotgs\ defined in the previous section,
and \absprefix\ functions of the \lambdaaphotgs\ defined below
is the following:
A scope function $\sScope$ associates with every abstraction vertex $\avert$ 
the information on its scope $\Scope{\avert}$ and makes it available at $\avert$.
By contrast, an \absprefix\ function $\sabspre$ gathers
all scope information that is relevant to a vertex $\bvert$
(in the sense that it contains all abstraction vertices in whose scope $\bvert$ is)
and makes it available at $\bvert$ in the form $\abspre{\bvert}$. 
The fact that \absprefix\ functions make relevant scope information locally available at all vertices
leads to simpler correctness conditions.


\begin{definition}[correct abstraction-prefix function for $\siglambdai{i}$\nb-term-graphs]%
    \label{def:abspre:function:siglambdai}\normalfont
  Let $ \atg = \tuple{\verts,\svlab,\svargs,\sroot}$ be, for an $i\in\setexp{0,1}$,
  a $\siglambdai{i}$\nb-term-graph.
  
  A function $\sabspre \funin \verts \to \verts^*$
  from vertices of $\atg$ to words of vertices 
  is called an \emph{abstraction-prefix function} for $\atg$.
  Such a function is called \emph{correct} if
  for all $\bvert,\bverti{0},\bverti{1}\in\verts$ and $k\in\setexp{0,1}$:
  \begin{align*}
      \;\; & \Rightarrow \;\;
    \abspre{\sroot} = \emptyword  
    &
    (\text{root})
    \\
    \bvert\in\vertsof{\sslabs} 
      \;\logand\;
    \bvert \tgsucci{0} \bverti{0}
      \;\; & \Rightarrow \;\;
    \abspre{\bverti{0}} \prele \abspre{\bvert} \bvert
    & 
    (\sslabs)
    \\
    \bvert\in\vertsof{\sslapp}
      \;\logand\;
    \bvert \tgsucci{k} \bverti{k}
      \;\; & \Rightarrow \;\;
    \abspre{\bverti{k}} \prele \abspre{\bvert} 
    & 
    (\sslapp)
    \\
    \bvert\in\vertsof{\snlvar}
      \;\; & \Rightarrow \;\;
    \abspre{\bvert} \neq \emptyword  
    &
    (\snlvar)_{0}
    \\
    \bvert\in\vertsof{\snlvar}
      \;\logand\;
    \bvert \tgsucci{0} \bverti{0}
      \;\; & \Rightarrow \;\;
    \bverti{0}\in\vertsof{\sslabs}
      \;\logand\;
    \abspre{\bverti{0}}\bverti{0} = \abspre{\bvert}    
    & 
    (\snlvar)_{1}  
  \end{align*}
  Note that analogously as in Definition~\ref{def:lambdahotg},
  if $i=0$, then $(\snlvar)_1$ is trivially true and hence superfluous,
  and if $i=1$, then $(\snlvar)_0$ is redundant, because it follows from $(\snlvar)_1$ in this case. 

  We say that $\atg$ \emph{admits} a correct abstraction-prefix function if
  such a function exists for $\atg$.    
\end{definition}  

\begin{definition}[\lambdaaphotg]\label{def:aplambdahotg}\normalfont
  Let $i\in\setexp{0,1}$.
  A \emph{\lambdaaphotg} 
  (short for \emph{ab\-strac\-tion-pre\-fix based $\lambda$\nb-higher-order-term-graph})
  over signature $\siglambdai{i}$
  is a five-tuple $\alhotg = \tuple{\verts,\svlab,\svargs,\sroot,\sabspre}$
  where $\atgi{\alhotg} = \tuple{\verts,\svlab,\svargs,\sroot}$ is a $\siglambdai{i}$\nb-term-graph,
  called the term graph \emph{underlying} $\alhotg$,
  and $\sabspre$ is a correct \absprefix\ function for $\atgi{\alhotg}$.
  The classes of \lambdaaphotg{s} over $\siglambdai{i}$ 
  will be denoted by $\classlaphotgsi{i}$.
\end{definition}

See Figure~\ref{fig:lambdaaphotg} for two examples, which correspond, as we will see, to the 
\lambdahotg{s} in Figure~\ref{fig:lambdahotg}.

\begin{figure}[t]
\graphs{
  \graph{\alaphotgi{0}'}{running_prefixed_ho_tg_eager}
  \graph{\alaphotgi{1}'}{running_prefixed_ho_tg_non_eager}
}
\vspace*{-2ex}
\caption{\label{fig:lambdaaphotg}
The \protect\lambdaaphotg{s} corresponding to the \protect\lambdahotg{s} in
Fig~\ref{fig:lambdahotg}. The subscripts of abstraction vertices indicate their names.
The super-scripts of vertices indicate their \absprefix{es}. A precise
formulation of this correspondence is given in
Example~\ref{ex:corr:lhotgs:laphotgs}.
}
\end{figure}

The following lemma states some basic properties of the scope function in \lambdaaphotg{s}.

\begin{lemma}\label{lem:laphotgs}
  Let $i\in\setexp{0,1}$ and let $\alaphotg = \tuple{\verts,\svlab,\svargs,\sroot,\sabspre}$ be a \lambdaaphotg\ over $\siglambdai{i}$. 
  Then the following statements hold:
  \begin{enumerate}[(i)]\itemizeprefs
    \item{}\label{lem:laphotg:item:i}
      Suppose that, for some $\avert,\bvert\in\verts$, $\avert$ occurs in $\abspre{\bvert}$.
      Then $\avert\in\vertsof{\sslabs}$, occurs in $\abspre{\bvert}$ only once,  
      and every access path of $\bvert$ passes through $\avert$, but does not end there, and thus $\bvert\neq\avert$.
      Furthermore it holds: $\stringcon{\abspre{\avert}}{\avert} \prele \abspre{\bvert}$.
      And conversely, if $\abspre{\bvert} = \stringcon{\apre}{\stringcon{\avert}{\bpre}}$ for some $\apre,\bpre\in\verts^*$, 
      then $\abspre{\avert} = \apre$.
    \item{}\label{lem:laphotg:item:ii}
      Vertices in abstraction prefixes are abstraction vertices, and hence
      $\sabspre$ is of the form $\sabspre \funin \verts\to(\vertsof{\sslabs})^*$.  
    \item{}\label{lem:laphotg:item:iii}
      For all $\avert\in\vertsof{\sslabs}$ it holds: $\avert\notin\abspre{\avert}$. 
    \item{}\label{lem:laphotg:item:iv}
      While access paths might end in vertices in $\vertsof{\snlvar}$,
      they pass only through vertices in $\vertsof{\sslabs} \cup \vertsof{\sslapp}$.
  \end{enumerate}
\end{lemma}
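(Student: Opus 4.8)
The plan is to derive all four statements from a single structural invariant describing how the abstraction-prefix function evolves along an access path, established by induction on path length. Fix a \lambdaaphotg\ $\alaphotg$ over $\siglambdai{i}$ and an arbitrary access path $\apath \funin \sroot = \bverti{0} \tgsucci{k_1} \bverti{1} \tgsucci{k_2} \cdots \tgsucci{k_n} \bverti{n}$ (one exists for every vertex by root-connectedness). I would prove, by induction on $j\in\setexp{0,\ldots,n}$, the invariant $P(j)$: writing $\abspre{\bverti{j}} = \averti{1}\cdots\averti{m}$, (a) there are indices $0 \le j_1 < \cdots < j_m < j$ with $\averti{t} = \bverti{j_t}$ for each $t$ (so the prefix vertices occur strictly earlier on $\apath$ and are pairwise distinct); (b) each $\averti{t}\in\vertsof{\sslabs}$; and (c) $\abspre{\averti{t}} = \averti{1}\cdots\averti{t-1}$, i.e.\ each prefix vertex carries as its own abstraction prefix exactly the initial segment preceding it.

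The base case $j=0$ is immediate from (root), since $\abspre{\sroot} = \emptyword$ makes $P(0)$ vacuous. For the step I would distinguish the label of $\bverti{j-1}$. If $\bverti{j-1}\in\vertsof{\sslapp}$, then $(\sslapp)$ gives $\abspre{\bverti{j}} \prele \abspre{\bverti{j-1}}$, so the new prefix is an initial segment of the old one and $P(j)$ is inherited. If $\bverti{j-1}\in\vertsof{\sslabs}$, then $(\sslabs)$ gives $\abspre{\bverti{j}} \prele \abspre{\bverti{j-1}}\bverti{j-1}$; either $\abspre{\bverti{j}}$ is again an initial segment of $\abspre{\bverti{j-1}}$ (inherited as before), or $\abspre{\bverti{j}} = \abspre{\bverti{j-1}}\bverti{j-1}$, in which case the appended vertex $\bverti{j-1}$ satisfies (a) with index $j-1$ (larger than all earlier indices), (b) since it is an abstraction vertex, and (c) since its own prefix is exactly $\abspre{\bverti{j-1}}$.

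The decisive case is $\bverti{j-1}\in\vertsof{\snlvar}$. For $i=0$ it cannot arise, since variable vertices have arity $0$ and hence no outgoing edge, so a variable vertex must terminate the path. For $i=1$, condition $(\snlvar)_1$ forces $\abspre{\bverti{j-1}} = \abspre{\bverti{j}}\bverti{j}$ with $\bverti{j}\in\vertsof{\sslabs}$; thus $\bverti{j}$ is the last letter of $\abspre{\bverti{j-1}}$, which is nonempty. By clause (a) of $P(j-1)$ that last letter equals some $\bverti{j_m}$ with $j_m < j-1$, so continuing the path to $\bverti{j} = \bverti{j_m}$ would revisit an earlier vertex, contradicting that $\apath$ is an access path. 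Hence a variable vertex never occurs strictly inside an access path; combined with the arity-$0$ observation this is precisely statement~(\ref{lem:laphotg:item:iv}), and it confirms the variable case never disturbs the induction.

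Finally I would read off the remaining parts by applying $P(n)$ at $\bverti{n}=\bvert$. Any $\avert$ occurring in $\abspre{\bvert} = \averti{1}\cdots\averti{m}$ is some $\averti{t}$, hence an abstraction vertex by (b), and by (a) it equals $\bverti{j_t}$ with $j_t < n$, so it lies on the chosen access path strictly before its end, giving $\bvert\neq\avert$; as $\apath$ was arbitrary, $\avert$ lies on \emph{every} access path, and distinctness of the $\averti{t}$ shows $\avert$ occurs in $\abspre{\bvert}$ only once. Clause (c) then yields $\stringcon{\abspre{\avert}}{\avert} = \averti{1}\cdots\averti{t} \prele \abspre{\bvert}$, and if $\abspre{\bvert} = \stringcon{\apre}{\stringcon{\avert}{\bpre}}$ the single occurrence forces $\apre = \averti{1}\cdots\averti{t-1}$, whence $\abspre{\avert} = \apre$ by (c); this is (\ref{lem:laphotg:item:i}). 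Statement~(\ref{lem:laphotg:item:ii}) is the special case (b), and (\ref{lem:laphotg:item:iii}) follows from (\ref{lem:laphotg:item:i}) with $\bvert = \avert$, since $\avert\in\abspre{\avert}$ would yield $\avert\neq\avert$. The one genuinely delicate point is calibrating the invariant — particularly clause~(c) together with the strictly-increasing-index part of (a) — so that it survives the append step while simultaneously delivering single-occurrence, the prefix relation, and the converse; the variable case is exactly where the no-repeat property of access paths does the real work.
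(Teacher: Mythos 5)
Your proof is correct and takes essentially the same route as the paper's: both arguments track how the abstraction prefix evolves along an arbitrary access path (empty at the root, extended only by appending the abstraction vertex just traversed, otherwise shortened or kept), and both use the no-revisit property of access paths to obtain the single-occurrence claim, $\bvert\neq\avert$, the prefix relation $\stringcon{\abspre{\avert}}{\avert} \prele \abspre{\bvert}$, and the impossibility of an access path passing through a variable vertex. The only difference is presentational: you package the walk-through as a formal induction with an explicit three-clause invariant and obtain statement~(iv) as the impossible case inside the induction, whereas the paper argues the walk-through informally to get statement~(i) first and then derives (iv) from it.
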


\begin{proof}
  Let $i\in\setexp{0,1}$ and let $\alaphotg = \tuple{\verts,\svlab,\svargs,\sroot,\sabspre}$ be a \lambdaaphotg\ over $\siglambdai{i}$. 
  
  For showing (\ref{lem:laphotg:item:i}), let $\avert,\bvert\in\verts$ be such that $\avert$ occurs in $\abspre{\bvert}$. 
  Suppose further that $\apath$
  is an access path of $\bvert$. 
  Note that when walking through $\apath$ the abstraction prefix starts out empty (due to (root)),
  and is expanded only in steps from vertices $\avertacc\in\vertsof{\sslabs}$ 
  (due to ($\sslabs$), ($\sslapp$), and ($\snlvar$)$_1$)
  in which just $\avertacc$ is added to the prefix on the right (due to ($\sslabs$)).
  Since $\avert$ occurs in $\abspre{\bvert}$, it follows that $\avert\in\vertsof{\sslabs}$, 
  that $\avert$ must be visited on $\apath$, 
  and that $\apath$ continues after the visit to $\avert$.
  That $\apath$ is an access path also entails that $\avert$ is not visited again on $\apath$,
  hence that $\bvert\neq\avert$ and that $\avert$ occurs only once in $\abspre{\bvert}$,
  and that $\stringcon{\abspre{\avert}}{\avert}$, the abstraction prefix of the successor vertex of $\avert$ on $\apath$,
  is a prefix of the abstraction prefix of every vertex that is visited on $\apath$ after $\avert$. 
  
  Statements~(\ref{lem:laphotg:item:ii}) and (\ref{lem:laphotg:item:iii}) follow directly from statement~(\ref{lem:laphotg:item:i}).
  
  For showing (\ref{lem:laphotg:item:iv}), consider an access path 
  $\apath \funin \sroot = \bverti{0} \tgsucc \cdots \tgsucc \bverti{n}$ 
  that leads to a vertex $\bverti{n}\in\vertsof{\snlvar}$.
  If $i=0$, then there is no path that extends $\apath$ properly beyond $\bverti{n}$. 
  So suppose $i=1$, 
  and let $\bverti{n+1}\in\verts$ be such that $\bverti{n} \tgsucci{0} \bverti{n+1}$. 
  Then ($\snlvar$)$_1$ implies that  
  $\abspre{\bverti{n}} = \stringcon{\abspre{\bverti{n+1}}}{\bverti{n+1}}$,
  from which it follows by (\ref{lem:laphotg:item:i})  
  that 
  $\bverti{n+1}$ is visited already on $\apath$.
  Hence $\apath$ does not extend to a longer path that is again an access path.
\end{proof}

\begin{definition}[homomorphism, bisimulation]\label{def:homom:aplambdahotg}\normalfont
  Let $i\in\setexp{0,1}$.
  Let $\alaphotgi{1}$ and $\alaphotgi{2}$ 
  be \lambdaaphotg{s} over $\siglambdai{i}$ with
  $\alaphotgi{k} = \tuple{\vertsi{k},\svlabi{k},\svargsi{k},\srooti{k},\sabsprei{k}}$
  for $k\in\setexp{1,2}$.

  A \emph{homomorphism}, also called a \emph{functional bisimulation}, 
  from $\alaphotgi{1}$ to $\alaphotgi{2}$ 
  is a morphism from the structure
  $\tuple{\vertsi{1},\svlabi{1},\svargsi{1},\srooti{1},\sabsprei{1}}$
  to the structure
  $\tuple{\vertsi{2},\svlabi{2},\svargsi{2},\srooti{2},\sabsprei{2}}$,
  that is, a function 
  $ \sahom \funin \vertsi{1} \to \vertsi{2}$ 
  such that, for all $\avert\in\vertsi{1}$ 
  the conditions (labels), (arguments), and (roots) in 
  in \eqref{eq:def:homom} are satisfied, and additionally,
  for all $\bvert\in\vertsi{1}$:
\begin{equation}\label{eq:def:homom:aplambdahotg}
\begin{aligned}
  \funap{\bar{\sahom}}{\absprei{1}{\bvert}}
    & = \absprei{2}{\ahom{\bvert}}
  & & & & & (\text{abstraction-prefix functions})
\end{aligned}
\end{equation}
where ${\bar{\sahom}}$ is the homomorphic extension of $\sahom$ to $\vertsi{1}$.
In this case we write $\alaphotgi{1} \funbisimi{\sahom} \alaphotgi{2}$,
or $\alaphotgi{2} \convfunbisimi{\sahom} \alaphotgi{1}$.
And we write
$\alaphotgi{1} \funbisim \alaphotgi{2}$,
or for that matter $\alaphotgi{2} \convfunbisim \alaphotgi{1}$,
if there is a homomorphism (a functional bisimulation) from $\alaphotgi{1}$ to $\alaphotgi{2}$.
An \emph{isomorphism} from $\alaphotgi{1}$ to $\alaphotgi{2}$ is a
homomorphism from $\alaphotgi{1}$ to $\alaphotgi{2}$ that, as a function from $\vertsi{1}$ to $\vertsi{2}$, is bijective. 
If there exists an isomorphism $\saiso \funin \vertsi{1} \to \vertsi{2}$ from $\alaphotgi{1}$ to $\alaphotgi{2}$,
then we say that $\alaphotgi{1}$ and $\alaphotgi{2}$ are \emph{isomorphic},
and write $\alaphotgi{1} \isoi{\saiso} \alaphotgi{2}$
or, dropping $\saiso$ as subscript, $\alaphotgi{1} \iso \alaphotgi{2}$.
The property of existence of an isomorphism between two \lambdahotg{s}\ forms an equivalence relation.
We denote by $\classlaphotgsisoi{0}$ and $\classlaphotgsisoi{1}$
the isomorphism equivalence classes of \lambdahotg{s} over $\siglambdai{0}$ and $\siglambdai{1}$, respectively. 

A \emph{bisimulation} between  
$\alaphotgi{1}$ and $\alaphotgi{2}$ 
is a term graph 
$ \alaphotg = \tuple{\abisim,\svlab,\svargs,\sroot,\sabspre}$ over $\asig$
with $\abisim \subseteq \vertsi{1}\times\vertsi{2}$ 
 and $\sroot = \pair{\srooti{1}}{\srooti{2}}$
such that
$ \alaphotgi{1} \convfunbisimi{\sproji{1}} \alaphotg \funbisimi{\sproji{2}} \alaphotgi{2}$
where $\sproji{1}$ and $\sproji{2}$ are projection functions, defined, for $i\in\setexp{1,2}$,
by $ \sproji{i} \funin \vertsi{1}\times\vertsi{2} \to \vertsi{i} $,
$\pair{\averti{1}}{\averti{2}} \mapsto \averti{i}$.
If there exists a bisimulation $\abisim$ between $\alhotgi{1}$ and $\alhotgi{2}$,
then we write $\alhotgi{1} \bisimi{\abisim} \alhotgi{2}$, 
or just $\alhotgi{1} \bisim \alhotgi{2}$. 
\end{definition}

The following proposition defines mappings between \lambdahotg{s} and
\lambdaaphotg{s} by which we establish a bijective correspondence between the
two classes. For both directions the underlying \lambdatg\ remains unchanged.
$\slhotgstolaphotgsi{i}$ derives an abstraction-prefix function $\sabspre$ from
a scope function by assigning to each vertex a word of its binders in the
correct nesting order.
$\slaphotgstolhotgsi{i}$ defines its scope function $\sScope$ by
assigning to each \lambda-vertex $\avert$ the set of vertices that have
$\avert$ in their prefix (along with $\avert$ since a vertex never has itself
in its abstraction prefix).

\begin{proposition}\label{prop:mappings:lhotgs:laphotgs}
  For each $i\in\setexp{0,1}$, the mappings $\slhotgstolaphotgsi{i}$ and $\slaphotgstolhotgsi{i}$ 
  are well-defined between the class of \lambdahotg{s} over $\siglambdai{i}$ 
  and the class of \lambdaaphotg{s} over $\siglambdai{i}$:
\begin{align}
& 
\left.
\begin{aligned}
  &
  \slhotgstolaphotgsi{i} \funin \classlhotgsi{i} \to \classlaphotgsi{i},
    \;\;
  \alhotg = \tuple{\verts,\svlab,\svargs,\sroot,\sScope} 
    \mapsto \lhotgstolaphotgsi{i}{\alhotg} \defdby \tuple{\verts,\svlab,\svargs,\sroot,\sabspre} 
  \\
  & \hspace*{22ex}
  \text{where }
  \sabspre \funin \verts\to\verts^*, \;\,
  \bvert \mapsto \averti{0}\mcdots\averti{n} \;
    \text{ if } \begin{aligned}[t]
                  & \binders{\bvert} \setminus \setexp{\bvert} = \setexp{\averti{0},\ldots,\averti{n}} \text{ and}
                  \\[-0.5ex]
                  & \Scope{\averti{n}} \subsetneqq \Scope{\averti{n-1}} \ldots \subsetneqq \Scope{\averti{0}}
                 \end{aligned}  
\end{aligned}
\right\} \label{eq1:prop:mappings:lhotgs:laphotgs}
\\[0.75ex]
&
\left.
\begin{aligned}
  &
  \slaphotgstolhotgsi{i} \funin \classlaphotgsi{i} \to \classlhotgsi{i},
    \;\;
  \alhotg = \tuple{\verts,\svlab,\svargs,\sroot,\sabspre} 
    \mapsto \lhotgstolaphotgsi{i}{\alhotg} \defdby \tuple{\verts,\svlab,\svargs,\sroot,\sScope} 
  \\
  & \hspace*{22ex}
  \text{where }
  \sScope \funin \vertsof{\sslabs}\to\powersetof{\verts}, \;\,
    \avert \mapsto \descsetexp{\bvert\in\verts}{\avert\text{ occurs in }\abspre{\bvert}} \cup \setexp{\avert}
\end{aligned}
\hspace*{1.7ex}\right\} \label{eq2:prop:mappings:lhotgs:laphotgs}
\end{align}
On the isomorphism equivalence classes $\classlhotgsisoi{i}$ of $\classlhotgsi{i}$ ,
and $\classlaphotgsisoi{i}$ of $\classlaphotgsi{i}$,
the functions $\slhotgstolaphotgsi{i}$ and $\slaphotgstolhotgsi{i}$ induce the functions
$\slhotgsisotolaphotgsisoi{i} \funin \classlhotgsisoi{i} \to \classlaphotgsisoi{i}$,
$\eqcl{\alhotg}{\siso} \mapsto \eqcl{\lhotgstolaphotgsi{i}{\alhotg}}{\siso}$
and
$\slaphotgsisotolhotgsisoi{i} \funin \classlaphotgsisoi{i} \to \classlhotgsisoi{i}$,
$\eqcl{\alaphotg}{\siso} \mapsto \eqcl{\laphotgstolhotgsi{i}{\alaphotg}}{\siso}$.
\end{proposition}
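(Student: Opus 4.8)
The plan is, for each $i \in \setexp{0,1}$, to verify three claims separately: (1)~that $\slhotgstolaphotgsi{i}$ sends a \lambdahotg\ to a structure satisfying the correctness conditions of Definition~\ref{def:abspre:function:siglambdai}, hence to a genuine \lambdaaphotg; (2)~that $\slaphotgstolhotgsi{i}$ sends a \lambdaaphotg\ to a structure satisfying the scope conditions of Definition~\ref{def:lambdahotg}; and (3)~that both constructions are invariant under isomorphism, so that they descend to the isomorphism classes and induce the claimed maps $\slhotgsisotolaphotgsisoi{i}$ and $\slaphotgsisotolhotgsisoi{i}$. Throughout I would use the translation that, for the $\sabspre$ built in \eqref{eq1:prop:mappings:lhotgs:laphotgs}, the vertex $\avert$ occurs in $\abspre{\bvert}$ exactly when $\avert \in \binders{\bvert} \setminus \setexp{\bvert}$, that is, when $\bvert \in \Scopemin{\avert}$.

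For claim~(1) I would first observe that the word $\abspre{\bvert}$ is well-defined: by Lemma~\ref{lem:lhotg}(\ref{lem:lhotg:item:iii}) the set $\binders{\bvert}$ is linearly ordered by scope inclusion, and since (self) and (nest) force $\Scope{\bvert} \subsetneqq \Scope{\avert}$ for every $\avert \in \binders{\bvert}$ with $\avert \neq \bvert$, the vertex $\bvert$ (if it is an abstraction) is the innermost binder; hence $\binders{\bvert} \setminus \setexp{\bvert}$ is again a chain and fixes a unique enumeration. The five correctness conditions are then checked by rephrasing each prefix relation as a statement about binder sets. Condition (root) follows at once from the scope condition (root). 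Conditions $(\sslabs)$ and $(\sslapp)$ amount to showing that, for an edge $\bvert \tgsucci{k} \bverti{k}$, the set $\binders{\bverti{k}} \setminus \setexp{\bverti{k}}$ is an \emph{initial segment} of the scope-ordered chain $\binders{\bvert}$: the inclusion $\binders{\bverti{k}} \setminus \setexp{\bverti{k}} \subseteq \binders{\bvert}$ comes directly from (closed), and that this subset is an initial segment — i.e.\ that together with each of its elements it contains every binder of wider scope — comes from the nesting of scopes (Lemma~\ref{lem:lhotg}(\ref{lem:lhotg:item:ii})). Condition $(\snlvar)_0$ follows from (scope)$_0$, and $(\snlvar)_1$ from (scope)$_1$, which makes $\binders{\bvert}$ and $\binders{\bverti{0}}$ coincide as scope-ordered sets so that $\abspre{\bvert} = \stringcon{\abspre{\bverti{0}}}{\bverti{0}}$.

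For claim~(2) I would verify the six scope conditions of Definition~\ref{def:lambdahotg} for the $\sScope$ built in \eqref{eq2:prop:mappings:lhotgs:laphotgs}, now reading ``$\bvert \in \Scopemin{\avert}$'' as ``$\avert$ occurs in $\abspre{\bvert}$''. Condition (root) comes from (root) for $\sabspre$, and (self) is immediate from the term $\cup\,\setexp{\avert}$ in the definition of $\sScope$. Condition (nest) follows from Lemma~\ref{lem:laphotgs}(\ref{lem:laphotg:item:i}), which yields $\stringcon{\abspre{\avert}}{\avert} \prele \abspre{\bvert}$ whenever $\avert$ occurs in $\abspre{\bvert}$, while (closed) follows from the prefix comparisons furnished by $(\sslabs)$, $(\sslapp)$, and $(\snlvar)_1$ (distinguishing the label of $\bvert$); finally (scope)$_0$ and (scope)$_1$ are read off from $(\snlvar)_0$ and $(\snlvar)_1$ respectively. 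As noted after Definitions~\ref{def:lambdahotg} and~\ref{def:abspre:function:siglambdai}, the cases $i=0$ and $i=1$ need not be treated entirely separately, since the ``$1$''-conditions are vacuous for $i=0$ and subsume the ``$0$''-conditions for $i=1$.

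Claim~(3) is the routine part: an isomorphism of \lambdahotgs\ preserves, by \eqref{eq:def:homom:lambdahotg}, the scope function, hence the binder relation and scope inclusion, and therefore carries the scope-ordered chain $\abspre_1(\bvert)$ onto $\abspre_2(\saiso(\bvert))$, so it is simultaneously an isomorphism of the images under $\slhotgstolaphotgsi{i}$; symmetrically, an isomorphism of \lambdaaphotgs\ preserves the abstraction-prefix function by \eqref{eq:def:homom:aplambdahotg}, hence the derived scopes. I expect the main obstacle to lie in the positional content of claim~(1): whereas the scope conditions are purely set-theoretic, the conditions $(\sslabs)$ and $(\sslapp)$ constrain the \emph{linear order} of abstraction prefixes, so one must prove not merely that moving to a successor can only discard binders but that it discards exactly an inner final segment. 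The delicate point is verifying this initial-segment property in the presence of cycles, where I would use (nest) — which forbids two distinct abstraction vertices from each lying in the other's strict scope — to exclude the configuration that would otherwise let the discarded binders fail to form a final segment.
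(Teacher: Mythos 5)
Your proposal is correct, but there is nothing in the paper to compare it against: the paper states Proposition~\ref{prop:mappings:lhotgs:laphotgs} (like Theorem~\ref{thm:corr:lhotgs:laphotgs} after it) without any proof, so your argument fills a genuine omission rather than paralleling an existing one. Your decomposition into three claims is exactly what well-definedness requires, and the individual reductions check out against the definitions: well-definedness of the word $\abspre{\bvert}$ via Lemma~\ref{lem:lhotg},~(\ref{lem:lhotg:item:iii}) together with the observation that (self) and (nest) make an abstraction vertex the innermost element of its own binder set; (root), $(\snlvar)_0$, $(\snlvar)_1$ translating directly into their scope-side counterparts and vice versa; (nest) for $\slaphotgstolhotgsi{i}$ from Lemma~\ref{lem:laphotgs},~(\ref{lem:laphotg:item:i}); and (closed) by the label case distinction over $(\sslabs)$, $(\sslapp)$, $(\snlvar)_1$. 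The isomorphism-invariance argument is also sound, since isomorphisms preserve the scope function (resp.\ the prefix function) by definition, hence the binder relation (resp.\ the occurs-in-prefix relation) from which the other structure is derived.

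One remark on the point you flag as delicate: the initial-segment property needed for $(\sslabs)$ and $(\sslapp)$ is easier than you anticipate, and cycles play no role because the argument is purely set-theoretic. If $\avert \in \binders{\bverti{k}} \setminus \setexp{\bverti{k}}$ and $\avertacc \in \binders{\bvert}$ has wider scope, then monotonicity alone gives $\bverti{k} \in \Scope{\avert} \subseteq \Scope{\avertacc}$, so $\avertacc \in \binders{\bverti{k}}$ outright; (nest) (with (self)) is needed only to exclude the degenerate case $\avertacc = \bverti{k}$, since $\bverti{k} \in \Scopemin{\avert}$ would force $\Scope{\bverti{k}} \subsetneqq \Scope{\avert}$, contradicting $\Scope{\avert} \subsetneqq \Scope{\avertacc}$. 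Combined with the inclusion $\binders{\bverti{k}} \setminus \setexp{\bverti{k}} \subseteq \binders{\bvert}$ from (closed) and the fact that $\binders{\bvert}$ is a finite $\subsetneqq$-chain, upward-closedness immediately yields the required prefix relations $\abspre{\bverti{0}} \prele \stringcon{\abspre{\bvert}}{\bvert}$ and $\abspre{\bverti{k}} \prele \abspre{\bvert}$.
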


\begin{theorem}[correspondence of \lambdahotg{s} with \lambdaaphotg{s}]\label{thm:corr:lhotgs:laphotgs}
  For each $i\in\setexp{0,1}$ it holds that
  the mappings $\slhotgstolaphotgsi{i}$ in \eqref{eq1:prop:mappings:lhotgs:laphotgs}
  and $\slaphotgstolhotgsi{i}$ in \eqref{eq2:prop:mappings:lhotgs:laphotgs}
  are each other's inverse;
  thus they define a bijective correspondence between  
  the class of \lambdahotg{s} over $\siglambdai{i}$ 
  and the class of \lambdaaphotg{s} over~$\siglambdai{i}$.
  Furthermore, they preserve and reflect the sharing orders on $\classlhotgsi{i}$ and on $\classlaphotgsi{i}$:
  \begin{align*}
    (\forall \alhotgi{1},\alhotgi{2}\in\classlhotgsi{i}) \;\; \hspace*{-18ex} &&
      \alhotgi{1} \funbisim \alhotgi{2}
        \;\; &\Longleftrightarrow \;\;
      \lhotgstolaphotgsi{i}{\alhotgi{1}} \funbisim \lhotgstolaphotgsi{i}{\alhotgi{1}} 
    \\  
    (\forall \alaphotgi{1},\alaphotgi{2}\in\classlaphotgsi{i}) \;\; \hspace*{-18ex} &&  
       \laphotgstolhotgsi{i}{\alaphotgi{1}} \funbisim \laphotgstolhotgsi{i}{\alaphotgi{1}}
         \;\; &\Longleftrightarrow \;\;
       \alhotgi{1} \funbisim \alhotgi{2}
  \end{align*}
\end{theorem}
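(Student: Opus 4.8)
The plan is to exploit that both mappings $\slhotgstolaphotgsi{i}$ and $\slaphotgstolhotgsi{i}$ leave the underlying $\siglambdai{i}$\nb-term-graph untouched and merely translate between a scope function and the associated abstraction-prefix function. This splits the theorem into two independent tasks: (a) the two round-trips $\slaphotgstolhotgsi{i}\circ\slhotgstolaphotgsi{i}$ and $\slhotgstolaphotgsi{i}\circ\slaphotgstolhotgsi{i}$ are the identity, which gives the bijective correspondence (well-definedness being already granted by Proposition~\ref{prop:mappings:lhotgs:laphotgs}); and (b) for a function $\sahom$ on the common vertex set, the extra scope condition \eqref{eq:def:homom:lambdahotg} is equivalent to the extra prefix condition \eqref{eq:def:homom:aplambdahotg} whenever the scope and prefix data are matched under the bijection. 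Since the carrier term graph is unchanged, $\sahom$ satisfies (roots), (labels) and (arguments) of \eqref{eq:def:homom} on one side iff it does on the other, so (b) yields both implications of the order statement at once, with $\slhotgstolaphotgsi{i}$ and $\slaphotgstolhotgsi{i}$ transporting a homomorphism to a homomorphism without altering the vertex map.

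For task (a) I would argue as follows. Starting from a scope function $\sScope$, the reconstructed scope places $\bvert$ in $\Scope{\avert}$ exactly when $\avert$ occurs in $\abspre{\bvert}$ or $\bvert=\avert$; by the definition of $\slhotgstolaphotgsi{i}$ this means $\avert\in\binders{\bvert}\setminus\setexp{\bvert}$ or $\bvert=\avert$, i.e.\ $\bvert\in\Scope{\avert}$, where the case $\bvert=\avert$ is supplied by (self). Starting instead from a correct prefix function, Lemma~\ref{lem:laphotgs}(\ref{lem:laphotg:item:i})--(\ref{lem:laphotg:item:iii}) shows that the reconstructed scope function recovers exactly the set of abstraction vertices occurring in each $\abspre{\bvert}$, and the same part~(\ref{lem:laphotg:item:i}), which ties the position of $\avert$ in $\abspre{\bvert}$ to $\stringcon{\abspre{\avert}}{\avert}\prele\abspre{\bvert}$ and hence to strict scope nesting, guarantees that the nesting order imposed by $\slhotgstolaphotgsi{i}$ reproduces the original left-to-right order of $\abspre{\bvert}$.

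The heart of the proof, and the step I expect to be the main obstacle, is the forward direction of (b): from ``respects scopes'' to ``respects prefixes''. Here I would show that $\sahom$ maps the chain of binders of each vertex $\bvert$ onto the chain of binders of $\ahom{\bvert}$ by a correspondence preserving both length and nesting order, so that applying $\sahom$ letterwise to $\abspre{\bvert}$ yields precisely $\abspre{\ahom{\bvert}}$. The inclusion $\funap{\sahomextext}{\binders{\bvert}}\subseteq\binders{\ahom{\bvert}}$ is immediate from \eqref{eq:def:homom:lambdahotg}, and surjectivity onto $\binders{\ahom{\bvert}}$ follows by lifting an access path of $\ahom{\bvert}$ back along $\sahom$ via Proposition~\ref{prop:hom:tgs:paths}(\ref{prop:hom:tgs:paths:item:ii}), combined with the characterization of binders through access paths in Lemma~\ref{lem:lhotg}(\ref{lem:lhotg:item:i}). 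The genuinely delicate point is injectivity and order-preservation on binder chains: I must rule out that $\sahom$ identifies two distinct, properly nested binders of $\bvert$. For this I would draw on the strict nesting of their scopes (Lemma~\ref{lem:lhotg}(\ref{lem:lhotg:item:ii})--(\ref{lem:lhotg:item:iii})), on $\ahom{\avert}\in\Scope{\ahom{\avert}}$ holding by (self) in the target, and on the fact that each binder lies on every access path of $\bvert$ (Lemma~\ref{lem:lhotg}(\ref{lem:lhotg:item:i})); cutting the cycle created in the image access path under the assumption $\ahom{\avert_1}=\ahom{\avert_2}$ and comparing with the two scope images $\funap{\sahomextext}{\Scope{\avert_1}}$ and $\funap{\sahomextext}{\Scope{\avert_2}}$ should force a contradiction and simultaneously align the two scope-inclusion chains. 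The backward direction is easier: since $\slaphotgstolhotgsi{i}$ builds $\Scope{\avert}$ as the set of vertices carrying $\avert$ in their prefix together with $\avert$, condition \eqref{eq:def:homom:aplambdahotg} transports scope membership directly, and I would obtain $\funap{\sahomextext}{\Scope{\avert}}=\Scope{\ahom{\avert}}$ by double inclusion, invoking surjectivity of $\sahom$ (Proposition~\ref{prop:hom:tgs:paths}(\ref{prop:hom:tgs:paths:item:iii})) for the nontrivial containment.
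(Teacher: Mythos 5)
The paper itself contains no proof of Theorem~\ref{thm:corr:lhotgs:laphotgs} (the theorem is followed immediately by Example~\ref{ex:corr:lhotgs:laphotgs}), so your proposal has to stand on its own. Its first half does: reducing everything to the two side conditions \eqref{eq:def:homom:lambdahotg} and \eqref{eq:def:homom:aplambdahotg} (both mappings fix the underlying term graph), and the two round-trip computations via (self) and Lemma~\ref{lem:laphotgs},~(\ref{lem:laphotg:item:i})--(\ref{lem:laphotg:item:iii}), correctly establish the bijection claim. Your backward direction (``prefix homomorphism $\Rightarrow$ scope homomorphism'') is, however, not yet complete: for $y\in\Scopei{2}{\ahom{\avert}}$, surjectivity only hands you \emph{some} preimage $\bvert$ of $y$, and then \eqref{eq:def:homom:aplambdahotg} only tells you that some vertex of the fibre $\sahom^{-1}(\ahom{\avert})$ occurs in $\absprei{1}{\bvert}$ --- not that $\avert$ itself does. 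This gap is repairable: take an access path of $y$ in the target, whose segment after $\ahom{\avert}$ stays at prefixes extending $\stringcon{\absprei{2}{\ahom{\avert}}}{\ahom{\avert}}$ (Lemma~\ref{lem:laphotgs},~(\ref{lem:laphotg:item:i})), lift that segment back along $\sahom$ \emph{starting at} $\avert$ (Proposition~\ref{prop:hom:tgs:paths},~(\ref{prop:hom:tgs:paths:item:ii})), and show by induction along the lifted path, using $(\sslabs)$, $(\sslapp)$, $(\snlvar)_1$ of Definition~\ref{def:abspre:function:siglambdai}, that the letter at position $\length{\absprei{2}{\ahom{\avert}}}$ of the prefix remains $\avert$; the endpoint is then a preimage of $y$ lying in $\Scopei{1}{\avert}$.

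The step you single out as the heart of the matter --- that $\sahom$ cannot identify two distinct nested binders of $\bvert$, nor a binder of $\bvert$ with $\bvert$ itself --- is, by contrast, unfixable, because it is false for Definition~\ref{def:homom:lambdahotg} as literally stated. Let $\alhotgi{1}$ have two abstraction vertices $\avert,\bvert$ with $\avert\tgsucci{0}\bvert$, $\bvert\tgsucci{0}\avert$, root $\avert$ (a cyclic representation of $\sslabs\sslabs\sslabs\cdots$, with no variable vertices, hence in $\classlhotgsi{i}$ for both $i$), and scope function $\Scope{\avert}=\setexp{\avert,\bvert}$, $\Scope{\bvert}=\setexp{\bvert}$; the conditions (root), (self), (nest), (closed) all hold, and (scope)$_0$, (scope)$_1$ are vacuous. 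Let $\alhotgi{2}$ be the one-vertex graph $\cvert\tgsucci{0}\cvert$ with $\Scope{\cvert}=\setexp{\cvert}$. The only root-, label- and argument-compatible vertex map is $\sahom\colon\avert,\bvert\mapsto\cvert$, and it \emph{does} satisfy \eqref{eq:def:homom:lambdahotg}, since $\funap{\sahomextext}{\setexp{\avert,\bvert}} = \funap{\sahomextext}{\setexp{\bvert}} = \setexp{\cvert} = \Scope{\cvert}$; hence $\alhotgi{1}\funbisim\alhotgi{2}$, although $\sahom$ identifies the binder $\avert$ with the vertex $\bvert$ in its scope (a three-vertex cycle gives the same with two distinct nested binders identified). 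But the corresponding abstraction prefixes are $\abspre{\avert}=\emptyword$, $\abspre{\bvert}=\avert$, and $\abspre{\cvert}=\emptyword$, so \eqref{eq:def:homom:aplambdahotg} fails at $\bvert$ because $\funap{\bar{\sahom}}{\abspre{\bvert}}=\cvert\neq\emptyword=\abspre{\ahom{\bvert}}$; thus $\lhotgstolaphotgsi{i}{\alhotgi{1}}\not\funbisim\lhotgstolaphotgsi{i}{\alhotgi{2}}$, and the first equivalence of the theorem itself fails. The root cause is that the set-image condition \eqref{eq:def:homom:lambdahotg} forgets multiplicities --- the length of binder chains --- which is exactly what \eqref{eq:def:homom:aplambdahotg} records, so no argument from access paths, (self) and (nest) can deliver the injectivity you need. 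Your plan (and the theorem) becomes provable only after strengthening the scope condition so that membership is \emph{reflected}, e.g.\ $\bvert\in\Scopei{1}{\avert}\Leftrightarrow\ahom{\bvert}\in\Scopei{2}{\ahom{\avert}}$: then identifying $\avert$ with $\bvert\in\Scopemin{\avert}$, or two nested binders, yields $\avert\in\Scopei{1}{\bvert}$ by (self) in the target, whence $\avert\in\Scopemin{\avert}$ by (nest), a contradiction --- precisely the two-line argument your sketch is hoping for.
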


\begin{example}\label{ex:corr:lhotgs:laphotgs}\normalfont
The \lambdahotg{s} in Figure~\ref{fig:lambdahotg} correspond to the \lambdaaphotg{s} in Figure~\ref{fig:lambdaaphotg}
via the mappings $\slhotgstolaphotgsi{i}$ and $\slaphotgstolhotgsi{i}$ as follows: 
$
~~~~~~ \lhotgstolaphotgsi{i}{\alhotgi{0}} = \alaphotgi{0}' \, ,
~~~~~~ \lhotgstolaphotgsi{i}{\alhotgi{1}} = \alaphotgi{1}' \, ,
~~~~~~ \laphotgstolhotgsi{i}{\alaphotgi{0}} = \alhotgi{0}' \, ,
~~~~~~ \lhotgstolaphotgsi{i}{\alaphotgi{1}} = \alhotgi{1}' \, 
$.
\end{example}


For \lambdahotg{s}\ over the signature $\siglambdai{0}$ (that is,
without variable back-links) essential binding information is lost when looking
only at the underlying term graph, to the extent that \lambdaterms\ cannot be
unambiguously represented anymore. For instance the \lambdahotg{s} that
represent the \lambdaterms\ $\labs{\avar\bvar}{\lapp{\avar}{\bvar}}$ and
$\labs{\avar\bvar}{\lapp{\avar}{\avar}}$ have the same
underlying term graph. The same holds for \lambdaaphotg{s}.


This is not the case for \lambdahotg{s} (\lambdaaphotg{s}) over
$\siglambdai{1}$, because the abstraction vertex to which a variable-occurrence
vertex belongs is uniquely identified by the back-link. 
This is the reason why the following notion is only defined for the signature
$\siglambdai{1}$.

\begin{definition}[\lambdatg\ over $\siglambdai{1}$]\label{def:classltgs}\normalfont
  A term graph $\atg$ over $\siglambdai{1}$ is called a \emph{\lambdatg} over $\siglambdai{1}$
  if $\atg$ admits a correct ab\-strac\-tion-pre\-fix function. 
  By $\classltgsi{1}$ we denote the class of \lambdatg{s} over $\siglambdai{1}$.
\end{definition}
 
In the rest of this section we examine, and then dismiss, a naive approach to
implementing functional bisimulation on \lambdahotg{s} or \lambdaaphotg{s},
which is to apply the homomorphism on the underlying term graph, hoping that
this application would simply extend to the \lambdahotg\ (\lambdaaphotg)
without further ado. We demonstrate that this approach fails, concluding that a
faithful first-order implementation of functional bisimulation must not be
negligent of the scoping information.

\begin{definition}[scope- and \absprefix-forgetful mappings]\label{def:forgetful}\normalfont
  Let $i\in\setexp{0,1}$. 
  The \emph{scope-forgetful mapping}~$\sscopeforgetfullhotgsi{i}$ 
  and the \emph{\absprefix-for\-get\-ful mapping}~$\sabspreforgetfullaphotgsi{1}$
  map \lambdahotg{s} in $\classltgsi{i}$, and respectively, \lambdahotg{s} in $\classltgsi{i}$
  to their underlying term graphs: 
  \begin{center}
    $\sscopeforgetfullhotgsi{i} \funin \classlhotgsi{i} \to \classtgssiglambdai{i}\, , \;\; 
       \tuple{\verts,\svlab,\svargs,\sroot,\sScope} \mapsto \tuple{\verts,\svlab,\svargs,\sroot}$
    \\[0.5ex]
    $\sabspreforgetfullaphotgsi{i} \funin \classlaphotgsi{i} \to \classtgssiglambdai{i}\, , \;\; 
       \tuple{\verts,\svlab,\svargs,\sroot,\sabspre} \mapsto \tuple{\verts,\svlab,\svargs,\sroot}$
  \end{center}  
\end{definition}

\begin{definition}\normalfont
  Let $\alhotg$ be a \lambdahotg\ over $\siglambdai{i}$ for $i\in\setexp{0,1}$ 
  with underlying term graph $\scopeforgetfullhotgsi{i}{\alhotg}$.
  And suppose that $\scopeforgetfullhotgsi{i}{\alhotg} \funbisimi{\sahom} \atgacc$ holds
  for a term graph $\atgacc$ over $\siglambdai{i}$
  and a functional bisimulation $\sahom$.
  We say that $\sahom$ \emph{extends to a functional bisimulation on} $\alhotg$
  if $\atgacc$ can be endowed with a scope function to obtain a \lambdahotg\ $\alhotgacc$ 
  with $\scopeforgetfullhotgsi{i}{\alhotgacc} = \atgacc$ 
  and such that it holds $\alhotg \funbisimi{\sahom} \alhotgacc$. 
  
  We say that a class $\aclass$ of \lambdahotg{s} is 
  \emph{closed under functional bisimulations on the underlying term graphs}
  if for every $\alhotg\in\aclass$
    and for every homomorphism $\sahom$ on the term graph underlying $\alhotg$ 
    that witnesses $\scopeforgetfullhotgsi{i}{\alhotg} \funbisimi{\sahom} \atgacc$ for a term graph $\atgacc$
    there exists $\alhotgacc\in\aclass$ with $\scopeforgetfullhotgsi{i}{\alhotgacc} = \atgacc$
    such that $\alhotg \funbisimi{\sahom} \alhotgacc$ holds, that is,
    $\sahom$ is also a homorphism between $\alhotg$ and $\alhotgacc$.
    
  
  These notions are also extended, by analogous stipulations,
  to \lambdaaphotg{s} over $\siglambdai{i}$ for $i\in\setexp{0,1}$ and their underlying term graphs. 
\end{definition}

%
%

\begin{proposition}\label{prop:no:extension:funbisim:siglambda1}
  Neither the class $\classlhotgsi{1}$ of \lambdahotg{s} nor the class $\classlaphotgsi{1}$ of \lambdaaphotg{s} 
  is closed under functional bisimulations on the underlying term graphs.

\end{proposition}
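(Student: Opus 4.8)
The plan is to refute the closure property by one explicit counterexample, which I would set up in the abstraction-prefix formulation (where the correctness conditions of Definition~\ref{def:abspre:function:siglambdai} are local and easy to check) and then transfer to the scope-function world via the bijective, underlying-term-graph-preserving correspondence of Theorem~\ref{thm:corr:lhotgs:laphotgs}. The term I would use is $M = \lapp{(\labs{\avar}{\lapp{\avar}{(\labs{\bvar}{\bvar})}})}{(\labs{\cvar}{\cvar})}$, i.e.\ $(\labs{\avar}{\lapp{\avar}{(\labs{\bvar}{\bvar})}})$ applied to $\labs{\cvar}{\cvar}$, which contains \emph{two separate copies} of the identity. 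Its underlying term graph $\atg$ over $\siglambdai{1}$ is a tree: a root application vertex $\bverti{0}$ with left child the abstraction $\avert$ (for $\avar$) and right child the abstraction $\bverti{1}$ (the copy $\labs{\cvar}{\cvar}$); the body of $\avert$ is an application vertex whose left child is the variable vertex back-linked to $\avert$ and whose right child is the abstraction $\bverti{2}$ (the copy $\labs{\bvar}{\bvar}$); each identity abstraction has as its sole successor a variable vertex back-linked to it.

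Next I would decorate $\atg$ \emph{non-eagerly} by an abstraction-prefix function $\sabspre$ that places the inner identity copy $\bverti{2}$ inside the scope of $\avert$, that is, $\abspre{\bverti{2}} = \avert$ (whence the bound variable of $\bverti{2}$ gets prefix $\stringcon{\avert}{\bverti{2}}$), while the outer copy keeps $\abspre{\bverti{1}} = \emptyword$. I would check that this $\sabspre$ satisfies (root), ($\sslabs$), ($\sslapp$), and ($\snlvar$)$_1$; the only point worth noting is that the body-application of $\avert$ is \emph{forced} to have prefix $\avert$ (its left successor is the bound-variable vertex, whose prefix is pinned to $\avert$ by ($\snlvar$)$_1$, so ($\sslapp$) gives $\abspre{} \prege \avert$, while ($\sslabs$) gives $\abspre{} \prele \stringcon{\abspre{\avert}}{\avert} = \avert$), after which the non-eager choice $\abspre{\bverti{2}} = \avert \prele \avert$ is permitted by ($\sslapp$). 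Hence $\alaphotg \defdby \tuple{\ldots,\sabspre}$ is a genuine \lambdaaphotg\ in $\classlaphotgsi{1}$.

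Then I would exhibit the offending homomorphism. The two identity copies $\bverti{1}$ and $\bverti{2}$, each together with its variable vertex, are bisimilar as first-order term graphs over $\siglambdai{1}$ — each is an abstraction whose only successor is a variable vertex whose only successor is that same abstraction — so the map $\sahom$ that collapses $\bverti{1},\bverti{2}$ into one abstraction vertex (and their variable vertices into one) and is the identity elsewhere is a functional bisimulation $\atg = \abspreforgetfullaphotgsi{1}{\alaphotg} \funbisimi{\sahom} \atgacc$ onto the term graph $\atgacc$ in which the identity is shared. This is exactly where the construction pays off: $\sahom$ identifies two vertices carrying incompatible prefixes, since $\ahomext{\abspre{\bverti{1}}} = \ahomext{\emptyword} = \emptyword$ whereas $\ahomext{\abspre{\bverti{2}}} = \ahomext{\avert} = \ahom{\avert} = \avert$ (as $\sahom$ fixes $\avert$). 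Because any homomorphism of \lambdaaphotg{s} must satisfy \eqref{eq:def:homom:aplambdahotg}, an extension $\alhotgacc$ with $\alaphotg \funbisimi{\sahom} \alhotgacc$ would have to give $\ahom{\bverti{1}}$ a prefix equal to both $\emptyword$ and $\avert$, which is impossible. Thus $\classlaphotgsi{1}$ is not closed under functional bisimulations on underlying term graphs.

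Finally I would transfer the counterexample to $\classlhotgsi{1}$. Since $\slhotgstolaphotgsi{1}$ and $\slaphotgstolhotgsi{1}$ of Theorem~\ref{thm:corr:lhotgs:laphotgs} leave the underlying term graph untouched and translate the scope-function condition \eqref{eq:def:homom:lambdahotg} into the abstraction-prefix condition \eqref{eq:def:homom:aplambdahotg} and back, the \lambdahotg\ $\laphotgstolhotgsi{1}{\alaphotg}$ together with the same $\sahom$ is a counterexample for $\classlhotgsi{1}$. (Alternatively one argues directly on scopes: in any putative extension the image of $\bverti{2}$ would lie in $\Scopemin{\avert}$, so by Lemma~\ref{lem:lhotg}(\ref{lem:lhotg:item:i}) $\avert$ would be visited on \emph{every} access path of the shared abstraction vertex, contradicting the access path reaching it as the right child of the root application.) The main obstacle, and the conceptual heart of the example, is to arrange that $\sahom$ really exists as a first-order functional bisimulation — which back-links normally obstruct — while still merging vertices with incompatible scope data; using \emph{closed} subterms, whose first-order shape is insensitive to the surrounding binding, together with \emph{non-eager} scope closure is precisely what creates this tension.
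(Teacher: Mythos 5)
Your proof is correct and follows essentially the same route as the paper: the paper likewise refutes closure by a single concrete counterexample in which a homomorphism of underlying term graphs merges two copies of a closed subterm carrying incompatible scoping data (one inside a non-eagerly closed scope, one outside it), and likewise invokes Theorem~\ref{thm:corr:lhotgs:laphotgs} so that one counterexample settles both classes at once --- the only difference being that the paper argues on \lambdahotg{s} with scope functions and transfers to \lambdaaphotg{s}, whereas you argue on \lambdaaphotg{s} with abstraction prefixes and transfer back (also covering the \lambdahotg\ side directly via Lemma~\ref{lem:lhotg}). Your write-up is in fact more explicit than the paper's, which presents two shaded diagrams and declares the homomorphism and the impossibility of extending it ``obvious''.
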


\begin{proof}
  In view of Theorem~\ref{thm:corr:lhotgs:laphotgs} it suffices to show the statement for $\classlhotgsi{1}$.
  We show that not every functional bisimulation on the term graph underlying 
  a \lambdahotg\ over $\siglambdai{1}$ 
  extends to a functional bisimulation on the higher-order term graphs.
  Consider the following term graphs $\atgi{0}$ and $\atgi{1}$ over $\siglambdai{1}\hspace*{1pt}$
  (at first, please ignore the scope shading):
  \graphs
  {
          \vcentered{$\atgi{1}$:~~} \vcentered{\includegraphics[scale=0.80]{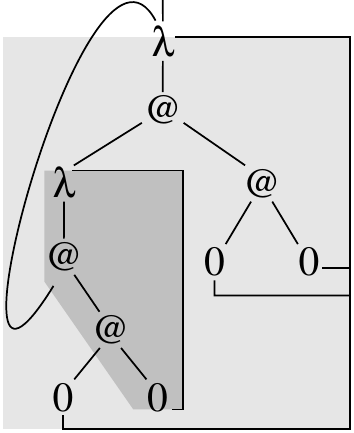}}\hfill
          \graph{\atgi{0}}{no_extension_funbisim_siglambda1_g0}
  }%
  There is an obvious homomorphism $\sahom$ that witnesses 
  \mbox{$\atgi{1} \funbisimi{\sahom} \atgi{0}$}.
  Both of these term graphs extend to \lambdahotg{s} by suitable scope functions
  (one possibility per term graph is indicated by the scope shadings above; $\atgi{1}$ actually admits two possibilities). 
  However, $\sahom$ does not extend to any of the \lambdahotg{s} $\alhotgi{1}$ and $\alhotgi{0}$ 
  that extend $\atgi{1}$ and $\atgi{0}$, respectively. 
\end{proof}

The next proposition is merely a reformulation of Proposition~\ref{prop:no:extension:funbisim:siglambda1}.

\begin{proposition}\label{prop:forgetful}
  The scope-forgetful mapping $\sscopeforgetfullhotgsi{1}$ on $\classlhotgsi{1}$ 
  and the \absprefix-for\-get\-ful mapping $\sabspreforgetfullaphotgsi{1}$ on $\classlaphotgsi{1}$ 
  preserve, but do not reflect, the sharing orders on these classes. In particular:
  \begin{align*}
    (\forall \alaphotgi{1},\alaphotgi{2}\in\classlaphotgsi{1}) \;\; \hspace*{-14ex} &&
       \alaphotgi{1} \funbisim \alaphotgi{2}
         \;\: & \Longrightarrow\;\:
       \abspreforgetfullaphotgsi{1}{\alaphotgi{1}} \funbisim \abspreforgetfullaphotgsi{1}{\alaphotgi{2}}
    \\[-0.25ex]
    (\exists \alaphotgi{1},\alaphotgi{2}\in\classlaphotgsi{1}) \;\; \hspace*{-14ex} &&
       \alaphotgi{1} \not\funbisim \alaphotgi{2}
         \;\: & \logand \;\:
       \abspreforgetfullaphotgsi{1}{\alaphotgi{1}} \funbisim \abspreforgetfullaphotgsi{1}{\alaphotgi{2}}
  \end{align*}
  
\end{proposition}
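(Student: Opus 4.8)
The plan is to exploit that this is, as the text notes, a reformulation of Proposition~\ref{prop:no:extension:funbisim:siglambda1}, and to combine it with the rigidity of homomorphisms between term graphs. First I note that by Theorem~\ref{thm:corr:lhotgs:laphotgs} the bijection between \lambdahotg{s} and \lambdaaphotg{s} leaves the underlying term graph unchanged and preserves and reflects the sharing order; since both forgetful mappings return exactly that underlying term graph, they factor through this bijection, so the statements for $\sscopeforgetfullhotgsi{1}$ and for $\sabspreforgetfullaphotgsi{1}$ are equivalent and I may prove each half on whichever side is convenient.

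For the preservation direction I would argue straight from the definitions. If $\alaphotgi{1} \funbisimi{\sahom} \alaphotgi{2}$, then by Definition~\ref{def:homom:aplambdahotg} the map $\sahom$ satisfies the (roots), (labels), and (arguments) conditions of \eqref{eq:def:homom} together with the abstraction-prefix condition \eqref{eq:def:homom:aplambdahotg}. Discarding the latter, $\sahom$ is precisely a homomorphism between the underlying term graphs, which witnesses $\abspreforgetfullaphotgsi{1}{\alaphotgi{1}} \funbisim \abspreforgetfullaphotgsi{1}{\alaphotgi{2}}$; the analogous argument via \eqref{eq:def:homom:lambdahotg} handles $\sscopeforgetfullhotgsi{1}$.

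For the failure of reflection I would reuse the counterexample of Proposition~\ref{prop:no:extension:funbisim:siglambda1}: the term graphs $\atgi{1}, \atgi{0}$ over $\siglambdai{1}$ with the homomorphism $\sahom$ witnessing $\atgi{1} \funbisimi{\sahom} \atgi{0}$, together with \lambdahotg{s} $\alhotgi{1}, \alhotgi{0}$ extending $\atgi{1}, \atgi{0}$ to which $\sahom$ does not extend. The extra ingredient I need is that between two (root-connected) term graphs there is \emph{at most one} homomorphism: any homomorphism sends $\srooti{1}$ to $\srooti{2}$ by (roots) and is then forced on every vertex along its access paths by (arguments), using root-connectedness --- the very induction carried out in the proof of Proposition~\ref{prop:funbisim:anti:symmetric}(\ref{prop:funbisim:anti:symmetric:item:i}). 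Hence $\sahom$ is the unique homomorphism $\atgi{1} \to \atgi{0}$.

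I would then finish by contradiction. Suppose $\alhotgi{1} \funbisimi{\sbhom} \alhotgi{0}$ for some $\sbhom$. Restricted to the underlying term graphs, $\sbhom$ is a homomorphism $\atgi{1} \to \atgi{0}$, so by uniqueness $\sbhom = \sahom$; but this means $\sahom$ extends to a homomorphism of \lambdahotg{s} from $\alhotgi{1}$ to $\alhotgi{0}$, contradicting Proposition~\ref{prop:no:extension:funbisim:siglambda1}. Thus $\alhotgi{1} \not\funbisim \alhotgi{0}$ although $\scopeforgetfullhotgsi{1}{\alhotgi{1}} = \atgi{1} \funbisim \atgi{0} = \scopeforgetfullhotgsi{1}{\alhotgi{0}}$, which is the desired failure of reflection and transports to $\sabspreforgetfullaphotgsi{1}$ by the first paragraph. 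I expect this last step to be the main obstacle: the real work is upgrading ``the particular $\sahom$ fails to extend'' into ``no homomorphism between the \lambdahotg{s} exists at all,'' for which the uniqueness of term-graph homomorphisms is exactly what is needed; the preservation half is routine.
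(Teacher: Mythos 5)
Your proposal is correct and takes essentially the paper's own route: the paper gives no separate proof of Proposition~\ref{prop:forgetful}, declaring it ``merely a reformulation'' of Proposition~\ref{prop:no:extension:funbisim:siglambda1}, whose counterexample (the term graphs $\atgi{1} \funbisimi{\sahom} \atgi{0}$ with their scope-endowed versions) and whose reduction to $\classlhotgsi{1}$ via Theorem~\ref{thm:corr:lhotgs:laphotgs} you reuse exactly. Your explicit appeal to the uniqueness of homomorphisms between root-connected term graphs (the induction underlying Proposition~\ref{prop:funbisim:anti:symmetric}) is precisely the detail the paper leaves implicit in the word ``reformulation'': it is what legitimately upgrades ``the particular $\sahom$ does not extend'' to ``no homomorphism between the fixed \lambdahotg{s} exists at all''.
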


As a consequence of this proposition it is not possible to faithfully implement functional
bisimulation on \lambdahotg{s} and \lambdaaphotg{s} by only considering the underlying term graphs,
and in doing so neglecting the scoping information          
from the scope function, or respectively, from the abstraction prefix function.%
         \footnote{
           In the case of $\siglambdai{1}$ implicit information about possible scopes is being kept,
           due to the presence of back-links from variable occurrence vertices to abstraction vertices.
           But this is not enough for reflecting the sharing order under the forgetful mappings.}
In order to yet be able to implement functional bisimulation of
\lambdahotg{s} and \lambdaaphotg{s} in a first-order setting, in the next
section we introduce a class of first-order term graphs that accounts for
scoping by means of scope delimiter vertices.


\section{$\lambda$-Term-Graphs with Scope Delimiters}
  \label{sec:ltgs}

For all $i\in\setexp{0,1}$ and $j\in\setexp{1,2}$ we define the extensions
$\siglambdaij{i}{j} \defdby \siglambda \cup \setexp{\snlvar, \snlvarsucc}$ 
of the signature $\siglambda$
where $\arity{\snlvar} = i$ and $\arity{\snlvarsucc} = j\,$,
and we denote the class of term graphs over signature $\siglambdaij{i}{j}$ by $\classtgssiglambdaij{i}{j}$.

Let $\atg$ be a term graph with vertex set $\verts$ over a signature extending $\siglambdaij{i}{j}$ 
for $i\in\setexp{0,1}$ and $j\in\setexp{1,2}$.
We denote by $\vertsof{\snlvarsucc}$ the subset of $\verts$ consisting of all vertices
with label $\snlvarsucc$, which are called the \emph{(scope) delimiter vertices} of $\atg$.
Delimiter vertices signify the end of an `extended scope' \cite{grab:roch:2012}. 
They are analogous to occurrences of function symbols $\snlvarsucc$
in representations of \lambdaterms\ in a nameless 
de-Bruijn index \cite{de-bruijn:72} form in which Dedekind numerals
based on $\snlvar$ and the successor function symbol $\snlvarsucc$ are used 
(this form is due to Hendriks and van~Oostrom, see also \cite{oost:looi:zwit:2004},
 and is related to their end-of-scope symbol $\adbmal$ \cite{hend:oost:2003}).

Analogously as for the classes $\classlhotgsi{i}$ and $\classlaphotgsi{i}$,
the index $i$ will determine whether in correctly formed \lambdatg{s} (defined below)
variable vertices have back-links to the corresponding abstraction. 
Here additionally scope-delimiter vertices have such back-links (if $j=2$) or not (if $j=1$). 

\begin{definition}[correct abstraction-prefix function for $\siglambdaij{i}{j}$-term-graphs]%
    \label{def:abspre:function:siglambdaij}\normalfont
  Let $\atg = \tuple{\verts,\svlab,\svargs,\sroot}$ be a $\siglambdaij{i}{j}$\nb-term-graph
  for an $i\in\setexp{0,1}$ and an $j\in\setexp{1,2}$.
  
  A function $\sabspre \funin \verts \to \verts^*$
  from vertices of $\atg$ to words of vertices 
  is called an \emph{abstraction-prefix function} for $\atg$.
  Such a function is called \emph{correct} if
  for all $\bvert,\bverti{0},\bverti{1}\in\verts$ and $k\in\setexp{0,1}$ it holds:
  \begin{align*}
      \;\; & \Rightarrow \;\;
    \abspre{\sroot} = \emptyword  
    &
    (\text{root})
    \displaybreak[0]\\
    \bvert\in\vertsof{\sslabs} 
      \;\logand\;
    \bvert \tgsucci{0} \bvertbp{0}{}
      \;\; & \Rightarrow \;\;
    \abspre{\bvertbp{0}{}} = \abspre{\bvert} \bvert 
    & 
    (\sslabs)
  \displaybreak[0]\\
    \bvert\in\vertsof{\sslapp}
      \;\logand\;
    \bvert \tgsucci{k} \bvertbp{k}{}
      \;\; & \Rightarrow \;\;
    \abspre{\bvertbp{k}{}} = \abspre{\bvert} 
    & 
    (\sslapp)
  \displaybreak[0]\\  
    \bvert\in\vertsof{\snlvar}
      \phantom{\;\;\logand\;\bvert \tgsucci{0} \bvertacc\;}
        \;\; & \Rightarrow \;\;
     \abspre{\bvert} \neq \emptyword   
    & 
    (\snlvar)_0
  \displaybreak[0]\\
    \bvert\in\vertsof{\snlvar}
      \;\logand\;
    \bvert \tgsucci{0} \bvertbp{0}{}
      \;\; & \Rightarrow \;\;
    \begin{aligned}[c]  
      & 
        \bvertbp{0}{}\in\vertsof{\sslabs}
      \;\logand\;
      \abspre{\bvertbp{0}{}} \bvertbp{0}{} = \abspre{\bvert} 
    \end{aligned}
    & 
    (\snlvar)_1 
    \displaybreak[0]\\
    \bvert\in\vertsof{\snlvarsucc}
      \;\logand\;
    \bvert \tgsucci{0} \bvertbp{0}{}
        \;\; & \Rightarrow \;\;
    \abspre{\bvertbp{0}{}} \avert
      =
    \abspre{\bvert}
      \;\;   
    \text{for some $\avert\in\verts$} \hspace*{1ex}
    &
    (\snlvarsucc)_1 
  \displaybreak[0]\\
    \bvert\in\vertsof{\snlvarsucc}
      \;\logand\;
    \bvert \tgsucci{1} \bvertbp{1}{}
        \;\; & \Rightarrow \;\;
    \bvertbp{1}{} \in \vertsof{\sslabs}
      \;\logand\;        
    \abspre{\bvertbp{1}{}} \bvertbp{1}{}
      =
    \abspre{\bvert}
    &
    (\snlvarsucc)_2  
  \end{align*}%
  Note that analogously as in Definition~\ref{def:lambdahotg} and in Definition~\ref{def:aplambdahotg},
  if $i=0$, then $(\snlvar)_1$ is trivially true and hence superfluous,
  and if $i=1$, then $(\snlvar)_0$ is redundant, because it follows from $(\snlvar)_1$ in this case. 
  Additionally, if $j=1$, then $(\snlvarsucc)_2$ is trivially true and therefore superfluous.         
\end{definition}

\begin{definition}[{\lambdatg\ over $\siglambdaij{i}{j}$}]%
    \label{def:lambdatg:siglambdaij}\normalfont
  Let $i\in\setexp{0,1}$ and $j\in\setexp{1,2}$.
  A \emph{\lambdatg\ (with scope-de\-li\-mi\-ters)} over $\siglambdaij{i}{j}$
  is a $\siglambdaij{i}{j}$\nb-term-graph
  that admits a correct ab\-strac\-tion-pre\-fix function. 
  The class of \lambdatg{s} over $\siglambdaij{i}{j}$ is denoted by $\classltgsij{i}{j}$,
  and the class of isomorphism equivalence classes of \lambdatg{s} over $\siglambdaij{i}{j}$ by $\classltgsisoij{i}{j}$.
\end{definition}

See Figure~\ref{fig:lambdatg} for examples,
that, as we will see, correspond to the ho-term-graphs in Figure~\ref{fig:lambdahotg} and in Figure~\ref{fig:lambdaaphotg}.

\begin{figure}[t]
\graphs{
  \graph{\atgi{0}}{running_snodes_eager}
  \graph{\atgi{1}}{running_snodes_non_eager}
}
\vspace*{-2ex}
\caption{\label{fig:lambdatg}%
The \protect\lambdatg{s} corresponding to the \protect\lambdaaphotg{s} from
Fig~\ref{fig:lambdaaphotg} and the \protect\lambdahotg{s} from Fig~\ref{fig:lambdahotg}.
A precise formulation of this correspondence is given in
Example~\ref{ex:corr:laphotgs:ltgs}.
}
\end{figure}

\vspace{0.5ex}
The following lemma states some basic properties of \lambdatg{s}.

\begin{lemma}\label{lem:ltg}
  Let $i\in\setexp{0,1}$, and $j\in\setexp{1,2}$.
  Let $\altg = \tuple{\verts,\svlab,\svargs,\sroot}$ be a \lambdatg\ over $\siglambdaij{i}{j}$,
  and let $\sabspre$ be a correct \absprefix\ function for $\altg$.
  Then the statements (\ref{lem:laphotg:item:i})--(\ref{lem:laphotg:item:iii}) in Lemma~\ref{lem:laphotgs} hold
  as items (i)--(iii) of this lemma, 
  and additionally: 
  \begin{enumerate}[(i)]\setcounter{enumi}{3}\itemizeprefs
    \item{}\label{lem:ltg:item:iv}
      When stepping through $\altg$ along a path,
      the \absprefix\ function $\sabspre$ behaves like a stack data structure, in the sense that
      in a step $\bvert \tgsucci{i} \bvertacc$ 
                                               between vertices $\bvert$ and $\bvertacc$ it holds:
      \begin{itemize}
        \item 
          if $\bvert\in\vertsof{\sslabs}$, 
          then 
          $\abspre{\bvertacc} = \abspre{\bvert} \bvert$,
          that is,
          $\abspre{\bvertacc}$ is obtained by adding
          $\bvert$ to $\abspre{\bvert}$ on the right
          ($\bvert$ is `pushed on the stack'); 
        \item
          if $\bvert\in\vertsof{\snlvar}\cup\vertsof{\snlvarsucc}$, 
          then 
          $\abspre{\bvertacc} \avert = \abspre{\bvert}$ for some $\avert\in\vertsof{\sslabs}$,
          that is,
          $\abspre{\bvertacc}$ is obtained by removing
          the leftmost vertex $\avert$ from $\abspre{\bvert}$
          ($\avert$ is `popped from the stack');
        \item
          if $\bvert\in\vertsof{\sslapp}$,
          then
          $\abspre{\bvertacc} = \abspre{\bvert}$,
          that is, the value of the \absprefix\ function stays unchanged
          (the stack stays unchanged). 
      \end{itemize}       
    \item\label{lem:ltg:item:v}
      Access paths may end in vertices in $\vertsof{\snlvar}$,
      but only pass through vertices in $\vertsof{\sslabs} \cup \vertsof{\sslapp} \cup \vertsof{\snlvarsucc}$,
      and depart from vertices in $\vertsof{\snlvarsucc}$ only via indexed edges $\stgsuccis{0}{\snlvarsucc}$.
    \item\label{lem:ltg:item:vi}
      There exists precisely one correct ab\-strac\-tion-pre\-fix function on $\altg$.
  \end{enumerate}
\end{lemma}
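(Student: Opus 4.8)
The plan is to establish the items in the order (i)--(iii), (iv), (v), (vi), since each step feeds the next. For items~(i)--(iii) I would observe that the proofs of the corresponding statements of Lemma~\ref{lem:laphotgs} transfer almost verbatim. The signature $\siglambdaij{i}{j}$ differs from $\siglambdai{i}$ only by the delimiter symbol $\snlvarsucc$, and its correctness conditions $(\snlvarsucc)_1$ and $(\snlvarsucc)_2$ only \emph{shorten} the abstraction prefix (they strip off its rightmost vertex). Thus the observation driving Lemma~\ref{lem:laphotgs}(\ref{lem:laphotg:item:i})---that along any access path the prefix starts out empty by (root) and is ever only extended at abstraction vertices, and then by appending exactly that vertex on the right via $(\sslabs)$---survives unchanged, and with it the arguments for items~(i)--(iii).

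For item~(iv) I would just unfold the correctness conditions according to the label of the source vertex $\bvert$ of a step $\bvert \tgsucci{k} \bvertacc$. If $\bvert\in\vertsof{\sslabs}$, then $(\sslabs)$ gives $\abspre{\bvertacc}=\stringcon{\abspre{\bvert}}{\bvert}$ (a push); if $\bvert\in\vertsof{\sslapp}$, then $(\sslapp)$ gives $\abspre{\bvertacc}=\abspre{\bvert}$ (unchanged). If $\bvert\in\vertsof{\snlvar}$ (which can occur only for $i=1$), then $(\snlvar)_1$ yields $\stringcon{\abspre{\bvertacc}}{\bvertacc}=\abspre{\bvert}$ with $\bvertacc\in\vertsof{\sslabs}$, a pop. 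Finally, if $\bvert\in\vertsof{\snlvarsucc}$, then along index $0$ condition $(\snlvarsucc)_1$ gives $\stringcon{\abspre{\bvertacc}}{\avert}=\abspre{\bvert}$ for some $\avert\in\verts$; since $\avert$ is then the rightmost vertex of $\abspre{\bvert}$, item~(ii) forces $\avert\in\vertsof{\sslabs}$, so this too is a pop, and for $j=2$ the step along index $1$ is a pop by $(\snlvarsucc)_2$.

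For item~(v) I would reuse the reasoning of Lemma~\ref{lem:laphotgs}(\ref{lem:laphotg:item:iv}), now invoking item~(i). An access path cannot continue properly past a variable vertex $\bvert$: for $i=0$ it has no outgoing edge, and for $i=1$ its index-$0$ successor $\bvertacc$ satisfies $\abspre{\bvert}=\stringcon{\abspre{\bvertacc}}{\bvertacc}$ by $(\snlvar)_1$, so $\bvertacc$ occurs in $\abspre{\bvert}$ and hence, by item~(i), is already visited earlier on the path---contradicting the no-repeat property. The same argument with $(\snlvarsucc)_2$ shows that a delimiter vertex (for $j=2$) cannot be left along index $1$ (for $j=1$ there is no such edge), so delimiter vertices are departed only via $\stgsuccis{0}{\snlvarsucc}$. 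Together this gives that access paths pass only through vertices in $\vertsof{\sslabs}\cup\vertsof{\sslapp}\cup\vertsof{\snlvarsucc}$ and end, if at all, in a vertex of $\vertsof{\snlvar}$.

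For item~(vi), existence is immediate from the definition of \lambdatg. The substantive part is uniqueness, which I expect to be the main point of the lemma. I would prove it by induction on the length of a shortest access path of a vertex $\bvert$, showing that the value $\abspre{\bvert}$ of \emph{any} correct abstraction-prefix function is forced: the base case is $\abspre{\sroot}=\emptyword$ by (root), and in the inductive step item~(iv) determines the prefix of the next vertex from that of the current one by a deterministic operation (push the current abstraction vertex, leave unchanged, or drop the rightmost vertex) depending only on the already-fixed prefix and the label of the current vertex. Hence any two correct abstraction-prefix functions agree on every vertex. The one delicate point, and the only real obstacle, is that the pop must be genuinely deterministic---recovering $\abspre{\bvertacc}$ from $\abspre{\bvert}$ by dropping the rightmost element without needing to know which vertex is removed---which is precisely what item~(iv) supplies; so the whole argument depends on having items~(i)--(iv) firmly in place before tackling~(vi).
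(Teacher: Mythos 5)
Your proposal is correct and follows essentially the same route as the paper: items (i)--(iii) and (v) are transferred from Lemma~\ref{lem:laphotgs}, item (iv) is read off directly from the correctness conditions of Definition~\ref{def:abspre:function:siglambdaij} (your use of item (ii) to see that the popped vertex is an abstraction vertex is the right touch), and uniqueness in (vi) rests on the same determinism argument that the paper phrases as a stepwise computation of prefix values along an arbitrary access path. The only cosmetic difference is that you organize (vi) as an induction on shortest access-path length, whereas the paper simply observes that the values are forced step by step along any access path.
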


\begin{proof}
  That also here statements (\ref{lem:laphotg:item:i})--(\ref{lem:laphotg:item:iii}) in Lemma~\ref{lem:laphotgs} hold,
  and that statement~(\ref{lem:ltg:item:v}) holds,
  can be shown analogously as in the proof of the respective items of Lemma~\ref{lem:laphotgs}. 
  Statement~\ref{lem:laphotg:item:iv} is easy to check from the definition (see Definition~\ref{def:abspre:function:siglambdaij} 
  of a correct \absprefix\ function for a \lambdatg).
  For (\ref{lem:ltg:item:vi}) it suffices to observe
  that if $\sabspre$ is a correct \absprefix\ function for $\altg$,
  then, for all $\bvert\in\verts$, the value $\abspre{\bvert}$ of $\sabspre$ at $\bvert$
  can be computed by choosing an arbitrary access path $\apath$ from $\sroot$ to $\bvert$
  and using the conditions $(\sslabs)$, $(\sslapp)$, and $(\snlvarsucc)_0$  
  to determine in a stepwise manner the values of $\sabspre$ at the vertices that are visited on $\apath$.
  Hereby note that in every transition along an edge on $\apath$ the length of the abstraction prefix only changes by at most 1. 
\end{proof}

Lemma~\ref{lem:ltg}, (\ref{lem:ltg:item:vi}) allows us now
to speak of \emph{the \absprefix\ function of} a \lambdatg. 

Although the requirement of the existence of a correct \absprefix\ function
restricts their possible forms, 
\lambdatgs\ are first-order term graphs,
and as such the definitions of homomorphism, isomorphism, and bisimulation
for first-order term graphs from Section~\ref{sec:prelims} apply to them. 


The question arises whether indeed, and if so then how, a homomorphism $\sahom$ between 
\lambdatgs\ $\altgi{1}$ and $\altgi{2}$ relates their \absprefix\ functions $\sabsprei{1}$ and $\sabsprei{2}$.
It turns out, see Proposition~\ref{prop:hom:image:absprefix:function:ltgs} below,
that $\sabsprei{2}$ is the `homomorphic image' under $\sahom$ of $\sabsprei{1}$
in the sense as defined by the following definition.

\begin{definition}\label{def:hom:image:absprefix:function:ltgs}\normalfont
   Let $\atgi{1} = \tuple{\vertsi{1},\svlabi{1},\svargsi{1},\srooti{1}}$ 
   and $\atgi{2} = \tuple{\vertsi{2},\svlabi{2},\svargsi{2},\srooti{2}}$ 
   be term graphs over $\siglambdaij{i}{j}\,$, for some $i\in\setexp{0,1}$ and $j\in\setexp{1,2}$,
   and let $\sabsprei{1}$ and $\sabsprei{2}$ be \absprefix\ functions
   (not required to be correct) for $\atgi{1}$ and $\atgi{2}$, respectively. 
   Furthermore, let $\sahom \funin \vertsi{1} \to \vertsi{2}$ be a homomorphism from $\altgi{1}$ to $\altgi{2}$. 
   We say that $\sabsprei{2}$ is the \emph{homomorphic image of} $\sabsprei{1}$ under $\sahom$
   if it holds: 
  \begin{equation}\label{eq:def:hom:image:absprefix:function:ltgs}
    \scompfuns{\bar{\sahom}}{\sabsprei{1}}
      =
    \scompfuns{\sabsprei{2}}{\sahom}
  \end{equation}(i.e.,
  $\funap{\bar{\sahom}}{\absprei{1}{\bvert}}
      = 
   \absprei{2}{\ahom{\bvert}}$
  for all $\bvert\in\vertsi{1}$),
  where $\bar{\sahom}$ is the homomorphic extension of $\sahom$ to words over $\vertsi{1}$.
\end{definition}

Note that for the \lambdaaphotgs\ defined in the previous section
it was the case by definition
that every homomorphism maps the \absprefix\ function of the source \lambdaaphotg\ 
to the \absprefix\ function of the target \lambdaaphotg. 
So if $\sahom$ is a homomorphism between \lambdaaphotgs\
$ \alaphotgi{1} = \tuple{\vertsi{1},\svlabi{1},\svargsi{1},\srooti{1},\sabsprei{1}}$
and
$ \alaphotgi{2} = \tuple{\vertsi{2},\svlabi{2},\svargsi{2},\srooti{2},\sabsprei{2}}$,
then \eqref{eq:def:hom:image:absprefix:function:ltgs} 
was required to hold (see \eqref{eq:def:homom:aplambdahotg} in Definition~\ref{def:homom:aplambdahotg}). 
This condition was a consequence of the fact that homomorphisms between \lambdaaphotgs\
were defined as morphisms between \lambdaaphotgs\ when viewed as algebraical structures,
because \absprefix\ functions are parts of \lambdaaphotgs\ that have to be respected by morphisms. 

Yet for the \lambdatgs\ defined here, 
\absprefix\ functions do not make part of their underlying formalization as first-order term graphs. 
The existence of an \absprefix\ function is 
a mathematical property that separates \lambdatgs\
from among the first-order term graphs (over one of the signatures $\siglambdaij{i}{j}$). 
Homomorphisms between \lambdatgs\ are defined as morphisms between the
structures that underlie their formalization, first-order term graphs, and therefore these homomorphisms
do not by definition respect \absprefix\ functions. 

However, it turns out that, as a consequence of the specific form of the correctness conditions
for \lambdatgs, homomorphisms between \lambdatgs\ do in fact respect their (unique, see above) \absprefix\ functions.
This fact is stated formally by the following proposition.
The indicated reason will become clear in the proof of Lemma~\ref{lem:hom:image:absprefix:function:ltgs}
which we use to establish this proposition.

\begin{proposition}\label{prop:hom:image:absprefix:function:ltgs}
  Let $\atgi{1}$ 
  and $\atgi{2}$ 
  be \lambdatgs,
  and let $\sabsprei{1}$ and $\sabsprei{2}$ be their (correct) \absprefix\ functions, respectively. 
  Suppose that $\sahom$ is a homomorphism from $\atgi{1}$ to $\atgi{2}$.
  Then $\sabsprei{2}$ is the homomorphic image of $\sabsprei{1}$ under $\sahom$.
\end{proposition}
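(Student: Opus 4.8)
The plan is to exploit the uniqueness of the correct abstraction-prefix function established in Lemma~\ref{lem:ltg}(\ref{lem:ltg:item:vi}): since both $\sabsprei{1}$ and $\sabsprei{2}$ are obtained by walking from the root and applying the local conditions of Definition~\ref{def:abspre:function:siglambdaij} edge by edge, it suffices to compare these two stepwise computations along an access path in $\atgi{1}$ and along its image in $\atgi{2}$. Concretely, fix $\bvert\in\vertsi{1}$ and choose an access path $\apath \funin \srooti{1} = \bverti{0} \tgsucci{k_1} \bverti{1} \tgsucci{k_2} \cdots \tgsucci{k_n} \bverti{n} = \bvert$ of $\bvert$ in $\atgi{1}$. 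By Proposition~\ref{prop:hom:tgs:paths}(\ref{prop:hom:tgs:paths:item:i}) together with the (roots) condition, $\sahom$ maps $\apath$ to a path $\ahom{\apath} \funin \srooti{2} = \ahom{\bverti{0}} \tgsucci{k_1} \cdots \tgsucci{k_n} \ahom{\bverti{n}} = \ahom{\bvert}$ in $\atgi{2}$ starting at $\srooti{2}$. I would then prove, by induction on $m$, that $\funap{\bar{\sahom}}{\absprei{1}{\bverti{m}}} = \absprei{2}{\ahom{\bverti{m}}}$ holds for all $0 \le m \le n$; the case $m=n$ gives the required identity for $\bvert$, and ranging over all $\bvert\in\vertsi{1}$ yields \eqref{eq:def:hom:image:absprefix:function:ltgs}.

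For the base case $m=0$, both sides equal $\emptyword$ by the (root) condition (using $\ahom{\bverti{0}} = \srooti{2}$ and $\funap{\bar{\sahom}}{\emptyword} = \emptyword$). For the inductive step from $m$ to $m+1$, I distinguish cases according to the label of $\bverti{m}$, which coincides with the label of $\ahom{\bverti{m}}$ since $\sahom$ preserves labels. By Lemma~\ref{lem:ltg}(\ref{lem:ltg:item:v}) the source $\bverti{m}$ of an edge on an access path is never a variable vertex, and a delimiter vertex is departed from only along its index-$0$ edge, so only three subcases arise (this also makes the value of $j$, i.e.\ the presence of delimiter back-links, immaterial). If $\bverti{m}\in\vertsof{\sslabs}$, condition $(\sslabs)$ gives $\absprei{1}{\bverti{m+1}} = \absprei{1}{\bverti{m}}\,\bverti{m}$ and $\absprei{2}{\ahom{\bverti{m+1}}} = \absprei{2}{\ahom{\bverti{m}}}\,\ahom{\bverti{m}}$; applying $\bar{\sahom}$ to the former, using that $\bar{\sahom}$ is a homomorphism of word monoids and invoking the induction hypothesis, produces the latter. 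The case $\bverti{m}\in\vertsof{\sslapp}$ is identical but uses $(\sslapp)$, where prefixes are copied unchanged.

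The main obstacle is the delimiter case $\bverti{m}\in\vertsof{\snlvarsucc}$, where a prefix is popped rather than pushed. Here condition $(\snlvarsucc)_1$ gives $\absprei{1}{\bverti{m+1}}\,\avert = \absprei{1}{\bverti{m}}$ for some $\avert\in\vertsi{1}$, and likewise $\absprei{2}{\ahom{\bverti{m+1}}}\,\avertacc = \absprei{2}{\ahom{\bverti{m}}}$ for some $\avertacc\in\vertsi{2}$. Applying $\bar{\sahom}$ to the first identity and using the induction hypothesis $\funap{\bar{\sahom}}{\absprei{1}{\bverti{m}}} = \absprei{2}{\ahom{\bverti{m}}}$, I obtain $\funap{\bar{\sahom}}{\absprei{1}{\bverti{m+1}}}\,\ahom{\avert} = \absprei{2}{\ahom{\bverti{m}}} = \absprei{2}{\ahom{\bverti{m+1}}}\,\avertacc$. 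Both sides are thus one and the same word of $\atgi{2}$; since $\bar{\sahom}$ preserves word length, the two left factors have equal length, so cancelling the final letter (which incidentally forces $\ahom{\avert} = \avertacc$) yields $\funap{\bar{\sahom}}{\absprei{1}{\bverti{m+1}}} = \absprei{2}{\ahom{\bverti{m+1}}}$, completing the induction. The only point requiring care is that $\ahom{\apath}$ need not itself be an access path when $\sahom$ identifies vertices; but this is harmless, because the conditions of Definition~\ref{def:abspre:function:siglambdaij} hold at every edge of $\atgi{2}$, so the stepwise computation of $\sabsprei{2}$ along $\ahom{\apath}$ remains valid, and by Lemma~\ref{lem:ltg}(\ref{lem:ltg:item:vi}) the values $\absprei{2}{\ahom{\bverti{m}}}$ are independent of the chosen path.
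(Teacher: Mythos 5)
Your proof is correct and is essentially the paper's own argument: the paper establishes this proposition via Lemma~\ref{lem:hom:image:absprefix:function:ltgs},~(\ref{lem:hom:image:absprefix:function:ltgs:item:ii}), whose proof likewise proceeds by induction on the length of an access path in $\atgi{1}$, with the same base case via (root), the same three-way case analysis on the label of the preceding vertex justified by Lemma~\ref{lem:ltg},~(\ref{lem:ltg:item:v}), and the same last-letter cancellation in the $\snlvarsucc$ case using $(\snlvarsucc)_1$ in both graphs. Your closing observation that the image path need not be an access path is a point the paper leaves implicit (its argument only uses the correctness conditions edge-by-edge in $\atgi{2}$), so it adds care without changing the route.
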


This proposition follows from the
statement in item~(\ref{lem:hom:image:absprefix:function:ltgs:item:ii}) of the lemma below. 
Item~(\ref{lem:hom:image:absprefix:function:ltgs:item:i}) of the lemma states
that the correctness of \absprefix\ functions is preserved and reflected by homomorphisms on term graphs over $\siglambdaij{i}{j}\,$.

\begin{lemma}\label{lem:hom:image:absprefix:function:ltgs}
  Let $\atgi{1} = \tuple{\vertsi{1},\svlabi{1},\svargsi{1},\srooti{1}}$
  and $\atgi{2} = \tuple{\vertsi{2},\svlabi{2},\svargsi{2},\srooti{2}}$
  be term graphs over $\siglambdaij{i}{j}$ for some $i\in\setexp{0,1}$ and $j\in\setexp{1,2}$.
  Let $\sahom \funin \vertsi{1} \to \vertsi{2}$ be a homomorphism from $\atgi{1}$ to $\atgi{2}$. 
  Furthermore, let $\sabsprei{1}$ and $\sabsprei{2}$ be their \absprefix\ functions 
  (not required to be correct) for $\atgi{1}$ and $\atgi{2}$, respectively.
  Then the following two statements hold:
  \begin{enumerate}[(i)]
    \item{}\label{lem:hom:image:absprefix:function:ltgs:item:ii}
      If $\sabsprei{1}$ and $\sabsprei{2}$ are correct for $\atgi{1}$ and $\atgi{2}$, respectively,
      then $\sabsprei{2}$ is the homomorphic image of $\sabsprei{1}$.
    \item{}\label{lem:hom:image:absprefix:function:ltgs:item:i}
      If $\sabsprei{2}$ is the homomorphic image of $\sabsprei{1}$,
      then $\sabsprei{1}$ is correct for $\atgi{1}$ if and only if $\sabsprei{2}$ is correct for $\atgi{2}$.
  \end{enumerate}
 
\end{lemma}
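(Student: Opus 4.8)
The plan is to exploit that the correctness conditions of Definition~\ref{def:abspre:function:siglambdaij} are purely \emph{local}: at each vertex $\bvert$ they constrain only the label of $\bvert$, the labels of its successors, and the relation between $\absprei{}{\bvert}$ and the prefixes of the successors of $\bvert$. A homomorphism $\sahom$ preserves labels, preserves indexed edges forward and reflects them backward (Proposition~\ref{prop:hom:tgs}), and is surjective (Proposition~\ref{prop:hom:tgs:paths}(\ref{prop:hom:tgs:paths:item:iii})). I will also use repeatedly that $\bar{\sahom}$ preserves word length and commutes with right-concatenation, i.e.\ $\funap{\bar{\sahom}}{\stringcon{\apre}{\avert}} = \stringcon{\funap{\bar{\sahom}}{\apre}}{\ahom{\avert}}$.

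For item~(\ref{lem:hom:image:absprefix:function:ltgs:item:ii}) I assume both functions correct and prove $\funap{\bar{\sahom}}{\absprei{1}{\bvert}} = \absprei{2}{\ahom{\bvert}}$ for all $\bvert\in\vertsi{1}$ by induction on the length of an access path of $\bvert$. The base case is the root, where both sides equal $\emptyword$ by condition~(root) and $\ahom{\srooti{1}} = \srooti{2}$. For a step $\bverti{m} \tgsucci{k} \bverti{m+1}$, by Lemma~\ref{lem:ltg}(\ref{lem:ltg:item:v}) the source $\bverti{m}$ is an abstraction, application, or delimiter vertex, and I apply the stack description Lemma~\ref{lem:ltg}(\ref{lem:ltg:item:iv}) in \emph{both} graphs (legitimate since $\ahom{\bverti{m}}$ carries the same label and $\ahom{\bverti{m}} \tgsucci{k} \ahom{\bverti{m+1}}$). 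In the abstraction case ($\bverti{m}$ pushed) and the application case (prefix unchanged) the claim follows by applying $\bar{\sahom}$ to the induction hypothesis. In the delimiter case a letter is popped on each side, so I get $\stringcon{\funap{\bar{\sahom}}{\absprei{1}{\bverti{m+1}}}}{\ahom{\avert}} = \stringcon{\absprei{2}{\ahom{\bverti{m+1}}}}{\avert'}$ (both equal to $\absprei{2}{\ahom{\bverti{m}}}$), and since the two equal words have equal length I cancel the last letter. This yields Proposition~\ref{prop:hom:image:absprefix:function:ltgs} through Definition~\ref{def:hom:image:absprefix:function:ltgs}.

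For item~(\ref{lem:hom:image:absprefix:function:ltgs:item:i}) I assume $\sabsprei{2}$ is the homomorphic image of $\sabsprei{1}$ and transfer each correctness condition. The \emph{preservation} direction (\,$\sabsprei{1}$ correct $\Rightarrow\sabsprei{2}$ correct\,) is the easier half: given a vertex $\bvert'$ of $\atgi{2}$ with an outgoing edge witnessing the premise of some condition, I pick a preimage $\bvert$ with $\ahom{\bvert}=\bvert'$ (surjectivity), reflect the edge into $\atgi{1}$ by the \emph{(arguments-backward)} clause of Proposition~\ref{prop:hom:tgs}, apply the corresponding condition for $\sabsprei{1}$ there, and push the resulting word equation forward with $\bar{\sahom}$, rewriting every $\funap{\bar{\sahom}}{\absprei{1}{\cdot}}$ as $\absprei{2}{\ahom{\cdot}}$. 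The non-equational parts (a successor must lie in $\vertsof{\sslabs}$; a prefix must be non-empty) transfer directly via label preservation and length preservation of $\bar{\sahom}$.

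The \emph{reflection} direction (\,$\sabsprei{2}$ correct $\Rightarrow\sabsprei{1}$ correct\,) is where the real difficulty lies and is the step I expect to be the main obstacle. Pushing each edge of $\atgi{1}$ forward and applying the correct condition for $\sabsprei{2}$ at $\ahom{\bvert}$ delivers only the $\bar{\sahom}$-\emph{image} of the desired equation, e.g.\ $\funap{\bar{\sahom}}{\absprei{1}{\bverti{0}}} = \funap{\bar{\sahom}}{\stringcon{\absprei{1}{\bvert}}{\bvert}}$ for condition~$(\sslabs)$, and likewise for $(\sslapp)$, $(\snlvar)_1$, $(\snlvarsucc)_1$, $(\snlvarsucc)_2$. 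To strip $\bar{\sahom}$ one needs an injectivity argument, since $\sahom$ is in general not injective. The key enabling fact is that the letters of any target prefix $\absprei{2}{\ahom{\bvert}}$ are pairwise distinct abstraction vertices (Lemma~\ref{lem:laphotgs}(\ref{lem:laphotg:item:i}) applied to the correct \lambdatg\ $\atgi{2}$), whence $\sahom$ is injective on the constituents of a single prefix and, with length preservation, the two words can be matched letter by letter. The delicate point—on which I would spend most care—is that the two words compared ($\absprei{1}{\bvert}$ and $\absprei{1}{\bverti{0}}$) sit at \emph{adjacent} vertices rather than inside one prefix, so the letter-matching must be propagated along access paths; I would organise this as an induction on access-path length that simultaneously fixes the actual prefix value forced at each vertex, using the path-reflection property Proposition~\ref{prop:hom:tgs:paths}(\ref{prop:hom:tgs:paths:item:ii}) to identify the preimages of the distinct target letters.
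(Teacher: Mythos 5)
Your proof of item~(\ref{lem:hom:image:absprefix:function:ltgs:item:ii}) is essentially the paper's proof: the same induction on the length of an access path, the same case distinction (abstraction, application, delimiter vertex) on the source of the last edge justified by Lemma~\ref{lem:ltg}, and the same last-letter cancellation in the delimiter case. The preservation half of item~(\ref{lem:hom:image:absprefix:function:ltgs:item:i}) also coincides with the paper's argument: pick a pre-image by surjectivity, transport the edge, apply the correctness condition in $\atgi{1}$, and push the resulting word equation through $\bar{\sahom}$; whether the edge transport is phrased via (arguments-backward), as you do, or via (arguments-forward) in Proposition~\ref{prop:hom:tgs} plus functionality of $\svargs$, as the paper does, is immaterial.

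The reflection half of item~(\ref{lem:hom:image:absprefix:function:ltgs:item:i}) is a genuine gap, and it cannot be repaired: that implication is false, so the letter-matching induction you sketch cannot be completed. Your enabling fact (the letters within one prefix of the correct graph $\atgi{2}$ are pairwise distinct) only yields that the letters of $\absprei{1}{\bvert}$ have pairwise distinct \emph{images}; it cannot determine which pre-images they are, since $\sahom$ may identify distinct abstraction vertices of $\atgi{1}$ (this is precisely what sharing does), and the homomorphic-image equation constrains $\sabsprei{1}$ only modulo $\bar{\sahom}$. Concretely, for any $i\in\setexp{0,1}$ and $j\in\setexp{1,2}$: let $\atgi{1}$ represent $\lapp{(\labs{\avar}{\avar})}{(\labs{\bvar}{\bvar})}$, with root $\cvert\in\vertsof{\sslapp}$, distinct abstraction vertices $\averti{1},\averti{2}$, and variable vertices $\bverti{1},\bverti{2}$, where $\cvert \tgsucci{k-1} \averti{k} \tgsucci{0} \bverti{k}$ (and $\bverti{k} \tgsucci{0} \averti{k}$ if $i=1$); let $\atgi{2}$ be its bisimulation collapse with vertices $\cvertacc,\avertacc,\bvertacc$, let $\sahom$ be the collapse map, and let $\sabsprei{2}$ be the correct \absprefix\ function, $\absprei{2}{\cvertacc} = \absprei{2}{\avertacc} = \emptyword$ and $\absprei{2}{\bvertacc} = \avertacc$. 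Now set $\absprei{1}{\cvert} = \absprei{1}{\averti{1}} = \absprei{1}{\averti{2}} \defdby \emptyword$, $\absprei{1}{\bverti{2}} \defdby \averti{2}$, and $\absprei{1}{\bverti{1}} \defdby \averti{2}$ rather than $\averti{1}$. Because $\ahom{\averti{2}} = \avertacc = \ahom{\averti{1}}$, the function $\sabsprei{2}$ is the homomorphic image of $\sabsprei{1}$, and $\sabsprei{2}$ is correct; yet $\sabsprei{1}$ violates $(\sslabs)$ at $\averti{1} \tgsucci{0} \bverti{1}$, since $\absprei{1}{\bverti{1}} = \averti{2} \neq \stringcon{\absprei{1}{\averti{1}}}{\averti{1}}$. (Both graphs here are even \lambdatgs; only the given $\sabsprei{1}$ fails to be the correct function.) There is also a structural obstruction: by surjectivity of $\sahom$ one can always assemble, letter-wise from pre-images, \emph{some} $\sabsprei{1}$ whose homomorphic image is a given correct $\sabsprei{2}$; so the reflection direction would imply that $\classltgsij{i}{j}$ is closed under converse functional bisimulation, contradicting Proposition~\ref{prop:ltgs:not:closed:under:funbisim:convfunbisim}.

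For perspective: the paper fares no better on this half — its entire treatment is the remark that the converse direction "can be established analogously", which the example above refutes. Your instinct to isolate this step as the critical one was sound, and the part of the lemma that is actually used downstream (item~(\ref{lem:hom:image:absprefix:function:ltgs:item:ii}) and the preservation direction, in the proofs of Proposition~\ref{prop:hom:image:absprefix:function:ltgs} and Theorem~\ref{thm:preserve:ltgs}) is exactly the part your proposal establishes correctly.
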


%
%

\begin{proof}
  Let $\atgi{1}$, $\atgi{2}$, $\sahom$, $\sabsprei{1}$, and $\sabsprei{2}$ as assumed in the lemma.
  
  For showing statement~(\ref{lem:hom:image:absprefix:function:ltgs:item:ii}),
  we assume that the \absprefix\ functions $\sabsprei{1}$ and $\sabsprei{2}$ are correct for $\atgi{1}$ and $\atgi{2}$, respectively. 
  We establish that $\sabsprei{2}$ is the homomorphic image under $\sahom$ of $\sabsprei{1}$ by
  proving that
  \begin{equation}\label{eq2:prf:lem:hom:image:absprefix:function:ltgs}
    \forallst{\bvert\in\vertsi{1}}
             {\;
              \compfuns{\bar{\sahom}}{\sabsprei{1}}{\bvert} = \compfuns{\sabsprei{2}}{\sahom}{\bvert}}
  \end{equation}
  holds, by induction on the length of an access path
  $\apath 
     \funin
   \srooti{1} = \bverti{0}
     \tgsucc
   \bverti{1}
     \tgsucc
   \ldots
     \tgsucc
   \bverti{n} = \bvert$        
   of $\bvert$ in $\atgi{1}$.
  
  If $\length{\apath} = 0$, then $\bvert = \srooti{1}$, and it follows:
  $\funap{\bar{\sahom}}{\absprei{1}{\bvert}}
     =
   \funap{\bar{\sahom}}{\absprei{1}{\srooti{1}}}
     =
   \funap{\bar{\sahom}}{\emptyword}
     =
   \emptyword
     =
   \absprei{2}{\srooti{2}}
     =
   \absprei{2}{\ahom{\srooti{1}}}$            
  by using the correctness conditions ($\text{root}$) for $\sabsprei{1}$ and $\sabsprei{2}$,
  and the condition ($\text{root}$) for $\sahom$ to be a homomorphism.
  
  If $\length{\apath} = n+1$, 
  then $\apath$ is of the form 
  $\apath 
     \funin
   \srooti{1} = \bverti{0}
     \tgsucc
   \bverti{1}
     \tgsucc
   \ldots
     \tgsucc
   \bverti{n}
     \tgsucci{i} 
   \bverti{n+1} = \bvert$ 
   for some $i\in\setexp{0,1}$.
   We have to show that 
   $\compfuns{\bar{\sahom}}{\sabsprei{1}}{\bvert} 
      = 
    \compfuns{\sabsprei{2}}{\sahom}{\bvert}$
   holds, and in doing so we may use  
   $\funap{\bar{\sahom}}{\absprei{1}{\bverti{n}}} 
      = 
    \absprei{2}{\ahom{\bverti{n}}}$  
   as this follows by the induction hypothesis.
   We will do so by distinguishing the by Lemma~\ref{lem:ltg},~(\ref{lem:ltg:item:v}) only possible three cases 
   of different labels for the vertex $\bverti{n}$, namely $\sslabs$, $\sslapp$, and $\snlvarsucc$.   
   In each of these cases we will also use that
   $\ahom{\bverti{n}} 
      \tgsuccacci{i}
    \ahom{\bverti{n+1}} = \ahom{\bvert}$,
   where $\stgsuccacc$ the directed-edge relation in $\atgi{2}$, 
   and
   $\vlabi{1}{\bverti{n}}
      = 
    \vlabi{2}{\ahom{\bverti{n}}}$
   holds, 
   which follow since $\sahom$ is a homomorphism.  
   
   Suppose that $\bverti{n}\in\vertsiof{1}{\sslabs}$.
   Then also $\ahom{\bverti{n}}\in\vertsiof{2}{\sslabs}$ holds, since $\sahom$ is a homomorphism.
   From the correctness conditions ($\sslabs$) for $\bverti{n}$ in $\atgi{1}$
   and for $\ahom{\bverti{n}}$ in $\atgi{2}$ it follows that 
   $\absprei{1}{\bvert} = \stringcon{\absprei{1}{\bverti{n}}}{\bverti{n}}$
   and
   $\absprei{2}{\ahom{\bvert}} = \stringcon{\absprei{2}{\ahom{\bverti{n}}}}{\ahom{\bverti{n}}}$.
   From this we now obtain
   $\funap{\bar{\sahom}}{\absprei{1}{\bvert}}
      =
    \funap{\bar{\sahom}}{\stringcon{\absprei{1}{\bverti{n}}}{\bverti{n}}}
      =
    \stringcon{\funap{\bar{\sahom}}{\absprei{1}{\bverti{n}}}}{\ahom{\bverti{n}}}
      =    
    \stringcon{\absprei{2}{\funap{\bar{\sahom}}{{\bverti{n}}}}}{\ahom{\bverti{n}}}
      =
    \absprei{2}{\ahom{\bvert}}$
   by using the induction hypothesis for $\bverti{n}$. 
   
   Suppose that $\bverti{n}\in\vertsiof{1}{\sslapp}$.
   Then $\ahom{\bverti{n}}\in\vertsiof{2}{\sslapp}$ follows as $\sahom$ is a homomorphism.
   From the correctness conditions ($\sslapp$) for $\bverti{n}$ in $\atgi{1}$
   and for $\ahom{\bverti{n}}$ in $\atgi{2}$
   it follows that  
   $\absprei{1}{\bvert} = \absprei{1}{\bverti{n}}$
   and
   $\absprei{2}{\ahom{\bvert}} = \absprei{1}{\ahom{\bverti{n}}}$.
   Then we obtain
   $\funap{\bar{\sahom}}{\absprei{1}{\bvert}}
      =
    \funap{\bar{\sahom}}{\absprei{1}{\bverti{n}}} 
      =
    \absprei{2}{\ahom{\bverti{n}}}
      =
    \absprei{2}{\ahom{\bvert}}$
   by using the induction hypothesis for $\bverti{n}$. 
   
   Finally, suppose that $\bverti{n}\in\vertsiof{1}{\snlvarsucc}$.
   Then also $\ahom{\bverti{n}}\in\vertsiof{2}{\snlvarsucc}$ follows.
   In this case we have $i=0$, and hence the final step in $\apath$ is of the form
   $\bverti{n} 
      \tgsucci{0}
    \bverti{n+1} = \bvert$,
   which also entails
   $\ahom{\bverti{n}} 
      \tgsuccacci{0}
    \ahom{\bverti{n+1}} = \ahom{\bvert}$.
   From the correctness conditions ($\snlvarsucc$) for $\bverti{n}$ in $\atgi{1}$
   we obtain that there is a vertex $\avert\in\vertsi{1}$ such that
   $\absprei{1}{\bverti{n}}
      =
    \stringcon{\absprei{1}{\bvert}}{\avert}$ holds.
   Then by the induction hypothesis for $\bverti{n}$ it follows:
   $\absprei{2}{\ahom{\bverti{n}}}
      =
    \funap{\bar{\sahom}}{\absprei{1}{\bverti{n}}}
      =
    \funap{\bar{\sahom}}{\stringcon{\absprei{1}{\bvert}}{\avert}}  
      =
    \stringcon{\funap{\bar{\sahom}}{\absprei{1}{\bvert}}}{\ahom{\avert}}$.
   From this 
   $\absprei{2}{\ahom{\bvert}}
      =
    \funap{\bar{\sahom}}{\absprei{1}{\bvert}}$
   follows, 
   due to
   the correctness condition ($\snlvarsucc$) for $\ahom{\bverti{n}}$.

  In this way we have carried out the proof by induction of \eqref{eq2:prf:lem:hom:image:absprefix:function:ltgs},
  and thereby have shown that $\sabsprei{2}$ is the homomorphic image of $\sabsprei{1}$.

  \vspace{0.5ex}  
  For showing statement (\ref{lem:hom:image:absprefix:function:ltgs:item:i}),
  we suppose that $\sabsprei{2}$ is the homomorphic image of $\sabsprei{1}$. 
  
  We first argue for the direction from left to right in the equivalence stated in (\ref{lem:hom:image:absprefix:function:ltgs:item:i});
  the argument for the converse direction will be similar. 
  So we assume that $\sabsprei{1}$ is correct for $\atgi{1}$, and we show that $\sabsprei{2}$ is correct for $\atgi{2}$. 
  The correctness of $\sabsprei{2}$ for $\atgi{2}$ can be established
  by recognizing that the correctness conditions from Definition~\ref{def:abspre:function:siglambdaij}
  carry over from $\atgi{1}$ to $\atgi{2}$ via the homomorphism $\sahom$ 
  due to the property \eqref{eq:def:hom:image:absprefix:function:ltgs}
  of $\sabsprei{2}$ being the homomorphic image of $\sabsprei{1}$.
  
  For the two conditions that do not involve transitions this is easy to show: 
  The condition ($\text{root}$) for $\atgi{2}$
  follows from $\ahom{\srooti{1}} = \srooti{2}$ (the condition ($\text{root}$) for $\sahom$ to be a homomorphism)
  and \eqref{eq:def:hom:image:absprefix:function:ltgs}.
  And the condition $(\snlvar)_0$ follows similarly
  by using that every pre-image under $\sahom$ of a variable vertex in $\atgacc$
  must be a variable vertex in $\altg$ since $\sahom$ is a homomorphism.
 
%
%

  As an example for transferring the other correctness conditions from $\atgi{1}$ to $\atgi{2}$, 
  we treat the case of the condition $(\snlvarsucc)_1$.
  For this, let 
  $\bvertacc,\bvertacci{0}\in\vertsi{2}$
  such that
  $\bvertacc\in\vertsiof{2}{\snlvarsucc}$
  with
  $\bvertacc \tgsuccacci{0} \bvertacci{0}$. 
  Then as a consequence of the `arguments-forward' condition for $\sahom$ as in Proposition~\ref{prop:hom:tgs}
  there exist 
  $\bvert,\bverti{0}\in\vertsi{1}$
  with
  $\bvert\in\vertsiof{1}{\snlvarsucc}$,
  $\bvert \tgsucci{0} \bverti{0}$,
  and with
  $\ahom{\bvert} = \bvertacc$,
  $\ahom{\bverti{0}} = \bvertacci{0}$.
  Since $(\snlvarsucc)_1$ is satisfied for $\atgi{1}$,
  there exists a vertex $\avert\in\vertsi{1}$
  such that
  $\absprei{1}{\bvertbp{0}{}} \avert
      =
    \absprei{1}{\bvert}$.
  From this and by~\eqref{eq:def:hom:image:absprefix:function:ltgs} we obtain
  $\stringcon{\absprei{2}{\bvertacci{0}}}{\ahom{\avert}}
     =
   \stringcon{\absprei{2}{\ahom{\bverti{0}}}}{\ahom{\avert}}  
     =
   \stringcon{\ahomext{\absprei{1}{\bverti{0}}}}{\ahom{\avert}}  
     =
   \ahomext{\stringcon{\absprei{1}{\bverti{0}}}{\avert}}
     =
   \ahomext{\absprei{1}{\bvert}}
     =
   \absprei{2}{\ahom{\bvert}}  
     =
   \absprei{2}{\bvertacc}$.
  This shows the existence of 
  $\avertacc\in\vertsi{2}$
  (let namely $\avertacc \defdby \ahom{\avert}$)
  with 
  $\stringcon{\absprei{2}{\bvertacci{0}}}{\avertacc}
     =
   \absprei{2}{\bvertacc}$.
  In this way we have established the correctness condition $(\snlvarsucc)_1$ for $\atgi{2}$.  
  
  The direction from left to right in the equivalence in (\ref{lem:hom:image:absprefix:function:ltgs:item:i})
  can be established analogously 
  by recognizing that the correctness conditions from Definition~\ref{def:abspre:function:siglambdaij}
  carry over also from $\atgi{2}$ to $\atgi{1}$ via the homomorphism $\sahom$ 
  due to the homomorphic image property \eqref{eq:def:hom:image:absprefix:function:ltgs}.
  The arguments for the individual correctness conditions are analogous to the ones used above,
  but they depend on using the `arguments-forward' property in Proposition~\ref{prop:hom:tgs}
  of the homomorphism $\sahom$. 
\end{proof}

Now we define a precise relationship between \lambdatg{s} defined above and \lambdaaphotg{s} as defined in the previous section
via translation mappings between these classes:
\begin{description}\itemizeprefs
\item[The mapping $\slaphotgstoltgsij{i}{j}$ {\normalfont (see Proposition~\ref{prop:mappings:laphotgs:to:ltgs})}] 
  produces a \lambdatg\ for any given \lambdaaphotg\ by adding to the
  original set of vertices a number of delimiter vertices at the appropriate
  places. That is, at every position where the abstraction prefix decreases by $n$
  elements, $n$ $\snlvarsucc$-vertices are inserted. In the image, the original
  abstraction prefix is retained as part of the vertices. This can be considered
  intermittent information used for the purpose of defining the edges of the image.
\item[The mapping $\sltgstolaphotgsij{i}{j}$ {\normalfont (see Proposition~\ref{prop:mappings:ltgs:to:laphotgs})}] 
  back to \lambdaaphotg{s} is
  simpler because it only has to erase the $\snlvarsucc$-vertices, and add the correct abstraction prefix
  that exists for the \lambdatg\ to be translated.
\end{description}
 

\begin{proposition}\label{prop:mappings:laphotgs:to:ltgs}
  Let $i\in\setexp{0,1}$ and $j\in\setexp{1,2}$.
  The mapping $\slaphotgstoltgsij{i}{j}$ defined below 
  is well-defined between the class of \lambdatg{s} over $\siglambdaij{i}{j}$
  and the class of \lambdaaphotg{s} over $\siglambdai{i}$:
\begin{align*}
&    
\begin{aligned}
  &
  \slaphotgstoltgsij{i}{j} \funin \classlaphotgsi{i} \to \classltgsij{i}{j},
    \;\;
  \alhotg = \tuple{\verts,\svlab,\svargs,\sroot,\sabspre} 
    \mapsto \laphotgstoltgsij{i}{j}{\alhotg} \defdby \tuple{\vertsacc,\svlabacc,\svargsacc,\srootacc} 
\end{aligned}
\end{align*}
where:
\begin{equation*}
\begin{gathered}  
  \vertsacc \defdby
    \descsetexp{\pair{\bvert}{\abspre{\bvert}}}{\bvert\in\verts}
      \cup
    \descsetexpBig{\tuple{\bvert,k,\bvertacc,\apre}}{\bvert,\bvertacc\in\verts,\,
                                              \bvert \tgsucci{k} \bvertacc,\,
                                              \begin{aligned}[c]
                                                & \vertsof{\bvert} = \sslabs
                                                    \:\logand\:
                                                  \abspre{\bvertacc} \prelt \apre \prele \stringcon{\abspre{\bvert}}{\bvert}
                                                \\[-0.75ex]
                                                & \logor\:
                                                  \vertsof{\bvert} = \sslapp
                                                    \:\logand\:
                                                  \abspre{\bvertacc} \prelt \apre \prele \abspre{\bvert}
                                              \end{aligned}} 
  \\
  \srootacc \defdby \pair{\sroot}{\emptyword} 
  \hspace*{10ex} 
  \begin{gathered}[t]
  \svlabacc \funin \vertsacc \to \siglambdaij{i}{j} \, , \;
    \begin{aligned}[t]
      \pair{\bvert}{\abspre{\bvert}} & \mapsto \vlabacc{\pair{\bvert}{\abspre{\bvert}}} \defdby \vlab{\bvert}
      \\[-0.5ex]
      \tuple{\bvert,k,\bvertacc,\apre} & \mapsto \vlabacc{\bvert,k,\bvertacc,\apre} \defdby \snlvarsucc
    \end{aligned}
  \end{gathered}  
\end{gathered}    
\end{equation*}
and $\svargsacc \funin \vertsacc \to (\vertsacc)^*$ is defined such that 
for the induced indexed successor relation $\stgsuccacci{(\cdot)}$ it holds: 
\begin{align*}
  &
  \bvert \tgsucci{k} \bverti{k}
    \:\logand\:
  \nodels{\bvert}{k} = 0
    \;\;\Longrightarrow\;\;
  \pair{\bvert}{\abspre{\bvert}} \tgsuccacci{k} \pair{\bverti{k}}{\abspre{\bverti{k}}} 
  \displaybreak[0]\\
  &
  \bvert \tgsucci{0} \bverti{0}
    \:\logand\:  
  \nodels{\bvert}{0} > 0
    \:\logand\:
  \vlab{\bvert} = \sslabs
    \:\logand\:
  \abspre{\bvert} = \stringcon{\stringcon{\abspre{\bverti{0}}}{\avert}}{\apre} 
    \\[-0.5ex]
    & \hspace*{6ex} 
    \;\;\Longrightarrow\;\;
      \pair{\bvert}{\abspre{\bvert}} \stgsuccacci{0} \tuple{\bvert,0,\bverti{0},\stringcon{\abspre{\bvert}}{\bvert}}
        \:\logand\:
      \tuple{\bvert,0,\bverti{0},\abspre{\bverti{0}}\avert} \stgsuccacci{0} \pair{\bverti{0}}{\abspre{\bverti{0}}}  
  \displaybreak[0]\\
  &   
  \bvert \tgsucci{k} \bverti{k} 
    \:\logand\:  
  \nodels{\bvert}{k} > 0
    \:\logand\:
  \vlab{\bvert} = \sslapp
    \:\logand\:
  \abspre{\bvert} = \stringcon{\stringcon{\abspre{\bverti{k}}}{\avert}}{\apre} 
    \\[-0.5ex]
    & \hspace*{6ex} 
    \;\;\Longrightarrow\;\;
      \pair{\bvert}{\abspre{\bvert}} \stgsuccacci{k} \tuple{\bvert,k,\bverti{k},\abspre{\bvert}}
        \:\logand\:
      \tuple{\bvert,k,\bverti{k},\abspre{\bverti{k}}\avert} \stgsuccacci{0} \pair{\bverti{k}}{\abspre{\bverti{k}}}
  \displaybreak[0]\\ 
  &  
  \bvert \tgsucci{k} \bverti{k}
    \:\logand\:  
  \nodels{\bvert}{k} > 0
    \:\logand\:
  \tuple{\bvert,k,\bverti{k},\stringcon{\apre}{\avert}}, \,
  \tuple{\bvert,k,\bverti{k},\apre} \in\vertsacc
    \;\;\Longrightarrow\;\;
  \tuple{\bvert,k,\bverti{k},\stringcon{\apre}{\avert}} \tgsuccacci{0} \tuple{\bvert,k,\bverti{k},\apre}
  \\
  &
  \bvert \tgsucci{k} \bverti{k}
    \:\logand\:  
  \nodels{\bvert}{k} > 0
    \:\logand\:
  \tuple{\bvert,k,\bverti{k},\stringcon{\apre}{\avert}} \in\vertsacc
    \:\logand\:
  j = 2  
    \;\;\Longrightarrow\;\;
  \tuple{\bvert,k,\bverti{k},\stringcon{\apre}{\avert}} \tgsuccacci{1} \pair{\bverti{k}}{\abspre{\bverti{k}}}
\end{align*}
for all $\bvert,\bverti{0},\bverti{1},\avert\in\verts$, $k\in\setexp{0,1}$, $\apre\in\verts^*$,
and where the function $\snodels$ is defined as:
\begin{equation*}
  \nodels{\bvert}{k} 
    \defdby
  \begin{cases}
    \length{\abspre{\bvert}} - \length{\abspre{\bvertacc}} 
      & \text{if } \bvert\in\vertsof{\sslapp} \logand \bvert \stgsucci{k} \bvertacc 
    \\[-0.5ex]
    \length{\abspre{\bvert}} + 1 - \length{\abspre{\bvertacc}} 
      & \text{if } \bvert\in\vertsof{\sslabs} \logand \bvert \stgsucci{k} \bvertacc
    \\[-0.5ex]
    0 & \text{otherwise}   
  \end{cases}  
\end{equation*}
$\slaphotgstoltgsij{i}{j}$ induces the function~
$\slaphotgsisotoltgsisoij{i}{j} \funin \classlaphotgsisoi{i} \to \classltgsisoij{i}{j}$,
$\alaphotgiso = \eqcl{\alaphotg}{\siso} \mapsto \eqcl{\laphotgstoltgsij{i}{j}{\alaphotg}}{\siso}$ 
on the isomorphism equivalence classes.
\end{proposition}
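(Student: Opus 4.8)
The plan is to verify three things: that $\laphotgstoltgsij{i}{j}{\alhotg}$ is a well-defined $\siglambdaij{i}{j}$\nb-term-graph for every \lambdaaphotg\ $\alhotg$ over $\siglambdai{i}$; that this term graph admits a correct abstraction-prefix function, so that it indeed lies in $\classltgsij{i}{j}$; and that the construction respects isomorphism, so that the induced map $\slaphotgsisotoltgsisoij{i}{j}$ is well-defined. The root and the labels cause no difficulty: $\srootacc = \pair{\sroot}{\emptyword} = \pair{\sroot}{\abspre{\sroot}}$ is a vertex of the first kind since $\abspre{\sroot} = \emptyword$ by the source condition (root), the two kinds of vertices (pairs and quadruples) are distinguished by their length, and $\svlabacc$ assigns to each a label of the correct arity. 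The real content lies in the argument function $\svargsacc$ and in the abstraction-prefix function.

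First I would show that the indexed successor relation $\stgsuccacci{(\cdot)}$ defined by the five implications really determines a function, i.e.\ that every vertex has exactly one successor at each index below the arity of its label. The key step is to analyse, for a fixed edge $\bvert \tgsucci{k} \bverti{k}$ of $\alhotg$ with $m \defdby \nodels{\bvert}{k} > 0$, the delimiter vertices $\tuple{\bvert,k,\bverti{k},\apre}$ that this edge gives rise to. The source conditions ($\sslabs$) and ($\sslapp$) guarantee $\abspre{\bverti{k}} \prele \stringcon{\abspre{\bvert}}{\bvert}$ and $\abspre{\bverti{k}} \prele \abspre{\bvert}$, respectively, so the relevant word ($\stringcon{\abspre{\bvert}}{\bvert}$ in the $\sslabs$\nb-case, $\abspre{\bvert}$ in the $\sslapp$\nb-case) can be written as $\abspre{\bverti{k}}$ followed by exactly $m$ further vertices, and then the admissible values of $\apre$ are precisely the $m$ strict extensions of $\abspre{\bverti{k}}$ that are prefixes of that word. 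These $m$ delimiters form a single \emph{descending chain}: the longest is the index\nb-$k$ successor of $\pair{\bvert}{\abspre{\bvert}}$, each delimiter has as its index\nb-$0$ successor the next-shorter one, and the shortest has the target $\pair{\bverti{k}}{\abspre{\bverti{k}}}$ as its index\nb-$0$ successor. The case $m = 0$, which in particular covers variable and delimiter source vertices (for which $\snodels$ is $0$ by definition), is handled by the first implication, yielding a direct edge to $\pair{\bverti{k}}{\abspre{\bverti{k}}}$; and for $j = 2$ the last implication attaches a unique index\nb-$1$ back-link to each delimiter. Verifying that the five implications are mutually exclusive and jointly cover exactly the required index positions is the main obstacle: it is routine but demands carefully matching the $\apre$\nb-components and their lengths against $\snodels$.

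Next I would exhibit a correct abstraction-prefix function $\sabspreacc$ for $\laphotgstoltgsij{i}{j}{\alhotg}$, which by Lemma~\ref{lem:ltg}~(\ref{lem:ltg:item:vi}) is then automatically the unique one. The natural candidate reads the prefix off the last component of each vertex: $\abspreacc{\pair{\bvert}{\abspre{\bvert}}}$ and $\abspreacc{\tuple{\bvert,k,\bverti{k},\apre}}$ are the words obtained from $\abspre{\bvert}$ and from $\apre$, respectively, by replacing every abstraction vertex $\avert$ occurring in them by the image vertex $\pair{\avert}{\abspre{\avert}}$ (well-defined since, by Lemma~\ref{lem:laphotgs}~(\ref{lem:laphotg:item:i}), the prefix of an abstraction vertex occurring in a prefix is uniquely determined). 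Since the abstraction vertices of the constructed graph are exactly the $\pair{\avert}{\abspre{\avert}}$ with $\vlab{\avert} = \sslabs$, the conditions (root), ($\snlvar$)$_0$, and ($\snlvar$)$_1$ for the target are inherited directly (via the image map) from the source conditions of the same name, while ($\sslabs$) and ($\sslapp$) now hold with \emph{equality} --- rather than with the prefix inequalities of the source --- precisely because the inserted delimiters absorb the drop in prefix length along such an edge. The genuinely new conditions ($\snlvarsucc$)$_1$ and ($\snlvarsucc$)$_2$ are read off from the descending-chain structure: along a chain the prefix loses exactly one innermost abstraction vertex per index\nb-$0$ step, and the index\nb-$1$ back-link of a delimiter whose innermost prefix letter is $\avert$ targets $\pair{\avert}{\abspre{\avert}}$, which satisfies ($\snlvarsucc$)$_2$ by Lemma~\ref{lem:laphotgs}~(\ref{lem:laphotg:item:i}). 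Root-connectedness is immediate: every vertex of the first kind is reached by lifting an access path of $\bvert$ in $\alhotg$ and interpolating the delimiter chains, and every delimiter lies on such a chain.

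Finally, for the induced map on isomorphism classes I would check functoriality. By Definition~\ref{def:homom:aplambdahotg} an isomorphism $\saiso$ between \lambdaaphotg{s} is a label-, argument-, and root-preserving bijection that additionally respects the abstraction-prefix functions. Such an $\saiso$ therefore induces a bijection between the two constructed vertex sets, sending each pair vertex to the pair vertex determined by its $\saiso$\nb-image and each delimiter $\tuple{\bvert,k,\bverti{k},\apre}$ to $\tuple{\aiso{\bvert},k,\aiso{\bverti{k}},\apreacc}$, where $\apreacc$ is the $\saiso$\nb-image of $\apre$; one then verifies that this bijection preserves labels and the successor relation, hence is an isomorphism of the constructed \lambdatg{s}, which is exactly what is needed for $\slaphotgsisotoltgsisoij{i}{j}$ to be well-defined. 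I expect the delimiter-chain analysis of the second step to be by far the most delicate part, with the remaining verifications following the pattern of the correctness-transfer argument already carried out in Lemma~\ref{lem:hom:image:absprefix:function:ltgs}.
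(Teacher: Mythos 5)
The paper never proves this proposition: it is stated bare, preceded only by the informal gloss that wherever the abstraction prefix drops by $n$ elements, $n$ delimiter vertices are inserted. So there is no argument of the paper's to compare yours against; your outline supplies exactly the missing verification, and its decomposition is the natural one. The delimiter-chain analysis is the right way to see that the five clauses determine an argument function of the correct arities: for an edge $\bvert \tgsucci{k} \bverti{k}$ with $\nodels{\bvert}{k} = m > 0$, the admissible fourth components are precisely the $m$ strict extensions of $\abspre{\bverti{k}}$ inside $\stringcon{\abspre{\bvert}}{\bvert}$, resp.\ $\abspre{\bvert}$, and they form a single descending chain from the index\nb-$k$ successor of $\pair{\bvert}{\abspre{\bvert}}$ down to the target pair vertex. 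Reading the target's \absprefix\ function off the vertex components and checking correctness is then as you describe, with Lemma~\ref{lem:laphotgs},~(\ref{lem:laphotg:item:i}) indeed being what yields $\abspre{\avert} = \apre$ for a popped letter $\avert$, and the equalities in $(\sslabs)$, $(\sslapp)$ replacing the source's prefix inequalities because the chains absorb the drop; the functoriality check on isomorphism classes is routine, as you say.

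One point you repaired silently should have been flagged. The printed clause for $j=2$ sends the delimiter $\tuple{\bvert,k,\bverti{k},\stringcon{\apre}{\avert}}$ by its index\nb-$1$ edge to $\pair{\bverti{k}}{\abspre{\bverti{k}}}$, the image of the target of the original edge; your proof instead routes this back-link to $\pair{\avert}{\abspre{\avert}}$, the image of the popped abstraction vertex. Only your version can work: under the printed clause the index\nb-$1$ successor of a delimiter need not even be an abstraction vertex (take $\bverti{k}$ an application or variable vertex), so condition $(\snlvarsucc)_2$ of Definition~\ref{def:abspre:function:siglambdaij} fails for every candidate \absprefix\ function, the constructed graph then lies outside $\classltgsij{i}{2}$, and the proposition as literally printed is false for $j=2$. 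This is evidently a typo in the statement (the purpose of the $j=2$ back-link is to record which abstraction's scope the delimiter closes), and your reading is the intended one; but a blind proof must state such a repair explicitly rather than substitute it without comment. A second, harmless slip: the source \lambdaaphotg\ over $\siglambdai{i}$ contains no delimiter vertices at all, so the ``otherwise'' clause of $\snodels$ concerns only variable vertices; there are no ``delimiter source vertices''.
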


\begin{proposition}\label{prop:mappings:ltgs:to:laphotgs}
  Let $i\in\setexp{0,1}$ and $j\in\setexp{1,2}$.
  The mapping $\sltgstolaphotgsij{i}{j}$ defined below 
  is well-defined between the class of \lambdatg{s} over $\siglambdaij{i}{j}$ 
  and the class of \lambdaaphotg{s} over $\siglambdai{i}$:
\begin{align*}
& 
\begin{aligned}
  &
  \sltgstolaphotgsij{i}{j} \funin \classltgsij{i}{j} \to \classlaphotgsi{i},
    \;\;
  \altg = \tuple{\verts,\svlab,\svargs,\sroot} 
    \mapsto \ltgstolaphotgsij{i}{j}{\altg} \defdby \tuple{\vertsacc,\svlabacc,\svargsacc,\sroot',\sabspre'} 
  \\
  & \hspace*{15ex}
  \text{where }
  \begin{aligned}[t]
    & \vertsacc \defdby \vertsof{\sslabs} \cup \vertsof{{\sslapp}} \cup \vertsof{\snlvar}, \:
      \svlabacc \defdby \srestrictfunto{\svlab}{\vertsacc}, \:
      \sroot' \defdby \sroot, \:
      \\[-0.5ex]
    & \svargsacc \funin \vertsacc \to (\vertsacc)^*
       \text{ so that for the induced indexed succ.\ relation $\stgsuccacci{(\cdot)}$:} 
      \\[-0.5ex]
    & \hspace*{6ex}
        \averti{0} \tgsuccacci{k} \averti{1} 
          \;\funin\,\Leftrightarrow\;
        \averti{0} \binrelcomp{\stgsucci{k}}{\stgsuccisstar{0}{\snlvarsucc}} \averti{1} 
        \hspace*{3ex}  
        \text{(for all $\averti{0},\averti{1}\in\vertsacc$, $k\in\setexp{0,1}$)} \; 
      \\[-0.5ex]
    & \sabspre' \defdby \srestrictfunto{\sabspre}{\vertsacc}
      \text{ for the correct \absprefix\ function $\sabspre$ for $\altg$.} \,   
  \end{aligned}
\end{aligned}
\end{align*}
$\sltgstolaphotgsij{i}{j}$ induces the function~
$\sltgsisotolaphotgsisoij{i}{j} \funin \classltgsisoij{i}{j} \to \classlaphotgsisoi{i}$,
$\altgiso = \eqcl{\altg}{\siso} \mapsto \eqcl{\ltgstolaphotgsij{i}{j}{\altg}}{\siso}$
on the isomorphism equivalence classes.
\end{proposition}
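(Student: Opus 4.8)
The statement asserts three things under the heading \emph{well-defined}: that $\svargsacc$ is a genuine argument function $\vertsacc\to(\vertsacc)^*$ of the right arities, that $\ltgstolaphotgsij{i}{j}{\altg}$ is a root-connected $\siglambdai{i}$\nb-term-graph, and that $\sabspre'$ is a \emph{correct} abstraction-prefix function for it (so that the result indeed lies in $\classlaphotgsi{i}$); and finally that the construction descends to isomorphism classes. Fix $\altg=\tuple{\verts,\svlab,\svargs,\sroot}\in\classltgsij{i}{j}$ and let $\sabspre$ be its unique correct abstraction-prefix function, which exists by Lemma~\ref{lem:ltg}\,(\ref{lem:ltg:item:vi}). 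The crux is the well-definedness of $\svargsacc$. For $\avert\in\vertsacc$ and an admissible index $k$, the new $k$-th successor is computed by taking the old edge $\avert\tgsucci{k}\cvert_0$ and then following the chain $\cvert_0 \tgsuccis{0}{\snlvarsucc} \cvert_1 \tgsuccis{0}{\snlvarsucc}\cdots$. I would first argue that this chain is finite and ends in a unique vertex of $\vertsacc$: uniqueness is immediate from the determinism of successors in a term graph, while finiteness follows from Lemma~\ref{lem:ltg}\,(\ref{lem:ltg:item:iv}), since each $\tgsuccis{0}{\snlvarsucc}$\nb-step strictly shortens the abstraction prefix by exactly one, so there are at most $\length{\abspre{\cvert_0}}$ of them. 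The chain thus halts at the first non\nb-$\snlvarsucc$ vertex, which carries one of the labels $\sslabs,\sslapp,\snlvar$ and hence belongs to $\vertsacc$. Inspecting labels confirms the arities: an $\sslabs$\nb-vertex retains its single successor, an $\sslapp$\nb-vertex its two, and a $\snlvar$\nb-vertex has exactly $i$ successors (for $i=1$ its index-$0$ edge already points, by $(\snlvar)_1$ for $\altg$, to an $\sslabs$\nb-vertex, so the chain is trivial). Note that when $j=2$ the index-$1$ back-links of delimiter vertices are simply discarded along with those vertices, since the skipping relation $\stgsuccisstar{0}{\snlvarsucc}$ only traverses index-$0$ edges.

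Next I would check root-connectedness. The root $\sroot$ lies in $\vertsacc$ because $\abspre{\sroot}=\emptyword$ whereas every delimiter vertex has a nonempty prefix (by $(\snlvarsucc)_1$). Given $\avert\in\vertsacc$, take an access path of $\avert$ in $\altg$; by Lemma~\ref{lem:ltg}\,(\ref{lem:ltg:item:v}) it passes only through $\sslabs$-, $\sslapp$-, and $\snlvarsucc$\nb-vertices and leaves each delimiter vertex along its index-$0$ edge, so it factors into maximal segments, each a single non\nb-$\snlvarsucc$ step followed by a (possibly empty) $\stgsuccisstar{0}{\snlvarsucc}$\nb-tail. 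Every such segment is exactly one edge of $\ltgstolaphotgsij{i}{j}{\altg}$, so $\avert$ is reachable from $\sroot$ in the constructed graph.

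The main remaining task is the correctness of $\sabspre'=\srestrictfunto{\sabspre}{\vertsacc}$ with respect to Definition~\ref{def:abspre:function:siglambdai}. Conditions $(\text{root})$ and $(\snlvar)_0$ transfer verbatim from the corresponding conditions for $\sabspre$ on $\altg$. For $(\sslabs)$ and $(\sslapp)$, consider a new edge $\bvert\tgsuccacci{k}\bverti{k}$ arising from $\bvert\tgsucci{k}\cvert_0\tgsuccisstar{0}{\snlvarsucc}\bverti{k}$: the old condition gives $\abspre{\cvert_0}=\abspre{\bvert}\bvert$ (if $\vlab{\bvert}=\sslabs$) or $\abspre{\cvert_0}=\abspre{\bvert}$ (if $\vlab{\bvert}=\sslapp$), and by Lemma~\ref{lem:ltg}\,(\ref{lem:ltg:item:iv}) each step of the $\snlvarsucc$\nb-tail deletes a rightmost prefix entry, so $\abspre{\bverti{k}}\prele\abspre{\cvert_0}$. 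Combining yields precisely $\abspre{\bverti{k}}\prele\abspre{\bvert}\bvert$ and $\abspre{\bverti{k}}\prele\abspre{\bvert}$ as required; here the equalities of the $\siglambdaij{i}{j}$\nb-conditions relax to the prefix inequalities $\prele$, exactly because the scope closures formerly witnessed by delimiters have become implicit. Finally $(\snlvar)_1$ (for $i=1$) holds because the index-$0$ successor of a variable vertex is already an $\sslabs$\nb-vertex, making the chain trivial and transferring the identity $\abspre{\bverti{0}}\bverti{0}=\abspre{\bvert}$ unchanged. Hence $\ltgstolaphotgsij{i}{j}{\altg}\in\classlaphotgsi{i}$.

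For the induced map on isomorphism classes, note that an isomorphism $\altg\iso\altgacc$ preserves labels and, by Proposition~\ref{prop:hom:image:absprefix:function:ltgs}, the unique correct abstraction-prefix functions; its restriction to $\vertsacc$ is therefore a label-, edge-, and prefix-preserving bijection, i.e.\ an isomorphism between $\ltgstolaphotgsij{i}{j}{\altg}$ and $\ltgstolaphotgsij{i}{j}{\altgacc}$, so that $\eqcl{\altg}{\siso}\mapsto\eqcl{\ltgstolaphotgsij{i}{j}{\altg}}{\siso}$ is well-defined. The decisive difficulty throughout is the well-definedness of $\svargsacc$ --- the termination and label-correctness of the delimiter-skipping chain --- which is where Lemma~\ref{lem:ltg}\,(\ref{lem:ltg:item:iv}),(\ref{lem:ltg:item:v}) does the essential work; the remaining verifications are routine once the stack discipline of $\sabspre$ is in hand.
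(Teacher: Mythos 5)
Your proof is correct, and since the paper states this proposition without giving any proof, there is nothing to compare it against; your argument is precisely the intended one, assembled from the paper's own tools: termination and determinism of the delimiter-skipping chains via the stack discipline of Lemma~\ref{lem:ltg},~(\ref{lem:ltg:item:iv}), root-connectedness via the access-path decomposition of Lemma~\ref{lem:ltg},~(\ref{lem:ltg:item:v}), the observation that the equality conditions of Definition~\ref{def:abspre:function:siglambdaij} relax under delimiter elimination to the prefix conditions $\prele$ of Definition~\ref{def:abspre:function:siglambdai}, and Proposition~\ref{prop:hom:image:absprefix:function:ltgs} for the descent to isomorphism classes.

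One small detail is worth adding explicitly: to see that $\sabspre' = \srestrictfunto{\sabspre}{\vertsacc}$ is an abstraction-prefix function \emph{for the constructed graph} — that is, that it takes values in $(\vertsacc)^*$ and not merely in $\verts^*$ — invoke the fact that abstraction prefixes consist only of abstraction vertices (items (i)--(iii) of Lemma~\ref{lem:ltg}, inherited from Lemma~\ref{lem:laphotgs},~(\ref{lem:laphotg:item:ii})), and abstraction vertices all survive into $\vertsacc$. This is a one-line observation, but without it the restricted function has the wrong codomain to even be a candidate for correctness.
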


The mappings $\sltgstolaphotgsij{i}{j}$ and $\slaphotgstoltgsij{i}{j}$
now define a correspondence between the class $\classlaphotgsi{i}$ of \lambdaaphotgs\
and the class $\classltgsij{i}{j}$ of \lambdatgs\ in then sense made precise by the following theorem.

\begin{theorem}[correspondence between \lambdaaphotg{s} with \lambdatg{s}]\label{thm:corr:laphotgs:ltgs}
  Let $i\in\setexp{0,1}$ and $j\in\setexp{1,2}$.
  The mappings $\sltgstolaphotgsij{i}{j}$ from Proposition~\ref{prop:mappings:ltgs:to:laphotgs} 
  and $\slaphotgstoltgsij{i}{j}$ from Proposition~\ref{prop:mappings:laphotgs:to:ltgs}
  define a correspondence 
  between the class of \lambdatg{s} over $\siglambdaij{i}{j}$ and the class of \lambdaaphotg{s} over $\siglambdai{i}$ 
  with the following properties:
\begin{enumerate}[(i)]\itemizeprefs
  \item{}\label{thm:corr:laphotgs:ltgs:item:i}
    $\scompfuns{\sltgstolaphotgsij{i}{j}}{\slaphotgstoltgsij{i}{j}} = \sidfunon{\classlaphotgsi{i}}$.
  \item{}\label{thm:corr:laphotgs:ltgs:item:ii} 
    For all $\altg\in\classltgsij{i}{j}\,$: \mbox{ } 
    $(\compfuns{\slaphotgstoltgsij{i}{j}}{\sltgstolaphotgsij{i}{j})}{\altg}
           \Sfunbisim
         \altg$.
  \item{}\label{thm:corr:laphotgs:ltgs:item:iii} 
    $\sltgstolaphotgsij{i}{j}$  and $\slaphotgstoltgsij{i}{j}$ 
    preserve and reflect the sharing orders on $\classlaphotgsi{i}$ and on $\classltgsij{i}{j}$:
    \vspace*{-0.25ex}
  \begin{align*}
    (\forall \alaphotgi{1},\alaphotgi{2}\in\classlaphotgsi{i}) \;\; \hspace*{-14ex} &&
       \alaphotgi{1} \funbisim \alaphotgi{2}
       \;\: & \Longleftrightarrow\;\:
     \laphotgstoltgsij{i}{j}{\alaphotgi{1}} \funbisim \laphotgstoltgsij{i}{j}{\alaphotgi{2}}
    \\  
    (\forall \altgi{1},\altgi{2}\in\classltgsij{i}{j}) \;\; \hspace*{-14ex} && 
       \ltgstolaphotgsij{i}{j}{\altgi{1}} \funbisim \ltgstolaphotgsij{i}{j}{\altgi{2}}
          \;\: & \Longleftrightarrow\;\:
       \altgi{1} \funbisim \altgi{2}
  \end{align*}
\end{enumerate}
Furthermore, 
statements analogous to
(\ref{thm:corr:laphotgs:ltgs:item:i}), (\ref{thm:corr:laphotgs:ltgs:item:ii}), and (\ref{thm:corr:laphotgs:ltgs:item:iii})
hold for the 
correspondences $\slaphotgsisotoltgsisoij{i}{j}$ and $\sltgsisotolaphotgsisoij{i}{j}$, 
induced by $\slaphotgstoltgsij{i}{j}$ and $\sltgstolaphotgsij{i}{j}$,
between the classes $\classlaphotgsisoi{i}$ and $\classltgsisoij{i}{j}$ 
of isomorphism equivalence classes of graphs in $\classlaphotgsisoi{i}$ and $\classltgsisoij{i}{j}$, respectively.  
\end{theorem}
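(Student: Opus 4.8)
The plan is to establish the four items of the statement in order, treating items~(\ref{thm:corr:laphotgs:ltgs:item:i}) and~(\ref{thm:corr:laphotgs:ltgs:item:ii}) by direct inspection of the two constructions in Proposition~\ref{prop:mappings:laphotgs:to:ltgs} and Proposition~\ref{prop:mappings:ltgs:to:laphotgs}, and then deriving the order-theoretic item~(\ref{thm:corr:laphotgs:ltgs:item:iii}) from these together with the functoriality of the two mappings (that each sends a homomorphism to a homomorphism). The structural facts I would lean on throughout are Lemma~\ref{lem:ltg}, especially the stack discipline~(\ref{lem:ltg:item:iv}) and the uniqueness~(\ref{lem:ltg:item:vi}) of the \absprefix\ function of a \lambdatg, and Proposition~\ref{prop:hom:image:absprefix:function:ltgs}, which guarantees that every homomorphism of \lambdatg{s} automatically respects the \absprefix\ functions. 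Item~(\ref{thm:corr:laphotgs:ltgs:item:iv}) then follows because both constructions are compatible with $\siso$ (already noted in the two Propositions) and hence descend to the isomorphism equivalence classes on which the sharing order lives.

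For item~(\ref{thm:corr:laphotgs:ltgs:item:i}) I would start from a \lambdaaphotg, apply $\slaphotgstoltgsij{i}{j}$ to insert, for every edge $\bvert \tgsucci{k} \bvertacc$, a private chain of $\nodels{\bvert}{k}$ delimiter vertices $\tuple{\bvert,k,\bvertacc,\apre}$ linked by index-$0$ edges, and then apply $\sltgstolaphotgsij{i}{j}$, whose edge relation $\binrelcomp{\stgsucci{k}}{\stgsuccisstar{0}{\snlvarsucc}}$ collapses exactly these chains. I expect to check that the surviving vertices $\pair{\bvert}{\abspre{\bvert}}$ are in canonical bijection with the original $\verts$, that skipping the fresh chains returns precisely the original edges, and that the restricted \absprefix\ function equals the original (its value at $\pair{\bvert}{\abspre{\bvert}}$ being the stored second component); under the identification $\bvert \leftrightarrow \pair{\bvert}{\abspre{\bvert}}$ this is literally the identity. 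For item~(\ref{thm:corr:laphotgs:ltgs:item:ii}) I would go the other way: given a \lambdatg\ $\altg$, erasing its delimiters and re-inserting them yields the representative $C$ of its $\bisimS$-class with maximally unshared (per-edge) delimiters, and I would exhibit an explicit $\snlvarsucc$-homomorphism $C \funbisimS \altg$ that is the identity on non-delimiter vertices and maps each fresh delimiter $\tuple{\bvert,k,\bvertacc,\apre}$ to the unique delimiter of $\altg$ on the chain of the edge $\bvert \tgsucci{k} \bvertacc$ whose \absprefix\ equals $\apre$. Well-definedness is where Lemma~\ref{lem:ltg},~(\ref{lem:ltg:item:iv}) and~(\ref{lem:ltg:item:v}) do the work: along such a chain every intermediate prefix length occurs exactly once, so the target is determined, and distinct fresh delimiters may be sent to the same vertex of $\altg$ (precisely when $\altg$ shares delimiters), which is harmless since only $\snlvarsucc$-labelled vertices are identified.

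For item~(\ref{thm:corr:laphotgs:ltgs:item:iii}) the two preservation directions are the functoriality statements. An \absprefix-homomorphism $\sbhom$ induces the \lambdatg-homomorphism $\pair{\bvert}{\abspre{\bvert}} \mapsto \pair{\sbhom(\bvert)}{\absprei{2}{\sbhom(\bvert)}}$ and $\tuple{\bvert,k,\bvertacc,\apre} \mapsto \tuple{\sbhom(\bvert),k,\sbhom(\bvertacc),\bar{\sbhom}(\apre)}$, using that $\sbhom$ respects prefixes so that intermediate-prefix delimiters map to intermediate-prefix delimiters; conversely, a \lambdatg-homomorphism restricts to its non-delimiter vertices to give an \absprefix-homomorphism, since it sends $\snlvarsucc$-vertices to $\snlvarsucc$-vertices and respects prefixes by Proposition~\ref{prop:hom:image:absprefix:function:ltgs}, hence respects the delimiter-skipping edges. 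The reflection direction for $\slaphotgstoltgsij{i}{j}$ is then immediate: apply the functorial $\sltgstolaphotgsij{i}{j}$ to a homomorphism between the images and invoke item~(\ref{thm:corr:laphotgs:ltgs:item:i}) to read back a homomorphism between the originals.

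The hard part will be the reflection direction for $\sltgstolaphotgsij{i}{j}$, i.e.\ lifting a homomorphism between the delimiter-erased \lambdaaphotg{s} to one between $\altgi{1}$ and $\altgi{2}$. The natural recipe keeps the homomorphism on non-delimiter vertices and routes each delimiter chain of $\altgi{1}$ onto the image chain in $\altgi{2}$ dictated by the prefixes, but a delimiter of $\altgi{1}$ that is shared among several incoming edges can fail to have a single consistent image when $\altgi{2}$ keeps those chains unshared; this is exactly the phenomenon behind the word \emph{almost} in the correspondence (bijectivity only up to (un)sharing of delimiters). I would resolve it by passing through the canonical $\bisimS$-representatives: combine the functorial image $C_1 \funbisim C_2$ with the two homomorphisms $C_k \funbisimS \altgi{k}$ from item~(\ref{thm:corr:laphotgs:ltgs:item:ii}), yielding $\altgi{1} \convfunbisimS C_1 \funbisim C_2 \funbisimS \altgi{2}$, so that the reflected sharing relation is obtained at the level of $\bisimS$-classes. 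Finally, item~(\ref{thm:corr:laphotgs:ltgs:item:iv}) follows since all maps are compatible with $\siso$ and thus induce the asserted maps on $\classlaphotgsisoi{i}$ and $\classltgsisoij{i}{j}$.
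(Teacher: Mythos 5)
There is no proof in the paper to compare yours against: Theorem~\ref{thm:corr:laphotgs:ltgs} is stated and then followed directly by Example~\ref{ex:corr:laphotgs:ltgs}, so your proposal must be judged on its own merits. Most of it is sound, and you reach for the right infrastructure: item~(\ref{thm:corr:laphotgs:ltgs:item:i}) by inspection of the two constructions (with the harmless identification $\bvert \leftrightarrow \pair{\bvert}{\abspre{\bvert}}$), item~(\ref{thm:corr:laphotgs:ltgs:item:ii}) via the stack discipline and uniqueness of \absprefix\ functions (Lemma~\ref{lem:ltg},~(\ref{lem:ltg:item:iv}) and~(\ref{lem:ltg:item:vi})), both preservation directions of item~(\ref{thm:corr:laphotgs:ltgs:item:iii}) via induced homomorphisms together with Proposition~\ref{prop:hom:image:absprefix:function:ltgs}, and reflection for $\slaphotgstoltgsij{i}{j}$ by composing preservation for $\sltgstolaphotgsij{i}{j}$ with item~(\ref{thm:corr:laphotgs:ltgs:item:i}).

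The gap is exactly where you locate the difficulty, but it is worse than you make it sound. For reflection of $\sltgstolaphotgsij{i}{j}$ the theorem demands $\ltgstolaphotgsij{i}{j}{\altgi{1}} \funbisim \ltgstolaphotgsij{i}{j}{\altgi{2}} \Rightarrow \altgi{1} \funbisim \altgi{2}$, whereas your detour through the canonical representatives only yields $\altgi{1} \convfunbisims{\snlvarsucc} C_1 \funbisim C_2 \funbisims{\snlvarsucc} \altgi{2}$; that is a strictly weaker conclusion, not a proof of the stated one. Moreover, the stated implication is refuted by the paper's own (unlabelled) example directly after Example~\ref{ex:corr:laphotgs:ltgs}: there $\altg, \altgacc \in \classltgsij{0}{1}$ satisfy $\ltgstolaphotgsij{0}{1}{\altg} = \ltgstolaphotgsij{0}{1}{\altgacc}$ and $\altgacc \funbisims{\snlvarsucc} \altg$, with $\altg$ sharing a delimiter vertex that $\altgacc$ keeps duplicated. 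Taking $\altgi{1} \defdby \altg$ and $\altgi{2} \defdby \altgacc$, the left-hand side of the equivalence holds trivially, but $\altg \funbisim \altgacc$ cannot hold: together with $\altgacc \funbisim \altg$ it would give $\altg \iso \altgacc$ by Proposition~\ref{prop:funbisim:anti:symmetric},~(\ref{prop:funbisim:anti:symmetric:item:ii}), contradicting that the two graphs have different numbers of $\snlvarsucc$-vertices (equivalently: homomorphisms are surjective by Proposition~\ref{prop:hom:tgs:paths},~(\ref{prop:hom:tgs:paths:item:iii}), so the more-shared $\altg$ cannot map onto the larger $\altgacc$).

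Consequently the defect lies in the literal statement rather than in your workaround: what your argument proves is the correct content, and it is precisely what the paper itself establishes by additionally factoring out \Sbisimilarity\ in Theorem~\ref{thm:classlaphotgsiso:iso:factor:classltgsiso:bisimS} (see also Proposition~\ref{prop:funbisimS:bisimS:ltgstolaphotgs}). But you must say so explicitly: either prove item~(\ref{thm:corr:laphotgs:ltgs:item:iii}) with the second equivalence weakened to its right-to-left direction, the converse holding only modulo $\sbisimS$, or prove the quotient formulation outright. Presenting the $\sbisimS$-level argument as if it discharged the stated two-sided equivalence --- and likewise its analogue in the final clause on isomorphism classes, which inherits the same counterexample --- would itself be a gap.
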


\begin{example}\label{ex:corr:laphotgs:ltgs}\normalfont
The \lambdaaphotg{s} in Figure~\ref{fig:lambdaaphotg} correspond to the \lambdaaphotg{s} in Figure~\ref{fig:lambdatg}
via the mappings $\slaphotgstoltgsij{i}{j}$ and $\sltgstolaphotgsij{i}{j}$ as follows: 
$
~~ \laphotgstoltgsij{i}{j}{\alaphotgi{0}} = \altgi{0}' \, ,
~~~ \laphotgstoltgsij{i}{j}{\alaphotgi{1}} = \altgi{1}' \, ,
~~~ \ltgstolaphotgsij{i}{j}{\altgi{0}} = \alaphotgi{0}' \, ,
~~~ \ltgstolaphotgsij{i}{j}{\altgi{1}} = \alaphotgi{1}' \, 
$.
\end{example}


The next example indicates that the
correspondence mappings $\sltgstolaphotgsij{i}{j}$ from \lambdatgs\ to \lambdaaphotgs\ are not injective,
and hence not bijective, neither. 

\begin{example}\normalfont
  Consider the \lambdaaphotg~$\alaphotg\in\classlaphotgsi{0}$,
  and the \lambdatgs~$\altg,\altgacc\in\classltgsij{0}{1}$:
\graphs{
  \graph{\altg}{rem_corr_laphotgs_ltgs_nobij_snodes_shared}
  \graph{\alaphotg}{rem_corr_laphotgs_ltgs_nobij_aphotg}
  \graph{\altgacc}{rem_corr_laphotgs_ltgs_nobij_snodes_unshared}
}
Then it holds: $\sltgstolaphotgsij01(\atg) = \alhotg = \sltgstolaphotgsij01(\atgacc)$.
Furthermore note that $\altgacc \funbisims{\snlvarsucc} \altg$, and consequently $\altg \bisimS \altgacc$.
\end{example}

Of the \lambdatgs~$\atg$ and $\atgacc$ in the example above
we can say that they differ only in their `degree of $\snlvarsucc$\nb-sha\-ring',
with $\atg$ exhibiting a higher degree of $\snlvarsucc$\nb-sharing than $\atgacc$. 
Since the mapping $\sltgstolaphotgsij{0}{1}$ ignores $\snlvarsucc$\nb-vertices and their sharing,
the $\snlvarsucc$\nb-sha\-ring present and its degree are not reflected 
in the corresponding \lambdaaphotg. 
This observation for the example at hand can suggest the more general fact that is expressed by the following proposition:
the correspondence mappings $\sltgstolaphotgsij{i}{j}$
from \lambdatgs\ to \lambdaaphotgs\ do not distinguish between \Sbisimilar\ \lambdatgs.

\begin{proposition}\label{prop:funbisimS:bisimS:ltgstolaphotgs}
  Let $i\in\setexp{0,1}$ and $j \in\setexp{1, 2}$.
  The mapping $\sltgstolaphotgsij{i}{j}$ in Proposition~\ref{prop:mappings:ltgs:to:laphotgs}
  maps two \lambdatg{s} that are $\snlvarsucc$\nb-homomorphic or $\snlvarsucc$\nb-bi\-si\-mi\-lar to the same \lambdaaphotg.
  That is: 
  \begin{align*}
    \atgi{1} \funbisims{\snlvarsucc} \atgi{2}
      \;\; & \Longrightarrow\;\;
    \ltgstolaphotgsij{i}{j}{\atgi{1}} \iso \ltgstolaphotgsij{i}{j}{\atgi{2}} 
    &
    \atgi{1} \bisims{\snlvarsucc} \atgi{2}
      \;\; & \Longrightarrow\;\;
    \ltgstolaphotgsij{i}{j}{\atgi{1}} \iso \ltgstolaphotgsij{i}{j}{\atgi{2}}
  \end{align*}
  holds for all \lambdatg{s}~$\atgi{1},\atgi{2}$ over $\siglambdaij{i}{j}$.
  The two statements are equivalent.
\end{proposition}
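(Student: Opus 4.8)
The plan is to reduce both implications to the single isomorphism statement for $\snlvarsucc$-homomorphisms, and then to prove that statement by showing that erasing the (shared) $\snlvarsucc$-vertices turns such a homomorphism into an isomorphism of the image \lambdaaphotg{s}. I would first dispatch the asserted equivalence of the two displayed implications. A $\snlvarsucc$-homomorphism $\atgi{1} \funbisims{\snlvarsucc} \atgi{2}$ always witnesses a $\snlvarsucc$-bisimulation, namely $\atgi{1} \convfunbisims{\snlvarsucc} \atgi{1} \funbisims{\snlvarsucc} \atgi{2}$ (taking the identity on the left), so the first implication follows from the second. Conversely, a $\snlvarsucc$-bisimulation exhibits $\atgi{1}$ and $\atgi{2}$ as $\snlvarsucc$-homomorphic images of a common term graph $\atg$ via $\atgi{1} \convfunbisims{\snlvarsucc} \atg \funbisims{\snlvarsucc} \atgi{2}$; applying the first implication twice then gives $\ltgstolaphotgsij{i}{j}{\atgi{1}} \iso \ltgstolaphotgsij{i}{j}{\atg} \iso \ltgstolaphotgsij{i}{j}{\atgi{2}}$. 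Hence it suffices to prove the $\snlvarsucc$-homomorphism implication.

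So let $\sahom$ witness $\atgi{1} \funbisims{\snlvarsucc} \atgi{2}$. The key observation is that the vertex set of $\ltgstolaphotgsij{i}{j}{\atgi{k}}$ is exactly $\vertsiof{k}{\sslabs} \cup \vertsiof{k}{\sslapp} \cup \vertsiof{k}{\snlvar}$, i.e.\ the set of \emph{non}-$\snlvarsucc$-vertices of $\atgi{k}$, and I claim the restriction of $\sahom$ to these vertices is a bijection onto the non-$\snlvarsucc$-vertices of $\atgi{2}$. Indeed, since $\sahom$ preserves labels (Proposition~\ref{prop:hom:tgs}) it maps non-$\snlvarsucc$-vertices to non-$\snlvarsucc$-vertices; since $\sahom$ is a $\snlvarsucc$-homomorphism it identifies only $\snlvarsucc$-vertices, so it is injective on non-$\snlvarsucc$-vertices; and since $\sahom$ is surjective (Proposition~\ref{prop:hom:tgs:paths}, item~(\ref{prop:hom:tgs:paths:item:iii})), label preservation makes the restriction surjective onto the non-$\snlvarsucc$-vertices of $\atgi{2}$. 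This restriction is the candidate isomorphism $\srestrictfunto{\sahom}{\vertsiof{1}{\sslabs}\cup\vertsiof{1}{\sslapp}\cup\vertsiof{1}{\snlvar}}$.

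It then remains to check the isomorphism conditions (roots), (labels), (arguments), and (abstraction-prefix functions) for this restriction. The root of a \lambdatg\ is non-$\snlvarsucc$ (a $\snlvarsucc$-vertex has a nonempty abstraction prefix by $(\snlvarsucc)_1$, contradicting (root)), so it lies in both image vertex sets and is preserved; labels are preserved by construction. The main work is the (arguments) condition, since the induced edge relation of $\ltgstolaphotgsij{i}{j}{\atgi{k}}$ is $\binrelcomp{\stgsucci{k}}{\stgsuccisstar{0}{\snlvarsucc}}$ (a $k$-step followed by a possibly empty run of $0$-edges out of $\snlvarsucc$-vertices, between non-$\snlvarsucc$ endpoints). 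The forward direction uses that homomorphisms map paths to paths (Proposition~\ref{prop:hom:tgs:paths}, item~(\ref{prop:hom:tgs:paths:item:i})), with label preservation ensuring that intermediate $\snlvarsucc$-vertices remain $\snlvarsucc$-vertices. For the abstraction prefix, Proposition~\ref{prop:hom:image:absprefix:function:ltgs} yields $\ahomext{\absprei{1}{\bvert}} = \absprei{2}{\ahom{\bvert}}$, and since abstraction prefixes contain only abstraction vertices (Lemma~\ref{lem:ltg}), $\sahom$ and its restriction agree on every letter of $\absprei{1}{\bvert}$, so the restricted map transports $\sabsprei{1}$ to $\sabsprei{2}$ on the image graphs.

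I expect the backward direction of the (arguments) condition to be the main obstacle. Given a path $\binrelcomp{\stgsucci{k}}{\stgsuccisstar{0}{\snlvarsucc}}$ in $\atgi{2}$ from $\ahom{\averti{0}}$ to $\ahom{\averti{1}}$ with $\averti{0},\averti{1}$ non-$\snlvarsucc$, I would lift it to the unique index-preserving preimage path in $\atgi{1}$ that starts at $\averti{0}$, using Proposition~\ref{prop:hom:tgs:paths}, item~(\ref{prop:hom:tgs:paths:item:ii}). Label preservation forces the intermediate vertices of this preimage to be $\snlvarsucc$-vertices, so the steps are genuinely $\stgsuccis{0}{\snlvarsucc}$-steps; and its endpoint $w$ satisfies $\vlabi{1}{w} = \vlabi{2}{\ahom{\averti{1}}} \neq \snlvarsucc$ together with $\ahom{w} = \ahom{\averti{1}}$, whence $w = \averti{1}$ by injectivity of $\sahom$ on non-$\snlvarsucc$-vertices. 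This shows the edge relation is not only preserved but also reflected, completing the verification that the restriction is an isomorphism and thereby establishing $\ltgstolaphotgsij{i}{j}{\atgi{1}} \iso \ltgstolaphotgsij{i}{j}{\atgi{2}}$.
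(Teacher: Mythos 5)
The paper states this proposition without an explicit proof, so your attempt can only be judged on its own merits. Your proof of the $\snlvarsucc$\nb-homomorphism implication is correct and is the natural argument: an $\snlvarsucc$\nb-homomorphism is injective on non-$\snlvarsucc$-vertices and surjective (Proposition~\ref{prop:hom:tgs:paths},~(\ref{prop:hom:tgs:paths:item:iii})), hence restricts to a bijection between the vertex sets of the two images, and your verification of (roots), (labels), (arguments) — via images and pre-images of the composite edges $\binrelcomp{\stgsucci{k}}{\stgsuccisstar{0}{\snlvarsucc}}$ — and of the \absprefix\ condition via Proposition~\ref{prop:hom:image:absprefix:function:ltgs} and Lemma~\ref{lem:ltg} is sound.

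The gap lies in your reduction of the $\snlvarsucc$\nb-bisimulation implication to the homomorphism implication. An $\snlvarsucc$\nb-bisimulation between $\atgi{1}$ and $\atgi{2}$ is by definition a term graph $\atg\in\classtgsminover{\siglambdaij{i}{j}}$ with $\atgi{1} \convfunbisims{\snlvarsucc} \atg \funbisims{\snlvarsucc} \atgi{2}$; it need not be root-connected and, more importantly, it is not given to be a \lambdatg. Your step ``applying the first implication twice gives $\ltgstolaphotgsij{i}{j}{\atgi{1}} \iso \ltgstolaphotgsij{i}{j}{\atg} \iso \ltgstolaphotgsij{i}{j}{\atgi{2}}$'' therefore applies $\sltgstolaphotgsij{i}{j}$ to a graph outside its domain $\classltgsij{i}{j}$, and invokes the homomorphism implication outside the range over which it is quantified (all \lambdatgs). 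This cannot be waved through: the paper emphasizes that the classes $\classltgsij{i}{j}$ are closed neither under converse functional bisimulation (Proposition~\ref{prop:ltgs:not:closed:under:funbisim:convfunbisim}) nor under $\sfunbisimS$ (Example~\ref{ex:classltgs:not:closed:under:bisimS}), so membership of $\atg$ in $\classltgsij{i}{j}$ genuinely requires proof. The missing lemma is closure of $\classltgsij{i}{j}$ under \emph{converse} $\snlvarsucc$\nb-homomorphism, and it is provable with tools you already use: take $\atg$ root-connected without loss of generality (by the remark on bisimulations in Section~\ref{sec:prelims}); then pull back the correct \absprefix\ function $\sabsprei{1}$ of $\atgi{1}$ along the $\snlvarsucc$\nb-homomorphism from $\atg$ to $\atgi{1}$ — this is well defined because that homomorphism is injective, and by surjectivity bijective, on abstraction vertices, so every prefix word (which consists only of abstraction vertices) lifts uniquely — whence $\sabsprei{1}$ is the homomorphic image of the pulled-back function; Lemma~\ref{lem:hom:image:absprefix:function:ltgs},~(\ref{lem:hom:image:absprefix:function:ltgs:item:i}) then yields that the pulled-back function is correct for $\atg$, so $\atg$ is a \lambdatg. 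With this lemma inserted, your reduction — and with it the whole proof — goes through.
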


What is more: the mappings $\slaphotgstoltgsij{i}{j}$ and $\sltgstolaphotgsij{i}{j}$ 
induce a bijective correspondence between the class of \lambdaaphotgs\ over $\siglambdai{i}$ 
and the result of factoring out \Sbisimilarity\ 
from the class of \lambdatgs\ over $\siglambdaij{i}{j}$,
after going over from these classes of term graphs
to respective classes of isomorphism equivalence classes of term graphs.

\begin{theorem}\label{thm:classlaphotgsiso:iso:factor:classltgsiso:bisimS}
  For all $i\in\setexp{0,1}$ and $j\in\setexp{1,2}$,
  the following mappings are each other's inverse:
  \begin{align*}
    \factorset{\slaphotgsisotoltgsisoij{i}{j}}{\sbisimSsubscript} 
      \funin 
    \classlaphotgsisoi{i} & \to \factorset{\classltgsisoij{i}{j}}{\sbisimSsubscript}
    &
    \factorset{\sltgsisotolaphotgsisoij{i}{j}}{\sbisimSsubscript} 
      \funin \factorset{\classltgsisoij{i}{j}}{\sbisimSsubscript} & \to \classlaphotgsisoi{i}
    \\
    \eqcl{\alhotg}{\siso} 
      & \mapsto
    \eqcl{\eqcl{\laphotgstoltgsij{i}{j}{\alhotg}}{\siso}}{\sbisimSsubscript}  
    &
    \eqcl{\eqcl{\altg}{\siso}}{\sbisimSsubscript}
      & \mapsto
    \eqcl{\ltgstolaphotgsij{i}{j}{\altg}}{\sbisimSsubscript}  
  \end{align*}   
  and preserve and reflect the sharing order $\sfunbisim$ on $\classlaphotgsisoi{i}$, 
  and the induced sharing order $\sfunbisim$ on $\factorset{\classltgsisoij{i}{j}}{\sbisimSsubscript}\,$, respectively.
  Hence an isomorphism is obtained:
  $ \pair{\classlaphotgsisoi{i}}{\sfunbisim}
      \iso 
    \pair{\factorset{\classltgsisoij{i}{j}}{\sbisimSsubscript}}
         {\sfunbisim} $.
\end{theorem}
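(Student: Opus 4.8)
The plan is to reduce the entire statement to the already-established correspondence of Theorem~\ref{thm:corr:laphotgs:ltgs} together with the collapse property of Proposition~\ref{prop:funbisimS:bisimS:ltgstolaphotgs}, proceeding in three stages: well-definedness of the two factored maps, the fact that they are mutually inverse, and preservation/reflection of the sharing order. First I would settle well-definedness. The map $\factorset{\slaphotgsisotoltgsisoij{i}{j}}{\sbisimSsubscript}$ is simply the composite of the well-defined map $\slaphotgsisotoltgsisoij{i}{j}$ (Proposition~\ref{prop:mappings:laphotgs:to:ltgs}) with the canonical projection onto $\factorset{\classltgsisoij{i}{j}}{\sbisimSsubscript}$, so nothing beyond Proposition~\ref{prop:mappings:laphotgs:to:ltgs} is required. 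For $\factorset{\sltgsisotolaphotgsisoij{i}{j}}{\sbisimSsubscript}$ I must verify that $\sltgsisotolaphotgsisoij{i}{j}$ is constant on $\sbisimSsubscript$\nb-classes, i.e.\ that $\altgi{1} \bisimS \altgi{2}$ implies $\ltgstolaphotgsij{i}{j}{\altgi{1}} \iso \ltgstolaphotgsij{i}{j}{\altgi{2}}$; this is exactly Proposition~\ref{prop:funbisimS:bisimS:ltgstolaphotgs}, so the map descends to the quotient.

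For the mutual-inverse claim I would compute both composites on representatives. Starting from $\eqcl{\alhotg}{\siso}$ and applying first $\factorset{\slaphotgsisotoltgsisoij{i}{j}}{\sbisimSsubscript}$ and then $\factorset{\sltgsisotolaphotgsisoij{i}{j}}{\sbisimSsubscript}$ yields $\eqcl{\ltgstolaphotgsij{i}{j}{(\laphotgstoltgsij{i}{j}{\alhotg})}}{\siso}$, which equals $\eqcl{\alhotg}{\siso}$ by Theorem~\ref{thm:corr:laphotgs:ltgs}(\ref{thm:corr:laphotgs:ltgs:item:i}), since $\scompfuns{\sltgstolaphotgsij{i}{j}}{\slaphotgstoltgsij{i}{j}} = \sidfunon{\classlaphotgsi{i}}$. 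In the other direction, starting from $\eqcl{\eqcl{\altg}{\siso}}{\sbisimSsubscript}$ the two maps produce $\eqcl{\eqcl{\laphotgstoltgsij{i}{j}{(\ltgstolaphotgsij{i}{j}{\altg})}}{\siso}}{\sbisimSsubscript}$; by Theorem~\ref{thm:corr:laphotgs:ltgs}(\ref{thm:corr:laphotgs:ltgs:item:ii}) we have $\laphotgstoltgsij{i}{j}{(\ltgstolaphotgsij{i}{j}{\altg})} \Sfunbisim \altg$, and since an $\snlvarsucc$\nb-homomorphism is in particular an $\snlvarsucc$\nb-bisimulation (take the source graph itself as the intermediate graph, with the identity as the left leg), this gives $\laphotgstoltgsij{i}{j}{(\ltgstolaphotgsij{i}{j}{\altg})} \bisimS \altg$, so the two outer $\sbisimSsubscript$\nb-classes coincide.

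The order part rests on one key observation: the relation $\funbisim$ between \lambdatg{s} is invariant under $\bisimS$. Concretely, if $\altgi{1} \bisimS \altgacci{1}$ and $\altgi{2} \bisimS \altgacci{2}$, then $\altgi{1} \funbisim \altgi{2}$ iff $\altgacci{1} \funbisim \altgacci{2}$. I would prove this by combining Theorem~\ref{thm:corr:laphotgs:ltgs}(\ref{thm:corr:laphotgs:ltgs:item:iii}) --- which gives $\altg \funbisim \altgacc \Longleftrightarrow \ltgstolaphotgsij{i}{j}{\altg} \funbisim \ltgstolaphotgsij{i}{j}{\altgacc}$ --- with Proposition~\ref{prop:funbisimS:bisimS:ltgstolaphotgs}, which makes $\ltgstolaphotgsij{i}{j}{\altgi{1}} \iso \ltgstolaphotgsij{i}{j}{\altgacci{1}}$ and likewise for the second argument; invariance of $\funbisim$ under isomorphism then closes the loop. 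This invariance is precisely what is needed to see that the induced order on $\factorset{\classltgsisoij{i}{j}}{\sbisimSsubscript}$ is well-defined on classes and is characterised by $\eqcl{\eqcl{\altgi{1}}{\siso}}{\sbisimSsubscript} \funbisim \eqcl{\eqcl{\altgi{2}}{\siso}}{\sbisimSsubscript} \Longleftrightarrow \altgi{1} \funbisim \altgi{2}$ on any chosen representatives. With this characterisation in hand, preservation and reflection of the order under $\factorset{\sltgsisotolaphotgsisoij{i}{j}}{\sbisimSsubscript}$ is a direct rewriting using Theorem~\ref{thm:corr:laphotgs:ltgs}(\ref{thm:corr:laphotgs:ltgs:item:iii}), and the same equivalence read in reverse handles $\factorset{\slaphotgsisotoltgsisoij{i}{j}}{\sbisimSsubscript}$; that the induced relation is genuinely a partial order (reflexive, transitive, antisymmetric) then comes for free, being transported along a bijection from the partial order $\pair{\classlaphotgsisoi{i}}{\sfunbisim}$.

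I expect the main obstacle to be the order part, and within it the well-definedness of the induced order on the quotient: one must be certain that $\funbisim$ does not depend on the chosen $\sbisimSsubscript$\nb-representatives, since $\bisimS$ only controls $\snlvarsucc$\nb-sharing whereas $\funbisim$ may collapse vertices of every label. The resolution is the $\bisimS$\nb-invariance lemma above, which is genuinely the conceptual heart of the theorem; the inverse and well-definedness steps, by contrast, are essentially bookkeeping built on top of Theorem~\ref{thm:corr:laphotgs:ltgs} and Proposition~\ref{prop:funbisimS:bisimS:ltgstolaphotgs}.
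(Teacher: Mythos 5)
Your proposal is correct. Note that the paper states this theorem without an explicit proof, presenting it as a direct consequence of the preceding results; your argument supplies exactly the intended bookkeeping --- well-definedness of the factored maps from Proposition~\ref{prop:funbisimS:bisimS:ltgstolaphotgs}, mutual inverses from items~(\ref{thm:corr:laphotgs:ltgs:item:i}) and~(\ref{thm:corr:laphotgs:ltgs:item:ii}) of Theorem~\ref{thm:corr:laphotgs:ltgs} (using that an $\snlvarsucc$\nb-homomorphism is in particular an $\snlvarsucc$\nb-bisimulation), and the order part from item~(\ref{thm:corr:laphotgs:ltgs:item:iii}) combined with your lemma that $\sfunbisim$ between \lambdatgs\ is invariant under replacing either argument by an \Sbisimilar\ one, which is indeed the one genuinely non-routine ingredient needed to make the induced order on $\factorset{\classltgsisoij{i}{j}}{\sbisimSsubscript}$ well defined.
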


The statement of this theorem could be paraphrased as follows:
\lambdaaphotgs\ can be viewed as $\sbisims{\snlvarsucc}$\nb-equi\-va\-lence classes
of \lambdatgs.
And, if it is made clear that this only holds for isomorphism equivalence classes,
then this wording of the statement is actually quite precise.
Yet it obfuscates a detail.
Namely, that 
the $\sbisims{\snlvarsucc}$\nb-equi\-va\-lence class of a \lambdatg~$\altg$
does not consist of all term graphs that are $\snlvarsucc$\nb-bi\-si\-mi\-lar with $\altg$,
but only of all \lambdatgs\ with that property.
The reason is that the class of \lambdatgs\ is not closed under $\sfunbisims{\snlvarsucc}$, 
and hence neither is it closed under $\sbisims{\snlvarsucc}$. 

\begin{example}\label{ex:classltgs:not:closed:under:bisimS}\normalfont
  Consider the term graphs over $\siglambdaij{1}{1}$ below: $\altgi{1}$ (left), and $\altgi{0}$ (right).  
  \begin{equation*}
    \begin{aligned}[c] 
      \figsmall{ltgs_without_sbacklinks_g1}
    \end{aligned} 
       \hspace*{5ex}\sfunbisimS\hspace*{2.5ex}
    \begin{aligned}[c]
      \figsmall{ltgs_without_sbacklinks_g0}
    \end{aligned}
  \end{equation*}
  Yet while $\altgi{1}$ is a \lambdatg, $\altgi{0}$ is not a \lambdatg: 
  the shared $\snlvarsucc$\nb-ver\-tex in $\altgi{0}$ should close the scopes of two different abstractions in $\altgi{0}$,
  but this is not possible in \lambdatgs\ due to the requirement of the existence of an \absprefix\ function. 
  Note that in a \lambdatg~$\altg$, a delimiter vertex $\bvert$ closes the scope
  of just the leftmost vertex (which is an abstraction vertex) in $\abspre{\bvert}$, where $\sabspre$ is the (uniquely existing) \absprefix\ function for $\altg$.
  
  Hence the class $\classltgsij{0}{1}$ of \lambdatgs\ over $\siglambdaij{0}{1}$
  is not closed under $\sfunbisimS$, nor under $\sbisimS$.
\end{example}

In connection with this observation, we will now define, as an aside but in order to round off the relationship between \lambdatgs\ and \lambdaaphotgs, 
a variant notion of \lambdatg\ that leads to a class of term graphs that \emph{is} closed under $\sbisims{\snlvarsucc}$. 
The motivation consists in a fact and a consideration:
\begin{itemize}
  \item
    The classes of \lambdatgs\ are not closed under, and hence are sensitive to,
    $\snlvarsucc$\nb-bisimilarity.
  \item 
    The degree of $\snlvarsucc$\nb-sha\-ring in the \lambdatg\ corresponding to a \lambdahotg\
    should not really matter for the representation as term graph.
    Recall here that $\snlvarsucc$\nb-sha\-ring is ignored by 
    the correspondence mappings $\sltgstolaphotgsij{i}{j}$ defined in Proposition~\ref{prop:funbisimS:bisimS:ltgstolaphotgs}.
\end{itemize}     
The idea underlying the variant notion of \lambdatg\ introduced below
is to consider correct \absprefix\ functions for term graphs over $\siglambdaij{i}{j}$ 
that are only defined on non-delimiter vertices.
Yet the correctness conditions for such \absprefix\ functions 
still record the scope closure effect that individual delimiter vertices have
on the \absprefix{es} of the other vertices. 
As a consequence, a delimiter vertex is able to close more than one scope.
For many term graphs over one of the signatures $\siglambdaij{i}{j}$
this facilitates degrees of $\snlvarsucc$\nb-sharing that are cannot take place in \lambdatgs.

\begin{definition}[correct \absprefix\ function on non-delimiter vertices
                   f.\ $\siglambdaij{i}{j}$-term-graphs]
    \label{def:abspre:function:siglambdaij:plus}\normalfont\mbox{}\\[-0.35ex]
  Let $\atg = \tuple{\verts,\svlab,\svargs,\sroot}$
  be a $\siglambdaij{i}{j}$\nb-term-graph
  with $i\in\setexp{0,1}$ and $j\in\setexp{1,2}$.
  A function $\sabspre \funin \verts\setminus\vertsof{\snlvarsucc} \to \verts^*$ 
  is called an \emph{\absprefix\ function} on the non-de\-li\-mi\-ter vertices for $\atg$.
  Such a function is called \emph{correct} if
  for all $\bvert,\bvertacc,\bverti{0}\in\verts$, $k\in\setexp{0,1}$, and $n\in\nats$ it holds:
  \begin{align*}
      \;\; & \Rightarrow \;\;
    \abspre{\sroot} = \emptyword  
    &&
    (\text{root})
    \displaybreak[0]\\
    \bvert\in\vertsof{\sslabs}
      \;\logand\;
    \bvert \stgsucci{0} \bvertacc  
      \;\logand\;
    \bvertacc\notin\vertsof{\snlvarsucc}
      \;\; & \Rightarrow \;\;
    \abspre{\bvert} = \abspre{\bvertacc} \bvert  
    &&
    (\sslabs)^{(0)}
    \\     
    \bvert\in\vertsof{\sslabs}
      \;\logand\;
    \bvert \comprewrels{\stgsucci{0}}{\stgsuccins{0}{n+1}{\snlvarsucc}} \bvertacc  
      \;\logand\;
    \bvertacc\notin\vertsof{\snlvarsucc}
      \;\; & \Rightarrow \;\;
    \existsst{\averti{1},\ldots,\averti{n}\in\verts}
             {\abspre{\bvert} = \abspre{\bvertacc} \averti{n}\cdots\averti{1}}
    \hspace*{1em}         
    && 
    (\sslabs)^{(n+1)}         
    \\    
    \bvert\in\vertsof{\sslapp}
      \;\logand\;
    \bvert \comprewrels{\stgsucci{k}}{\stgsuccins{0}{n}{\snlvarsucc}} \bvertacc  
      \;\logand\;
    \bvertacc\notin\vertsof{\snlvarsucc}
      \;\; & \Rightarrow \;\;
    \existsst{\bverti{1},\ldots,\bverti{n}\in\verts}
             {\abspre{\bvert} = \abspre{\bvertacc} \averti{n}\cdots\averti{1}}
    && 
    (\sslapp)
  \displaybreak[0]\\  
    \bvert\in\vertsof{\snlvar}
        \;\; & \Rightarrow \;\;
     \abspre{\bvert} \neq \emptyword   
    && 
    (\snlvar)_0
  \displaybreak[0]\\
    \bvert\in\vertsof{\snlvar}
      \;\logand\;
    \bvert \tgsucci{0} \bvertbp{0}{}
      \;\; & \Rightarrow \;\;
    \begin{aligned}[c]  
      & 
        \bvertbp{0}{}\in\vertsof{\sslabs}
      \;\logand\;
      \abspre{\bvertbp{0}{}} \bvertbp{0}{} = \abspre{\bvert} 
    \end{aligned}
    &&
    (\snlvar)_1 
  \end{align*}%
 Note that analogously as in Definition~\ref{def:lambdahotg} and in Definition~\ref{def:aplambdahotg},
  if $i=0$, then $(\snlvar)_1$ is trivially true and hence superfluous,
  and if $i=1$, then $(\snlvar)_0$ is redundant, because it follows from $(\snlvar)_1$ in this case. 
  Additionally, if $j=1$, then $(\snlvarsucc)_2$ is trivially true and therefore superfluous. 
\end{definition}

\begin{definition}[{\lambdatg\ over $\siglambdaij{i}{j}$} up to \Sbisimilarity]%
    \label{def:lambdatg:up:to:S:siglambdaij}\normalfont
  Let $i\in\setexp{0,1}$ and $j\in\setexp{1,2}$.
  A \emph{\lambdatg\ (with scope-de\-li\-mi\-ters)} over $\siglambdaij{i}{j}$ \emph{up to \Sbisimilarity} 
  is a $\siglambdaij{i}{j}$\nb-term-graph
  that admits a correct \absprefix\ function on non-delimiter vertices. 
  The class of \lambdatg{s} over $\siglambdaij{i}{j}$ up to \Sbisimilarity\
  is denoted by $\classlltgsij{i}{j}$,
  and the class of isomorphism equivalence classes of \lambdatg{s} over $\siglambdaij{i}{j}$ 
  up to \Sbisimilarity\ by $\classlltgsisoij{i}{j}$.
\end{definition}  
  
We first note that every \lambdatg~$\altg$ is also a \lambdatg\ up to \Sbisimilarity,
since restricting the \absprefix\ function of $\altg$ to its non-delimiter vertices
yields a correct \absprefix\ function on non-delimiter vertices in the sense of Definition~\ref{def:abspre:function:siglambdaij:plus}  
for $\altg$.    
  
\begin{proposition}
  Every \lambdatg\ over $\siglambdaij{i}{j}\,$, where $i\in\setexp{0,1}$ and $j\in\setexp{1,2}$,
  is also a \lambdatg\ over $\siglambdaij{i}{j}$ up to \Sbisimilarity. 
\end{proposition}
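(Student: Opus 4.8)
The plan is to produce the required function by sheer \emph{restriction}. Let $\altg = \tuple{\verts,\svlab,\svargs,\sroot}$ be a \lambdatg\ over $\siglambdaij{i}{j}$. By Definition~\ref{def:lambdatg:siglambdaij} it admits a correct abstraction-prefix function, and by Lemma~\ref{lem:ltg}(\ref{lem:ltg:item:vi}) this function is unique; call it $\sabspre \funin \verts\to\verts^*$. I would then set $\sabspre' \defdby \srestrictfunto{\sabspre}{\verts\setminus\vertsof{\snlvarsucc}}$ and verify that $\sabspre'$ is correct in the sense of Definition~\ref{def:abspre:function:siglambdaij:plus}. Since $\sabspre'$ coincides with $\sabspre$ on every non-delimiter vertex, and since every correctness condition of Definition~\ref{def:abspre:function:siglambdaij:plus} constrains only the values of $\sabspre'$ at non-delimiter vertices (the vertices $\bvert$, $\bvertacc$, $\bverti{0}$ occurring in those conditions are either assumed non-delimiter or are abstraction/variable vertices, hence non-delimiter), the whole verification reduces to re-reading the correctness conditions for $\sabspre$ from Definition~\ref{def:abspre:function:siglambdaij} along access paths.

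The conditions that mention no delimiter vertex transfer directly. First I would observe that $\sroot\notin\vertsof{\snlvarsucc}$: by $(\snlvarsucc)_1$ every delimiter vertex has a nonempty abstraction prefix, whereas $\abspre{\sroot} = \emptyword$ by $(\text{root})$. Hence $\sroot$ lies in the domain of $\sabspre'$, and $(\text{root})$ holds for $\sabspre'$. The conditions $(\snlvar)_0$ and $(\snlvar)_1$ then follow at once from the homonymous conditions for $\sabspre$, because in each case the vertices involved (the variable vertex $\bvert$ and, for $(\snlvar)_1$, the abstraction vertex $\bverti{0}$ it points to) are non-delimiter and therefore keep their $\sabspre$-value under restriction.

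The substance of the proof lies in $(\sslabs)^{(0)}$, $(\sslabs)^{(n+1)}$, and $(\sslapp)$, which govern the situation where a $\lambda$- or $@$-vertex $\bvert$ reaches a non-delimiter vertex $\bvertacc$ only after traversing a chain $\bvert \comprewrels{\stgsucci{k}}{\stgsuccins{0}{m}{\snlvarsucc}} \bvertacc$ of $m\ge 0$ intervening delimiter vertices. Here I would invoke the stack behaviour of $\sabspre$ recorded in Lemma~\ref{lem:ltg}(\ref{lem:ltg:item:iv}): along a single edge an abstraction pushes itself onto the right end of the prefix, an application leaves the prefix unchanged, and a delimiter pops the rightmost entry (which is an abstraction vertex by Lemma~\ref{lem:laphotgs}(\ref{lem:laphotg:item:ii})). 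Proceeding by induction on $m$ and composing these single-step effects, the prefix at $\bvertacc$ is obtained from $\abspre{\bvert}$ by the net operation ``push $\bvert$ (only in the $\lambda$-case), then perform $m$ rightmost pops''. The boundary case $m=0$ in the $\lambda$-case reduces to the original condition $(\sslabs)$, namely $\abspre{\bvertacc} = \stringcon{\abspre{\bvert}}{\bvert}$, giving $(\sslabs)^{(0)}$; for $m\ge 1$, reading off the popped abstraction vertices in reverse order as $\averti{1},\ldots,\averti{n}$ yields precisely the asserted decomposition $\abspre{\bvert} = \stringcon{\abspre{\bvertacc}}{\averti{n}\cdots\averti{1}}$, establishing $(\sslabs)^{(n+1)}$ and, analogously, $(\sslapp)$.

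The only delicate point, and hence the step warranting the most care, is the bookkeeping of \emph{which} vertices are popped and in \emph{which} order. One must confirm that the pops always act at the right end of the prefix (so that $\abspre{\bvertacc}$ is genuinely a prefix of $\abspre{\bvert}\bvert$) and that the popped vertices, listed from last-popped to first-popped, reproduce the word $\averti{n}\cdots\averti{1}$ demanded by the condition. This amounts to aligning the left/right orientation conventions of Definition~\ref{def:abspre:function:siglambdaij:plus} with the push/pop directions fixed in Lemma~\ref{lem:ltg}(\ref{lem:ltg:item:iv}); once matched, the induction is entirely routine and no genuinely new obstacle arises, so that $\sabspre'$ is a correct abstraction-prefix function on the non-delimiter vertices and $\altg$ is a \lambdatg\ over $\siglambdaij{i}{j}$ up to \Sbisimilarity.
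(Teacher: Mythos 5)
Your proof is correct and takes essentially the same route as the paper: the paper, too, simply restricts the (unique) abstraction-prefix function of the given \lambdatg\ to its non-delimiter vertices and asserts in one sentence that this yields a correct abstraction-prefix function on non-delimiter vertices, leaving implicit the condition-by-condition verification that you carry out via the stack behaviour of Lemma~\ref{lem:ltg} and induction on the length of the intervening delimiter chains. The only caveat is that, like the paper's own assertion, your argument reads $(\sslabs)^{(0)}$ in its evidently intended orientation $\abspre{\bvertacc}=\stringcon{\abspre{\bvert}}{\bvert}$ rather than as literally printed in Definition~\ref{def:abspre:function:siglambdaij:plus}, which is the sensible (and necessary) interpretation.
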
  
  
\begin{example}\label{ex:ltg:up:to:Sbisimilarity}\normalfont
  While term graph $\altgi{0}$ over $\siglambdaij{0}{1}$ in Ex.~\ref{ex:classltgs:not:closed:under:bisimS} 
  is not a \lambdatg,
  it \emph{is} a \lambdatg\ up to \Sbisimilarity.
\end{example}

The next proposition justifies the name for this concept of \lambdatg.
  
\begin{proposition}\label{prop:classlltgs}
  For all $i\in\setexp{0,1}$ and $j\in\setexp{1,2}$, it holds that: $\,$
  \begin{enumerate}[(i)]
    \item{}\label{prop:classlltgs:item:i}
      $\classlltgsij{i}{j}$ is closed under $\sbisimS$ (as subclass of\/ the class $\classtgssiglambdaij{i}{j}$ of term graphs over $\siglambdaij{i}{j}$).
    \item{}\label{prop:classlltgs:item:ii}
      A term graph $\altg$ over $\siglambdaij{i}{j}$ 
        is a \lambdatg\ up to \Sbisimilarity\
          if and only if 
        it is \Sbisimilar\ to a \lambdatg\
        (that is, if $\altg \bisimS \altgacc$ for a \lambdatg~$\altgacc$).
     Consequently it holds: 
     \begin{center} 
       $\classlltgsij{i}{j} 
           \; = \;
        \descsetexp{\atg\in\classtgsover{\siglambdaij{i}{j}}}{\existsst{\atgacc\in{\classltgsij{i}{j}}}{\atg \bisimS \atgacc}}$
     \end{center} 
  \end{enumerate}       
\end{proposition}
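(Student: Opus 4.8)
The plan is to prove item~(\ref{prop:classlltgs:item:i}) first, and then obtain item~(\ref{prop:classlltgs:item:ii}) with the help of~(\ref{prop:classlltgs:item:i}) for the direction from right to left and by an explicit construction for the converse direction; the displayed set identity is then just a reformulation of the equivalence in~(\ref{prop:classlltgs:item:ii}). The conceptual device throughout is the \emph{contraction} of a term graph over $\siglambdaij{i}{j}$ that skips maximal chains of scope-delimiter vertices: on the non-delimiter vertices $\vertsof{\sslabs}\cup\vertsof{\sslapp}\cup\vertsof{\snlvar}$ one uses the successor relation $\binrelcomp{\stgsucci{k}}{\stgsuccisstar{0}{\snlvarsucc}}$ already employed in the definition of $\sltgstolaphotgsij{i}{j}$. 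A preliminary step is to check that this contraction is well-defined, i.e.\ that from every vertex each delimiter chain reaches a \emph{unique} non-delimiter vertex; this is the analogue of Lemma~\ref{lem:ltg},~(\ref{lem:ltg:item:v}) and has to be derived from root-connectedness together with the correctness conditions of Definition~\ref{def:abspre:function:siglambdaij:plus}. Granting this, the conditions of Definition~\ref{def:abspre:function:siglambdaij:plus} say precisely that the given \absprefix\ function on the non-delimiter vertices is a \emph{correct} \absprefix\ function (in the sense of Definition~\ref{def:abspre:function:siglambdai}) for the contracted term graph, so that the contraction of an $\altg\in\classlltgsij{i}{j}$ is a \lambdaaphotg\ in $\classlaphotgsi{i}$.

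For~(\ref{prop:classlltgs:item:i}) I would isolate a transfer lemma for $\snlvarsucc$-homomorphisms, in the spirit of Lemma~\ref{lem:hom:image:absprefix:function:ltgs}. Let $\sahom$ be an $\snlvarsucc$-homomorphism from $\atgi{1}$ to $\atgi{2}$. Since $\sahom$ identifies only $\snlvarsucc$-labelled vertices and is surjective (Proposition~\ref{prop:hom:tgs:paths},~(\ref{prop:hom:tgs:paths:item:iii})), it restricts to a label-preserving bijection between the non-delimiter vertices of $\atgi{1}$ and of $\atgi{2}$; and by Proposition~\ref{prop:hom:tgs:paths},~(\ref{prop:hom:tgs:paths:item:i}) and~(\ref{prop:hom:tgs:paths:item:ii}) it preserves \emph{and} reflects the contracted relation $\binrelcomp{\stgsucci{k}}{\stgsuccisstar{0}{\snlvarsucc}}$ between non-delimiter vertices (a delimiter chain maps to a delimiter chain, and every delimiter chain in $\atgi{2}$ lifts uniquely to one in $\atgi{1}$). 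Hence a correct \absprefix\ function on the non-delimiter vertices of one side transports along this bijection to one on the other side, namely its homomorphic image under $\bar{\sahom}$, and correctness is carried over in both directions because it is a statement about contracted edges and \absprefix{es}, both matched by $\sahom$. Since $\altg\bisimS\altgacc$ unfolds into a span $\altg\convfunbisims{\snlvarsucc}\atg\funbisims{\snlvarsucc}\altgacc$, I would reflect the defining property of $\altg\in\classlltgsij{i}{j}$ along the left leg $\atg\funbisims{\snlvarsucc}\altg$ down to the apex $\atg$, and then preserve it along the right leg $\atg\funbisims{\snlvarsucc}\altgacc$ up to $\altgacc$, which gives closure under $\sbisimS$.

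For~(\ref{prop:classlltgs:item:ii}), the implication from right to left is then immediate: if $\altg\bisimS\altgacc$ with $\altgacc\in\classltgsij{i}{j}$, then by the preceding proposition $\altgacc$ is in particular a \lambdatg\ up to \Sbisimilarity, and~(\ref{prop:classlltgs:item:i}) together with the symmetry of $\sbisimS$ yields $\altg\in\classlltgsij{i}{j}$. For the converse, given $\altg\in\classlltgsij{i}{j}$ I would contract it to the \lambdaaphotg\ $\alaphotg\in\classlaphotgsi{i}$ described above and set $\altgacc\defdby\laphotgstoltgsij{i}{j}{\alaphotg}\in\classltgsij{i}{j}$ via Proposition~\ref{prop:mappings:laphotgs:to:ltgs}; by Theorem~\ref{thm:corr:laphotgs:ltgs},~(\ref{thm:corr:laphotgs:ltgs:item:i}) we have $\ltgstolaphotgsij{i}{j}{\altgacc}\iso\alaphotg$, so $\altg$ and $\altgacc$ contract to the same \lambdaaphotg\ and differ only in how delimiter vertices are arranged and shared. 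It then remains to exhibit an $\snlvarsucc$-bisimulation witnessing $\altg\bisimS\altgacc$: relate each non-delimiter vertex $\bvert$ of $\altg$ to the vertex $\pair{\bvert}{\abspre{\bvert}}$ of $\altgacc$, and relate the delimiter vertices traversed along each contracted edge of $\altg$ to the freshly inserted delimiter vertices on the corresponding edge of $\altgacc$, matched by their position in the two chains (which have equal length $\nodels{\bvert}{k}$ by the correctness conditions).

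The main obstacle is this last verification. One has to confirm that the proposed relation satisfies the (roots), (labels) and (arguments) conditions of the relational definition of bisimulation while only ever relating delimiter vertices to delimiter vertices, so that it is genuinely an $\snlvarsucc$-bisimulation; the delicate cases are the index-$0$ steps through chains of differing delimiter-sharing in $\altg$ versus the fully expanded chains of $\altgacc$, and, for $j=2$, the index-$1$ back-links of delimiter vertices, whose targets must be shown to correspond under the relation. The well-definedness of the contraction flagged in the first paragraph (termination and uniqueness of delimiter chains) is the second point that needs care, since it underlies both the passage to $\classlaphotgsi{i}$ and the equal-length matching of chains. Once the $\snlvarsucc$-bisimulation is in place, the displayed set identity in~(\ref{prop:classlltgs:item:ii}) follows verbatim, as $\classltgsij{i}{j}$ is by definition the class of genuine \lambdatgs.
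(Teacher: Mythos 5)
The paper itself states Proposition~\ref{prop:classlltgs} \emph{without} proof, so there is no official argument to compare yours against; I can only judge the proposal on its own terms. Its architecture is the natural one, and two of its three components are sound: item~(\ref{prop:classlltgs:item:i}) via your transfer lemma for $\snlvarsucc$\nb-homo\-mor\-phisms (such a homomorphism is injective on non-delimiter vertices since it identifies only $\snlvarsucc$\nb-labelled ones, surjective by Proposition~\ref{prop:hom:tgs:paths},~(\ref{prop:hom:tgs:paths:item:iii}), and it preserves and reflects delimiter chains together with their exact lengths, which is all that the clauses of Definition~\ref{def:abspre:function:siglambdaij:plus} refer to), and the right-to-left half of item~(\ref{prop:classlltgs:item:ii}) as a corollary of item~(\ref{prop:classlltgs:item:i}).

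The genuine gap sits exactly at the two points you defer as ``needing care'': they are not verifications left to write out, but are impossible on the basis of Definition~\ref{def:abspre:function:siglambdaij:plus} as stated, and in fact the left-to-right half of item~(\ref{prop:classlltgs:item:ii}) is then false. Every clause of that definition has a hypothesis demanding that a delimiter chain \emph{reach} a non-delimiter vertex, and no clause constrains delimiter vertices themselves. Consider the term graph over $\siglambdaij{0}{1}$ with abstraction root $\sroot$, application vertex $\avert$, variable vertex $\bvert$, delimiter vertex $\cvert$, and edges $\sroot \tgsucci{0} \avert$, $\avert \tgsucci{0} \bvert$, $\avert \tgsucci{1} \cvert$, $\cvert \tgsucci{0} \cvert$ (a dangling $\snlvarsucc$\nb-cycle). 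The function $\abspre{\sroot} = \emptyword$, $\abspre{\avert} = \abspre{\bvert} = \sroot$ satisfies every clause (those concerning the $\stgsucci{1}$\nb-branch of $\avert$ vacuously, and $(\sslabs)^{(0)}$ under its evidently intended orientation), so this graph lies in $\classlltgsij{0}{1}$; yet your contraction is undefined on it, so termination of delimiter chains does \emph{not} follow from root-connectedness plus the correctness clauses, contrary to what you assert. Worse, this graph is not \Sbisimilar\ to any \lambdatg: lifting the infinite path $\avert \tgsucci{1} \cvert \tgsucci{0} \cvert \tgsucci{0} \cdots$ stepwise through the apex of a bisimulation span (Proposition~\ref{prop:hom:tgs:paths},~(\ref{prop:hom:tgs:paths:item:ii})) and projecting shows that every \Sbisimilar\ graph contains an $\snlvarsucc$\nb-cycle or an infinite $\snlvarsucc$\nb-chain, which condition $(\snlvarsucc)_1$ of Definition~\ref{def:abspre:function:siglambdaij} forbids, since it forces abstraction prefixes to shrink strictly along each such edge. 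For $j=2$ there is a second, independent failure: Definition~\ref{def:abspre:function:siglambdaij:plus} contains no clause at all about the second argument of delimiter vertices (the condition $(\snlvarsucc)_2$ mentioned in its closing sentence is absent from its list), so redirecting one delimiter back-link of a genuine \lambdatg\ onto an application vertex keeps the graph inside $\classlltgsij{1}{2}$ while destroying \Sbisimilarity\ with every \lambdatg, in which $(\snlvarsucc)_2$ forces such back-links to target abstraction vertices; this is precisely your ``delicate case'' of the index-$1$ back-links, and it cannot be discharged. Your proof, and the proposition itself, become correct only under a strengthened reading of Definition~\ref{def:abspre:function:siglambdaij:plus} that additionally requires every maximal delimiter chain to end in a non-delimiter vertex and, for $j=2$, imposes an analogue of $(\snlvarsucc)_2$; under that reading your combination of the contraction, the mapping $\slaphotgstoltgsij{i}{j}$ of Proposition~\ref{prop:mappings:laphotgs:to:ltgs}, and the positional chain-matching $\snlvarsucc$\nb-bisimulation is indeed the right proof.
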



Finally we obtain a statement analogous to Theorem~\ref{thm:classlaphotgsiso:iso:factor:classltgsiso:bisimS},
which now can be paraphrased as follows:
\lambdaaphotgs\ can be viewed as $\sbisims{\snlvarsucc}$\nb-equi\-va\-lence classes
of \lambdatgs\ up to \Sbisimilarity, when isomorphism equivalence classes are considered.
The subtle difference with the statement of Theorem~\ref{thm:classlaphotgsiso:iso:factor:classltgsiso:bisimS}
is that, due to Proposition~\ref{prop:classlltgs},~(\ref{prop:classlltgs:item:i}),
the $\sbisims{\snlvarsucc}$\nb-equi\-va\-lence class of a \lambdatg~$\altg$ up to \Sbisimilarity\
consists of all term graphs that are \Sbisimilar\ with $\altg$, and not just of all \lambdatgs\ that are \Sbisimilar\ with $\altg$.

\begin{theorem}\label{thm:classlaphotgsiso:iso:factor:classlltgsiso:bisimS}
  For all $i\in\setexp{0,1}$ and $j\in\setexp{1,2}$,
  the following mappings are each other's inverse:
  \begin{align*}
    \factorset{\slaphotgsisotoltgsisoij{i}{j}}{\sbisimSsubscript} 
      \funin 
    \classlaphotgsisoi{i} & \to \factorset{\classlltgsisoij{i}{j}}{\sbisimSsubscript}
    &
    \factorset{\sltgsisotolaphotgsisoij{i}{j}}{\sbisimSsubscript} 
      \funin \factorset{\classlltgsisoij{i}{j}}{\sbisimSsubscript} & \to \classlaphotgsisoi{i}
    \\
    \eqcl{\alhotg}{\siso} 
      & \mapsto
    \eqcl{\eqcl{\laphotgstoltgsij{i}{j}{\alhotg}}{\siso}}{\sbisimSsubscript}  
    &
    \eqcl{\eqcl{\altg}{\siso}}{\sbisimSsubscript}
      & \mapsto
    \eqcl{\ltgstolaphotgsij{i}{j}{\altg}}{\sbisimSsubscript}  
  \end{align*}   
%
  Furthermore, they preserve and reflect the sharing order $\sfunbisim$ on $\classlaphotgsisoi{i}$, 
  and the induced sharing order $\sfunbisim$ on $\factorset{\classltgsisoij{i}{j}}{\sbisimSsubscript}\,$, respectively.
  Hence an isomorphism is obtained:
  $ \pair{\classlaphotgsisoi{i}}{\sfunbisim}
      \iso 
    \pair{\factorset{\classlltgsisoij{i}{j}}{\sbisimSsubscript}}
         {\sfunbisim} $.  
\end{theorem}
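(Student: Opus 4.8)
The plan is to reduce the statement to the already-established Theorem~\ref{thm:classlaphotgsiso:iso:factor:classltgsiso:bisimS}, exploiting that the only change is the replacement of $\classltgsisoij{i}{j}$ by the larger class $\classlltgsisoij{i}{j}$. Write $\mathcal{A} \defdby \factorset{\classltgsisoij{i}{j}}{\sbisimSsubscript}$ and $\mathcal{B} \defdby \factorset{\classlltgsisoij{i}{j}}{\sbisimSsubscript}$. Since every \lambdatg\ is a \lambdatg\ up to \Sbisimilarity, the inclusion $\classltgsij{i}{j} \subseteq \classlltgsij{i}{j}$ induces a map $\iota \funin \mathcal{A} \to \mathcal{B}$ sending the $\sbisimS$-class formed within $\classltgsisoij{i}{j}$ of an isomorphism class $\altgiso$ to its $\sbisimS$-class formed within $\classlltgsisoij{i}{j}$. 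First I would show that $\iota$ is a bijection; then that it preserves and reflects the sharing order; and finally I would check that $\Phi \defdby \factorset{\slaphotgsisotoltgsisoij{i}{j}}{\sbisimSsubscript}$ and $\Psi \defdby \factorset{\sltgsisotolaphotgsisoij{i}{j}}{\sbisimSsubscript}$ factor as $\iota$ composed with the maps of Theorem~\ref{thm:classlaphotgsiso:iso:factor:classltgsiso:bisimS}, so that the conclusion transports along $\iota$.

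That $\iota$ is a bijection is the routine part. Well-definedness is immediate from the inclusion. For surjectivity I would invoke Proposition~\ref{prop:classlltgs},~(\ref{prop:classlltgs:item:ii}): every $\altg\in\classlltgsij{i}{j}$ is \Sbisimilar\ to some genuine \lambdatg~$\altgacc\in\classltgsij{i}{j}$, hence every class in $\mathcal{B}$ is the $\iota$-image of the $\sbisimS$-class of $\altgacc$ in $\mathcal{A}$. For injectivity I would use that $\sbisimS$ is defined via existence of an $\snlvarsucc$-bisimulation that is an arbitrary term graph over $\siglambdaij{i}{j}$, so that \Sbisimilarity\ of two genuine \lambdatgs\ does not depend on whether they are viewed inside $\classltgsij{i}{j}$ or inside $\classlltgsij{i}{j}$; thus $\iota$ collapses no two distinct classes of $\mathcal{A}$.

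For the factorisation, note that $\laphotgstoltgsij{i}{j}{\alhotg}\in\classltgsij{i}{j}$ is a genuine \lambdatg, so $\Phi = \iota \circ \Phi^{\mathcal{A}}$ holds on the nose, where $\Phi^{\mathcal{A}}$ is the first map of Theorem~\ref{thm:classlaphotgsiso:iso:factor:classltgsiso:bisimS}. For the reverse direction, the map $\sltgstolaphotgsij{i}{j}$ first has to be extended from $\classltgsij{i}{j}$ to $\classlltgsij{i}{j}$ — this is unproblematic since its definition in Proposition~\ref{prop:mappings:ltgs:to:laphotgs} only uses the restricted \absprefix\ function on non-delimiter vertices, which is precisely the data a \lambdatg\ up to \Sbisimilarity\ provides — and I would re-run the argument of Proposition~\ref{prop:funbisimS:bisimS:ltgstolaphotgs} to show that the extended $\sltgstolaphotgsij{i}{j}$ is still constant on $\sbisimS$-classes. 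With this, $\Psi$ is well-defined on $\mathcal{B}$ and agrees with $\Psi^{\mathcal{A}}\circ\iota^{-1}$, where $\Psi^{\mathcal{A}}$ is the second map of Theorem~\ref{thm:classlaphotgsiso:iso:factor:classltgsiso:bisimS}; that $\Phi$ and $\Psi$ are mutually inverse then follows from the corresponding fact for $\Phi^{\mathcal{A}},\Psi^{\mathcal{A}}$ together with $\iota$ being a bijection.

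The main obstacle I expect is the preservation and, especially, the \emph{reflection} of the sharing order under $\iota$. Preservation is direct: a homomorphism witnessing $C_1 \funbisim C_2$ between genuine representatives in $\mathcal{A}$ witnesses the same relation in $\mathcal{B}$. Reflection is delicate because, by Proposition~\ref{prop:classlltgs},~(\ref{prop:classlltgs:item:i}), the classes of $\mathcal{B}$ are $\sbisimS$-\emph{complete} and may contain non-genuine representatives, so a homomorphism realising $\iota(C_1)\funbisim\iota(C_2)$ need not run between \lambdatgs. To transfer it to genuine representatives I would push the witnessing homomorphism through the extended correspondence $\sltgstolaphotgsij{i}{j}$ into $\classlaphotgsi{i}$, where the sharing order is intrinsic, using the extension of Theorem~\ref{thm:corr:laphotgs:ltgs},~(\ref{thm:corr:laphotgs:ltgs:item:iii}) to $\classlltgsij{i}{j}$, and then pull it back along the genuine representatives guaranteed by Proposition~\ref{prop:classlltgs},~(\ref{prop:classlltgs:item:ii}). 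Once $\iota$ is shown to be an order isomorphism, composing with the isomorphism $\pair{\classlaphotgsisoi{i}}{\sfunbisim}\iso\pair{\mathcal{A}}{\sfunbisim}$ of Theorem~\ref{thm:classlaphotgsiso:iso:factor:classltgsiso:bisimS} yields the desired isomorphism $\pair{\classlaphotgsisoi{i}}{\sfunbisim}\iso\pair{\mathcal{B}}{\sfunbisim}$, completing the proof.
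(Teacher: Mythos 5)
Your route is the one the paper intends: the paper gives no explicit proof of this theorem, presenting it as the direct analogue of Theorem~\ref{thm:classlaphotgsiso:iso:factor:classltgsiso:bisimS} for the $\sbisimS$\nb-closed class $\classlltgsij{i}{j}$, with Proposition~\ref{prop:classlltgs} supplying exactly the two facts your reduction rests on (item~(\ref{prop:classlltgs:item:i}) for the shape of the $\sbisimS$\nb-classes, item~(\ref{prop:classlltgs:item:ii}) for a genuine \lambdatg\ in each class). Your inclusion-induced bijection $\iota$, the factorisations $\Phi = \iota\circ\Phi^{\mathcal{A}}$ and $\Psi = \Psi^{\mathcal{A}}\circ\iota^{-1}$, and the extension of $\sltgstolaphotgsij{i}{j}$ and of Proposition~\ref{prop:funbisimS:bisimS:ltgstolaphotgs} to $\classlltgsij{i}{j}$ (indeed unproblematic, since only the \absprefix\ function on non-delimiter vertices is used) make that derivation precise, at the same level of detail as the paper, which proves neither the base results nor this theorem.

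One warning on the step you yourself single out as delicate, reflection of the sharing order under $\iota$: do not implement the final ``pull back'' by applying the reflection direction of Theorem~\ref{thm:corr:laphotgs:ltgs},~(\ref{thm:corr:laphotgs:ltgs:item:iii}) to your pre-chosen genuine representatives $G_1\in C_1$, $G_2\in C_2$, because the conclusion $G_1 \funbisim G_2$ can be false for those particular representatives. The example following Theorem~\ref{thm:corr:laphotgs:ltgs} exhibits \Sbisimilar\ but non-isomorphic $\altg, \altgacc \in \classltgsij{0}{1}$ with $\ltgstolaphotgsij{0}{1}{\altg} = \ltgstolaphotgsij{0}{1}{\altgacc}$; taking $C_1 = C_2$ to be their common class, $G_1 = \altg$ and $G_2 = \altgacc$, the images are related while no homomorphism $G_1 \funbisim G_2$ exists (homomorphisms are surjective and $\altgacc$ has strictly more vertices), so the displayed equivalence of Theorem~\ref{thm:corr:laphotgs:ltgs},~(\ref{thm:corr:laphotgs:ltgs:item:iii}) is only usable in the preservation direction, or at the quotient level. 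What your argument actually needs is precisely the weaker, quotient-level statement, namely the order part of Theorem~\ref{thm:classlaphotgsiso:iso:factor:classltgsiso:bisimS}: from $\ltgstolaphotgsij{i}{j}{G_1} \funbisim \ltgstolaphotgsij{i}{j}{G_2}$ conclude that $C_1$ and $C_2$ are related in the induced order on $\factorset{\classltgsisoij{i}{j}}{\sbisimSsubscript}$, i.e.\ that \emph{some} genuine representatives are related by $\funbisim$. Since your plan routes everything through Theorem~\ref{thm:classlaphotgsiso:iso:factor:classltgsiso:bisimS} anyway, this is a local re-routing rather than a structural gap.
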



\section{Not closed under bisimulation and functional bisimulation}
  \label{sec:not:closed}

In this section we collect all negative results concerning closedness under bisimulation
and functional bisimulation for the classes of \lambdatg{s} as introduced in the previous section.

\begin{proposition}\label{prop:ltgs:not:closed:under:bisim}
  None of the classes 
  $\classltgsi{1}$ and\/ $\classltgsij{i}{j}$, for\/ $i\in\setexp{0,1}$ and $j\in\setexp{1,2}$, of \lambdatg{s}
  are closed under bisimulation.
\end{proposition}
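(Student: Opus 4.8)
The plan is to show, for each class $\aclass$ in the list, that it fails the closure property by exhibiting a \lambdatg\ $\atg \in \aclass$ together with a term graph $\atgacc$ over the same signature with $\atg \bisim \atgacc$ but $\atgacc \notin \aclass$, i.e.\ $\atgacc$ admits no correct abstraction-prefix function. Since any functional bisimulation is in particular a bisimulation (from $\atg \funbisim \atgacc$ we obtain $\atg \convfunbisim \atg \funbisim \atgacc$, hence $\atg \bisim \atgacc$), it suffices to relate $\atg$ and $\atgacc$ either by a homomorphism or, more generally, by exhibiting a bisimulation relation between them. The lever for proving $\atgacc \notin \aclass$ is the necessary condition extracted from Lemma~\ref{lem:laphotgs}(\ref{lem:laphotg:item:i}) (and Lemma~\ref{lem:ltg} for the delimiter classes): in any \lambdatg, whenever a vertex carries a back-link to an abstraction $\avert$ (a variable- or a delimiter-back-link) then $\avert$ occurs in its abstraction prefix, and therefore $\avert$ must be visited on every access path of that vertex.

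For the delimiter-free class $\classltgsi{1}$ and, more generally, for every listed class providing at least one kind of back-link (the cases $i=1$, and the cases $j=2$), I would use a \emph{crossed back-links} construction. Start from the cyclic \lambdatg\ $\atg$ representing the term $U$ with $U = \lambda x.\, x\, U$: a single abstraction vertex $\avert$ whose body is an application with left successor a variable vertex $\bvert \tgsucc \avert$ and right successor $\avert$ again (this $\atg$ admits the prefix function $\abspre{\avert} = \emptyword$, $\abspre{\bvert} = \avert$, and is a \lambdatg). Now form $\atgacc$ by splitting $\avert$ into two abstraction vertices $\avert_1,\avert_2$, each carrying the same body pattern, arranged in a two-cycle, but with the two back-links \emph{exchanged}: the variable below $\avert_1$ points to $\avert_2$, and the variable below $\avert_2$ points to $\avert_1$. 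Since $\avert_1$ and $\avert_2$ both unfold to $U$, exchanging the back-links leaves the unfolding unchanged, so $\atg \bisim \atgacc$ (the relation pairing $\avert$ with both $\avert_1$ and $\avert_2$, and correspondingly the application- and variable-vertices, is immediately seen to satisfy the relational bisimulation conditions of Section~\ref{sec:prelims}). However, the binder $\avert_2$ of the variable below $\avert_1$ is not visited on that variable's access path (which passes through $\avert_1$ only); hence by the condition above $\atgacc$ admits no correct abstraction-prefix function, so $\atgacc \notin \classltgsi{1}$. For the delimiter classes with $i=1$ (resp.\ $j=2$) the same idea applies after inserting the scope-delimiter vertices prescribed by the mapping $\slaphotgstoltgsij{i}{j}$ of Proposition~\ref{prop:mappings:laphotgs:to:ltgs}, crossing the variable back-links (resp.\ the delimiter back-links).

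For the remaining class $\classltgsij{0}{1}$, which offers no back-links to cross, I would instead reuse the shared-delimiter construction of Example~\ref{ex:classltgs:not:closed:under:bisimS}: there a \lambdatg\ $\altgi{1}$ and a non-\lambdatg\ $\altgi{0}$ satisfy $\altgi{1} \funbisims{\snlvarsucc} \altgi{0}$, because collapsing two scope-delimiter vertices that have different abstraction prefixes produces a single delimiter which would have to close two distinct scopes — impossible for a correct abstraction-prefix function, by the stack behaviour and uniqueness in Lemma~\ref{lem:ltg}(\ref{lem:ltg:item:v}),(\ref{lem:ltg:item:vi}). As $\funbisims{\snlvarsucc}$ is a functional bisimulation, $\altgi{1} \bisim \altgi{0}$, so $\classltgsij{0}{1}$ is not closed under bisimulation; and since this obstruction concerns only delimiter vertices, the same graphs (with or without variable back-links) cover both values of $i$ when $j=1$.

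The routine part of all of this is checking that the displayed relations really are bisimulations, which is immediate from the matching of labels and successors. The delicate part — and the one I expect to be the main obstacle — is arguing rigorously that $\atgacc$ genuinely admits \emph{no} correct abstraction-prefix function; for the crossed-back-link graphs this is exactly where the ``binder on every access path'' consequence of Lemma~\ref{lem:laphotgs}(\ref{lem:laphotg:item:i}) does the work, and the one real subtlety is the combined case $i=1$, $j=2$, where one must exchange the variable back-links and the delimiter back-links in a coordinated way so that the unfolding is preserved while scope-consistency is destroyed.
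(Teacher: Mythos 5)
Your proposal is correct, but it proves the proposition by a different route than the paper does. The paper's own proof is a one-line reduction: the proposition is derived as an immediate consequence of Proposition~\ref{prop:ltgs:not:closed:under:funbisim:convfunbisim}, which establishes, class by class, non-closedness under the two \emph{specializations} of bisimulation (functional bisimulation $\sfunbisim$ for some classes, converse functional bisimulation $\sconvfunbisim$ for all of them), using finite counterexamples such as sharings/unsharings of the term graph for $\lapp{(\labs{\avar}{\avar})}{(\labs{\avar}{\avar})}$. You instead build counterexamples directly: a uniform ``crossed back-links'' device on the two-fold unrolling of the cyclic graph for $\labs{\avar}{\lapp{\avar}{U}}$, applicable to every class with at least one kind of back-link, plus a reuse of Example~\ref{ex:classltgs:not:closed:under:bisimS} for $\classltgsij{0}{1}$. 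Both arguments rest on the same obstruction (a back-link target must occur in the abstraction prefix, hence on every access path, by Lemma~\ref{lem:laphotgs}(\ref{lem:laphotg:item:i}) and Lemma~\ref{lem:ltg}), and your construction is in fact secretly of the paper's kind: identifying the two copies gives a homomorphism from your crossed graph onto the original, so your pairs are converse-functional-bisimulation counterexamples exactly as in Proposition~\ref{prop:ltgs:not:closed:under:funbisim:convfunbisim}(\ref{prop:ltgs:not:closed:under:funbisim:convfunbisim:item:ii}). What the paper's decomposition buys is the finer bookkeeping of \emph{which} specialization fails for \emph{which} class, information that is reused later (Propositions~\ref{prop:not:closed:under:bisim:lambdahotgs:aplambdahotgs:siglambdai} and~\ref{prop:not:closed:under:funbisim:convfunbisim:lambdahotgs:aplambdahotgs:siglambdai}, and the opening of Section~\ref{sec:closed}); what yours buys is a self-contained, uniform construction from a single cyclic term that needs no case-by-case figures. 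One small remark: the subtlety you flag for the combined case $i=1$, $j=2$ is not actually an obstacle. Since both abstraction vertices of the crossed graph are related to the single abstraction vertex of the original, \emph{any} assignment of the delimiter back-links to $\avert_1$ or $\avert_2$ preserves bisimilarity, and crossing the variable back-links alone already contradicts condition $(\snlvar)_1$ on the would-be abstraction-prefix function; no coordinated exchange is required.
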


This proposition is an immediate consequence of the next one, which can be viewed 
as a refinement, because it formulates non-closedness of classes of \lambdatg{s}
under specializations of bisimulation,
namely for functional bisimulation (under which some classes are not closed), 
and for converse functional bisimulation (under which none of the classes considered here is closed).

\begin{proposition}\label{prop:ltgs:not:closed:under:funbisim:convfunbisim}
  None of the classes $\classltgsij{{i}}{j}$ of \lambdatg{s} for\/ $i\in\setexp{0,1}$ and $j\in\setexp{1,2}$
  are closed under functional bisimulation, or under converse functional bisimulation.
  Additionally, the class $\classltgsi{{1}}$ of \lambdatg{s} is not closed under converse functional bisimulation.
  Broken down in items:
  \begin{enumerate}[(i)]\itemizeprefs
    \item{}\label{prop:ltgs:not:closed:under:funbisim:convfunbisim:item:i}
      None of the classes 
      $\classltgsij{{0}}{j}$ for\/ $j\in\setexp{1,2}$
      of \lambdatg{s} are closed under functional bisimulation $\sfunbisim$, or under
      converse functional bisimulation $\sconvfunbisim\,$.   
    \item{}\label{prop:ltgs:not:closed:under:funbisim:convfunbisim:item:ii}
      None of the classes 
      $\classltgsi{{1}}$ and\/ $\classltgsij{{1}}{j}$ for\/ $j\in\setexp{1,2}$
      of \lambdatg{s} are closed under converse functional bisimulation $\sconvfunbisim\,$.  
%
%
    \item{}\label{prop:ltgs:not:closed:under:funbisim:convfunbisim:item:iii} 
      The class $\classltgsij{{1}}{{1}}$ of \lambdatg{s} 
      is not closed under functional bisimulation $\sfunbisim\,$. 
    \item{}\label{prop:ltgs:not:closed:under:funbisim:convfunbisim:item:iv}  
      The class $\classltgsij{{1}}{{2}}$ of \lambdatg{s} 
      is not closed under functional bisimulation $\sfunbisim\,$. 
  \end{enumerate}
\end{proposition}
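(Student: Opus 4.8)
The plan is to prove each of the four items by exhibiting an explicit pair of first-order term graphs over the relevant signature $\siglambdaij{i}{j}$ together with a (converse) functional bisimulation between them. In every case the verification decomposes into three checks: (a)~the term graph on the \lambdatg-side admits a correct \absprefix\ function, hence is a \lambdatg; (b)~the indicated vertex map satisfies the (roots), (labels) and (arguments) conditions of Proposition~\ref{prop:hom:tgs}, so it is a homomorphism in the required direction; and (c)~the other term graph admits \emph{no} correct \absprefix\ function. For~(c) the workhorse is the stack discipline of Lemma~\ref{lem:ltg},~(\ref{lem:ltg:item:iv}) and~(\ref{lem:ltg:item:vi}): along any access path the value of a correct \absprefix\ function is forced step by step, so if a single vertex is reachable by two access paths forcing two \emph{different} prefixes, no correct \absprefix\ function can exist. (One could instead invoke Lemma~\ref{lem:hom:image:absprefix:function:ltgs}, by which a correct \absprefix\ function on the image would have to be the homomorphic image of the one on the source; but the direct two-access-path contradiction is cleaner.)

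For the cases with $i=0$ (item~(\ref{prop:ltgs:not:closed:under:funbisim:convfunbisim:item:i})), the absence of variable back-links makes variable vertices $\snlvar$-labelled leaves, which can be freely identified or split. For non-closure under $\sfunbisim$ I would take the \lambdatg\ with an application root whose two arguments are two \emph{distinct} top-level abstractions, each over its own variable leaf (the unshared term $(\lambda x. x)\,(\lambda y. y)$); identifying the two leaves yields a graph whose single merged leaf is reached through each of the two still-distinct abstractions and is thereby forced to have the two prefixes $\lambda_a$ and $\lambda_b$, which is impossible. For non-closure under $\sconvfunbisim$ I would start instead from the \lambdatg\ in which one shared abstraction over a variable occurs as \emph{both} arguments of the application, and then un-share the abstraction into $\lambda_a,\lambda_b$ while keeping its body vertex $v$ shared; the obvious map $\lambda_a,\lambda_b\mapsto\lambda$ is a homomorphism onto the starting \lambdatg\ (so the starting graph is $\sconvfunbisim$ to it), yet $v$ is now reachable through both $\lambda_a$ and $\lambda_b$ and hence forced to two prefixes. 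Both constructions use no $\snlvarsucc$-vertices, so they work uniformly for $j\in\setexp{1,2}$. Item~(\ref{prop:ltgs:not:closed:under:funbisim:convfunbisim:item:ii}) is then obtained by the same un-sharing construction with a back-link added to $v$ (case $i=1$): since it again avoids delimiter vertices, a single counterexample witnesses non-closure under $\sconvfunbisim$ for $\classltgsi{1}$, $\classltgsij{1}{1}$ and $\classltgsij{1}{2}$ simultaneously.

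For item~(\ref{prop:ltgs:not:closed:under:funbisim:convfunbisim:item:iii}) ($i=1$, $j=1$), variable identification is blocked by back-links, so I would instead share two delimiter vertices that close the scopes of two \emph{different} abstractions but have bisimilar continuations, exactly as in Example~\ref{ex:classltgs:not:closed:under:bisimS} (whose graphs are over $\siglambdaij{1}{1}$): the shared $\snlvarsucc$-vertex of the image would have to pop two distinct abstractions from the prefix, which the stack discipline forbids. Item~(\ref{prop:ltgs:not:closed:under:funbisim:convfunbisim:item:iv}) ($i=1$, $j=2$) is the delicate case and I expect it to be the \textbf{main obstacle}: with back-links on \emph{both} variable and delimiter vertices, a homomorphism can identify neither two variables nor two delimiters whose (back-link) targets are not themselves identified, so both earlier tricks fail. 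The remedy is to merge two \emph{isomorphic closed} sub-abstractions sitting at \emph{different} scope depths. Concretely I would use the \lambdatg\ over $\siglambdaij{1}{2}$ with application root whose first argument is $\lambda_a$ over a variable $x_a$ back-linked to $\lambda_a$ (prefix $\emptyword$ at $\lambda_a$), and whose second argument is an abstraction $\lambda_0$ whose body is a second copy $\lambda_b$ over a variable $x_b$ back-linked to $\lambda_b$ (prefix $\lambda_0$ at $\lambda_b$). Identifying $\lambda_a$ with $\lambda_b$ and $x_a$ with $x_b$ respects all labels and back-links, so it is a legitimate homomorphism; but in the image the merged abstraction is reachable both directly under the application (forcing prefix $\emptyword$) and through $\lambda_0$ (forcing prefix $\lambda_0$), so no correct \absprefix\ function exists and the image is not a \lambdatg.

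Thus the crux is item~(\ref{prop:ltgs:not:closed:under:funbisim:convfunbisim:item:iv}): one must find a sharing that the rigidity of the back-links forces to be legal as a first-order homomorphism, yet that still destroys the existence of an \absprefix\ function; the ``two isomorphic closed sub-abstractions at different depths'' pattern is the key idea. Once all four items are established, Proposition~\ref{prop:ltgs:not:closed:under:bisim} follows at once, since $\sfunbisim$ and $\sconvfunbisim$ are both special cases of $\sbisim$, so any of the above witnesses also refutes closure under bisimulation.
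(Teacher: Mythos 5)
Your proposal is correct and follows essentially the same strategy as the paper: for each item, exhibit an explicit pair of term graphs, check that the map is a first-order homomorphism, and show via the forced stack discipline of the abstraction prefix that the more-shared (or less-shared) graph admits no correct \absprefix\ function --- with item~(\ref{prop:ltgs:not:closed:under:funbisim:convfunbisim:item:i}) using the chain unshared/partially-shared/collapsed $(\labs{\avar}{\avar})\,(\labs{\bvar}{\bvar})$, item~(\ref{prop:ltgs:not:closed:under:funbisim:convfunbisim:item:ii}) un-sharing an abstraction over a shared variable, item~(\ref{prop:ltgs:not:closed:under:funbisim:convfunbisim:item:iii}) sharing a delimiter that must close two different scopes, and item~(\ref{prop:ltgs:not:closed:under:funbisim:convfunbisim:item:iv}) merging two copies of $\labs{\avar}{\avar}$ that dangle inside unclosed (non-eager) scopes. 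The only cosmetic difference is in item~(\ref{prop:ltgs:not:closed:under:funbisim:convfunbisim:item:iv}), where you place the two copies at nested but distinct prefix depths ($\emptyword$ versus $\lambda_0$) while the paper places them under two parallel, incomparable unclosed scopes; both exploit exactly the same failure of eager scope closure.
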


\begin{proof}
  For showing (\ref{prop:ltgs:not:closed:under:funbisim:convfunbisim:item:i}),
  let $\bsig$ be one of the signatures 
                                       $\siglambdaij{0}{j}$. 
  Consider the following term graphs over $\bsig$:
  \graphs
  {
      \graph{\atgi{2}}{lambdatgs_not_closed_under_funbisim_convfunbisim_item_i_g2}
      \graph{\atgi{1}}{lambdatgs_not_closed_under_funbisim_convfunbisim_item_i_g1}
      \graph{\atgi{0}}{lambdatgs_not_closed_under_funbisim_convfunbisim_item_i_g0}
  }%
  Note that $\atgi{2}$ represents the syntax tree of the nameless de-Bruijn-index notation
  $\lapp{(\sslabs\snlvar)}{(\sslabs\snlvar)}$
  for the \lambdaterm\ $\lapp{(\labs{\avar}{\avar})}{(\labs{\avar}{\avar})}$. 
  Then it holds: $\atgi{2} \funbisim \atgi{1} \funbisim \atgi{0}$.
  But while $\atgi{2}$ and $\atgi{0}$ admit correct ab\-strac\-tion-pre\-fix functions over $\bsig$ (nestedness of the implicitly defined scopes, here shaded),
  and consequently are \lambdatg{s} over $\bsig$, this is not the case for $\atgi{1}$  (overlapping scopes).
  Hence the class of \lambdatg{s} over $\bsig$ is closed neither under functional bisimulation nor under converse functional bisimulation.
  
  For showing (\ref{prop:ltgs:not:closed:under:funbisim:convfunbisim:item:ii}),
  let $\bsig$ be one of the signatures $\siglambdai{1}$ and $\siglambdaij{1}{j}$.
  Consider the term graphs over $\bsig$:
  \graphs
  {
          \graph{\atgi{1}'}{lambdatgs_not_closed_under_funbisim_convfunbisim_item_ii_g1}
          \graph{\atgi{0}'}{lambdatgs_not_closed_under_funbisim_convfunbisim_item_ii_g0}
  }
  Then it holds: $\atgi{1}' \funbisim \atgi{0}'$.
  But while $\atgi{0}'$ admits a correct ab\-strac\-tion-pre\-fix function,
  and therefore is a \lambdatg, over $\bsig$, this is not the case for $\atgi{1}'$ (due to overlapping scopes).
  Hence the class of \lambdatg{s} over $\bsig$ is not closed under converse functional bisimulation.
%

  For showing (\ref{prop:ltgs:not:closed:under:funbisim:convfunbisim:item:iii}),
  consider the following term graphs over $\siglambdaij{1}{1}$:
  \graphs
  {
          \graph{\atgi{1}''}{lambdatgs_not_closed_under_funbisim_convfunbisim_item_iii_g1}
          \graph{\atgi{0}''}{lambdatgs_not_closed_under_funbisim_convfunbisim_item_iii_g0}
  }
  Then it holds that $\atgi{1}'' \funbisim \atgi{0}''$.         
  However,        
  while $\atgi{1}''$ admits a correct abstraction-prefix function,
  and hence is a \lambdatg\ over $\siglambdaij{1}{1}$,
  this is not the case for $\atgi{0}''$ (due to overlapping scopes).
  Therefore the class of \lambdatg{s} over $\siglambdaij{1}{1}$ is not closed under functional bisimulation.

  For showing (\ref{prop:ltgs:not:closed:under:funbisim:convfunbisim:item:iv}),
  consider the following term graphs over $\siglambdaij{1}{2}$: 
  \begin{equation}\label{eq:counterexample:classltgsij12}
  \mbox{%
  \graphs{
          \graph{\atg'''_{1}}{lambdatgs_over_siglambda12_not_closed_under_funcbisim_g1}
          \hspace*{13ex}
          \graph{\atg'''_{0}}{lambdatgs_over_siglambda12_not_closed_under_funcbisim_g0}
          }}
  \end{equation}        
  Then it holds that $\atg'''_{1} \funbisim \atg'''_{0}$.         
  However,        
  while $\atg'''_{1}$ admits a correct abstraction-prefix function,
  and hence is a \lambdatg\ over $\siglambdaij{1}{2}$,
  this is not the case for $\atg'''_{0}$ (overlapping scopes).
  Therefore the class of \lambdatg{s} over $\siglambdaij{1}{2}$ is not closed under functional bisimulation.
  The scopes defined implicitly by these graphs are larger than necessary: they
  do not exhibit `eager scope closure', see Section~\ref{sec:closed}.%
\end{proof}

As an easy consequence of Proposition~\ref{prop:ltgs:not:closed:under:bisim},
and of Proposition~\ref{prop:ltgs:not:closed:under:funbisim:convfunbisim},
 (\ref{prop:ltgs:not:closed:under:funbisim:convfunbisim:item:i})
 and (\ref{prop:ltgs:not:closed:under:funbisim:convfunbisim:item:ii}),
 together with the examples used in the proof, 
we obtain the following two propositions. 

\begin{proposition}\label{prop:not:closed:under:bisim:lambdahotgs:aplambdahotgs:siglambdai}
  Let $i\in\setexp{0,1}$. 
  None of the classes $\classlhotgsi{i}$ of \lambdahotg{s}, 
  or $\classlaphotgsi{i}$ of \lambdaaphotg{s} 
  are closed under bisimulations on the underlying term graphs. 
\end{proposition}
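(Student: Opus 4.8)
The plan is to reduce the claim to the non-closure results for \lambdatg{s} already obtained in Proposition~\ref{prop:ltgs:not:closed:under:bisim} and in items~(\ref{prop:ltgs:not:closed:under:funbisim:convfunbisim:item:i}) and~(\ref{prop:ltgs:not:closed:under:funbisim:convfunbisim:item:ii}) of Proposition~\ref{prop:ltgs:not:closed:under:funbisim:convfunbisim}. The observation that makes this work is that, by Definition~\ref{def:aplambdahotg}, a $\siglambdai{i}$-term-graph is the underlying term graph of some \lambdaaphotg\ over $\siglambdai{i}$ if and only if it admits a correct \absprefix\ function; for $i=1$ this is precisely the class $\classltgsi{1}$ of Definition~\ref{def:classltgs}. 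Consequently, to show that $\classlaphotgsi{i}$ is not closed under bisimulations on the underlying term graphs it suffices to exhibit one $\alaphotg\in\classlaphotgsi{i}$ together with a $\siglambdai{i}$-term-graph $\atgacc$ such that $\abspreforgetfullaphotgsi{i}{\alaphotg} \bisim \atgacc$ while $\atgacc$ admits no correct \absprefix\ function: then no member of $\classlaphotgsi{i}$ can have $\atgacc$ as underlying term graph, so the witnessing bisimulation cannot lift.

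For $i=1$ I would simply invoke Proposition~\ref{prop:ltgs:not:closed:under:bisim} (equivalently, item~(\ref{prop:ltgs:not:closed:under:funbisim:convfunbisim:item:ii}) of Proposition~\ref{prop:ltgs:not:closed:under:funbisim:convfunbisim}): the pair $\atgi{0}'$, $\atgi{1}'$ used there satisfies $\atgi{1}' \funbisim \atgi{0}'$, hence $\atgi{0}' \bisim \atgi{1}'$, where $\atgi{0}'$ is a \lambdatg\ over $\siglambdai{1}$ (so it underlies a \lambdaaphotg) but $\atgi{1}'$ admits no correct \absprefix\ function. Taking $\alaphotg$ to be the \lambdaaphotg\ whose underlying term graph is $\atgi{0}'$ and $\atgacc \defdby \atgi{1}'$ yields exactly the situation described above.

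For $i=0$ I would reuse the de-Bruijn example from the proof of item~(\ref{prop:ltgs:not:closed:under:funbisim:convfunbisim:item:i}), namely the representation of $\lapp{(\labs{\avar}{\avar})}{(\labs{\avar}{\avar})}$. The point to check is that this example needs no scope-delimiter vertices and therefore already lives over $\siglambdai{0}$: its syntax tree $\atgi{2}$ admits a correct \absprefix\ function, whereas its functional-bisimulation image $\atgi{1}$, in which the two variable vertices are shared, does not, since the shared variable vertex would have to lie in the scope of two distinct abstraction vertices, forcing (by condition $(\sslabs)$) its prefix to be $\emptyword$ and thereby violating $(\snlvar)_0$. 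From $\atgi{2} \funbisim \atgi{1}$ we get $\atgi{2} \bisim \atgi{1}$, and taking $\alaphotg$ on $\atgi{2}$ and $\atgacc \defdby \atgi{1}$ again gives non-closure.

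Finally I would transfer both cases to \lambdahotg{s} through Theorem~\ref{thm:corr:lhotgs:laphotgs}: the mutually inverse bijections $\slhotgstolaphotgsi{i}$ and $\slaphotgstolhotgsi{i}$ leave the underlying term graph unchanged, so a $\siglambdai{i}$-term-graph underlies a \lambdahotg\ exactly when it underlies a \lambdaaphotg. Hence $\classlhotgsi{i}$ and $\classlaphotgsi{i}$ share the same family of admissible underlying term graphs, and the same witnesses prove non-closure for $\classlhotgsi{i}$. The main obstacle I anticipate is purely bookkeeping: verifying that the counterexamples, stated in the cited proofs over the signatures $\siglambdaij{i}{j}$ with scope delimiters, can indeed be realized over the delimiter-free signatures $\siglambdai{0}$ and $\siglambdai{1}$ relevant here, and that in each shared graph the failure to admit a correct \absprefix\ function is exactly a violation of nestedness witnessed by $(\snlvar)_0$; once this is confirmed, the reduction is routine.
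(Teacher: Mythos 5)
Your proposal is correct and takes essentially the same route as the paper, which obtains this proposition precisely as a consequence of Proposition~\ref{prop:ltgs:not:closed:under:bisim} and of Proposition~\ref{prop:ltgs:not:closed:under:funbisim:convfunbisim}, items~(\ref{prop:ltgs:not:closed:under:funbisim:convfunbisim:item:i}) and~(\ref{prop:ltgs:not:closed:under:funbisim:convfunbisim:item:ii}), together with the examples used in their proofs (which, as you check, live over the delimiter-free signatures $\siglambdai{0}$ and $\siglambdai{1}$), and transfers the result between $\classlhotgsi{i}$ and $\classlaphotgsi{i}$ via the underlying-graph-preserving correspondence of Theorem~\ref{thm:corr:lhotgs:laphotgs}. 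Your fleshed-out justification that the shared-variable graph admits no correct \absprefix\ function (prefixes of the two abstraction vertices are forced empty by the root and $(\sslapp)$ conditions, so $(\sslabs)$ forces the shared variable's prefix empty, contradicting $(\snlvar)_0$) is exactly the ``overlapping scopes'' obstruction the paper appeals to.
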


\begin{proposition}\label{prop:not:closed:under:funbisim:convfunbisim:lambdahotgs:aplambdahotgs:siglambdai}
  The following statements hold:
  \begin{enumerate}[(i)]\itemizeprefs
    \item{}\label{prop:not:closed:under:funbisim:convfunbisim:lambdahotgs:aplambdahotgs:siglambdai:item:i}
      Neither the class $\classlhotgsi{0}$ 
      nor the class $\classlaphotgsi{0}$ 
      is closed under functional bisimulations, or under converse functional bisimulations, on the underlying term graphs.
    \item{}\label{prop:not:closed:under:funbisim:convfunbisim:lambdahotgs:aplambdahotgs:siglambdai:item:ii} 
      Neither the class $\classlhotgsi{1}$ of \lambdahotg{s} 
      nor the class $\classlaphotgsi{1}$ of \lambdaaphotg{s} 
      is closed under converse functional bisimulations on underlying term graphs. 
  \end{enumerate}  
\end{proposition}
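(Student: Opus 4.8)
The plan is to obtain both items as immediate corollaries of Proposition~\ref{prop:ltgs:not:closed:under:funbisim:convfunbisim}, reusing the very counterexamples built in its proof. First I would record the reduction to a single class of graphs: by Proposition~\ref{prop:mappings:lhotgs:laphotgs} the mappings $\slhotgstolaphotgsi{i}$ and $\slaphotgstolhotgsi{i}$ leave the underlying term graph unchanged, so a $\siglambdai{i}$\nb-term\nb-graph underlies some \lambdahotg\ in $\classlhotgsi{i}$ exactly when it underlies some \lambdaaphotg\ in $\classlaphotgsi{i}$; and by Definition~\ref{def:aplambdahotg} this happens exactly when the term graph admits a correct \absprefix\ function. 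Hence it suffices to prove each non-closure statement for the \lambdaaphotg-classes $\classlaphotgsi{i}$, the statements for $\classlhotgsi{i}$ then following by the correspondence of Theorem~\ref{thm:corr:lhotgs:laphotgs}.

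The second step is to isolate a uniform non-closure criterion. To show that $\classlaphotgsi{i}$ is not closed under functional bisimulations (respectively, converse functional bisimulations) on the underlying term graphs, I would exhibit a \lambdaaphotg\ $\alaphotg\in\classlaphotgsi{i}$ and a homomorphism $\sahom$ witnessing $\abspreforgetfullaphotgsi{i}{\alaphotg}\funbisimi{\sahom}\atgacc$ (respectively $\atgacc\funbisimi{\sahom}\abspreforgetfullaphotgsi{i}{\alaphotg}$) for a $\siglambdai{i}$\nb-term\nb-graph $\atgacc$ that admits \emph{no} correct \absprefix\ function. By the characterisation of the first paragraph, $\atgacc$ is then the underlying term graph of no element of $\classlaphotgsi{i}$ at all, so no extension of $\atgacc$ to an element of $\classlaphotgsi{i}$ exists, which is precisely the failure of the closure condition. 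This reduces the whole proposition to pointing at the right graphs.

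For item~(\ref{prop:not:closed:under:funbisim:convfunbisim:lambdahotgs:aplambdahotgs:siglambdai:item:i}) I would reuse $\atgi{2}\funbisim\atgi{1}\funbisim\atgi{0}$ from the proof of Proposition~\ref{prop:ltgs:not:closed:under:funbisim:convfunbisim},~(\ref{prop:ltgs:not:closed:under:funbisim:convfunbisim:item:i}): these are the fully shared and fully unshared representations of the de-Bruijn term $\lapp{(\sslabs\snlvar)}{(\sslabs\snlvar)}$ for $\lapp{(\labs{\avar}{\avar})}{(\labs{\avar}{\avar})}$, where $\atgi{2}$ and $\atgi{0}$ admit correct \absprefix\ functions while $\atgi{1}$ does not (overlapping scopes). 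Since $\atgi{2}$ is a valid underlying term graph, the homomorphism $\atgi{2}\funbisim\atgi{1}$ refutes closure under functional bisimulation; since $\atgi{0}$ is valid and $\atgi{0}\convfunbisim\atgi{1}$, closure under converse functional bisimulation is refuted as well. For item~(\ref{prop:not:closed:under:funbisim:convfunbisim:lambdahotgs:aplambdahotgs:siglambdai:item:ii}) I would reuse $\atgi{1}'\funbisim\atgi{0}'$ from the proof of Proposition~\ref{prop:ltgs:not:closed:under:funbisim:convfunbisim},~(\ref{prop:ltgs:not:closed:under:funbisim:convfunbisim:item:ii}) for the instance $\bsig=\siglambdai{1}$, where $\atgi{0}'$ admits a correct \absprefix\ function but $\atgi{1}'$ does not; then $\atgi{0}'\convfunbisim\atgi{1}'$ refutes closure under converse functional bisimulation.

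The only genuine point of care — the step I would check most closely — is the signature bookkeeping: the counterexamples imported from Proposition~\ref{prop:ltgs:not:closed:under:funbisim:convfunbisim} were stated over the signatures $\siglambdaij{i}{j}$ \emph{with} scope-delimiter vertices, whereas the underlying term graphs of \lambdahotg{s} and \lambdaaphotg{s} live over the delimiter-free signatures $\siglambdai{i}$. For item~(\ref{prop:not:closed:under:funbisim:convfunbisim:lambdahotgs:aplambdahotgs:siglambdai:item:i}) I must verify that the three graphs contain no $\snlvarsucc$\nb-vertex (each bound variable sits at de-Bruijn index $\snlvar$), so that they really are term graphs over $\siglambdai{0}$; for item~(\ref{prop:not:closed:under:funbisim:convfunbisim:lambdahotgs:aplambdahotgs:siglambdai:item:ii}) I must use precisely the $\bsig=\siglambdai{1}$ instance rather than a delimiter-bearing one. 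Everything else is the definitional unwinding already provided by Definition~\ref{def:aplambdahotg}, Proposition~\ref{prop:mappings:lhotgs:laphotgs}, and Theorem~\ref{thm:corr:lhotgs:laphotgs}, with no computation required — which is why the result is indeed an easy consequence of the earlier proposition.
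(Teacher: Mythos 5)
Your route coincides with the paper's own: the paper gives no separate proof of this proposition, stating only that it is an easy consequence of Proposition~\ref{prop:ltgs:not:closed:under:funbisim:convfunbisim},~(\ref{prop:ltgs:not:closed:under:funbisim:convfunbisim:item:i}) and~(\ref{prop:ltgs:not:closed:under:funbisim:convfunbisim:item:ii}) ``together with the examples used in the proof'', and your reduction via Proposition~\ref{prop:mappings:lhotgs:laphotgs} and Definition~\ref{def:aplambdahotg} (a $\siglambdai{i}$\nb-term-graph underlies a member of $\classlhotgsi{i}$ iff it underlies a member of $\classlaphotgsi{i}$ iff it admits a correct \absprefix\ function) is a clean formalization of that. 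For item~(\ref{prop:not:closed:under:funbisim:convfunbisim:lambdahotgs:aplambdahotgs:siglambdai:item:ii}) your argument is complete, because the proof of Proposition~\ref{prop:ltgs:not:closed:under:funbisim:convfunbisim},~(\ref{prop:ltgs:not:closed:under:funbisim:convfunbisim:item:ii}) explicitly allows $\bsig = \siglambdai{1}$, so the non-existence statement you import already concerns the correctness notion of Definition~\ref{def:abspre:function:siglambdai}.

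For item~(\ref{prop:not:closed:under:funbisim:convfunbisim:lambdahotgs:aplambdahotgs:siglambdai:item:i}), however, your ``point of care'' is not the right one, and the step as written has a gap. Verifying that $\atgi{2}$, $\atgi{1}$, $\atgi{0}$ contain no $\snlvarsucc$\nb-vertices does make them term graphs over $\siglambdai{0}$, but it does not transfer the key fact that $\atgi{1}$ admits no correct \absprefix\ function. Proposition~\ref{prop:ltgs:not:closed:under:funbisim:convfunbisim},~(\ref{prop:ltgs:not:closed:under:funbisim:convfunbisim:item:i}) establishes that fact for the correctness notion of Definition~\ref{def:abspre:function:siglambdaij}, whose conditions $(\sslabs)$ and $(\sslapp)$ are equalities, whereas membership in $\classlaphotgsi{0}$ is governed by Definition~\ref{def:abspre:function:siglambdai}, where these conditions only require the prefix relation $\prele$ --- a strictly weaker demand even on delimiter-free term graphs. (Witness: the one-vertex term graph whose $\sslabs$\nb-vertex is its own successor admits a correct \absprefix\ function in the sense of Definition~\ref{def:abspre:function:siglambdai}, but none in the sense of Definition~\ref{def:abspre:function:siglambdaij}.) Hence ``no correct \absprefix\ function over $\siglambdaij{0}{j}$'' does not in general imply ``no correct \absprefix\ function over $\siglambdai{0}$'', and the latter is what your criterion needs. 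The repair is a one-line direct check on $\atgi{1}$: writing $\averti{1} \neq \averti{2}$ for its two abstraction vertices and $\bvert$ for the shared variable vertex, conditions $(\text{root})$ and $(\sslapp)$ of Definition~\ref{def:abspre:function:siglambdai} force $\abspre{\averti{1}} = \abspre{\averti{2}} = \emptyword$, then $(\sslabs)$ forces $\abspre{\bvert} \prele \averti{1}$ and $\abspre{\bvert} \prele \averti{2}$, so $\abspre{\bvert} = \emptyword$, contradicting $(\snlvar)_0$. (For $\classlhotgsi{0}$ one can argue instead via Lemma~\ref{lem:lhotg},~(\ref{lem:lhotg:item:i}): a binder of $\bvert$ would have to lie on both of its access paths, which share only the application root.) The converse transfer you also need --- that $\atgi{2}$ and $\atgi{0}$ do extend to members of $\classlaphotgsi{0}$ --- is unproblematic, since on delimiter-free graphs correctness in the sense of Definition~\ref{def:abspre:function:siglambdaij} implies correctness in the sense of Definition~\ref{def:abspre:function:siglambdai}. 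With this verification inserted, your proof is correct and matches the paper's intended argument.
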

Note that Proposition~\ref{prop:not:closed:under:funbisim:convfunbisim:lambdahotgs:aplambdahotgs:siglambdai},
          (\ref{prop:not:closed:under:funbisim:convfunbisim:lambdahotgs:aplambdahotgs:siglambdai:item:i})
is a strengthening of the statement of Proposition~\ref{prop:no:extension:funbisim:siglambda1} earlier.

\section{Closed under functional bisimulation}
  \label{sec:closed}

The negative results gathered in the last section might seem to show our enterprise in a quite poor state: 
For the classes of \lambdatg{s} we introduced, 
Proposition~\ref{prop:ltgs:not:closed:under:funbisim:convfunbisim}
only leaves open the possibility that the class $\classltgsi{1}$ is closed under functional bisimulation.
Actually, $\classltgsi{1}$ is closed (we do not prove this here), 
but that does not help us any further, because the correspondences in
Theorem~\ref{thm:corr:laphotgs:ltgs} do not apply to this class, and worse still,
Proposition~\ref{prop:forgetful} rules out simple correspondences for $\classltgsi{1}$.
So in this case we are left  
without the satisfying correspondences to \lambdahotg{s} and \lambdaaphotg{s}
that yet exist for the other classes of \lambdatg{s}, but which in their turn are not closed under functional bisimulation. 

But in this section we establish that the class $\classltgsij{1}{2}$ is very useful after all: its restriction 
to term graphs with eager application of scope closure is in fact closed under functional bisimulation. 

The reason for the non-closedness of $\classltgsij{1}{2}$ under functional bisimulation consists in the fact 
that \lambdatg{s} over $\siglambdaij{1}{2}$ do not necessarily exhibit `eager scope closure': 
For example in the term graph $\atgi{1}'''$ 
from the proof of Proposition~\ref{prop:ltgs:not:closed:under:funbisim:convfunbisim}, (\ref{prop:ltgs:not:closed:under:funbisim:convfunbisim:item:iv}),
the scopes of the two topmost abstractions are not closed on the paths to variable occurrences belonging to the bottommost abstractions,
although both scopes could have been closed immediately before the bottommost abstractions. 
When this is actually done, and the following variation $\tilde{\atg}_{1}$ of $\atgi{1}$ with eager scope closure is obtained,
then the problem disappears:
\begin{equation}\label{example:classeagltgsij12}
  \mbox{%
  \graphs{
          \graph{\tilde{\atg}_{1}}{lambdatgs_over_siglambda12_closed_under_funcbisim_eager_backlinks_g1}
          \hspace*{13ex}
          \graph{\tilde{\atg}_{0}}{lambdatgs_over_siglambda12_closed_under_funcbisim_eager_backlinks_g0}
          }}
\end{equation}
Its bisimulation collapse $\tilde{\atg}_{0}$ has again a correct \absprefix\ function and hence is a \lambdatg.


For \lambdatg{s} over $\siglambdaij{1}{1}$ and $\siglambdaij{1}{2}$ we define 
the properties of being `eager scope' and `fully back-linked' in the definition below.

\begin{definition}[eager scope and fully back-linked \lambdatg{s}]%
    \label{def:eager-scope:fully:back-linked}\normalfont
  Let $\atg = \tuple{\verts,\svlab,\svargs,\sroot}$ be a \lambdatg\ over $\siglambdaij{1}{j}$ for $j\in\setexp{1,2}$.
  Let $\sabspre \funin \verts \to \verts^*$ be the \absprefix\ function of $\atg$.
  
  We say that $\atg$ is an \emph{eager scope \lambdatg\ over $\siglambdaij{1}{j}$},
  or that $\atg$ \emph{has the eager-scope property} 
  if:
  \begin{equation}\label{eq:def:eager:scope}
    \left.
    \begin{aligned}
    \forallstzero{\bvert,\avert\in\verts}\, 
      \forallstzero{\apre\in\verts^*\!}
                    &
                    \abspre{\bvert} = \stringcon{\apre}{\avert}
                      \;\:\slogand\:\;
                    \bvert\notin\vertsof{\snlvarsucc}
                   \;\,\Rightarrow\,\;
                    \begin{aligned}[t]
                      & 
                      \existsstzero{n\in\nats}\existsstzero{\bverti{0},\ldots,\bverti{n}\in\verts} 
                      \\[-0.25ex]
                      & \hspace*{4ex}
                        \bvert =\bverti{0} \tgsucc \bverti{1} \tgsucc \ldots \tgsucc \bverti{n} \tgsucci{0} \avert
                      \\[-0.25ex]
                      & \hspace*{4ex}
                          \;\:\slogand\:\;
                        \bverti{n} \in\vertsof{\snlvar}
                          \;\:\slogand\:\;
                        \forallst{1\le i\le n-1}
                                 {\stringcon{\apre}{\avert} 
                                                            \le \abspre{\bverti{i}}} \punc{,}
                    \end{aligned}
    \end{aligned}          
    \hspace*{-1ex}\right\}          
  \end{equation}
  holds, or in words: if 
  for every non-delimiter vertex $\bvert$ in $\atg$ 
  with a non-empty abstraction-prefix $\abspre{\bvert}$ ending with $\avert$
  there exists a path from $\bvert$ to $\avert$ in $\atg$ via vertices with abstraction-prefixes that extend $\abspre{\bvert}$
  (a path in the scope of $\avert$,
   in view of the translation into \lambdahotg{s} defined in Proposition~\ref{prop:mappings:lhotgs:laphotgs}),
  and finally a variable-occurrence vertex before reaching $\avert$. 
  By $\classeagltgsij{1}{j}$ we denote the subclass of $\classltgsij{1}{j}$ 
  that consists of all eager-scope \lambdatg{s}. 
  
  We say that $\atg$ is \emph{fully back-linked} if it holds:
  \begin{equation}\label{eq:def:fully:back-linked}
    \left.
    \begin{aligned}
    \forallstzero{\bvert,\avert\in\verts}\, 
      \forallstzero{\apre\in\verts^*\!}
                    \abspre{\bvert} = \stringcon{\apre}{\avert}
                      \;\;\Rightarrow\;\;
                      & 
                      \existsstzero{n\in\nats}\existsstzero{\bverti{0},\ldots,\bverti{n}\in\verts} 
                      \\[-0.25ex]
                      & \hspace*{4ex}
                        \bvert = \bverti{0} \tgsucc \bverti{1} \tgsucc \ldots \tgsucc \bverti{n} \tgsucc \avert
                      \\[-0.25ex]
                      & \hspace*{4ex}
                          \;\:\slogand\:\;
                        \forallst{0\le i\le n}
                                     {\stringcon{\apre}{\avert} \le \abspre{\bverti{i}}}
                      \punc{,}
    \end{aligned}          
    \;\;\;\;\right\}          
  \end{equation}
  that is, if for all vertices $\bvert$ of $\atg$
  with non-empty abstraction prefix $\abspre{\bvert}$ that ends with $\avert$,
  there exists a path in $\atg$ from $\bvert$ to $\avert$ 
  via vertices with abstraction-prefixes that extend $\abspre{\bvert}$
  (hence, in view of the translation into \lambdahotg{s} defined in Proposition~\ref{prop:mappings:lhotgs:laphotgs}, 
   via vertices in the scope of $\avert$). 
  By $\classfblltgsij{1}{j}$ we denote the subclass of $\classltgsij{1}{j}$ 
  that consists of all fully back-linked \lambdatg{s}.
\end{definition}

For \lambdatg{s} with back-bindings for both variable vertices and delimiter-vertices
the property of being fully back-linked can be viewed as a generalization of the property of being an eager-scope \lambdatg.
This is stated by the following proposition.

\begin{proposition}\label{prop:eager:scope:fully:back-linked}
  Every eager-scope \lambdatg\ over $\siglambdaij{1}{2}$ is fully back-linked. 
\end{proposition}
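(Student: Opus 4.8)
\emph{Proof (Proposal).} The plan is to fix an arbitrary eager-scope \lambdatg\ $\atg = \tuple{\verts,\svlab,\svargs,\sroot}$ over $\siglambdaij{1}{2}$ with (unique, by Lemma~\ref{lem:ltg},~(\ref{lem:ltg:item:vi})) \absprefix\ function $\sabspre$, and to verify the defining condition~\eqref{eq:def:fully:back-linked} of being fully back-linked directly. So I would take arbitrary $\bvert,\avert\in\verts$ and $\apre\in\verts^*$ with $\abspre{\bvert} = \stringcon{\apre}{\avert}$, and construct a path witnessing full back-linkedness. The key observation is that the eager-scope condition~\eqref{eq:def:eager:scope} and the fully-back-linked condition~\eqref{eq:def:fully:back-linked} differ only in two respects: (a)~the former constrains only \emph{non}-delimiter vertices $\bvert$, whereas the latter also ranges over $\bvert\in\vertsof{\snlvarsucc}$; and (b)~the prefix-extension requirement is imposed on the interior indices $1\le i\le n-1$ in the eager-scope case, but on all indices $0\le i\le n$ in the fully-back-linked case. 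Accordingly I would split on whether $\bvert$ is a scope-delimiter vertex.

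In the case $\bvert\notin\vertsof{\snlvarsucc}$, I would simply invoke the eager-scope property to obtain a path $\bvert = \bverti{0} \tgsucc \bverti{1} \tgsucc \ldots \tgsucc \bverti{n} \tgsucci{0} \avert$ with $\bverti{n}\in\vertsof{\snlvar}$ and $\stringcon{\apre}{\avert} \le \abspre{\bverti{i}}$ for $1\le i\le n-1$. This path already has the shape demanded by~\eqref{eq:def:fully:back-linked} (with $\bverti{n}$ as the penultimate vertex), so the only gap is to check the prefix-extension condition at the two endpoints. For $i=0$ it holds with equality, since $\abspre{\bverti{0}} = \abspre{\bvert} = \stringcon{\apre}{\avert}$. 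For $i=n$, the correctness condition $(\snlvar)_1$ applied to the variable vertex $\bverti{n}$ and its $0$-successor $\avert$ gives $\abspre{\avert}\avert = \abspre{\bverti{n}}$; combining this with $\abspre{\avert} = \apre$, which follows from Lemma~\ref{lem:ltg} (item~(\ref{lem:laphotg:item:i}) of Lemma~\ref{lem:laphotgs}) applied to $\abspre{\bvert} = \stringcon{\apre}{\avert}$, yields $\abspre{\bverti{n}} = \stringcon{\apre}{\avert}$, so again the condition holds with equality. Hence the eager-scope path is already a fully-back-linked path.

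In the case $\bvert\in\vertsof{\snlvarsucc}$, the eager-scope property supplies nothing, and here is where the hypothesis $j=2$ becomes essential: since $\snlvarsucc$ has arity $2$, the vertex $\bvert$ has a $1$-successor $\bvert \tgsucci{1} \bverti{1}'$, and the correctness condition $(\snlvarsucc)_2$ forces $\bverti{1}'\in\vertsof{\sslabs}$ with $\abspre{\bverti{1}'}\bverti{1}' = \abspre{\bvert} = \stringcon{\apre}{\avert}$. Comparing the last letters identifies $\bverti{1}' = \avert$, so the back-link itself is a one-step path $\bvert = \bverti{0} \tgsucc \avert$ (with $n=0$), whose single endpoint condition at $i=0$ holds with equality. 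I expect no genuine obstacle in this proof: the only content is the bookkeeping that delimiter vertices, though excluded from the eager-scope requirement, are covered outright by their $\snlvarsucc$-back-links, while for non-delimiter vertices the eager-scope path meets the slightly wider endpoint conditions of full back-linkedness (in fact everywhere with equality). The proposition thus reduces to these two short case analyses, and the role of $j=2$ (the presence of delimiter back-links) is precisely what makes the delimiter case go through. \qed
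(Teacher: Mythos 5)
Your proof is correct and takes essentially the same route as the paper's: the paper likewise handles delimiter vertices by observing that, since $\snlvarsucc$ has arity $2$, the correctness condition $(\snlvarsucc)_2$ forces the direct back-link $\bvert \tgsucci{1} \avert$, and for non-delimiter vertices it lets the eager-scope path witness condition \eqref{eq:def:fully:back-linked}. Your explicit verification of the prefix conditions at the endpoints $i=0$ and $i=n$ (via $(\snlvar)_1$ and Lemma~\ref{lem:ltg}) is detail the paper leaves implicit, and it checks out.
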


\begin{proof}
  Let $\atg$ be an \lambdatg\ over $\siglambdaij{1}{2}$ with set $\verts$ of vertices.
  Let $\sabspre$ be the \absprefix\ function of $\atg$.
  Suppose that $\atg$ is an eager-scope \lambdatg. 
  This means that the condition \eqref{eq:def:eager:scope} holds for $\atg$. 
  Now note that, for all $\bvert,\avert\in\verts$ and $\apre\in\verts^*$, 
  if $\abspre{\bvert} = \stringcon{\apre}{\avert}$ and $\bvert\in\vertsof{\snlvarsucc}$ 
  holds, then the correctness condition $(\snlvarsucc)_2$ on the \absprefix\ function $\sabspre$
  entails $\bvert \tgsucci{1} \avert$.
  It follows that $\atg$ also satisfies the condition \eqref{eq:def:fully:back-linked},
  and therefore, that $\atg$ is fully back-linked.
\end{proof}

\begin{remark}[generalization of `eager scope' to \lambdatg{s} over $\siglambdaij{0}{1}$ and $\siglambdaij{0}{2}$]%
    \label{rem:eag-scope:lambdatgs:siglambdaij:0}\normalfont
  The intuition for the property of a \lambdatg\ $\atg$ to be `fully back-linked' is that,
  for every vertex $\bvert$ of $\atg$ with a non-empty \absprefix\ $\abspre{\bvert} = \stringcon{\apre}{\avert}$,
  it is possible to get back to the final abstraction vertex $\avert$ 
  in the \absprefix\ of $\bvert$
  by a directed path via vertices in the scope of $\avert$
  and via a last edge that is a back-binding from a variable or a delimiter vertex to the abstraction vertex $\avert$. 
  Therefore the presence of either sort of back-binding in \lambdatg{s} is crucial for this concept.
  Indeed,
  at least back-bindings for variables have to be present
  so that the property to be fully back-linked can make sense for a \lambdatg:
  all \lambdatg{s} with variable vertices but without variable back-links
  (for example, consider a representation of the \lambdaterm\ $\labs{\avar}{\avar}$ by such a term graph) 
  are not fully back-linked. 
  
  The situation is different for the property of being an eager-scope \lambdatg,
  because its intuition is that,
  for every non-delimiter vertex $\bvert$ of $\atg$ with a non-empty \absprefix\ $\abspre{\bvert} = \stringcon{\apre}{\avert}$,
  a variable vertex that represents a variable occurrence $\bverti{n}$ bound by the abstraction in $\avert$ 
  (and hence with $\abspre{\bverti{n}} = \stringcon{\apre}{\avert}$)
  can be reached from $\bvert$ via a directed path in $\atg$.
  While the presence of back-bindings was used for the definition of this property
  in \eqref{eq:def:eager:scope}, the assumption that all variable vertices have back-bindings
  is not essential here. 
  In fact the condition \eqref{eq:def:eager:scope} can be generalized 
  to apply also to \lambdatg{s} over $\siglambdaij{0}{1}$ and $\siglambdaij{0}{2}$, 
  by expressing the intuition just sketched, namely as follows:
  \begin{equation}\label{eq:def:eager:scope:without:var:backlinks}
    \left.
    \begin{aligned}
    \forallstzero{\bvert,\avert\in\verts}\, 
      \forallstzero{\apre\in\verts^*\!}
                    &
                    \abspre{\bvert} = \stringcon{\apre}{\avert}
                      \;\:\slogand\:\;
                    \bvert\notin\vertsof{\snlvarsucc}
                    \\
                    & \hspace*{3ex}
                   \Rightarrow\;\;
                   \existsstzero{n\in\nats}\existsstzero{\bverti{0},\ldots,\bverti{n}\in\verts} \;
                    \begin{aligned}[t]
                      & \bvert =\bverti{0} \tgsucc \bverti{1} \tgsucc \ldots \tgsucc \bverti{n} 
                      \\[-0.25ex]
                      & 
                          \;\:\slogand\:\;
                        \bverti{n} \in\vertsof{\snlvar}
                          \;\:\slogand\:\; 
                        \abspre{\bverti{n}} 
                                            = \stringcon{\apre}{\avert}
                      \\[-0.25ex]   
                      & 
                          \;\:\slogand\:\;
                        \forallst{0\le i\le n}
                                 {\stringcon{\apre}{\avert} = \abspre{\bvert} \le \abspre{\bverti{i}}} \punc{.}
                    \end{aligned}
    \end{aligned}          
    \;\;\;\;\right\}          
  \end{equation}
  For \lambdatg{s} over $\siglambdaij{1}{1}$ and $\siglambdaij{1}{2}$
  the conditions \eqref{eq:def:eager:scope} and \eqref{eq:def:eager:scope:without:var:backlinks} coincide:
  For the implication 
  \mbox{$\eqref{eq:def:eager:scope:without:var:backlinks} \Rightarrow \eqref{eq:def:eager:scope}$}
  note that
  the correctness condition $(\snlvar)_1$ on the \absprefix\ functions
  of these \lambdatg{s} 
  yields that the statements
  $\bverti{n} \in\vertsof{\snlvar}$
  and 
  $\abspre{\bverti{n}} = \abspre{\bvert} = \stringcon{\apre}{\avert}$
  imply that
  $\bverti{n} \stgsucci{0} \avert$.
  For the implication
  $\text{\eqref{eq:def:eager:scope}} \Rightarrow \text{\eqref{eq:def:eager:scope:without:var:backlinks}}$
  observe that
  $\bverti{n} \in\vertsof{\snlvar}$, $\bverti{n} \stgsucci{0} \avert$,
  and $\abspre{\bverti{n}} \prele \stringcon{\apre}{\avert}$
  implies 
  $\abspre{\bverti{n}} = \abspre{\bvert} = \stringcon{\apre}{\avert}$
  in view of the condition $(\snlvar)_1$ and Lemma~\ref{lem:ltg}. 
  
\end{remark}

\begin{proposition}\label{prop:hom:ltgs:preserve:reflect:eagscope:fb}
  Homomorphisms between \lambdatg{s} in $\classltgsij{1}{j}$, where $j\in\setexp{1,2}$, 
  preserve and reflect the properties `eager scope' and `fully back-linked'. 
\end{proposition}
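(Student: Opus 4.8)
The plan is to fix two \lambdatg{s} $\atgi{1},\atgi{2}\in\classltgsij{1}{j}$ together with their (unique, by Lemma~\ref{lem:ltg}(\ref{lem:ltg:item:vi})) correct abstraction-prefix functions $\sabsprei{1},\sabsprei{2}$, and a homomorphism $\sahom\funin\vertsi{1}\to\vertsi{2}$ between them. Everything rests on Proposition~\ref{prop:hom:image:absprefix:function:ltgs}, which supplies the identity $\funap{\sahomext}{\absprei{1}{\bvert}}=\absprei{2}{\ahom{\bvert}}$ for all $\bvert\in\vertsi{1}$, combined with two elementary facts about the word extension $\sahomext$: it is \emph{length-preserving}, $\length{\funap{\sahomext}{w}}=\length{w}$, and it is \emph{monotone for the prefix order}, i.e.\ $u\le v$ implies $\funap{\sahomext}{u}\le\funap{\sahomext}{v}$ (since $\sahomext$ is a monoid homomorphism on $\verts^*$). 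I would also use throughout that $\sahom$ preserves labels, is surjective, and transports paths forward and lifts them backward with matching edge-indices (Proposition~\ref{prop:hom:tgs:paths}, (\ref{prop:hom:tgs:paths:item:i})--(\ref{prop:hom:tgs:paths:item:iii})).

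\textbf{Preservation.} For both properties I would map a witnessing path forward. Given the hypothesis of the target condition at a vertex $\bvertacc\in\vertsi{2}$ with $\absprei{2}{\bvertacc}=\stringcon{\apreacc}{\avertacc}$ (and, for eager scope, $\bvertacc\notin\vertsiof{2}{\snlvarsucc}$), I choose by surjectivity a $\bvert$ with $\ahom{\bvert}=\bvertacc$; then $\bvert\notin\vertsiof{1}{\snlvarsucc}$ because labels are preserved, and $\funap{\sahomext}{\absprei{1}{\bvert}}=\stringcon{\apreacc}{\avertacc}$ forces $\absprei{1}{\bvert}$ to be nonempty, say $\absprei{1}{\bvert}=\stringcon{\apre}{\avert}$ with $\ahom{\avert}=\avertacc$ and $\funap{\sahomext}{\apre}=\apreacc$. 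Applying the source property at $\bvert$ produces a path in $\atgi{1}$; its image under $\sahom$ is a path in $\atgi{2}$ (Proposition~\ref{prop:hom:tgs:paths}(\ref{prop:hom:tgs:paths:item:i})) that starts at $\bvertacc$ and, for fully back-linkedness, ends exactly at $\avertacc=\ahom{\avert}$. The terminal-vertex requirement (a variable vertex, for eager scope) transfers because labels are preserved, and each in-scope inequality $\stringcon{\apre}{\avert}\le\absprei{1}{\bverti{i}}$ transfers to $\stringcon{\apreacc}{\avertacc}\le\absprei{2}{\ahom{\bverti{i}}}$ by monotonicity of $\sahomext$ and the homomorphic-image identity. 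Hence $\atgi{2}$ has the property.

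\textbf{Reflection, eager scope.} Here I would lift backward and control the prefixes by a length/stack argument. Given $\bvert\in\vertsi{1}\setminus\vertsiof{1}{\snlvarsucc}$ with $\absprei{1}{\bvert}=\stringcon{\apre}{\avert}$, set $\bvertacc=\ahom{\bvert}$, so $\absprei{2}{\bvertacc}=\stringcon{\funap{\sahomext}{\apre}}{\ahom{\avert}}$, and apply eager scope of $\atgi{2}$ in the equivalent form \eqref{eq:def:eager:scope:without:var:backlinks} (valid since $i=1$), which only requires reaching a \emph{variable} vertex with abstraction prefix $\absprei{2}{\bvertacc}$, not a fixed vertex. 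Lifting the resulting path back through $\bvert$ (Proposition~\ref{prop:hom:tgs:paths}(\ref{prop:hom:tgs:paths:item:ii})) yields $\bvert=\bverti{0}\tgsucc\cdots\tgsucc\bverti{n}$; the terminal $\bverti{n}$ is a variable vertex because labels are reflected. For the prefixes, the target in-scope condition plus length-preservation of $\sahomext$ give $\length{\absprei{1}{\bverti{i}}}\ge\length{\absprei{1}{\bvert}}$ for all $i$; since $\sabspre$ behaves like a stack along any path (Lemma~\ref{lem:ltg}(\ref{lem:ltg:item:iv})), a prefix of length $\length{\absprei{1}{\bvert}}$ is never popped, so $\absprei{1}{\bvert}\le\absprei{1}{\bverti{i}}$ throughout, and the equality of lengths at the end gives $\absprei{1}{\bverti{n}}=\absprei{1}{\bvert}=\stringcon{\apre}{\avert}$. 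This is precisely \eqref{eq:def:eager:scope:without:var:backlinks} for $\atgi{1}$, and the same bookkeeping reflects the in-scope part of fully back-linkedness.

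\textbf{Reflection, fully back-linked --- the main obstacle.} The genuinely non-routine point is the \emph{endpoint}: lifting the target path gives a final vertex $\bverti{n+1}$ with $\ahom{\bverti{n+1}}=\ahom{\avert}$, but since $\sahom$ need not be injective this does not at once yield $\bverti{n+1}=\avert$, whereas \eqref{eq:def:fully:back-linked} demands a path ending precisely at $\avert$. The length/stack analysis shows the final step is a \emph{pop} removing the last letter $\avert$ of $\absprei{1}{\bverti{n}}=\stringcon{\apre}{\avert}$, so $\bverti{n}\in\vertsiof{1}{\snlvar}\cup\vertsiof{1}{\snlvarsucc}$. I would close the gap by cases on the final edge: if $\bverti{n}$ is a variable vertex, $(\snlvar)_1$ forces its $\stgsucci{0}$-successor to be the last prefix letter, i.e.\ $\bverti{n+1}=\avert$; if $\bverti{n}$ is a delimiter reached along its back-link edge, $(\snlvarsucc)_2$ forces $\bverti{n+1}=\avert$; and if $\bverti{n}$ is a delimiter whose lifted edge is the $\stgsucci{0}$-edge, I re-route the last step through the back-link $\bverti{n}\tgsucci{1}\avert$, which exists by $(\snlvarsucc)_2$ because $\avert$ is the last letter of $\absprei{1}{\bverti{n}}$. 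This re-routing is exactly where delimiter back-links (the case $j=2$) are essential: when $j=2$ every popping delimiter can reach the abstraction it closes directly, so the lifted path can always be completed at $\avert$ and $\atgi{1}$ is fully back-linked. I expect this endpoint/back-link step to be the crux, the remaining verifications being the routine label-, length-, and prefix-bookkeeping sketched above.
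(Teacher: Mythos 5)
Your proposal is correct, and on the only part the paper proves in detail --- preservation of the eager-scope property --- it follows the same route: choose a pre-image by surjectivity (Proposition~\ref{prop:hom:tgs:paths},~(\ref{prop:hom:tgs:paths:item:iii})), transport abstraction prefixes with Proposition~\ref{prop:hom:image:absprefix:function:ltgs}, push the witnessing path forward with Proposition~\ref{prop:hom:tgs:paths},~(\ref{prop:hom:tgs:paths:item:i}), and transfer the label and prefix-extension conditions using that $\sahomext$ is length-preserving and prefix-monotone. The paper settles everything else in one sentence (preservation of `fully back-linked' is ``analogous'', reflection ``can be demonstrated similarly'' via path pre-images), whereas you actually carry out the reflection arguments; your stack/length bookkeeping via Lemma~\ref{lem:ltg},~(\ref{lem:ltg:item:iv}) together with the variant condition \eqref{eq:def:eager:scope:without:var:backlinks} is sound and is precisely the detail the paper leaves implicit.

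The substantive finding is your analysis of reflection of `fully back-linked'. The endpoint problem you isolate is real, your case analysis with re-routing through the delimiter back-link (justified by $(\snlvarsucc)_2$) settles it for $j=2$, and your suspicion that $j=2$ is essential there is correct in the strongest sense: for $j=1$ reflection of fully-back-linkedness is actually false, so the proposition as stated overclaims. Counterexample over $\siglambdaij{1}{1}$: let the root be an application vertex whose successors are two abstraction vertices $\averti{1},\averti{2}$; the body of $\averti{k}$ is an application vertex whose first successor is a variable vertex back-linked to $\averti{k}$ and whose second successor is a delimiter vertex $\bverti{k}$ whose only successor is the \emph{other} abstraction $\averti{3-k}$. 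The prefix of $\bverti{k}$ is the one-letter word $\averti{k}$ and its successor has empty prefix, so $(\snlvarsucc)_1$ holds and this is a \lambdatg\ (indeed an eager-scope one), but it is not fully back-linked: the only edge leaving $\bverti{k}$ exits the scope of $\averti{k}$ and does not reach $\averti{k}$. Its bisimulation collapse identifies the two halves, giving a \lambdatg\ over $\siglambdaij{1}{1}$ whose single delimiter vertex points directly at the single abstraction vertex; that collapse \emph{is} fully back-linked, and the collapse map is a homomorphism between \lambdatgs. So the statement is provable exactly in the form you prove it: in full for $j=2$, and for $j=1$ only with reflection of `fully back-linked' excluded. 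This does not damage the paper's main line, since Theorem~\ref{thm:preserve:ltgs} uses only the preservation direction over $\siglambdaij{1}{2}$.
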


\begin{proof}
  We only show preservation under homomorphic image of the property `\eagscope' for \lambdatgs,
  since preservation of the property `\fb' can be shown analogously and involves less technicalities. 
  Also, reflection of these properties under homomorphism can be demonstrated similarly.
  
  Let $j\in\setexp{1,2}$. 
  Let $\altgi{1} = \tuple{\vertsi{1},\svlabi{1},\svargsi{1},\srooti{1}}$
  and $\altgi{2} = \tuple{\vertsi{2},\svlabi{2},\svargsi{2},\srooti{2}}$
  be \lambdatgs\ over $\siglambdaij{1}{j}$ 
  with (correct) \absprefix\ functions $\sabsprei{1}$ and $\sabsprei{2}$, respectively. 
  Let $\sahom \funin \vertsi{1} \to \vertsi{2}$ be a homomorphism from $\altgi{1}$ to $\altgi{2}$,
  and suppose that $\altgi{1}$ is \eagscope.
  We show that also $\altgi{2}$ is an \eagscope\ \lambdatg.
  
  For this, 
  let $\bvertacc\in\vertsi{2}$ 
  such that 
  $\bvertacc\notin\vertsof{\snlvarsucc}$
  and $\absprei{2}{\bvertacc} = \stringcon{\apreacc}{\avertacc}$ for some $\avertacc\in\vertsi{2}$ and $\apreacc\in(\vertsi{2})^*$.
  Since $\sahom$ is surjective
  by Proposition~\ref{prop:hom:tgs:paths},~(\ref{prop:hom:tgs:paths:item:iii}), 
  there exists a vertex $\bvert\in\vertsi{1}$ such that $\ahom{\bvert} = \bvertacc$.
  Now note that, due to Proposition~\ref{prop:hom:image:absprefix:function:ltgs},
  $\sabsprei{2}$ is the homomorphic image of $\sabsprei{1}$.
  It follows that $\funap{\bar{\sahom}}{\absprei{1}{\bvert}} = \absprei{2}{\ahom{\bvert}} = \absprei{2}{\bvertacc} = \stringcon{\apreacc}{\avertacc}$.
  Hence there exist $\avert\in\vertsi{1}$ and $\apre\in(\vertsi{1})^*$ 
  such that $\absprei{1}{\bvert} = \stringcon{\apre}{\avert}$ and $\funap{\bar{\sahom}}{\apre} = \apreacc$ and $\ahom{\avert} = \avertacc$.
  Since $\atgi{1}$ is an \eagscope\ \lambdatg, there exists a path in $\altgi{1}$ of the form:
  \begin{equation*}
    \apath \;\funin\;
      \bvert 
        =
      \bverti{0}
        \tgsucc
      \bverti{1}
        \tgsucc
      \ldots
        \tgsucc
      \bverti{n}
        \tgsucci{0}
      \avert          
  \end{equation*}
  such that $\bverti{n}\in\vertsiof{1}{\snlvar}$ 
                 and $\absprei{1}{\bverti{i}} \prege \stringcon{\apre}{\avert}$
                 for all $i\in\setexp{0,1,\ldots,n}$.
  As $\sahom$ is a homomorphism, it follows from Proposition~\ref{prop:hom:tgs:paths},~(\ref{prop:hom:tgs:paths:item:i})
  that $\apath$ has an image $\ahom{\apath}$ in $\altgi{2}$ of the form:
  \begin{equation*}
    \ahom{\apath} \;\funin\;
      \bvertacc
        =
      \ahom{\bvert} 
        =
      \ahom{\bverti{0}}
        \tgsuccacc
      \ahom{\bverti{1}}
        \tgsuccacc
      \ldots
        \tgsuccacc
      \ahom{\bverti{n}}
        \tgsuccacci{0}
      \ahom{\avert} = \avertacc          
  \end{equation*}               
  where $\stgsuccacc$ is the directed-edge relation in $\altgi{2}$.
  Using again that $\sahom$ is a homomorphism, it follows that $\ahom{\bverti{n}}\in\vertsiof{2}{\snlvar}$.
  Due to the fact that $\sabsprei{2}$ is the homomorphic image of $\sabsprei{1}$
  it follows that
  for all $i\in\setexp{0,1,\ldots,n}$ it holds:
  $\absprei{2}{\ahom{\bverti{i}}}
     = 
   \funap{\bar{\sahom}}{\absprei{1}{\bverti{i}}}
     \prege
   \funap{\bar{\sahom}}{\stringcon{\apre}{\avert}}
     = 
   \stringcon{\funap{\bar{\sahom}}{\apre}}{\ahom{\avert}}
     =
   \stringcon{\apreacc}{\avertacc}$.
  Hence we have shown that for
  $\bvertacci{i} \defdby \ahom{\bverti{i}}\in\vertsi{2}$
  with $i\in\setexp{0,1,\ldots,n}$
  it holds that
  $\bvertacc 
     =
   \bvertacci{0}
     \tgsuccacc
   \bvertacci{1}
     \tgsuccacc
   \ldots
     \tgsuccacc
   \bvertacci{n}
     \tgsuccacci{0}
   \avertacc$
  such that $\bvertacci{n}\in\vertsiof{2}{\snlvar}$
  and 
  $\absprei{2}{\bvertacci{i}} \prege \stringcon{\apreacc}{\bvertacc}$
  for all $i\in\setexp{0,1,\ldots,n}$.
  In this way we have shown that also $\altgi{2}$ has the `\eagscope' property. 
  
  Note that for showing that the `\eagscope' property is reflected by a homomorphism $\sahom$ between \lambdatgs\ $\altgi{1}$ and $\altgi{2}$
  the fact that paths in $\altgi{2}$ have pre-images under $\sahom$ in $\altg$,
  see Proposition~\ref{prop:hom:tgs:paths},~(\ref{prop:hom:tgs:paths:item:ii}),
  can be used. 
\end{proof}

The following proposition states that the defining conditions for
a \lambdatg\ to be \eagscope\ or \fb\ 
are equivalent to respective generalized conditions.
The generalized condition for eager-scopedness requires that
for every non-delimiter vertex $\bvert$ with $\sabspre{\bvert} = \stringcon{\apre}{\stringcon{\avert}{\bpre}}$  
there exists a path from $\bvert$ to $\avert$ within the scope of $\avert$
that only transits variable-vertex back-links, but not delimiter-vertex back-links.
In the generalized condition for fully-backlinkedness the conclusion, 
in which the path may then also proceed via delimiter-vertex back-links,
holds for all vertices.

\begin{proposition}\label{prop:eager:scope:fully:back-linked:pumped}
  Let $\atg$ be a \lambdatg\ over $\siglambdaij{1}{2}$,
  and let $\sabspre$ be its \absprefix\ function.
  \begin{enumerate}[(i)]
    \item{}\label{prop:eager:scope:fully:back-linked:pumped:item:eager:scope}
      $\atg$ is an eager-scope \lambdatg\ if and only if:  
      \begin{equation}\label{eq:eager:scope:pumped}
        \left.
        \begin{aligned}
          \forallstzero{\bvert,\avert\in\verts}\, 
          \forallstzero{\apre,\bpre\in\verts^*\!}
                        \abspre{\bvert} = \stringcon{\apre}{\stringcon{\avert}{\bpre}} 
                        &
                          \;\:\slogand\:\;  
                        \bvert\notin\vertsof{\snlvarsucc}
                        \\
                        &
                      \begin{aligned}[t]  
                        \;\;\Rightarrow\;\;
                        & 
                        \existsstzero{n\in\nats}\existsstzero{\bverti{0},\ldots,\bverti{n}\in\verts} 
                        \\[-0.25ex]
                        & \hspace*{3ex}
                          \bvert = \bverti{0} \tgsucc \bverti{1} \tgsucc \ldots \tgsucc \bverti{n} \tgsucc \avert
                        \\[-0.25ex]
                        & \hspace*{3ex}
                          \bverti{n} \in\vertsof{\snlvar}
                            \;\:\slogand\:\;
                          \forallst{0\le i\le n}
                                   {\stringcon{\apre}{\avert} \le \abspre{\bverti{i}}} 
                          \punc{,}          
                      \end{aligned}           
        \end{aligned}          
        \;\;\;\;\right\}          
      \end{equation}
    \item{}\label{prop:eager:scope:fully:back-linked:pumped:item:fully:back-linked}
      The \lambdatg~$\atg$ is fully back-linked if and only if: 
      \begin{equation}\label{eq:fully:back-linked:pumped}
        \left.
        \begin{aligned}
          \forallstzero{\bvert,\avert\in\verts}\, 
          \forallstzero{\apre,\bpre\in\verts^*\!}
                        \abspre{\bvert} = \stringcon{\apre}{\stringcon{\avert}{\bpre}}
                        \;\;\Rightarrow\;\;
                        & 
                        \existsstzero{n\in\nats}\existsstzero{\bverti{0},\ldots,\bverti{n}\in\verts} 
                        \\[-0.25ex]
                        & \hspace*{3ex}
                          \bvert = \bverti{0} \tgsucc \bverti{1} \tgsucc \ldots \tgsucc \bverti{n} \tgsucc \avert
                        \\[-0.25ex]
                        & \hspace*{3ex}
                            \;\:\slogand\:\;
                          \forallst{0\le i\le n}
                                       {\stringcon{\apre}{\avert} \le \abspre{\bverti{i}}}
                      \punc{,}
        \end{aligned}          
        \;\;\;\;\right\}          
      \end{equation}
  \end{enumerate}
\end{proposition}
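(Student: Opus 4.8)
The plan is to prove both biconditionals by a single \emph{pumping} induction, treating~(\ref{prop:eager:scope:fully:back-linked:pumped:item:eager:scope}) and~(\ref{prop:eager:scope:fully:back-linked:pumped:item:fully:back-linked}) in parallel. In each case one direction is immediate: the pumped condition \eqref{eq:eager:scope:pumped} (respectively \eqref{eq:fully:back-linked:pumped}) formally strengthens the defining condition \eqref{eq:def:eager:scope} (respectively \eqref{eq:def:fully:back-linked}), and the latter is precisely the instance $\bpre = \emptyword$ of the former. The only point to check for eager scope is that the defining condition asks for a final edge $\bverti{n}\tgsucci{0}\avert$ whereas the pumped condition only requires $\bverti{n}\tgsucc\avert$; but since $\bverti{n}\in\vertsof{\snlvar}$ is a variable vertex of arity one, its unique outgoing edge is the index-$0$ back-link, so $\bverti{n}\tgsucc\avert$ already entails $\bverti{n}\tgsucci{0}\avert$.

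For the substantial direction of~(\ref{prop:eager:scope:fully:back-linked:pumped:item:eager:scope}), I assume the defining condition and prove the pumped one by induction on the length of the suffix $\bpre$ in $\abspre{\bvert} = \stringcon{\apre}{\stringcon{\avert}{\bpre}}$. The base case $\bpre = \emptyword$ is exactly the defining condition. For the step, write $\bpre = \stringcon{\bpre'}{\cvert}$, so that $\cvert$ is the innermost (last) vertex of $\abspre{\bvert}$; by Lemma~\ref{lem:ltg}, which carries over Lemma~\ref{lem:laphotgs},~(\ref{lem:laphotg:item:i})--(\ref{lem:laphotg:item:ii}), we have $\cvert\in\vertsof{\sslabs}$ and $\abspre{\cvert} = \stringcon{\apre}{\stringcon{\avert}{\bpre'}}$. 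Applying the defining eager-scope condition to $\bvert$ (whose prefix ends in $\cvert$, and which is non-delimiter by hypothesis) yields a path $\bvert \tgsucc \cdots \tgsucc \bverti{n} \tgsucci{0} \cvert$ with $\bverti{n}\in\vertsof{\snlvar}$ and all prefixes $\prege \abspre{\bvert}$. Since $\cvert$ is a non-delimiter vertex whose prefix exhibits $\avert$ with the strictly shorter suffix $\bpre'$, the induction hypothesis provides a path $\cvert \tgsucc \cdots \tgsucc \cvertacc \tgsucc \avert$ with $\cvertacc\in\vertsof{\snlvar}$ and all prefixes $\prege \stringcon{\apre}{\avert}$. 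Concatenating gives the required path from $\bvert$ to $\avert$: its penultimate vertex $\cvertacc$ is a variable vertex, and (see below) every vertex carries a prefix $\prege \stringcon{\apre}{\avert}$.

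For part~(\ref{prop:eager:scope:fully:back-linked:pumped:item:fully:back-linked}) the same induction applies verbatim and is in fact slightly simpler, since the fully-back-linked condition imposes no variable-vertex requirement on the penultimate vertex; only the prefix-extension bookkeeping has to be carried along. Here the last edge of the first segment may instead be a delimiter back-link, which is available because $j = 2$ (and licensed by condition $(\snlvarsucc)_2$).

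I expect the main obstacle to be organisational rather than conceptual: verifying that the prefix-extension invariant $\abspre{\cdot}\prege\stringcon{\apre}{\avert}$ survives the concatenation. On the first segment the prefixes extend $\abspre{\bvert} = \stringcon{\apre}{\stringcon{\avert}{\stringcon{\bpre'}{\cvert}}}$, which itself extends $\stringcon{\apre}{\avert}$, so the bound is inherited for the interior vertices; the variable endpoint $\bverti{n}$ is handled by condition $(\snlvar)_1$, which forces $\abspre{\bverti{n}} = \stringcon{\abspre{\cvert}}{\cvert} = \abspre{\bvert}$. On the second segment the bound holds directly by the induction hypothesis. A secondary point to get right is the appeal to Lemma~\ref{lem:ltg} identifying $\abspre{\cvert}$ as $\abspre{\bvert}$ with its final letter $\cvert$ deleted, which is exactly what licenses lowering the induction measure from $\stringcon{\bpre'}{\cvert}$ to $\bpre'$.
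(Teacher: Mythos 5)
Your proposal is correct and takes essentially the same route as the paper's proof: the easy direction by specializing $\bpre = \emptyword$ (together with the arity-one observation that turns $\bverti{n} \tgsucc \avert$ into $\bverti{n} \tgsucci{0} \avert$), and the substantial direction by induction on $\length{\bpre}$, splicing two path segments at an intermediate abstraction vertex whose \absprefix\ is identified via Lemma~\ref{lem:ltg},~(i), with the endpoint bookkeeping via $(\snlvar)_1$ handled just as it needs to be. The only differences are organisational and immaterial: the paper peels the letter of $\bpre$ adjacent to $\avert$, applying the induction hypothesis to the first segment and the defining condition to the second, whereas you peel the innermost letter $\cvert$ of the whole prefix, applying the defining condition first and the induction hypothesis second; and the paper writes out part~(\ref{prop:eager:scope:fully:back-linked:pumped:item:fully:back-linked}) in detail while calling~(\ref{prop:eager:scope:fully:back-linked:pumped:item:eager:scope}) an easy adaptation, whereas you do the reverse.
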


\begin{proof}
  We will only prove the statement~(\ref{prop:eager:scope:fully:back-linked:pumped:item:fully:back-linked}),
  since the proof of statement~(\ref{prop:eager:scope:fully:back-linked:pumped:item:eager:scope})
  is an easy adaptation of the proof of the former that is given below. 
  
  For showing statement~(\ref{prop:eager:scope:fully:back-linked:pumped:item:fully:back-linked}),
  let $\atg$ be an arbitrary \lambdatg\ over $\siglambdaij{1}{j}$ with $j\in\setexp{1,2}$.
  
  If \eqref{eq:fully:back-linked:pumped} holds for $\atg$, then 
  so does its special case \eqref{eq:def:fully:back-linked},
  and it follows that $\atg$ is fully back-linked.
  
  For showing the converse we assume that $\atg$ is fully back-linked.
  Then \eqref{eq:def:fully:back-linked} holds.
  We show \eqref{eq:fully:back-linked:pumped},
  which is a statement with universal quantification over $\bpre\in\verts^*$,
  by induction on the length $\length{\bpre}$ of $\bpre$.
  
  In the base case we have $\bpre = \emptyword$, and the statement to be establish follows 
  immediately from \eqref{eq:def:fully:back-linked}.
  
  For carrying out the induction step,
  let $\apre,\bpre\in\verts^*$ and $\bvert,\avert\in\verts$ be such that
  $\abspre{\bvert} = \stringcon{\apre}{\stringcon{\avert}{\bpre}}$ 
  with $\bpre = \stringcon{\avert'}{\bprei{0}}$ for $\avert'\in\verts$ and $\bprei{0}\in\verts^*$.  
  Due to $\length{\bprei{0}} < \length{\bpre}$,
  the induction hypothesis can be applied  
  to $\abspre{\bvert} = \stringcon{(\stringcon{\apre}{\avert})}{\stringcon{\avert'}{\bprei{0}}}$,
  yielding a path
  $\bvert \pathto \bvertacc \tgsucc \avertacc$, for some $\bvertacc\in\verts$,
  that in its part between $\bvert$ and $\bvertacc$ visits vertices with $\stringcon{\apre}{\stringcon{\avert}{\avertacc}}$ 
  as prefix of their abstraction prefixes.
  Due to $\stringcon{\apre}{\stringcon{\avert}{\avertacc}} \prele \abspre{\bvertacc}$ 
  it follows that 
  $\abspre{\avertacc} = \stringcon{\apre}{\avert}$
  by Lemma~\ref{lem:ltg}, (i)
  (which entails that  
   $\stringcon{\apre}{\stringcon{\avert}{\avertacc}} = \abspre{\bvertacc}$ 
   and that $\bvertacc \tgsucc \avertacc$ must be a back-link).
  From this
  the condition \eqref{eq:def:fully:back-linked}
  yields a path $\avert' \pathto \bvert'' \tgsucc \avert$, for some $\bvert''\in\verts$,
  that in its part between $\avert'$ and $\bvert''$ visits vertices with $\stringcon{\apre}{\avert}$ 
  as prefix of their abstraction prefixes.
  Combining these two paths, we obtain a path
  $\bvert \pathto \bvert' \tgsucc \avert' \pathto \bvert'' \tgsucc \avert$
  that until $\bvert''$ visits only vertices with $\stringcon{\apre}{\avert}$ as prefix of their abstraction prefixes.
  This establishes the statement to show for the induction step. 
\end{proof}



As the main lemma for the preservation result of \eagscope\ and \fb\ \lambdatgs\ under homomorphisms,
Theorem~\ref{thm:preserve:ltgs} below, we will show the following statement:
for vertices $\bverti{1}$ and $\bverti{2}$ in a \lambdatg~$\atg$ with \absprefix\ function $\sabspre$
that have the same image under a homomorphism~$\sahom$
also the images under $\sahom$ of their abstraction prefixes $\abspre{\bverti{1}}$ and $\abspre{\bverti{2}}$ are the same.

\begin{lemma}\label{lem:preserve:ltgs}
  Let $\atg\in\classltgsij{1}{2}$ 
  be a fully back-linked \lambdatg\  
  with vertex set $\verts$. Let $\sabspre$ be its \absprefix\ function. 
  Let $\atgacc\in\classtgssiglambdaij{1}{2}$ be a term graph over $\siglambdaij{1}{2}$
  such that $\atg \funbisimi{\sahom} \atgacc$ for a functional bisimulation~$\sahom$.   
  Then it holds:
  \begin{equation}\label{eq:lem:preserve:ltgs}
    \forallst{\bverti{1},\bverti{2}\in\verts}
             {\; 
              \ahom{\bverti{1}} = \ahom{\bverti{2}}
                \;\;\Rightarrow\;\;
              \ahomext{\abspre{\bverti{1}}} = \ahomext{\abspre{\bverti{2}}}} \punc{,}
  \end{equation}
  where $\sahomext$ is the homomorphic extension of $\sahom$ to words over $\verts$.
\end{lemma}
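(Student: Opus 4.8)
The plan is to prove \eqref{eq:lem:preserve:ltgs} by induction on $\length{\abspre{\bverti{1}}}$, reducing everything to a single \emph{back-link tracing} step: if $\ahom{\bverti{1}} = \ahom{\bverti{2}}$ and $\abspre{\bverti{1}} = \stringcon{\apre_1}{\avert_1}$ is nonempty (so $\avert_1$ is the innermost binder of $\bverti{1}$), then $\abspre{\bverti{2}}$ is also nonempty, say $\stringcon{\apre_2}{\avert_2}$, and $\ahom{\avert_1} = \ahom{\avert_2}$. Granting this, the induction closes at once: by the \lambdatg-analogue of Lemma~\ref{lem:laphotgs},~(\ref{lem:laphotg:item:i}) (available through Lemma~\ref{lem:ltg}) one has $\abspre{\avert_1} = \apre_1$ and $\abspre{\avert_2} = \apre_2$, both of strictly smaller length; since $\ahom{\avert_1} = \ahom{\avert_2}$, the induction hypothesis applied to $\avert_1,\avert_2$ gives $\ahomext{\apre_1} = \ahomext{\apre_2}$, and appending $\ahom{\avert_1} = \ahom{\avert_2}$ yields $\ahomext{\abspre{\bverti{1}}} = \ahomext{\abspre{\bverti{2}}}$. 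The base case $\abspre{\bverti{1}} = \emptyword$ is subsumed by the same argument, which simultaneously shows that $\abspre{\bverti{2}}$ cannot be nonempty in this situation.

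The backbone of the key step is a \emph{stack-profile} observation. Along any edge $\bvert \tgsucci{k} \bvertacc$ of a \lambdatg\ the change $\length{\abspre{\bvertacc}} - \length{\abspre{\bvert}}$ depends only on the pair $(\vlab{\bvert},k)$: it is $+1$ for $(\sslabs,0)$, it is $0$ when $\vlab{\bvert} = \sslapp$, and it is $-1$ in the three remaining cases (a variable back-link, a delimiter continuation via $(\snlvarsucc)_1$, and a delimiter back-link via $(\snlvarsucc)_2$). This is immediate from Lemma~\ref{lem:ltg},~(\ref{lem:ltg:item:iv}) together with $(\snlvarsucc)_2$. Since a homomorphism preserves vertex labels and edge indices (Proposition~\ref{prop:hom:tgs}), two paths that have the same image under $\sahom$ carry identical sequences of label/index pairs, hence identical prefix-length profiles relative to their respective starting vertices.

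For the crux itself I would proceed as follows. With $\abspre{\bverti{1}} = \stringcon{\apre_1}{\avert_1}$, full back-linkedness \eqref{eq:def:fully:back-linked} supplies a path $\apath$ from $\bverti{1}$ to $\avert_1$ all of whose vertices except the last satisfy $\stringcon{\apre_1}{\avert_1} \prele \abspre{\cdot}$, the last vertex $\avert_1$ having prefix $\apre_1$. By the profile observation only variable and delimiter vertices can pop, so the penultimate vertex is one of these and has prefix exactly $\stringcon{\apre_1}{\avert_1}$; I may therefore take the final edge to be a genuine back-link (a variable's index-$0$ edge, or a delimiter's index-$1$ edge, both of which target the innermost binder $\avert_1$ by $(\snlvar)_1$ resp.\ $(\snlvarsucc)_2$). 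I then take the image path $\ahom{\apath}$ in $\atgacc$, which starts at $\ahom{\bverti{1}} = \ahom{\bverti{2}}$ (Proposition~\ref{prop:hom:tgs:paths},~(\ref{prop:hom:tgs:paths:item:i})), and lift it back along $\bverti{2}$ using Proposition~\ref{prop:hom:tgs:paths},~(\ref{prop:hom:tgs:paths:item:ii}), obtaining a path $\apathacc$ in $\atg$ from $\bverti{2}$ with $\ahom{\apathacc} = \ahom{\apath}$ and the same label/index trace, hence the same relative prefix-length profile: it stays at relative length $\ge 0$ until its last vertex and drops by one only at the final edge. Since the abstraction prefix of $\atg$ behaves as a stack, staying $\ge 0$ forces the bottom segment $\abspre{\bverti{2}}$ to be preserved throughout; in particular $\abspre{\bverti{2}}$ is nonempty, say $\stringcon{\apre_2}{\avert_2}$, its penultimate vertex has prefix exactly $\stringcon{\apre_2}{\avert_2}$, and the final back-link edge lands on the innermost binder $\avert_2$. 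Thus $\apathacc$ ends in $\avert_2$, and $\ahom{\apathacc} = \ahom{\apath}$ forces $\ahom{\avert_2} = \ahom{\avert_1}$, as desired.

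I expect the lifting argument to be the main obstacle: one must recover, from the purely length-theoretic data transported by $\sahom$, the stack/LIFO behaviour that confines the lifted path $\apathacc$ to a single scope and makes it terminate at an abstraction vertex --- all this despite the fact that $\atgacc$ need not be a \lambdatg\ and carries no abstraction prefixes of its own. A secondary point that needs care is the degenerate possibility that the given path $\apath$ reaches $\avert_1$ through a delimiter \emph{continuation} edge $(\snlvarsucc)_1$ rather than a back-link; I would circumvent it by re-routing through that delimiter's index-$1$ back-link to the same innermost binder, and it is precisely here that the hypothesis $j = 2$ is used.
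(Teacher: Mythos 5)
Your proof is correct, and its engine is the same as the paper's: full back-linkedness supplies a witnessing path from $\bverti{1}$ to the innermost binder $\averti{1}$; you push it forward with Proposition~\ref{prop:hom:tgs:paths},~(\ref{prop:hom:tgs:paths:item:i}) and lift it at $\bverti{2}$ with Proposition~\ref{prop:hom:tgs:paths},~(\ref{prop:hom:tgs:paths:item:ii}); label preservation plus the stack behaviour of Lemma~\ref{lem:ltg},~(\ref{lem:ltg:item:iv}) gives identical prefix-length profiles; and the conditions $(\snlvar)_1$/$(\snlvarsucc)_2$ pin the lifted path's endpoint down as the innermost binder $\averti{2}$ of $\bverti{2}$, so $\ahom{\averti{1}} = \ahom{\averti{2}}$. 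This is exactly the content of the paper's Lemma~\ref{lem:preserve:paths:ltgs}, specialized to the case $\bprei{1} = \emptyword$. Where you genuinely diverge is in the outer assembly. The paper proves Lemma~\ref{lem:preserve:paths:ltgs} for an arbitrary decomposition $\abspre{\bverti{1}} = \stringcon{\aprei{1}}{\stringcon{\averti{1}}{\bprei{1}}}$, which requires the \emph{pumped} back-link condition of Proposition~\ref{prop:eager:scope:fully:back-linked:pumped},~(\ref{prop:eager:scope:fully:back-linked:pumped:item:fully:back-linked}); it then matches the two prefixes position by position and finishes with a symmetry argument to equalize their lengths. You instead treat only the innermost position and recurse: by Lemma~\ref{lem:ltg}, item~(i) (inherited from Lemma~\ref{lem:laphotgs},~(\ref{lem:laphotg:item:i})), the binders themselves satisfy $\abspre{\averti{1}} = \aprei{1}$ and $\abspre{\averti{2}} = \aprei{2}$, so the pair $(\averti{1},\averti{2})$ has equal $\sahom$-images and strictly shorter prefixes, and induction on $\length{\abspre{\bverti{1}}}$ closes the argument (with symmetry needed only in the base case). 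This buys you a proof that uses only the basic condition \eqref{eq:def:fully:back-linked} and dispenses with Proposition~\ref{prop:eager:scope:fully:back-linked:pumped} altogether, at the price of invoking Lemma~\ref{lem:ltg}, item~(i) to restart the induction at binder vertices. A further point in your favour: your re-routing of a final delimiter \emph{continuation} edge ($(\snlvarsucc)_1$, index $0$) through that delimiter's index-$1$ back-link repairs a small inaccuracy in the paper's own argument, which asserts outright that the witnessing path's last transition ``must be'' a back-link, overlooking that an index-$0$ delimiter edge could coincidentally target $\averti{1}$; your fix is exactly what is needed, and it is indeed the place where $j=2$ is essential.
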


In order to establish this lemma we will, as a stepping stone, prove the following technical lemma.
In its proof the generalized condition
for fully back-linkedness from Proposition~\ref{prop:eager:scope:fully:back-linked:pumped}
is used to vertex-wise relate the \absprefix{es} on corresponding paths in a \lambdatg\ that
depart from vertices $\bverti{1}$ and $\bverti{2}$ with the same image under a homomorphism.

\begin{lemma}\label{lem:preserve:paths:ltgs}
  Let $\atg = \tuple{\verts,\svlab,\svargs,\sroot}$ be a \lambdatg\ over $\siglambdaij{1}{j}\,$, for $j\in\setexp{1,2}$,
  i.e.\ $\atg\in\classltgsij{1}{j}$.
  Let $\sabspre$ be its \absprefix\ function. 
  Let $\atgacc = \tuple{\vertsacc,\svlabacc,\svargsacc,\srootacc}$ 
  be a term graph over $\siglambdaij{1}{j}$ (i.e.\ $\atgacc\in\classtgssiglambdaij{1}{j}\,$, thus not assumed to be a \lambdatg)
  such that $\atg \funbisimi{\sahom} \atgacc$ for a functional bisimulation~$\sahom$. 
  
  Let $\bverti{1},\bverti{2}\in\verts$ be such that $\ahom{\bverti{1}} = \ahom{\bverti{2}}$.  
  
  Suppose that
  $\abspre{\bverti{1}} = \stringcon{\aprei{1}}{\stringcon{\averti{1}}{\bprei{1}}}$
  for $\averti{1}\in\verts$, and $\aprei{1},\bprei{1}\in\verts^*$,
  and that $\apathi{1}$ is a path from $\bverti{1}$ in $\atg$ of the form
  $\apathi{1} \funin 
                 \bverti{1} = \bverti{1,0}
                   \tgsucci{k_1} 
                 \bverti{1,1} 
                   \tgsucci{k_2} 
                 \bverti{1,2} 
                   \tgsucci{k_3} 
                     \cdots
                   \tgsucci{k_{n-1}}   
                 \bverti{1,n-1} 
                   \tgsucci{k_n}
                 \bverti{1,n}
                   =  
                 \averti{1}$,
  with some $k_0,k_1,\ldots,k_n\in\setexp{0,1}$,               
  is a path in $\atg$               
  such that furthermore 
  $\abspre{\bverti{1,n-1}} = \stringcon{\aprei{1}}{\averti{1}}$
  holds.
  
  Then there are
  $\averti{2}\in\verts$ and $\aprei{2},\bprei{2}\in\verts^*$
  with $\length{\bprei{2}} = \length{\bprei{1}}$ 
  such that
  $\abspre{\bverti{2}} = \stringcon{\aprei{2}}{\stringcon{\averti{2}}{\bprei{2}}}$,
  and 
  a path $\apathi{2}$ in $\atg$ from $\bverti{2}$ of the form
  $\apathi{2} \funin 
                 \bverti{2} = \bverti{2,0}
                   \tgsucci{k_1} 
                 \bverti{2,1} 
                   \tgsucci{k_2} 
                 \bverti{2,2} 
                   \tgsucci{k_3} 
                     \cdots
                   \tgsucci{k_{n-1}}   
                 \bverti{2,n-1} 
                   \tgsucci{k_{n}} 
                 \bverti{2,n}
                   = 
                 \averti{2}$
  such that
  $\abspre{\bverti{2,n-1}} = \stringcon{\aprei{2}}{\averti{2}}$,
  and that
  $\ahom{\bverti{1,j}} = \ahom{\bverti{2,j}}$ for all $j\in\setexp{0,1,\ldots,n}$,
  and in particular
  $\ahom{\averti{1}} = \ahom{\averti{2}}$.
\end{lemma}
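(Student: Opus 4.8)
The plan is to build the path $\apathi{2}$ by transporting $\apathi{1}$ across $\sahom$ twice --- first forward into $\atgacc$, then backward into $\atg$ --- and then to read off the abstraction-prefix decomposition from the stack behaviour of $\sabspre$ along the two parallel paths. First I would push $\apathi{1}$ forward: by Proposition~\ref{prop:hom:tgs:paths},~(\ref{prop:hom:tgs:paths:item:i}) its image $\ahom{\apathi{1}}$ is a path in $\atgacc$ with the same index sequence $k_1,\ldots,k_n$ that starts in $\ahom{\bverti{1}}$. Since $\ahom{\bverti{2}} = \ahom{\bverti{1}}$, this path also starts in $\ahom{\bverti{2}}$, so by Proposition~\ref{prop:hom:tgs:paths},~(\ref{prop:hom:tgs:paths:item:ii}) it has a unique pre-image under $\sahom$ in $\atg$ that starts in $\bverti{2}$; call it $\apathi{2} \funin \bverti{2} = \bverti{2,0} \tgsucci{k_1} \cdots \tgsucci{k_n} \bverti{2,n}$, and put $\averti{2} \defdby \bverti{2,n}$. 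Because $\apathi{2}$ maps onto $\ahom{\apathi{1}}$, we immediately obtain $\ahom{\bverti{2,j}} = \ahom{\bverti{1,j}}$ for all $j$, and in particular $\ahom{\averti{1}} = \ahom{\averti{2}}$; this already settles the last claim of the lemma.

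Next I would exploit that $\sahom$ preserves labels, so $\vlab{\bverti{1,j}} = \vlab{\bverti{2,j}}$ for every $j$, and that the two paths carry the \emph{same} index sequence. By the stack description of $\sabspre$ (Lemma~\ref{lem:ltg},~(\ref{lem:ltg:item:iv})) both paths therefore perform exactly the same sequence of operations on their abstraction prefixes (push at a $\sslabs$-vertex, pop at a $\snlvar$- or $\snlvarsucc$-vertex, no change at a $\sslapp$-vertex), so the length difference $\length{\abspre{\bverti{2,j}}} - \length{\abspre{\bverti{1,j}}}$ is independent of $j$. For the final step, $\abspre{\bverti{1,n-1}} = \aprei{1}\averti{1}$ together with $\bverti{1,n} = \averti{1}$ gives $\abspre{\averti{1}} = \aprei{1}$ by Lemma~\ref{lem:laphotgs},~(\ref{lem:laphotg:item:i}); comparing lengths then rules out the labels $\sslapp$ and $\sslabs$ for $\bverti{1,n-1}$, so $\bverti{1,n-1} \in \vertsof{\snlvar} \cup \vertsof{\snlvarsucc}$ and the traversed edge is the back-link to the abstraction $\averti{1}$ it pops. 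Since $\bverti{2,n-1}$ has the same label and the same out-index $k_n$, the correctness conditions $(\snlvar)_1$ and $(\snlvarsucc)_2$ apply verbatim to it and yield $\abspre{\bverti{2,n-1}} = \abspre{\averti{2}}\,\averti{2}$; thus $\abspre{\bverti{2,n-1}}$ ends in $\averti{2}$, and I may set $\aprei{2} \defdby \abspre{\averti{2}}$.

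It remains to show that $\abspre{\bverti{2,n-1}} = \aprei{2}\averti{2}$ is a genuine prefix of $\abspre{\bverti{2}} = \abspre{\bverti{2,0}}$. Once this is in hand, defining $\bprei{2}$ as the overhang and combining with the constant length difference above gives $\abspre{\bverti{2}} = \aprei{2}\averti{2}\bprei{2}$ with $\length{\bprei{2}} = \length{\abspre{\bverti{2,0}}} - \length{\abspre{\bverti{2,n-1}}} = \length{\abspre{\bverti{1,0}}} - \length{\abspre{\bverti{1,n-1}}} = \length{\bprei{1}}$, which completes the proof. This prefix property is the main obstacle, and it is exactly where the ``within scope'' character of $\apathi{1}$ enters: reading the common operation sequence as a stack computation, $\abspre{\bverti{2,n-1}}$ is a bottom segment of $\abspre{\bverti{2,0}}$ if and only if the minimal prefix-length reached along $\apathi{2}$ equals $\length{\abspre{\bverti{2,n-1}}}$, i.e.\ if and only if the segment $\aprei{2}\averti{2}$ is never popped. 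By the constant length offset this minimal-height condition holds for $\apathi{2}$ precisely when it holds for $\apathi{1}$, that is, when $\aprei{1}\averti{1} \prele \abspre{\bverti{1,i}}$ for all $i \le n-1$.

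I expect this last monotonicity to be the genuinely delicate point, since without it the path could transiently leave the scope of $\averti{1}$ (by revisiting $\averti{1}$ itself) and re-enter it pushing different vertices, so that the two final prefixes need not be prefixes of their starting prefixes. For the paths produced by the generalised fully-back-linked condition (Proposition~\ref{prop:eager:scope:fully:back-linked:pumped},~(\ref{prop:eager:scope:fully:back-linked:pumped:item:fully:back-linked})), through which this lemma is applied in Lemma~\ref{lem:preserve:ltgs}, this monotonicity is part of the conclusion and is simply carried along. More intrinsically it follows from the \lambdatg\ structure via Lemma~\ref{lem:laphotgs},~(\ref{lem:laphotg:item:i}): every occurrence of $\averti{1}$ in a prefix is preceded by exactly $\abspre{\averti{1}} = \aprei{1}$, so the bottom segment $\aprei{1}\averti{1}$ can only leave a prefix by the path passing through $\averti{1}$ again, and within the scope of $\averti{1}$ this does not happen. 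Transporting the resulting minimal-height equality across the length offset yields the desired prefix relation for $\apathi{2}$.
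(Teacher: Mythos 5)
Your main line is the same as the paper's: transport $\apathi{1}$ into $\atgacc$ and back via Proposition~\ref{prop:hom:tgs:paths},~(\ref{prop:hom:tgs:paths:item:i}) and (\ref{prop:hom:tgs:paths:item:ii}), match labels along the two paths, do the length bookkeeping with the stack behaviour of Lemma~\ref{lem:ltg},~(\ref{lem:ltg:item:iv}), and identify the endpoint of $\apathi{2}$ with the top of $\abspre{\bverti{2,n-1}}$ through the correctness conditions $(\snlvar)_1$/$(\snlvarsucc)_2$. You have moreover put your finger on exactly the step that the paper's own proof leaves implicit: that $\abspre{\bverti{2,n-1}}$ is a bottom segment of $\abspre{\bverti{2}}$ does \emph{not} follow from equality of lengths alone, but needs the prefix length along $\apathi{1}$ never to drop below $\length{\stringcon{\aprei{1}}{\averti{1}}}$ before index $n-1$, i.e.\ the monotonicity $\stringcon{\aprei{1}}{\averti{1}} \prele \abspre{\bverti{1,i}}$ for all $i \le n-1$; the paper buries this in the words ``quantitatively behaves the same \ldots\ It follows that''.

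The genuine gap is your closing attempt to derive this monotonicity ``more intrinsically'' from the stated hypotheses: the clause ``within the scope of $\averti{1}$ this does not happen'' presupposes that the path stays inside the scope of $\averti{1}$, which is precisely what is to be proved, so the argument is circular. And it cannot be repaired, because without a monotonicity hypothesis the statement of the lemma is actually false. Concretely, let $\atg$ over $\siglambdaij{1}{2}$ have a root application vertex $r$ with successors $a_1, a_2$; abstraction vertices $a_1,a_2,a_3$, each with body an application vertex $b_i$ whose successors are a delimiter $d_i$ and a variable $v_i$; back-links $v_i \tgsucci{0} a_i$ and $d_i \tgsucci{1} a_i$; and continuation edges $d_1 \tgsucci{0} a_3$, $d_2 \tgsucci{0} a_3$, $d_3 \tgsucci{0} a_1$. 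This is a correct \lambdatg\ (even eager-scope and fully back-linked), and folding the three copies onto one yields a homomorphism $\sahom$ into a term graph $\atgacc$ with $\ahom{b_1} = \ahom{b_2}$. The path $b_1 \tgsucci{0} d_1 \tgsucci{0} a_3 \tgsucci{0} b_3 \tgsucci{0} d_3 \tgsucci{0} a_1 \tgsucci{0} b_1 \tgsucci{1} v_1 \tgsucci{0} a_1$ satisfies every hypothesis of the lemma (with $\aprei{1} = \bprei{1} = \emptyword$ and $\averti{1} = a_1$), but it leaves the scope of $a_1$ and re-enters through $a_1$; the unique path from $b_2$ with the same index sequence and $\sahom$-matching vertices ends at $a_1$, which does not occur in $\abspre{b_2} = a_2$, so the required decomposition of $\abspre{\bverti{2}}$ does not exist. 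The failure is thus caused by pathological \emph{paths}, not pathological graphs, and the only sound reading is the one you yourself give first: monotonicity must be added as a hypothesis on $\apathi{1}$ --- which is harmless, since the fully back-linked condition supplies it verbatim in the lemma's only application, Lemma~\ref{lem:preserve:ltgs} --- and with it your minimal-height transfer argument completes the proof. One further soft spot you share with the paper: the assertion that the last edge of $\apathi{1}$ ``must be'' a back-link is not forced either, since a delimiter's continuation edge (index $0$) can also happen to land on $\averti{1}$, and then the endpoint of $\apathi{2}$ need not be the top of $\abspre{\bverti{2,n-1}}$; for $j=2$ this is repaired by rerouting the last step through the delimiter's back-link (index $1$), which by $(\snlvarsucc)_2$ has the same target, so this assumption can also be made without loss of generality for the intended application.
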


\begin{proof}
  Let $\atg = \tuple{\verts,\svlab,\svargs,\sroot}$, $\atgacc = \tuple{\vertsacc,\svlabacc,\svargsacc,\srootacc}$, $\sahom$,
  and $\bverti{1},\bverti{2}\in\verts$ with $\ahom{\bverti{1}} = \ahom{\bverti{2}}$ as in the lemma. 
  Furthermore,  suppose that
  $\abspre{\bverti{1}} = \stringcon{\aprei{1}}{\stringcon{\averti{1}}{\bprei{1}}}$
  with $\averti{1}\in\verts$, and $\aprei{1},\bprei{1}\in\verts^*$,
  and a path:
  \begin{equation*}
    \apathi{1} \funin 
                 \bverti{1} = \bverti{1,0}
                   \tgsucci{k_1} 
                 \bverti{1,1} 
                   \tgsucci{k_2} 
                 \bverti{1,2} 
                   \tgsucci{k_3} 
                     \cdots
                   \tgsucci{k_{n-1}}   
                 \bverti{1,n-1} 
                   \tgsucci{k_n}
                 \bverti{1,n}
                   =  
                 \averti{1}
  \end{equation*}
  from $\bverti{1}$ in $\atg$    
  with $\abspre{\bverti{1,n-1}} = \stringcon{\aprei{1}}{\averti{1}}$.
  
  By Lemma~\ref{lem:ltg}, (i), 
  from $\abspre{\bverti{1,n-1}} = \stringcon{\aprei{1}}{\averti{1}}$ it follows that $\abspre{\averti{1}} = \aprei{1}$.  
  Hence the final transition in $\apathi{1}$ must be one via the back-binding
  of a variable vertex or a delimiter vertex. 
  Therefore
  either $\bverti{1,n-1}\in\vertsof{\snlvar}$ and $k_n = 0$,
  or $\bverti{1,n-1}\in\vertsof{\snlvarsucc}$ and $k_n = 1$. 
  Furthermore by Proposition~\ref{prop:hom:tgs:paths},~(\ref{prop:hom:tgs:paths:item:i}), 
  it follows that there is a path in $\atgacc$ from $\ahom{\bverti{1}}$ of the form:
  \begin{equation*}
    \ahom{\apathi{1}} \funin 
                 \ahom{\bverti{1}} = \ahom{\bverti{1,0}}
                   \tgsucci{k_1} 
                 \ahom{\bverti{1,1}} 
                   \tgsucci{k_2} 
                 \ahom{\bverti{1,2}} 
                   \tgsucci{k_3} 
                     \cdots
                   \tgsucci{k_{n-1}}   
                 \ahom{\bverti{1,n-1}} 
                   \tgsucci{k_n}
                 \ahom{\bverti{1,n}}
                   =  
                 \ahom{\averti{1}}
  \end{equation*}
  From this path, Proposition~\ref{prop:hom:tgs:paths},~(\ref{prop:hom:tgs:paths:item:ii}),
  yields a path $\apathi{2}$ in $\atg$ from $\bverti{2}$ of the form: 
  \begin{equation*}
    \apathi{2} \funin 
                 \bverti{2} = \bverti{2,0}
                   \tgsucci{k_1} 
                 \bverti{2,1} 
                   \tgsucci{k_2} 
                 \bverti{2,2} 
                   \tgsucci{k_3} 
                     \cdots
                   \tgsucci{k_{n-1}}   
                 \bverti{2,n-1} 
                   \tgsucci{k_n}
                 \bverti{2,n}
                   =  
                 \averti{2}
  \end{equation*}
  such that $\ahom{\apathi{2}} = \ahom{\apathi{1}}$.
  Therefore it holds
  $\ahom{\bverti{1,j}} = \ahom{\bverti{2,j}}$ for all $j\in\setexp{1,\ldots,n}$,
  and in particular
  $\ahom{\averti{1}} = \ahom{\averti{2}}$.
  Since $\sahom$ is a homomorphism also
  $\vlab{\bverti{1,j}} = \vlab{\bverti{2,j}}$ follows for all $j\in\setexp{0,1,\ldots,n}$.
  Therefore, and due to Lemma~\ref{lem:ltg},~\ref{lem:ltg:item:iv}, 
  the \absprefix\ function $\sabspre$ quantitatively behaves the same  
  when stepping through $\apathi{2}$
  as when stepping through $\apathi{1}$.
  It follows
  that $\abspre{\bverti{2}} = \stringcon{\aprei{2}}{\stringcon{\avert}{\bprei{2}}}$,
       $\abspre{\bverti{2,n-1}} = \stringcon{\aprei{2}}{\avert}$,
       and
       $\abspre{\averti{2}} = \aprei{2}$
  for some $\avert\in\verts$, and $\aprei{2},\bprei{2}\in\verts^*$ with $\length{\bprei{2}} = \length{\bprei{1}}$.  
  Due to 
  $\vlab{\bverti{2,n-1}} = \vlab{\bverti{1,n-1}}$,
  and since the final transition 
  $\bverti{1,n-1} \tgsucci{k_n} \averti{1}$ in $\apathi{1}$  
  is a one along a back-binding of a variable or delimiter vertex,
  this also holds for the final transition
  $\bverti{2,n-1} \tgsucci{k_n} \averti{2}$ in $\apathi{2}$.
  Then it follows by the correctness condition $(\snlvar)_1$ or $(\snlvarsucc)_2$, respectively,
  that $\avert = \averti{2}$,
  and therefore that 
  $\abspre{\bverti{2}} = \stringcon{\aprei{2}}{\stringcon{\averti{2}}{\bprei{2}}}$,
  and 
  $\abspre{\bverti{2,n-1}} = \stringcon{\aprei{2}}{\averti{2}}$.
\end{proof}

Relying on this lemma, we can now give a rather straightforward proof of Lemma~\ref{lem:preserve:ltgs}.

\vspace{0.75ex}
\begin{proofof}{Proof of Lemma~\ref{lem:preserve:ltgs}}
  Let $\atg$, $\atgacc$ be as assumed in the lemma,
  and let $\sahom$ be a functional bisimulation that witnesses $\atg \funbisimi{\sahom} \atgacc$. 

  We first show:
  \begin{equation}\label{eq1:prf:lem:preserve:ltgs}
  \left.  
  \begin{aligned} 
    & 
    \forallstzero{\bverti{1},\bverti{2}\in\verts}
    \forallstzero{\averti{1}\in\verts}
    \forallstzero{\aprei{1},\bprei{1}\in\verts^*}
              \;
              \ahom{\bverti{1}} = \ahom{\bverti{2}}
                \;\logand\;
              \abspre{\bverti{1}} = \stringcon{\aprei{1}}{\stringcon{\averti{1}}{\bprei{1}}}
      \\
      & \hspace*{15ex}   
      \;\;\Longrightarrow\;\; 
        \existsstzero{\averti{2}\in\verts}
        \existsstzero{\aprei{2},\bprei{2}\in\verts^*}
            \abspre{\bverti{2}} = \stringcon{\aprei{2}}{\stringcon{\averti{2}}{\bprei{2}}} 
              \;\logand\;
            \ahom{\averti{1}} = \ahom{\averti{2}}
              \;\logand\;
            \length{\bprei{2}} = \length{\bprei{1}}
  \end{aligned}
  \;\;\;\;\right.
  \end{equation}
  For this, let $\bverti{1},\bverti{2}\in\verts$ be such that
  $\ahom{\bverti{1}} = \ahom{\bverti{2}}$ and 
  $\abspre{\bverti{1}} = \stringcon{\aprei{1}}{\stringcon{\averti{1}}{\bprei{1}}}$
  for $\aprei{1},\bprei{1}\in\verts^*$ and $\averti{1}\in\verts$.
  Now since $\atg$ is fully back-linked,
  there is a path  
  $\apathi{1} \funin 
                 \bverti{1} = \bverti{1,0}
                   \tgsuccstar   
                 \bverti{1,n-1} 
                   \tgsucc
                 \bverti{1,n}
                   =  
                 \averti{1}$
  in $\atg$ with, in particular, $\abspre{\bverti{1,n-1}} = \stringcon{\aprei{1}}{\averti{1}}$.
  Then Lemma~\ref{lem:preserve:paths:ltgs} 
  yields the existence of $\aprei{1},\bprei{1}\in\verts^*$ and $\averti{1}\in\verts$
  with
  $\length{\bprei{2}} = \length{\bprei{1}}$
  such that furthermore
  $\ahom{\averti{1}} = \ahom{\averti{2}}$, and
  $\abspre{\bverti{2}} = \stringcon{\aprei{2}}{\stringcon{\averti{2}}{\bprei{2}}}$.
  This establishes \eqref{eq1:prf:lem:preserve:ltgs}.

  As an easy consequence of \eqref{eq1:prf:lem:preserve:ltgs} we then obtain:
  \begin{equation*}
  \left.  
  \begin{aligned} 
    \forallstzero{\bverti{1},\bverti{2}\in\verts}
    \forallstzero{\cprei{1}\in\verts^*}
              \; &
              \ahom{\bverti{1}} = \ahom{\bverti{2}}
                \;\logand\;
              \abspre{\bverti{1}} = \cprei{1} 
      \\
      & 
      \;\;\Longrightarrow\;\; 
        \existsstzero{\dprei{2},\cprei{2}\in\verts^*}
            \abspre{\bverti{2}} = \stringcon{\dprei{2}}{\cprei{2}}
              \;\logand\;
            \ahomext{\cprei{1}} = \ahomext{\cprei{2}}
  \end{aligned}
  \;\;\;\;\right.
  \end{equation*}
  But since in this statement the roles of $\bverti{1}$ and $\bverti{2}$ can be exchanged,
  it follows that
  $\ahom{\bverti{1}} = \ahom{\bverti{2}}$
  always entails 
  $\length{\abspre{\bverti{1}}} = \length{\abspre{\bverti{2}}}$,
  and hence
  $\ahomext{\abspre{\bverti{1}}} = \ahomext{\abspre{\bverti{2}}}$.
  This establishes \eqref{eq:lem:preserve:ltgs}.
\end{proofof}

Lemma~\ref{lem:preserve:ltgs} is the crucial stepping stone for the proof of the following theorem,
our main theorem.

\begin{theorem}
    [preservation under $\sfunbisim$ of \lambdatg{s} over $\siglambdaij{1}{2}$ in subclasses]
    \label{thm:preserve:ltgs}%
  Let $\altg$ be a \lambdatg\ over $\siglambdaij{1}{2}$ (that is, $\atg\in\classtgssiglambdaij{1}{2}$),
  and suppose that $\sahom$ is a functional bisimulation from $\altg$ 
  to a term graph $\atgacc$ over $\siglambdaij{1}{2}$
  (i.e.\ $\atgacc\in\classtgssiglambdaij{1}{2}\,$, and $\sahom$ witnesses $\atg \funbisimi{\sahom} \atgacc$). 
  %
  Then the following two statements hold:
  \begin{enumerate}[(i)]
    \item{}\label{thm:preserve:ltgs:item:i}
            If $\atg$ is fully back-linked, then also $\atgacc$ is a \lambdatg\ (i.e.\ $\atgacc\in\classltgsij{1}{2}$),
      which is fully back-linked.
    \item{}\label{thm:preserve:ltgs:item:ii}
            If $\atg$ is eager-scope, then also $\atgacc$ is a \lambdatg\ (i.e.\ $\atgacc\in\classltgsij{1}{2}$),
      which is eager-scope as well.
  \end{enumerate}
\end{theorem}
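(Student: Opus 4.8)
The plan is to reduce both statements to two facts that are essentially already in hand: first, that $\atgacc$ genuinely \emph{admits} a correct abstraction-prefix function (so that $\atgacc\in\classltgsij{1}{2}$ actually holds, not merely $\atgacc\in\classtgssiglambdaij{1}{2}$); and second, that \emph{once both} $\atg$ and $\atgacc$ are known to be \lambdatgs, Proposition~\ref{prop:hom:ltgs:preserve:reflect:eagscope:fb} transports the properties `fully back-linked' and `eager scope' along $\sahom$. The whole novelty of the theorem over that proposition lies in the first fact: the proposition presupposes its target to be a \lambdatg, whereas here $\atgacc$ starts life as an arbitrary term graph over $\siglambdaij{1}{2}$.

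For statement~(\ref{thm:preserve:ltgs:item:i}) I would assume $\atg$ fully back-linked and build a candidate abstraction-prefix function on $\atgacc$ by pushing $\sabspre$ forward along $\sahom$. Since $\sahom$ is surjective by Proposition~\ref{prop:hom:tgs:paths},~(\ref{prop:hom:tgs:paths:item:iii}), every $\bvertacc\in\vertsacc$ has at least one preimage $\bvert\in\verts$, and I set $\sabspreacc(\bvertacc) \defdby \ahomext{\abspre{\bvert}}$. That this does not depend on the chosen preimage is exactly the content of Lemma~\ref{lem:preserve:ltgs}: $\ahom{\bverti{1}} = \ahom{\bverti{2}}$ implies $\ahomext{\abspre{\bverti{1}}} = \ahomext{\abspre{\bverti{2}}}$, which is where full back-linkedness of $\atg$ is consumed. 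By construction $\sabspreacc$ then satisfies $\scompfuns{\sabspreacc}{\sahom} = \scompfuns{\bar{\sahom}}{\sabspre}$, i.e.\ $\sabspreacc$ is the homomorphic image of $\sabspre$ under $\sahom$ in the sense of Definition~\ref{def:hom:image:absprefix:function:ltgs}. As $\sabspre$ is correct for $\atg$, Lemma~\ref{lem:hom:image:absprefix:function:ltgs},~(\ref{lem:hom:image:absprefix:function:ltgs:item:i}) gives that $\sabspreacc$ is correct for $\atgacc$; hence $\atgacc$ admits a correct abstraction-prefix function and $\atgacc\in\classltgsij{1}{2}$. With both $\atg$ and $\atgacc$ now \lambdatgs, Proposition~\ref{prop:hom:ltgs:preserve:reflect:eagscope:fb} applies and yields that $\atgacc$ inherits `fully back-linked' from $\atg$.

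For statement~(\ref{thm:preserve:ltgs:item:ii}) I would reduce to the first statement. If $\atg$ is eager-scope, then by Proposition~\ref{prop:eager:scope:fully:back-linked} it is also fully back-linked, so statement~(\ref{thm:preserve:ltgs:item:i}) already delivers $\atgacc\in\classltgsij{1}{2}$. Again both graphs are \lambdatgs, and Proposition~\ref{prop:hom:ltgs:preserve:reflect:eagscope:fb} now transports the `eager scope' property from $\atg$ to $\atgacc$.

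The main obstacle is precisely the well-definedness of $\sabspreacc$: a priori, two vertices of $\atg$ collapsed by $\sahom$ could carry abstraction prefixes with differing $\sahom$-images, which would wreck the definition — and this is exactly what fails in the non-eager counterexample $\atgi{1}'''$ of Proposition~\ref{prop:ltgs:not:closed:under:funbisim:convfunbisim},~(\ref{prop:ltgs:not:closed:under:funbisim:convfunbisim:item:iv}). That obstacle is cleared by Lemma~\ref{lem:preserve:ltgs}, whose proof in turn rests on the path-transfer Lemma~\ref{lem:preserve:paths:ltgs} together with the generalized fully-back-linked condition of Proposition~\ref{prop:eager:scope:fully:back-linked:pumped}. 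So the deep work genuinely sits in those supporting lemmas, and the theorem itself amounts to packaging them with the correctness-transfer Lemma~\ref{lem:hom:image:absprefix:function:ltgs} and the preservation Proposition~\ref{prop:hom:ltgs:preserve:reflect:eagscope:fb}.
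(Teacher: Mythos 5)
Your proposal is correct and follows essentially the same route as the paper's own proof: you construct the abstraction-prefix function on $\atgacc$ by pushing $\sabspre$ forward along the (surjective) homomorphism, secure well-definedness via Lemma~\ref{lem:preserve:ltgs}, transfer correctness via Lemma~\ref{lem:hom:image:absprefix:function:ltgs}, invoke Proposition~\ref{prop:hom:ltgs:preserve:reflect:eagscope:fb} for preservation of the two properties, and reduce statement~(\ref{thm:preserve:ltgs:item:ii}) to statement~(\ref{thm:preserve:ltgs:item:i}) through Proposition~\ref{prop:eager:scope:fully:back-linked}. No gaps; this matches the paper's argument step for step.
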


\begin{proof} 
  Let 
  $\altg = \tuple{\verts,\svlab,\svargs,\sroot} \in\classltgsij{1}{2}$
  and 
  $\atgacc = \tuple{\vertsacc,\svlabacc,\svargsacc,\srootacc}\in\classtgssiglambdaij{1}{2}$,
  and $\sahom$ a homomorphism from $\altg$ to $\atgacc$, i.e.\ 
  $\altg \funbisimi{\sahom} \atgacc$ holds.
  
  For showing statement~(\ref{thm:preserve:ltgs:item:i}) of the theorem, 
  we suppose that $\altg$ is fully back-linked.
  On $\atgacc$ we define the following \absprefix\ function:
  \begin{equation}\label{eq1:prf:thm:preserve:ltgs}
    \sabspreacc \funin \vertsacc \to (\vertsacc)^* ,
      \hspace*{2ex}
    \bvertacc \mapsto \abspreacc{\bvertacc} 
                          \defdby
                        \ahomext{\abspre{\bvert}} 
                        \text{ for $\bvert\in\verts$ (arbitrary) with $\ahom{\bvert} = \bvertacc$.} 
  \end{equation}
  This function is well-defined because:
  first,
  for every $\bvertacc\in\vertsacc$ there exists a $\bvert\in\verts$ with $\bvertacc = \ahom{\bvert}$,
  due to Proposition~\ref{prop:hom:tgs:paths},~(\ref{prop:hom:tgs:paths:item:iii}),
  and therefore $\abspreacc{\bvertacc}$ is definable by the defining clause in \eqref{eq1:prf:thm:preserve:ltgs};
  and second, due to Lemma~\ref{lem:preserve:ltgs}
  the definition of $\abspreacc{\bvertacc}$ does not depend on the particular chosen $\bvert\in\verts$ with $\bvertacc = \ahom{\bvert}$.
  
  For the so defined \absprefix\ function $\sabspreacc$ it holds:
  \begin{equation*}
    \forallstzero{\bvert\in\verts} \;\,
    \ahomext{\abspre{\bvert}} = \abspreacc{\ahom{\bvert}} \punc{,}
  \end{equation*}
  and hence $\sabspreacc$ is the homomorphic image of $\sabspre$ under $\sahom$
  in the sense of Definition~\ref{def:hom:image:absprefix:function:ltgs}. 
  
  It remains to show that $\sabspreacc$ is a correct \absprefix\ function for $\atgacc$,
  and that $\atgacc$ is fully back-linked. 
  Both properties follow from statements established earlier
  by using that $\atgacc$ and $\sabspreacc$ are the homomorphic images under $\sahom$
  of $\atg$ and $\sabspre$, respectively:
  That $\sabspreacc$ is a correct \absprefix\ function for $\atgacc$
  follows from Lemma~\ref{lem:hom:image:absprefix:function:ltgs},~(\ref{lem:hom:image:absprefix:function:ltgs:item:i})
  which entails that the homomorphic image of a correct \absprefix\ function is correct.
  And that $\atgacc$ is fully back-linked follows from
  Proposition~\ref{prop:hom:ltgs:preserve:reflect:eagscope:fb},
  which states that the homomorphic image of a fully back-linked \lambdatg\ is again fully back-linked.
  
  In this way we have established statement~(\ref{thm:preserve:ltgs:item:i}) of the theorem.
  
  For showing statement~(\ref{thm:preserve:ltgs:item:ii}),
  let $\atg$ be an \eagscope\ \lambdatg\ over $\siglambdaij{1}{2}$ .
  By Proposition~\ref{prop:eager:scope:fully:back-linked} it follows that $\atg$ is also fully back-linked.
  Therefore the just established statement~(\ref{thm:preserve:ltgs:item:i}) of the theorem
  is applicable, and it yields that $\atgacc$ is a \lambdatg. 
  Since by Proposition~\ref{prop:hom:ltgs:preserve:reflect:eagscope:fb}
  also the property of being \eagscope\ is preserved by homomorphism,
  it follows that $\atgacc$ is \eagscope, too.
  This establishes statement~(\ref{thm:preserve:ltgs:item:ii}) of the theorem. 
\end{proof}

The following corollary is a straightforward reformulation of the statement of Theorem~\ref{thm:preserve:ltgs}
as a property of the subclasses of $\classltgsij{1}{2}$ that consist 
of all fully back-linked, and respectively of all eager-scope, \lambdatg{s} over $\siglambdaij{1}{2}$.  

\begin{corollary}[preservation under $\sfunbisim$ of subclasses of \lambdatg{s} over $\siglambdaij{1}{2}$]%
    \label{cor:closed:under:fun:bisim}\label{cor:thm:preserve:ltgs} 
  The following two\\ subclasses of 
  the class $\classltgsij{1}{2}$ of all \lambdatg{s} over $\siglambdaij{1}{2}$
  are closed under functional bisimulation:
  the class $\classfblltgsij{1}{2}$ of fully back-linked \lambdatg{s}, 
  and the class $\classeagltgsij{1}{2}$ of eager-scope \lambdatg{s}.
\end{corollary}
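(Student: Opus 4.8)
The plan is to observe that this corollary is an immediate consequence of Theorem~\ref{thm:preserve:ltgs}, obtained merely by unfolding the definition of closedness under functional bisimulation given in Section~\ref{sec:prelims}. Recall that a subclass $\aclass$ is closed under $\sfunbisim$ exactly when $\atg \funbisim \atgacc$, for term graphs $\atg,\atgacc$ over the signature with $\atg\in\aclass$, forces $\atgacc\in\aclass$. Thus all that is needed is to check that each of the two subclasses in question is preserved when passing to a homomorphic image, and this is precisely what the two items of Theorem~\ref{thm:preserve:ltgs} supply.

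First I would treat the class $\classfblltgsij{1}{2}$ of fully back-linked \lambdatg{s}. Let $\atg\in\classfblltgsij{1}{2}$ and suppose that $\atg \funbisim \atgacc$ for some $\atgacc\in\classtgssiglambdaij{1}{2}$. By the definition of $\sfunbisim$ there is a functional bisimulation $\sahom$ witnessing $\atg \funbisimi{\sahom} \atgacc$. Since $\atg$ is a fully back-linked \lambdatg\ over $\siglambdaij{1}{2}$, statement~(\ref{thm:preserve:ltgs:item:i}) of Theorem~\ref{thm:preserve:ltgs} applies and yields that $\atgacc$ is itself a \lambdatg\ over $\siglambdaij{1}{2}$ that is fully back-linked, i.e.\ $\atgacc\in\classfblltgsij{1}{2}$. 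This is exactly the closedness of $\classfblltgsij{1}{2}$ under functional bisimulation. Then I would argue analogously for the class $\classeagltgsij{1}{2}$ of eager-scope \lambdatg{s}, now invoking statement~(\ref{thm:preserve:ltgs:item:ii}) of Theorem~\ref{thm:preserve:ltgs} in place of~(\ref{thm:preserve:ltgs:item:i}): from $\atg\in\classeagltgsij{1}{2}$ and $\atg \funbisimi{\sahom} \atgacc$ we obtain that $\atgacc$ is an eager-scope \lambdatg\ over $\siglambdaij{1}{2}$, hence $\atgacc\in\classeagltgsij{1}{2}$.

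Since all the substantive work has already been carried out in Theorem~\ref{thm:preserve:ltgs} (and the lemmas supporting it, chiefly Lemma~\ref{lem:preserve:ltgs}), there is no genuine obstacle here; the corollary is a pure reformulation. The only point requiring a moment's care is the bookkeeping that the target $\atgacc$ is assumed to range over $\classtgssiglambdaij{1}{2}$, so that the conclusions $\atgacc\in\classfblltgsij{1}{2}$ and $\atgacc\in\classeagltgsij{1}{2}$ indeed land in subclasses of the ambient class of term graphs over $\siglambdaij{1}{2}$, matching the definition of closedness under functional bisimulation.
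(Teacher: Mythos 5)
Your proposal is correct and matches the paper's own treatment: the paper presents this corollary as a straightforward reformulation of Theorem~\ref{thm:preserve:ltgs}, with item~(\ref{thm:preserve:ltgs:item:i}) giving closedness of $\classfblltgsij{1}{2}$ and item~(\ref{thm:preserve:ltgs:item:ii}) giving closedness of $\classeagltgsij{1}{2}$, exactly as you spell out. Your added care about the target $\atgacc$ ranging over $\classtgssiglambdaij{1}{2}$ is consistent with the definition of closedness in Section~\ref{sec:prelims} and introduces no deviation.
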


\begin{remark}\normalfont
  Note that statements analogously to Theorem~\ref{thm:preserve:ltgs} and Corollary~\ref{cor:thm:preserve:ltgs}
  do not hold for \lambdatg{s} over $\siglambdaij{1}{1}$:
  The classes
  $\classfblltgsij{1}{1}$ and $\classeagltgsij{1}{1}$ 
  are not closed under functional bisimulation.
  This is witnessed by the counterexample in the proof of 
  Proposition~\ref{prop:ltgs:not:closed:under:funbisim:convfunbisim},
      (\ref{prop:ltgs:not:closed:under:funbisim:convfunbisim:item:iii}),
  which maps an eager-scope (and hence fully-backlinked) \lambdatg\
  to a term graph that is not a \lambdatg.
  
  Similarly the statements of Theorem~\ref{thm:preserve:ltgs} and Corollary~\ref{cor:thm:preserve:ltgs}
  do not carry over to \lambdatg{s} over $\siglambdaij{0}{1}$ or over $\siglambdaij{0}{2}$
  that are eager-scope in the sense of Remark~\ref{rem:eag-scope:lambdatgs:siglambdaij:0}.
  This is witnessed by the eager-scope term graphs that are used in the proof
  Proposition~\ref{prop:ltgs:not:closed:under:funbisim:convfunbisim},
      (\ref{prop:ltgs:not:closed:under:funbisim:convfunbisim:item:i}),
      (\ref{prop:ltgs:not:closed:under:funbisim:convfunbisim:item:ii}).
\end{remark}

Another direct consequence of Theorem~\ref{thm:preserve:ltgs} is the following corollary.
Recall the notation \eqref{eq:def:eqclin:succsofordin} for $\sfunbisim$\nb-successors
of $\siso$\nb-equivalence classes of term graphs.

\begin{corollary}\label{cor:2:thm:preserve:ltgs}
  The following statements holds:
  \begin{enumerate}[(i)]
    \item 
      For every \fb\ \lambdatg~$\altg$ over $\siglambdaij{1}{2}$
      it holds:
      $\succsoford{\altgiso}{\,\sfunbisim}
         =
       \succsofordin{\altgiso}{\,\sfunbisim}{\classfblltgsij{1}{2}}
         =
       \succsofordin{\altgiso}{\,\sfunbisim}{\classltgsij{1}{2}}$.
    \item
      For every \eagscope\ \lambdatg~$\altg$ over $\siglambdaij{1}{2}$
      it holds:
      $\succsoford{\altgiso}{\,\sfunbisim}
         =
       \succsofordin{\altgiso}{\,\sfunbisim}{\classeagltgsij{1}{2}}
         =
       \succsofordin{\altgiso}{\,\sfunbisim}{\classltgsij{1}{2}}$.   
  \end{enumerate}
\end{corollary}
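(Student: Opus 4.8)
The plan is to prove each of the two chains of equalities by a cycle of set inclusions in which only one direction requires work, and that direction is an immediate consequence of Theorem~\ref{thm:preserve:ltgs}.

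First I would dispatch the easy inclusions. By its definition in \eqref{eq:def:eqclin:succsofordin} the operator $\aclass \mapsto \succsofordin{\altgiso}{\,\sfunbisim}{\aclass}$ is monotone in $\aclass$, and $\succsoford{\altgiso}{\,\sfunbisim}$ is by convention the instance with $\aclass$ the class $\classtgssiglambdaij{1}{2}$ of all term graphs over $\siglambdaij{1}{2}$. Hence from the class inclusions $\classfblltgsij{1}{2} \subseteq \classltgsij{1}{2} \subseteq \classtgssiglambdaij{1}{2}$ I obtain
\[
  \succsofordin{\altgiso}{\,\sfunbisim}{\classfblltgsij{1}{2}}
    \subseteq
  \succsofordin{\altgiso}{\,\sfunbisim}{\classltgsij{1}{2}}
    \subseteq
  \succsoford{\altgiso}{\,\sfunbisim} \, ,
\]
which yields all the ``$\subseteq$'' directions of the equalities in~(i); the analogous chain with $\classeagltgsij{1}{2}$ in place of $\classfblltgsij{1}{2}$ does the same for~(ii).

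It then remains to close each cycle with the single nontrivial inclusion, namely $\succsoford{\altgiso}{\,\sfunbisim} \subseteq \succsofordin{\altgiso}{\,\sfunbisim}{\classfblltgsij{1}{2}}$ for~(i), and $\succsoford{\altgiso}{\,\sfunbisim} \subseteq \succsofordin{\altgiso}{\,\sfunbisim}{\classeagltgsij{1}{2}}$ for~(ii). For~(i) I would take an arbitrary element $\altgacciso$ of $\succsoford{\altgiso}{\,\sfunbisim}$, so there is a witnessing representative $\altgacc$ over $\siglambdaij{1}{2}$ with $\altg \funbisimi{\sahom} \altgacc$ for some homomorphism $\sahom$. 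Since $\altg$ is fully back-linked by hypothesis, Theorem~\ref{thm:preserve:ltgs},~(\ref{thm:preserve:ltgs:item:i}) applies directly and tells us that $\altgacc$ is itself a fully back-linked \lambdatg, i.e.\ $\altgacc \in \classfblltgsij{1}{2}$. But then the very same representative $\altgacc$ witnesses $\altgacciso \in \succsofordin{\altgiso}{\,\sfunbisim}{\classfblltgsij{1}{2}}$, as required. The argument for~(ii) is verbatim the same, invoking Theorem~\ref{thm:preserve:ltgs},~(\ref{thm:preserve:ltgs:item:ii}) instead and concluding $\altgacc \in \classeagltgsij{1}{2}$.

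There is really no obstacle here: the whole mathematical content already resides in Theorem~\ref{thm:preserve:ltgs}, and the corollary is just its reformulation in terms of the $\sfunbisim$-successor sets of \eqref{eq:def:eqclin:succsofordin}. The one point I would check explicitly is that Theorem~\ref{thm:preserve:ltgs} is actually applicable to the chosen witness, which holds because $\altg \funbisim \altgacc$ forces $\altgacc$ to be a term graph over the same signature $\siglambdaij{1}{2}$ (homomorphisms preserve vertex labels), so no representative over a larger signature can occur. Note that no appeal to isomorphism-invariance is needed for the inclusions, since the witness $\altgacc$ that represents $\altgacciso$ is itself shown to lie in the smaller class.
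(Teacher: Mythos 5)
Your proof is correct and matches the paper's intent exactly: the paper offers no separate argument, calling the corollary "a direct consequence of Theorem~\ref{thm:preserve:ltgs}", and your write-up is precisely that consequence spelled out — monotonicity of $\aclass \mapsto \succsofordin{\altgiso}{\,\sfunbisim}{\aclass}$ for the trivial inclusions, and Theorem~\ref{thm:preserve:ltgs},~(\ref{thm:preserve:ltgs:item:i})/(\ref{thm:preserve:ltgs:item:ii}) applied to a witnessing representative for the nontrivial one.
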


This corollary will be central to proving, in Section~\ref{sec:transfer} 
the complete lattice property for $\sfunbisim$\nb-successors of ($\siso$\nb-equivalence classes of)
\lambdatgs\ over $\siglambdaij{1}{2}$ and \lambdaaphotgs\ over $\siglambdai{1}$.

\begin{remark}[Generalizing the result, dropping restrictions `\eagscope' or `\fb']%
              \label{rem:generalization:to:non:eagscope}
              \normalfont
              \mbox{}\\
  While it has now been established by Corollary~\ref{cor:thm:preserve:ltgs}
  that the class $\classeagltgsij{1}{2}$ of eager-scope \lambdatg{s} over $\siglambdaij{1}{2}$ 
  is closed under functional bisimulation $\sfunbisim$,
  we saw earlier 
  in Proposition~\ref{prop:ltgs:not:closed:under:funbisim:convfunbisim},~(\ref{prop:ltgs:not:closed:under:funbisim:convfunbisim:item:iv}),
  that the underlying larger class $\classltgsij{1}{2}$ of all \lambdatg{s} over $\siglambdaij{1}{2}$ 
  is not closed under $\sfunbisim$. 
  The reason for this failure is indicated by the counterexample in \eqref{eq:counterexample:classltgsij12}:
  in the scopes of the two parallel \lambdaabstractions\ above in $\atg'''_{0}$,
  which are different and uncomparable,
  the two subgraphs representing $\labs{\avar}{\avar}$ 
  are `dangling', since the scopes of the two parallel \lambdaabstractions\ above are not closed;
  therefore the mentioned subgraphs can be shared in the homomorphic image $\atg'''_{1}$, which leads to overlapping scopes,
  thus preventing the existence of a correct \absprefix\ function on $\atg'''_{1}$.  
  
  The problem disappeared by considering the eager-scope variant $\tilde{\atg}_{0}$ of $\atg'''_{0}$ in \eqref{example:classeagltgsij12}:
  in this case the scopes of the two topmost, parallel \lambdaabstractions\ above are closed as early as possible, and as a consequence
  the subgraphs can now be shared without leading to overlapping scopes. 
  
  However, there is a possibility to remedy non-closedness of $\classltgsij{1}{2}$ under $\sfunbisim$
  by going over to a slightly different \lambdatg\ representation.
  The idea is to close scopes that formally are yet unclosed when encountering a variable vertex 
  by trailing $\snlvarsucc$\nb-nodes that link back to the abstraction that is closed.  
  In doing so for the \lambdatg\ $\atg'''_{0}$ from \eqref{eq:counterexample:classltgsij12}
  we obtain the term graph ${\atg}''''_{0}$ in Figure~\ref{fig:ltgs:over:siglambdaij:2:2},
  whose bisimulation collapse ${\atg}''''_{1}$ does not exhibit any overlapping scopes: 
\begin{figure}[t]  
  \begin{equation*}
    \mbox{%
    \graphs{
          \graph{\atg''''_{1}}{lambdatgs_over_siglambda12_late_s_g1}
          \hspace*{13ex}
          \graph{\atg''''_{0}}{lambdatgs_over_siglambda12_late_s_g0}
          }}
  \end{equation*}
  \caption{\label{fig:ltgs:over:siglambdaij:2:2}
           Extending closedness under $\sfunbisim$ to non-\eagscope\ \lambdatgs\ over signature $\siglambdaij{2}{2}$
           with `late delimiter vertices':
           representation of the non-\eagscope\ \lambdatg~$\atg'''$ over $\siglambdaij{1}{2}$ from \eqref{eq:counterexample:classltgsij12}
           as a \lambdatg~$\atg''''_{1}$ over $\siglambdaij{2}{2}$,
           and its collapse, the \lambdatg~$\atg''''_{0}$ over $\siglambdaij{2}{2}$.}
\end{figure}  
  These term graphs require variable vertices with two outgoing edges, and separate trailer nodes.
  They are term graphs over the signature
  $\siglambdaij{2}{2} \defdby \siglambda \cup \setexp{ \snlvar, \snlvarsucc, \strailer }$
  with $\arity{\snlvar} = \arity{\snlvarsucc} = 2$ and $\arity{\strailer} = 0$,
  forming the class $\classtgssiglambdaij{2}{2}$ of such term graphs. 
  Accordingly, the definition of correct \absprefix\ function
  has to be adapted for term graphs over $\siglambdaij{2}{2}$:
  it extends the requirements that the conditions 
  $(\text{root})$,
  $(\sslabs)$,
  $(\sslapp)$,
  $(\snlvar)_0$,
  $(\snlvar)_1$, 
  $(\snlvarsucc)_1$,
  and 
  $(\snlvarsucc)_2$ from 
  Definition~\ref{def:abspre:function:siglambdaij} 
  hold for all $\bvert,\bverti{0},\bverti{1}\in\verts$ and $k\in\setexp{0,1}$
  by the following stipulations for the trailer vertices, and for the additional second outgoing edges of the variable vertices: 
  \begin{align*}
    \bvert\in\vertsof{\strailer}
      \;\; & \Rightarrow \;\;
    \left\{\,  
    \begin{aligned}[c]
      &
    \abspre{\bvert} = \emptyword\;\;\slogor\;\;
        \existsstzero{\bverti{0}\in\vertsof{\snlvar}} 
        \existsstzero{\bverti{1},\ldots,\bverti{n-1}\in\vertsof{\snlvarsucc}}
      \\[-0.25ex] 
      &
      \phantom{\bverti{1}\in\vertsof{\strailer}\;\;\slogor\;\;\bverti{1}\in\vertsof{\snlvarsucc}\bverti{1}\;\,\slogand\,\;}
      \hspace*{3ex}
      \bverti{0} 
        \tgsucci{1} 
      \bverti{1} 
        \tgsucci{1} 
      \ldots
        \tgsucci{1}
      \bverti{n-1}
        \tgsucci{1}
      \bvert     
    \end{aligned} 
    \;\;\right\}
    &
    (\strailer) 
    \\
    \bvert\in\vertsof{\snlvar}
      \;\logand\;
    \bvert \tgsucci{1} \bverti{1}
      \;\; & \Rightarrow \;\;
    \left\{\,  
    \begin{aligned}[c]
      &
      \abspre{\bvertbp{1}{}} \avert
          =
      \abspre{\bvert}
        \;\;   
      \text{for some $\avert\in\verts$} \hspace*{1ex}
      \\[-0.25ex]
      &
      \;\,\slogand\,\;
      \bigl(
      \bverti{1}\in\vertsof{\strailer} 
      \\[-0.5ex]
      &
      \phantom{\;\,\slogand\,\;\bigl(} 
        \;\;\slogor\;\;
      \bverti{1}\in\vertsof{\snlvarsucc}
          \;\,\slogand\,\;
        \existsstzero{\bverti{2},\ldots,\bverti{n-1}\in\vertsof{\snlvarsucc}}
        \existsstzero{\bverti{n}\in\vertsof{\strailer}} 
      \\[-0.25ex] 
      &
      \phantom{\;\,\slogand\,\;\bigl(\;\;\slogor\;\;\bverti{1}\in\vertsof{\snlvarsucc}\bverti{1}\;\,\slogand\,\;}
      \hspace*{3ex}
      \bverti{1} 
        \tgsucci{1} 
      \bverti{2} 
        \tgsucci{1} 
      \ldots
        \tgsucci{1}
      \bverti{n-1}
        \tgsucci{1}
      \bverti{n} 
      \,\bigr)   
    \end{aligned}
    \;\;\right\}  
    &
    (\snlvar)_{2}
  \end{align*}
  for all $\bvert,\bverti{1}\in\verts$.
  The condition $(\strailer)$ demands that a trailer vertex must always be reachable from
  a variable vertex by a path via zero, one, or more delimiter vertices such that all edges
  passed are edges with index~1 (second outgoing edges). 
  The condition $(\snlvar)_{2}$ requires that below a variable vertex (that is, always using the second outgoing edge)   
  only a sequence of delimiter vertices followed by a single trailer vertex can occur;
  it also demands, in conjunction with $(\snlvar)_{0}\,$, that the variable vertex itself
  acts as a delimiter vertex in that it closes the scope of the corresponding abstraction vertex. 
\end{remark}

\section{Transfer of complete lattice property to \lambdahotgs}
  \label{sec:transfer}

In this section we first establish that 
sets of $\sfunbisim$\nb-successors of (isomorphism equivalence classes of) a given \lambdatg\ over $\siglambdaij{1}{2}$
are complete lattices under the sharing order.
For this we use the fact that this is the case for first-order term graphs in general (see Proposition~\ref{prop:funbisim:succs:of:tgs:iso:complete:lattice}),
and we apply the results developed so far. 
And second, we transfer this complete lattice property to the higher-order \lambdaaphotgs\ over $\siglambdai{1}$
via the correspondences 
                        established in Section~\ref{sec:ltgs}.


The following proposition specializes Proposition~\ref{prop:funbisim:succs:of:tgs:iso:complete:lattice}
to term graphs over one of the signatures $\siglambdaij{i}{j}$. 

\begin{proposition}\label{prop:funbisim:succs:of:ltgs:iso:complete:lattice}
  $\pair{\succsoford{\altgiso}{\,\sfunbisim}}{\sfunbisim}$
  is a complete lattice
  for all $\altg\in\classtgssiglambdaij{i}{j}$
  with $i\in\setexp{0,1}$ and $j\in\setexp{1,2}$. 
\end{proposition}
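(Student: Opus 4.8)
The plan is to recognize this proposition as a direct instantiation of the general Proposition~\ref{prop:funbisim:succs:of:tgs:iso:complete:lattice} to the signatures at hand. First I would note that, for each $i\in\setexp{0,1}$ and $j\in\setexp{1,2}$, the object $\siglambdaij{i}{j}$ is an ordinary signature with an arity function in the sense of Section~\ref{sec:prelims}, and that $\classtgssiglambdaij{i}{j}$ is, by the definition at the start of Section~\ref{sec:ltgs}, precisely the class $\classtgsover{\siglambdaij{i}{j}}$ of (root-connected) term graphs over it. Hence any $\altg\in\classtgssiglambdaij{i}{j}$ is in particular a term graph over the signature obtained by taking $\asig$ to be $\siglambdaij{i}{j}$, and the notions of homomorphism $\sfunbisim$, isomorphism $\siso$, and of the successor set $\succsoford{\altgiso}{\,\sfunbisim}$ occurring in the present statement coincide verbatim with the corresponding notions of the general setting.

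Then I would simply invoke Proposition~\ref{prop:funbisim:succs:of:tgs:iso:complete:lattice} with $\asig$ chosen to be $\siglambdaij{i}{j}$. That proposition states, for an arbitrary signature and an arbitrary term graph over it, that $\pair{\succsoford{\atgiso}{\,\sfunbisim}}{\sfunbisim}$ is a complete lattice; substituting the data above yields exactly the assertion of the present proposition, for every $i\in\setexp{0,1}$ and $j\in\setexp{1,2}$.

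Since the substantive content, namely the complete lattice property itself, is already contained in Proposition~\ref{prop:funbisim:succs:of:tgs:iso:complete:lattice} (which, as recorded in Remark~\ref{rem:prop:funbisim:succs:of:tgs:iso:complete:lattice}, is the restriction to $\sfunbisim$\nb-successors of results from \cite{ario:klop:1996} and \cite{terese:2003}), there is no genuine obstacle to overcome here. The only thing that has to be checked is the purely bookkeeping observation that the specialized and the general definitions agree, and I expect this to be immediate from the definitions in Section~\ref{sec:prelims} together with Definition~\ref{def:lambdatg:siglambdaij}.
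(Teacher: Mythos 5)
Your proposal is correct and matches the paper exactly: the paper states this proposition as a direct specialization of Proposition~\ref{prop:funbisim:succs:of:tgs:iso:complete:lattice} to the signatures $\siglambdaij{i}{j}$ and gives no further proof, since $\classtgssiglambdaij{i}{j}$ is by definition $\classtgsover{\siglambdaij{i}{j}}$ and all the relevant notions coincide. Your bookkeeping check of this coincidence is the only content required, just as you anticipated.
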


As a consequence of the result in Section~\ref{sec:closed}
that the classes of \fb\ and \eagscope\ \lambdatgs\ over $\siglambdaij{1}{2}$ 
are closed under functional bisimulation,
and in particular as an immediate consequence of
Corollary~\ref{cor:2:thm:preserve:ltgs} and Proposition~\ref{prop:funbisim:succs:of:ltgs:iso:complete:lattice},
we obtain the following theorem.

\begin{theorem}\label{thm:complete:lattice:fbl:eagscope:ltgs}
  $\pair{\succsofordin{\altgiso}{\,\sfunbisim}{\classltgsij{1}{2}}}{\sfunbisim}$
  is a complete lattice
  for all $\altg\in\classfblltgsij{1}{2}$, and hence also
  for all $\altg\in\classeagltgsij{1}{2}$.   
%
\end{theorem}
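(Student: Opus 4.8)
The plan is to reduce the claim about the \emph{restricted} successor set $\succsofordin{\altgiso}{\,\sfunbisim}{\classltgsij{1}{2}}$ to the already-established complete-lattice property of the \emph{unrestricted} successor set $\succsoford{\altgiso}{\,\sfunbisim}$ of $\siso$\nb-equivalence classes of arbitrary term graphs over $\siglambdaij{1}{2}$. The whole argument rests on the closedness results of the previous section, which guarantee that, starting from a fully back-linked \lambdatg, functional bisimulation never leads outside the class of \lambdatg{s} over $\siglambdaij{1}{2}$; consequently the restriction to $\classltgsij{1}{2}$ in the successor set turns out to be vacuous.

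Concretely, first I would fix a fully back-linked \lambdatg~$\altg \in \classfblltgsij{1}{2}$ and invoke item~(i) of Corollary~\ref{cor:2:thm:preserve:ltgs}, which states the chain of equalities $\succsoford{\altgiso}{\,\sfunbisim} = \succsofordin{\altgiso}{\,\sfunbisim}{\classfblltgsij{1}{2}} = \succsofordin{\altgiso}{\,\sfunbisim}{\classltgsij{1}{2}}$. In particular $\succsofordin{\altgiso}{\,\sfunbisim}{\classltgsij{1}{2}} = \succsoford{\altgiso}{\,\sfunbisim}$ as \emph{sets}. Since both sides carry the very same ordering, namely the sharing order $\sfunbisim$ induced on $\siso$\nb-equivalence classes, the two ordered sets $\pair{\succsofordin{\altgiso}{\,\sfunbisim}{\classltgsij{1}{2}}}{\sfunbisim}$ and $\pair{\succsoford{\altgiso}{\,\sfunbisim}}{\sfunbisim}$ are literally identical as partially ordered sets.

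Second, I would apply Proposition~\ref{prop:funbisim:succs:of:ltgs:iso:complete:lattice} to $\altg$ — legitimately, since $\altg \in \classfblltgsij{1}{2} \subseteq \classltgsij{1}{2} \subseteq \classtgssiglambdaij{1}{2}$ — which yields that $\pair{\succsoford{\altgiso}{\,\sfunbisim}}{\sfunbisim}$ is a complete lattice. By the identification from the previous step, $\pair{\succsofordin{\altgiso}{\,\sfunbisim}{\classltgsij{1}{2}}}{\sfunbisim}$ is then a complete lattice as well, which is the main claim. For the \emph{hence also} clause, given $\altg \in \classeagltgsij{1}{2}$, Proposition~\ref{prop:eager:scope:fully:back-linked} shows that $\altg$ is fully back-linked, i.e.\ $\altg \in \classfblltgsij{1}{2}$, so the case already proven applies; alternatively one reads off the required equalities directly from item~(ii) of Corollary~\ref{cor:2:thm:preserve:ltgs}.

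At the level of this theorem there is essentially no obstacle: the statement is a clean corollary of results proved earlier, and the only point to watch is that the restriction of $\sfunbisim$ to the (coinciding) underlying sets really is the same relation, so that the property of being a complete lattice transfers verbatim. The genuine difficulty — which I regard as already discharged — lies in Corollary~\ref{cor:2:thm:preserve:ltgs}, and behind it Theorem~\ref{thm:preserve:ltgs}, namely the fact that the $\sfunbisim$\nb-image of a fully back-linked (or eager-scope) \lambdatg\ over $\siglambdaij{1}{2}$ is again such a \lambdatg; it is precisely this closedness that collapses the three successor sets into one and thereby lets the generic complete-lattice property of first-order term graphs be inherited.
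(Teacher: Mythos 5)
Your proposal is correct and coincides with the paper's own argument: the paper presents this theorem as an immediate consequence of Corollary~\ref{cor:2:thm:preserve:ltgs} (which collapses the restricted and unrestricted $\sfunbisim$\nb-successor sets for fully back-linked, hence also eager-scope, \lambdatgs) together with Proposition~\ref{prop:funbisim:succs:of:ltgs:iso:complete:lattice}, exactly as you do. Your handling of the \emph{hence also} clause via Proposition~\ref{prop:eager:scope:fully:back-linked} (or item~(ii) of the corollary) likewise matches the paper's intent.
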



So Theorem~\ref{thm:complete:lattice:fbl:eagscope:ltgs} states
the complete lattice property for $\sfunbisim$\nb-successors of
(isomorphism equivalence classes of) \lambdatgs\ over $\siglambdaij{1}{2}$ 
only for those term graphs that are \fb\ or \eagscope.
Based only on this statement, 
we therefore cannot expect to be able to transfer the complete lattice property
to the classes of all \lambdahotgs\ or \lambdaaphotgs\ over $\siglambdai{1}$, 
but just to subclasses that correspond to the \fb\ or \eagscope\ \lambdatgs.
However, 
         the property of being \fb\
does not have a natural equivalent for \lambdaaphotgs\ and \lambdatgs.
This is because the fact that this property is fulfilled
typically depends on the existence of paths that pass via back-binding edges of delimiter vertices,
and that is, via vertices and back-binding edges that are absent in \lambdaaphotgs\ and \lambdatgs.  
This is not the case for the property of being `\eagscope',
for which even the existence of back-links for variable vertices is not essential, see Remark~\ref{rem:eag-scope:lambdatgs:siglambdaij:0}.
Indeed, the \eagscope\ property has a quite natural counterpart for \lambdaaphotgs\ and \lambdatgs. 
Its definition below is analogous to the generalization, described in Remark~\ref{rem:eag-scope:lambdatgs:siglambdaij:0},  
of the definition of the property `\eagscope' for \lambdatgs\ with variable-vertex backlinks in Definition~\ref{def:eager-scope:fully:back-linked}
to one pertaining to all \lambdatgs, including those without variable-vertex backlinks. 

\begin{definition}[eager scope \lambdaaphotg{s} over $\siglambdai{i}$]%
    \label{def:eager-scope:laphotgs}\normalfont
  Let $i\in\setexp{0,1}$, and  
  let $\alaphotg = \tuple{\verts,\svlab,\svargs,\sroot,\sabspre}$ be a \lambdaaphotg\ over $\siglambdai{1}$.
  %
  We say that $\alaphotg$ is an \emph{eager scope \lambdatg},
  or that $\alaphotg$ \emph{has the eager-scope property}, 
  if:
  \begin{equation*}
    \left.
    \begin{aligned}
    \forallstzero{\bvert,\avert\in\verts}\, 
      \forallstzero{\apre\in\verts^*\!}
                    &
                    \abspre{\bvert} = \stringcon{\apre}{\avert}
                   \;\,\Rightarrow\,\;
                    \begin{aligned}[t]
                      & 
                      \existsstzero{n\in\nats}\existsstzero{\bverti{0},\ldots,\bverti{n}\in\verts} 
                      \\[-0.25ex]
                      & \hspace*{4ex}
                        \bvert =\bverti{0} \tgsucc \bverti{1} \tgsucc \ldots \tgsucc \bverti{n} 
                          \;\:\slogand\:\;
                        \bverti{n} \in\vertsof{\snlvar}
                          \;\:\slogand\:\;
                        \abspre{\bverti{n}} = \stringcon{\apre}{\avert}
                      \\[-0.25ex]
                      & \hspace*{4ex}
                          \;\:\slogand\:\;  
                        \forallst{1\le i\le n-1}
                                 {\stringcon{\apre}{\avert} 
                                                            \le \abspre{\bverti{i}}} 
                    \end{aligned}
    \end{aligned}          
    \hspace*{-1ex}\right\}          
  \end{equation*}
  holds, or in words: if 
  for every vertex $\bvert$ in $\atg$ 
  with a non-empty abstraction-prefix $\abspre{\bvert}$ ending with $\avert$
  there exists a path from $\bvert$ to $\avert$ in $\atg$ via vertices with abstraction-prefixes that extend $\abspre{\bvert}$
  (a path in the scope of $\avert$),
  to a variable vertex~$\bverti{n}$ that is bound by the abstraction vertex $\avert$.
  (Note that if $k=1$, then $\avert$ is directly reachable from $\bverti{n}$ via a back-binding edge.)
  
  By $\classeaglaphotgsi{1}$ we denote the subclass of $\classlaphotgsi{1}$ 
  that consists of all eager-scope \lambdaaphotg{s}. 
\end{definition}

As stated by the proposition below, the properties of being \eagscope\ 
for \lambdaaphotgs\ defined above and for \lambdatgs\ defined earlier
correspond to each other via the correspondence mappings
from Proposition~\ref{prop:mappings:laphotgs:to:ltgs} and Proposition~\ref{prop:mappings:ltgs:to:laphotgs}.

\begin{proposition}\label{prop:preserve:reflect:eagscope}
  Let $i\in\setexp{0,1}$, and $j\in\setexp{1,2}$. 
  The correspondences $\slaphotgstoltgsij{i}{j}$ and $\sltgstolaphotgsij{i}{j}$ 
  between \lambdaaphotgs~$\alaphotg$ over $\siglambdai{i}$
  and     \lambdatgs~$\altg$ over $\siglambdaij{i}{j}$
  preserve and reflect the property `\eagscope'.
  That is, for all $\alaphotg\in\classlaphotgsi{i}$, and for all $\altg\in\classltgsij{i}{j}$ it holds:
  \begin{align*}
    \text{$\alaphotg$ is \eagscope}
      \;\; & \Leftrightarrow\;\;
    \text{$\laphotgstoltgsij{i}{j}{\alaphotg}$ is \eagscope}
    \\
    \text{$\ltgstolaphotgsij{i}{j}{\altg}$ is \eagscope}
      \;\; & \Leftrightarrow\;\;
    \text{$\altg$ is \eagscope}  
  \end{align*}
  Consequently, the restriction
  of $\slaphotgstoltgsij{i}{j}$ to $\classeaglaphotgsi{i}$, 
  and the restriction 
  of $\sltgstolaphotgsij{i}{j}$ to $\classeagltgsij{i}{j}$
  as well as their counterparts 
  on isomorphism equivalence classes
  can be recognized as functions of following types:
  \begin{align*}
    \srestrictto{\slaphotgstoltgsij{i}{j}}{\classeaglaphotgsi{i}}
      & \funin
          \classeaglaphotgsi{i} \to \classeagltgsij{i}{j}
    &
    \srestrictto{\sltgstolaphotgsij{i}{j}}{\classeagltgsij{i}{j}}
      & \funin
          \classeagltgsij{i}{j} \to \classeaglaphotgsi{i}
    \\      
    \srestrictto{\slaphotgsisotoltgsisoij{i}{j}}{\classeaglaphotgsisoi{i}}
      & \funin
          \classeaglaphotgsisoi{i} \to \classeagltgsisoij{i}{j}
    &
    \srestrictto{\sltgsisotolaphotgsisoij{i}{j}}{\classeagltgsisoij{i}{j}}
      & \funin
          \classeagltgsisoij{i}{j} \to \classeaglaphotgsisoi{i}
  \end{align*}
\end{proposition}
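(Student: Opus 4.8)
The plan is to read off the \eagscope\ property of a \lambdatg\ and of its corresponding \lambdaaphotg\ from the very same data --- the non-delimiter vertices together with their abstraction prefixes --- and to match the witnessing paths demanded by the two \eagscope\ conditions. Both the generalized condition \eqref{eq:def:eager:scope:without:var:backlinks} for \lambdatgs\ and the condition of Definition~\ref{def:eager-scope:laphotgs} for \lambdaaphotgs\ require, for every (non-delimiter) vertex $\bvert$ with $\abspre{\bvert}=\stringcon{\apre}{\avert}$, a directed path from $\bvert$ to a variable vertex bound by $\avert$ that stays inside the scope of $\avert$, i.e.\ visits only vertices whose abstraction prefix extends $\stringcon{\apre}{\avert}$. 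The only difference is that in a \lambdatg\ such a path may additionally traverse $\snlvarsucc$-vertices. Since both correspondence mappings keep the non-delimiter vertices and their abstraction prefixes fixed (they only insert, respectively delete, delimiter vertices; recall Proposition~\ref{prop:mappings:laphotgs:to:ltgs} and Proposition~\ref{prop:mappings:ltgs:to:laphotgs}, and that delimiter vertices are exactly the $\snlvarsucc$-labelled ones), it suffices to show that within-scope witnessing paths correspond under insertion and deletion of delimiter runs.

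First I would treat the two ``expansion'' implications, namely $\alaphotg$ \eagscope\ $\Rightarrow$ $\laphotgstoltgsij{i}{j}{\alaphotg}$ \eagscope, and $\ltgstolaphotgsij{i}{j}{\altg}$ \eagscope\ $\Rightarrow$ $\altg$ \eagscope. Here a \lambdaaphotg-edge $\bvert \tgsucci{k} \bvertacc$ unfolds into a \lambdatg-path $\bvert \tgsucci{k} d_1 \tgsucci{0} \cdots \tgsucci{0} d_m \tgsucci{0} \bvertacc$ through delimiter vertices $d_1,\dots,d_m$. The key observation is that each $d_\ell$ has an abstraction prefix that strictly extends $\abspre{\bvertacc}$: this is forced by the stack discipline of the abstraction-prefix function (Lemma~\ref{lem:ltg},~(\ref{lem:ltg:item:iv})) and, for $\slaphotgstoltgsij{i}{j}$, is exactly the constraint $\abspre{\bvertacc} \prelt \apre \prele \stringcon{\abspre{\bvert}}{\bvert}$ (resp.\ $\prele \abspre{\bvert}$) built into the delimiter vertices $\tuple{\bvert,k,\bvertacc,\apre}$. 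Consequently, expanding a within-scope witnessing path only inserts delimiter vertices whose prefixes extend that of their non-delimiter successor, hence extend $\stringcon{\apre}{\avert}$; the terminal variable vertex and its prefix are untouched; and every non-delimiter vertex of the \lambdatg\ stems from a vertex of the \lambdaaphotg\ with the same prefix, so the universal quantifier over (non-delimiter) vertices is covered. This settles both expansion implications.

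The main obstacle is the ``contraction'' implication $\altg$ \eagscope\ $\Rightarrow$ $\ltgstolaphotgsij{i}{j}{\altg}$ \eagscope\ (which, via Theorem~\ref{thm:corr:laphotgs:ltgs},~(\ref{thm:corr:laphotgs:ltgs:item:i}), also yields the remaining implication $\laphotgstoltgsij{i}{j}{\alaphotg}$ \eagscope\ $\Rightarrow$ $\alaphotg$ \eagscope). Given a within-scope witnessing path in $\altg$, I would delete its delimiter vertices: a maximal run $d_1 \tgsucci{0} \cdots \tgsucci{0} d_m$ of index-$0$ delimiter steps between two non-delimiter vertices collapses to a single edge of $\ltgstolaphotgsij{i}{j}{\altg}$ (by the definition of its successor relation via $\binrelcomp{\stgsucci{k}}{\stgsuccisstar{0}{\snlvarsucc}}$), and the retained non-delimiter vertices satisfy the scope constraint a fortiori. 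The difficulty is that for $j=2$ a delimiter vertex also carries an index-$1$ back-link, so an arbitrary witnessing path might leave a delimiter through such a back-link, which $\sltgstolaphotgsij{i}{j}$ does \emph{not} contract. I expect the crux to be a normalization lemma: an \eagscope\ \lambdatg\ always admits a within-scope witnessing path that departs delimiter vertices only via index-$0$ edges. For $j=1$ this holds vacuously (delimiters have no index-$1$ edge); for $j=2$ I would prove it using that the index-$0$ and the index-$1$ successor of a delimiter share the same abstraction prefix, so that the forward scope-descent via index-$0$ edges loses no scope information, invoking the pumped characterization of Proposition~\ref{prop:eager:scope:fully:back-linked:pumped} and the access-path behaviour of Lemma~\ref{lem:ltg},~(\ref{lem:ltg:item:v}). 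In the subcase $i=1$ this normalization can be bypassed altogether: \eagscope\ is preserved and reflected by homomorphisms (Proposition~\ref{prop:hom:ltgs:preserve:reflect:eagscope:fb}), hence invariant under $\Sfunbisim$, and the claim then follows from Theorem~\ref{thm:corr:laphotgs:ltgs},~(\ref{thm:corr:laphotgs:ltgs:item:ii}).

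Combining the expansion and contraction implications yields the two biconditionals. The concluding typing statements are then immediate: since $\slaphotgstoltgsij{i}{j}$ and $\sltgstolaphotgsij{i}{j}$ preserve and reflect \eagscope, they co-restrict to maps $\classeaglaphotgsi{i} \to \classeagltgsij{i}{j}$ and $\classeagltgsij{i}{j} \to \classeaglaphotgsi{i}$; and because both mappings are compatible with isomorphism (Propositions~\ref{prop:mappings:laphotgs:to:ltgs} and~\ref{prop:mappings:ltgs:to:laphotgs}) and isomorphisms preserve abstraction prefixes and paths, the induced maps on isomorphism equivalence classes co-restrict likewise to $\classeaglaphotgsisoi{i} \to \classeagltgsisoij{i}{j}$ and $\classeagltgsisoij{i}{j} \to \classeaglaphotgsisoi{i}$.
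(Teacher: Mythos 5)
You should know at the outset that the paper states this proposition without any proof, so there is no official argument to compare against; judged on its own terms, your architecture is sound. The two ``expansion'' directions are indeed routine consequences of the stack discipline of abstraction prefixes (delimiter vertices inserted on an edge carry prefixes strictly extending that of the edge's target, so within-scope witnessing paths remain within scope), and deriving the fourth implication from the contraction implication via $\scompfuns{\sltgstolaphotgsij{i}{j}}{\slaphotgstoltgsij{i}{j}} = \sidfunon{\classlaphotgsi{i}}$ (Theorem~\ref{thm:corr:laphotgs:ltgs},~(\ref{thm:corr:laphotgs:ltgs:item:i})) is correct. You have also put your finger on exactly the right difficulty: for $j=2$ a within-scope witnessing path in a \lambdatg\ may leave a delimiter vertex through its index-$1$ back-link, and such steps are not contracted by $\sltgstolaphotgsij{i}{j}$.

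Precisely there, however, the proof has a genuine gap. The normalization lemma is only asserted, and the sketch offered for it does not go through as stated. Replacing a back-link step $d \tgsucci{1} \avert'$ by the index-$0$ step $d \tgsucci{0} w$ does land at a vertex with the same abstraction prefix as $\avert'$, but at a \emph{different} vertex: the remainder of the original witnessing path continues from $\avert'$, not from $w$, so you must reconnect $w$ to a variable bound by the outer abstraction $\avert$. The only available tool for that is the (pumped) eager-scope property again, whose witnessing paths may themselves contain delimiter back-links; the obvious recursion therefore spawns subproblems at strictly deeper scopes with no evident well-founded measure (term graphs may be infinite here), and you set up no induction that terminates. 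Moreover, ``forward scope-descent via index-$0$ edges loses no scope information'' is not true in the sense needed: the index-$0$ delimiter chain beyond $d$ may keep popping abstractions and exit the scope of $\avert$ altogether, so the contracted edge of $\sltgstolaphotgsij{i}{j}$ that replaces the run need not stay within scope at all. Finally, the proposed bypass for $i=1$ is circular: Proposition~\ref{prop:hom:ltgs:preserve:reflect:eagscope:fb} together with Theorem~\ref{thm:corr:laphotgs:ltgs},~(\ref{thm:corr:laphotgs:ltgs:item:ii}) only yields that the round-trip \lambdatg\ $\laphotgstoltgsij{1}{2}{\ltgstolaphotgsij{1}{2}{\altg}}$ is \eagscope; to conclude that the \lambdaaphotg\ $\ltgstolaphotgsij{1}{2}{\altg}$ itself is \eagscope\ you need exactly the implication ``$\laphotgstoltgsij{1}{2}{\alaphotg}$ \eagscope\ $\Rightarrow$ $\alaphotg$ \eagscope'', which in your scheme is obtained only from the very contraction implication being proved. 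So the mathematical crux of the proposition --- the contraction direction for $j=2$ --- remains open in your proposal; it requires either an actual proof of the normalization lemma (via a carefully chosen well-founded induction showing witnessing paths can be taken to leave delimiters only by index-$0$ edges), or a different argument altogether.
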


Furthermore, the statement of Theorem~\ref{thm:corr:laphotgs:ltgs}
concerning the correspondences induced on the isomorphism equivalence classes
(in particular, those stemming from the statements 
 (\ref{thm:corr:laphotgs:ltgs:item:i}) and (\ref{thm:corr:laphotgs:ltgs:item:iii}))
can be specialized to 
                      sets of $\sfunbisim$\nb-successors
of isomorphism equivalence classes of \lambdaaphotgs\, and of \lambdatgs, as follows. 
Here and below we use the notation defined in \eqref{eq:def:eqclin:succsofordin}
for $\sfunbisim$\nb-successors of term graphs also for \lambdaaphotgs.

\begin{lemma}\label{lem:left-inverse:restrictions:to:bisim:equivclasses}
  Let $i\in\setexp{0,1}$, and $j\in\setexp{1,2}$.
  Let $\alaphotg\in\classlaphotgsi{i}$ be a \lambdaaphotg. 
  Then 
  \begin{align*}
    \srestrictto{\slaphotgsisotoltgsisoij{i}{j}}{\succsoford{\alaphotgiso}{\,\sfunbisimsubscript}}
      & \,\funin\,
          \succsoford{\alaphotgiso}{\,\sfunbisim} \to \succsofordin{\laphotgsisotoltgsisoij{i}{j}{\alaphotgiso}}{\,\sfunbisim}{\classltgsij{i}{j}}
    &
    \srestrictto{\sltgsisotolaphotgsisoij{i}{j}}{\succsofordin{\laphotgsisotoltgsisoij{i}{j}{\alaphotgiso}}{\,\sfunbisimsubscript}{\classltgsij{i}{j}}}
      & \,\funin\,
          \succsofordin{\laphotgsisotoltgsisoij{i}{j}{\alaphotgiso}}{\,\sfunbisim}{\classltgsij{i}{j}} \to \succsoford{\alaphotgiso}{\,\sfunbisim}
  \end{align*}
  and it holds that 
  $\srestrictto{\sltgsisotolaphotgsisoij{i}{j}}{\succsofordin{\laphotgsisotoltgsisoij{i}{j}{\alaphotgiso}}{\,\sfunbisimsubscript}{\classltgsij{i}{j}}}$
  is a left-inverse of 
  $\srestrictto{\slaphotgsisotoltgsisoij{i}{j}}{\succsoford{\alaphotgiso}{\,\sfunbisim}}$,
  that is, for all $\alaphotgacciso\in\succsoford{\alaphotgiso}{\,\sfunbisim}$  it holds:
  \begin{align*}
    \compfuns{\srestrictto{\sltgsisotolaphotgsisoij{i}{j}}{\succsofordin{\laphotgsisotoltgsisoij{i}{j}{\alaphotgiso}}{\,\sfunbisimsubscript}{\classltgsij{i}{j}}}}
             {\srestrictto{\slaphotgsisotoltgsisoij{i}{j}}{\succsoford{\alaphotgiso}{\,\sfunbisimsubscript}}}
             {\alaphotgacciso}
    & =
    \alaphotgacciso \punc{.}
  \end{align*}  
\end{lemma}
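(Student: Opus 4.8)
The plan is to derive the whole statement directly from Theorem~\ref{thm:corr:laphotgs:ltgs}, using only two of its ingredients: the one-sided identity~(\ref{thm:corr:laphotgs:ltgs:item:i}), which says that $\sltgsisotolaphotgsisoij{i}{j}$ after $\slaphotgsisotoltgsisoij{i}{j}$ is the identity on $\classlaphotgsisoi{i}$, and the preservation and reflection of the sharing order~(\ref{thm:corr:laphotgs:ltgs:item:iii}) by both correspondence mappings. First I would check that the two restricted mappings indeed have the asserted types, i.e.\ that they send the one indicated set of $\sfunbisim$\nb-successors into the other; the left-inverse identity then drops out of~(\ref{thm:corr:laphotgs:ltgs:item:i}) read on isomorphism equivalence classes.

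For well-definedness of the first mapping, I would take $\alaphotgacciso \in \succsoford{\alaphotgiso}{\,\sfunbisim}$, witnessed by a representative $\alaphotg'$ with $\alaphotg \funbisim \alaphotg'$. The preservation direction of~(\ref{thm:corr:laphotgs:ltgs:item:iii}) then gives $\laphotgstoltgsij{i}{j}{\alaphotg} \funbisim \laphotgstoltgsij{i}{j}{\alaphotg'}$, and since $\laphotgstoltgsij{i}{j}{\alaphotg'} \in \classltgsij{i}{j}$ by Proposition~\ref{prop:mappings:laphotgs:to:ltgs}, its isomorphism class lies in $\succsofordin{\laphotgsisotoltgsisoij{i}{j}{\alaphotgiso}}{\,\sfunbisim}{\classltgsij{i}{j}}$, as required. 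For the second mapping, I would take $\altgacciso$ in that same set, witnessed by $\laphotgstoltgsij{i}{j}{\alaphotg} \funbisim \altgacc$ with $\altgacc \in \classltgsij{i}{j}$. Preservation of the sharing order by $\sltgstolaphotgsij{i}{j}$ then yields $\ltgstolaphotgsij{i}{j}{\laphotgstoltgsij{i}{j}{\alaphotg}} \funbisim \ltgstolaphotgsij{i}{j}{\altgacc}$; by~(\ref{thm:corr:laphotgs:ltgs:item:i}) the left-hand term equals $\alaphotg$, so $\alaphotg \funbisim \ltgstolaphotgsij{i}{j}{\altgacc}$ and the image lies in $\succsoford{\alaphotgiso}{\,\sfunbisim}$.

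For the left-inverse property I would take $\alaphotgacciso \in \succsoford{\alaphotgiso}{\,\sfunbisim}$ with representative $\alaphotg'$. Applying the first mapping produces $\eqcl{\laphotgstoltgsij{i}{j}{\alaphotg'}}{\siso}$, and applying the second then produces $\eqcl{\ltgstolaphotgsij{i}{j}{\laphotgstoltgsij{i}{j}{\alaphotg'}}}{\siso}$; by the analogue of~(\ref{thm:corr:laphotgs:ltgs:item:i}) on isomorphism equivalence classes this equals $\eqcl{\alaphotg'}{\siso} = \alaphotgacciso$, which is exactly the claimed identity.

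I do not expect a genuine obstacle here; the argument is essentially bookkeeping over the successor sets, with everything already supplied by Theorem~\ref{thm:corr:laphotgs:ltgs}. The one point that deserves care — and the reason the statement is phrased as a \emph{left}-inverse rather than a two-sided inverse — is the asymmetry between~(\ref{thm:corr:laphotgs:ltgs:item:i}) and~(\ref{thm:corr:laphotgs:ltgs:item:ii}): going from a \lambdaaphotg\ to a \lambdatg\ and back returns the original on the nose, whereas the reverse composition only recovers a term graph up to $\Sfunbisim$, since $\snlvarsucc$\nb-sharing is collapsed by $\sltgstolaphotgsij{i}{j}$. Consequently $\srestrictto{\slaphotgsisotoltgsisoij{i}{j}}{\succsoford{\alaphotgiso}{\,\sfunbisim}}$ need not be surjective onto the \lambdatg-successors, and only the one composition order can be the identity.
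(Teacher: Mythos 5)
Your proposal is correct and follows exactly the route the paper intends: the paper states this lemma without a detailed proof, presenting it as the specialization of Theorem~\ref{thm:corr:laphotgs:ltgs} (items~(\ref{thm:corr:laphotgs:ltgs:item:i}) and~(\ref{thm:corr:laphotgs:ltgs:item:iii}), in their induced form on isomorphism equivalence classes) to sets of $\sfunbisim$\nb-successors, and your argument supplies precisely that bookkeeping. Your closing remark on why only a left-inverse can hold — the asymmetry between items~(\ref{thm:corr:laphotgs:ltgs:item:i}) and~(\ref{thm:corr:laphotgs:ltgs:item:ii}) caused by collapsed $\snlvarsucc$\nb-sharing — is also accurate and consistent with the paper's discussion.
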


As a final tool for the transfer of the complete lattice property 
we formulate and prove a general lemma about partial orders. 
It states that the property of a partial order to be a complete lattice
is reflected by order homomorphisms with left-inverses.

\begin{lemma}
  \label{lem:reflect:complete:lattice}
  Let $\pair{\aset}{\spoi{\aset}}$ and $\pair{\bset}{\spoi{\bset}}$ be partial orders. 
  Suppose that 
  $\saohom \funin \aset \to \bset$, 
  and 
  $\sbohom \funin \bset \to \aset$
  are order homomorphisms such that 
  $\sbohom$ is a left-inverse of $\saohom$, that is,
  $\scompfuns{\sbohom}{\saohom} = \sidfunon{\aset}$ holds.
  Then if $\pair{\bset}{\spoi{\bset}}$ is a complete lattice, then so is $\pair{\aset}{\spoi{\aset}}$.
\end{lemma}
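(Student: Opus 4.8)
The plan is to prove that $\pair{\aset}{\poi{\aset}}$ is a complete lattice by producing, for an arbitrary subset $S\subseteq\aset$, both a least upper bound and a greatest lower bound, obtained by transporting the corresponding bounds out of the complete lattice $\pair{\bset}{\poi{\bset}}$ through the two (monotone) order homomorphisms. For the least upper bound I would form the image $\aohom{S} = \descsetexp{\aohom{s}}{s\in S}$, take its least upper bound $b \defdby \lub{\aohom{S}}$ in $\bset$ (which exists because $\bset$ is a complete lattice), and propose $\bohom{b}$ as the least upper bound of $S$ in $\aset$.

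First I would check that $\bohom{b}$ is an upper bound of $S$: for each $s\in S$ we have $\aohom{s} \poi{\bset} b$ since $b$ bounds $\aohom{S}$, so monotonicity of $\sbohom$ gives $\bohom{\aohom{s}} \poi{\aset} \bohom{b}$; and since $\bohom{\aohom{s}} = \compfuns{\sbohom}{\saohom}{s} = s$ by the left-inverse hypothesis $\scompfuns{\sbohom}{\saohom} = \sidfunon{\aset}$, this reads $s \poi{\aset} \bohom{b}$. Next I would verify minimality: if $a\in\aset$ is any upper bound of $S$, then monotonicity of $\saohom$ yields $\aohom{s} \poi{\bset} \aohom{a}$ for all $s\in S$, so $\aohom{a}$ bounds $\aohom{S}$ and hence $b \poi{\bset} \aohom{a}$; applying $\sbohom$ and using the left-inverse property once more gives $\bohom{b} \poi{\aset} \bohom{\aohom{a}} = \compfuns{\sbohom}{\saohom}{a} = a$. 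Thus $\bohom{b}$ is the least upper bound of $S$ in $\aset$.

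The greatest lower bound is then obtained dually: set $c \defdby \glb{\aohom{S}}$ in $\bset$ and take $\bohom{c}$; the two verifications mirror those above, again using nothing beyond monotonicity of $\saohom$ and $\sbohom$ together with $\scompfuns{\sbohom}{\saohom} = \sidfunon{\aset}$. Since every subset of $\aset$ thereby acquires both a least upper bound and a greatest lower bound, $\pair{\aset}{\poi{\aset}}$ is a complete lattice. I would also remark that the case $S = \emptyset$ needs no separate treatment: the minimality argument ranges over all of $\aset$ (every element vacuously bounds $\emptyset$), so $\bohom{\lub{\emptyset}}$ and $\bohom{\glb{\emptyset}}$ deliver the bottom and top elements of $\aset$.

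I expect no real obstacle here; the lemma is elementary. The single point deserving care is that the left-inverse hypothesis is used in an essential and asymmetric way — it enters precisely in the minimality step, where cancelling $\scompfuns{\sbohom}{\saohom}$ recovers the arbitrary upper bound $a$. (As a by-product, monotonicity of $\sbohom$ together with $\scompfuns{\sbohom}{\saohom} = \sidfunon{\aset}$ forces $\saohom$ to reflect the order, i.e.\ to be an order embedding, but this stronger fact is not needed for the construction.) The lemma will then be applied, via Lemma~\ref{lem:left-inverse:restrictions:to:bisim:equivclasses}, with $\saohom$ and $\sbohom$ instantiated by the restricted correspondence maps to transfer the complete lattice property of Theorem~\ref{thm:complete:lattice:fbl:eagscope:ltgs} from \lambdatgs\ over $\siglambdaij{1}{2}$ to \lambdaaphotgs\ over $\siglambdai{1}$.
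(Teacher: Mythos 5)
Your proof is correct and follows essentially the same route as the paper's: transport $S$ into $\bset$ via $\saohom$, take the \txtlub\ (resp.\ \txtglb) there, and pull it back with $\sbohom$, using monotonicity plus the left-inverse identity $\scompfuns{\sbohom}{\saohom} = \sidfunon{\aset}$ in exactly the two places the paper uses them. Your added remarks (the empty-set case and that $\saohom$ is forced to be an order embedding) are correct but not needed.
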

   
\begin{proof}
  Let $\pair{\aset}{\spoi{\aset}}$ and $\pair{\bset}{\spoi{\bset}}$ be partial orders.
  Let $\saohom \funin \aset \to \bset$, $\sbohom \funin \bset \to \aset$ be order homomorphisms
  with $\scompfuns{\sbohom}{\saohom} = \sidfunon{\aset}$.
  Suppose that $\pair{\bset}{\spoi{\bset}}$ is a complete lattice.
  We prove that $\pair{\aset}{\spoi{\aset}}$ is a complete lattice, too. 
  
  Let $S \subseteq\aset$ be arbitrary. 
  We have to show the existence of a \txtlub\ $\lub{S}$ and a \txtglb\ $\glb{S}$ of $S$ in $\pair{\aset}{\spoi{\aset}}$.
  We only show the existence of the \txtlub, 
  because the argument for the \txtglb\ is analogous.
  Since $\pair{\bset}{\spoi{\bset}}$ is a complete lattice,
  the existence of the \txtlub\ $\lub{\aohom{S}}$ is guaranteed. 
  We will show that $\lub{S} = \bohom{\lub{\aohom{S}}}$.
  
  $\bohom{\lub{\aohom{S}}}$ is an upper bound of $S$:
  Let $s\in S$ be arbitrary. Then $\aohom{s}\in\aohom{S}$, and $\aohom{s} \poi{\bset} \lub{\aohom{S}}$.  
  Since $\sbohom$ is an order homomorphism, and a left-inverse of $\saohom$, 
  it follows that 
  $s = \bohom{\aohom{s}} \poi{\aset} \bohom{\lub{\aohom{S}}}$.
  
  $\bohom{\lub{\aohom{S}}}$ is less or equal to all upper bounds for $S$:
  Let $u\in\aset$ be an upper bound of $S$. 
  As $\saohom$ is an order homomorphism, it follows that $\aohom{u}$ is an upper bound for $\aohom{S}$. 
  Consequently, $\lub{\aohom{S}} \poi{\bset} \aohom{u}$. 
  Again, since $\sbohom$ is an order homomorphism, and a left-inverse of $\saohom$,
  this entails $\bohom{\lub{\aohom{S}}} \poi{\aset} \bohom{\aohom{u}} = u$.
\end{proof}

Now we can formulate and prove the main result in this section.

\begin{theorem}
    \label{prop:funbisim:succs:of:tgs:iso:complete:lattice:lambdahotgs:lambdaaphotgs}
    For every $\alhotg\in\classeaglhotgsi{1}$ it holds that
      $\pair{\succsoford{\alaphotgiso}{\,\sfunbisim}}{\sfunbisim}$ 
    is a complete lattice.
\end{theorem}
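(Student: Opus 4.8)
The plan is to transport the first-order complete-lattice result of Theorem~\ref{thm:complete:lattice:fbl:eagscope:ltgs} back to the higher-order setting by means of the correspondences of Section~\ref{sec:ltgs}, combined with the order-theoretic reflection principle Lemma~\ref{lem:reflect:complete:lattice}. Given an eager-scope $\lambda$-ho-term-graph $\alhotg\in\classeaglhotgsi{1}$, I would first move to the associated $\lambda$-ap-ho-term-graph $\alaphotg\defdby\lhotgstolaphotgsi{1}{\alhotg}$ under the bijection of Theorem~\ref{thm:corr:lhotgs:laphotgs}. Since that bijection preserves and reflects the sharing order, it is an order-isomorphism and induces an order-isomorphism of the successor posets; as it also matches the eager-scope subclasses, it suffices to prove the claim for $\alaphotg\in\classeaglaphotgsi{1}$ and for $\pair{\succsoford{\alaphotgiso}{\,\sfunbisim}}{\sfunbisim}$. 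Setting $\altg\defdby\laphotgstoltgsij{1}{2}{\alaphotg}$, Proposition~\ref{prop:preserve:reflect:eagscope} then yields that $\altg$ is an eager-scope $\lambda$-term-graph over $\siglambdaij{1}{2}$, i.e.\ $\altg\in\classeagltgsij{1}{2}$, and moreover $\altgiso=\laphotgsisotoltgsisoij{1}{2}{\alaphotgiso}$.

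Next I would apply Theorem~\ref{thm:complete:lattice:fbl:eagscope:ltgs} to $\altg$ to obtain that $\pair{\succsofordin{\altgiso}{\,\sfunbisim}{\classltgsij{1}{2}}}{\sfunbisim}$ is a complete lattice. To push this across to $\succsoford{\alaphotgiso}{\,\sfunbisim}$, I would feed Lemma~\ref{lem:reflect:complete:lattice} with the two restricted correspondence maps supplied by Lemma~\ref{lem:left-inverse:restrictions:to:bisim:equivclasses}: put $\saohom\defdby\srestrictto{\slaphotgsisotoltgsisoij{1}{2}}{\succsoford{\alaphotgiso}{\,\sfunbisimsubscript}}$ and $\sbohom\defdby\srestrictto{\sltgsisotolaphotgsisoij{1}{2}}{\succsofordin{\altgiso}{\,\sfunbisimsubscript}{\classltgsij{1}{2}}}$. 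These are order homomorphisms between $\pair{\succsoford{\alaphotgiso}{\,\sfunbisim}}{\sfunbisim}$ and $\pair{\succsofordin{\altgiso}{\,\sfunbisim}{\classltgsij{1}{2}}}{\sfunbisim}$, because the correspondence preserves the sharing order by Theorem~\ref{thm:corr:laphotgs:ltgs},~(\ref{thm:corr:laphotgs:ltgs:item:iii}), and Lemma~\ref{lem:left-inverse:restrictions:to:bisim:equivclasses} gives $\scompfuns{\sbohom}{\saohom}=\sidfunon{\succsoford{\alaphotgiso}{\,\sfunbisim}}$. Applying Lemma~\ref{lem:reflect:complete:lattice} with $\aset=\succsoford{\alaphotgiso}{\,\sfunbisim}$ and $\bset=\succsofordin{\altgiso}{\,\sfunbisim}{\classltgsij{1}{2}}$ then yields that $\pair{\succsoford{\alaphotgiso}{\,\sfunbisim}}{\sfunbisim}$ is a complete lattice, which (via the order-isomorphism of the first step) is exactly the claim.

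I expect the write-up to be a short assembly, since all the conceptual content is already isolated in the cited results; the proof is essentially plugging the left-inverse pair into the reflection lemma. The one point that genuinely deserves attention — and the real reason the theorem holds — is that $\bset=\succsofordin{\altgiso}{\,\sfunbisim}{\classltgsij{1}{2}}$ must be a \emph{complete} lattice, not merely a lattice. This is precisely where eager scope closure is indispensable: by Corollary~\ref{cor:2:thm:preserve:ltgs} every $\sfunbisim$-successor of the eager-scope $\altg$ is itself an eager-scope $\lambda$-term-graph, so $\succsofordin{\altgiso}{\,\sfunbisim}{\classltgsij{1}{2}}$ coincides with the unrestricted successor set $\succsoford{\altgiso}{\,\sfunbisim}$ over all first-order term graphs, and Proposition~\ref{prop:funbisim:succs:of:ltgs:iso:complete:lattice} applies; this is exactly what is packaged into Theorem~\ref{thm:complete:lattice:fbl:eagscope:ltgs}. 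A secondary, purely routine check is that the domains and codomains of $\saohom$ and $\sbohom$ match up as stated, but this is the content of Lemma~\ref{lem:left-inverse:restrictions:to:bisim:equivclasses} and requires nothing beyond it.
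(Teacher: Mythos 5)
Your proposal is correct and follows essentially the same route as the paper's own proof: Proposition~\ref{prop:preserve:reflect:eagscope} to land in $\classeagltgsij{1}{2}$, Theorem~\ref{thm:complete:lattice:fbl:eagscope:ltgs} for the first-order complete lattice, and then Lemma~\ref{lem:left-inverse:restrictions:to:bisim:equivclasses} feeding the reflection Lemma~\ref{lem:reflect:complete:lattice}. Your explicit initial reduction from $\classeaglhotgsi{1}$ to $\classeaglaphotgsi{1}$ via Theorem~\ref{thm:corr:lhotgs:laphotgs} only makes precise a step the paper elides notationally (it writes both kinds of graph as ${\cal G}$); otherwise the arguments coincide.
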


\begin{proof}
  Let $\alhotg\in\classeaglhotgsi{1}$, i.e.\ $\alhotg$ is an \eagscope\ \lambdahotg.
  Then by Proposition~\ref{prop:preserve:reflect:eagscope} it follows that
  $\laphotgstoltgsij{1}{2}{\alaphotg} \in\classeagltgsij{1}{2}$, i.e.\ it is an \eagscope\ \lambdatg,
  and also that
  $\laphotgsisotoltgsisoij{1}{2}{\alaphotgiso} \in\classeagltgsisoij{1}{2}$.
  Now Theorem~\ref{thm:complete:lattice:fbl:eagscope:ltgs}  
  yields that
  $\pair{\succsofordin{\laphotgsisotoltgsisoij{1}{2}{\alaphotgiso}}{\,\sfunbisim}{\classltgsij{1}{2}}}{\sfunbisim}$
  is a complete lattice. 
  Furthermore note that 
  $\srestrictto{\sltgsisotolaphotgsisoij{1}{2}}{\succsofordin{\laphotgsisotoltgsisoij{1}{2}{\alaphotgiso}}{\sfunbisimsubscript}{\classltgsij{1}{2}}}$
  is a left-inverse of 
  $\srestrictto{\slaphotgsisotoltgsisoij{1}{2}}{\succsoford{\alaphotgiso}{\sfunbisimsubscript}}$
  due to Lemma~\ref{lem:left-inverse:restrictions:to:bisim:equivclasses}.
  Hence Lemma~\ref{lem:reflect:complete:lattice} is applicable
  to show that the complete lattice property of
  $\succsofordin{\laphotgsisotoltgsisoij{1}{2}{\alaphotgiso}}{\,\sfunbisim}{\classltgsij{1}{2}}$ with respect to $\sfunbisim$
  is reflected by
  $\srestrictto{\slaphotgsisotoltgsisoij{1}{2}}{\succsoford{\alaphotgiso}{\,\sfunbisimsubscript}}$,
  yielding that
  $\pair{\succsoford{\alaphotgiso}{\,\sfunbisim}}{\sfunbisim}$ 
  is a complete lattice.
\end{proof}

\section{Summary and Conclusion}
  \label{sec:conclusion}


We first defined higher-order term graph representations for cyclic \lambda-terms:
\begin{itemize}\itemizeprefs
  \item 
    \lambdahotg{s} in $\classlhotgsi{i}$,  
    an adaptation of Blom's `higher-order term graphs'~\cite{blom:2001},
    which possess a scope function that maps every abstraction vertex $\avert$ to the set of vertices that are in the scope of $\avert$. 
  \item 
    \lambdaaphotg{s} in $\classlaphotgsi{i}$,
    which instead of a scope function carry an \absprefix\ function 
    that assigns to every vertex $\bvert$ information about the scoping structure relevant for~$\bvert$. 
    Abstraction prefixes are closely related to the notion of `generated subterms' for \lambdaterms\ \cite{grab:roch:2012}.
    The correctness conditions here are simpler and more intuitive than for \lambdahotg{s}. 
\end{itemize}
These classes are defined for $i\in\setexp{0,1}$, according to whether variable occurrences have back-links
to abstractions (for $i=1$) or not (for $i=0$). 
Our main statements about these classes are:
\begin{itemize}\itemizeprefs
  \item 
    a bijective correspondence between $\classlhotgsi{i}$ and $\classlaphotgsi{i}$ 
    via mappings
    $\slhotgstolaphotgsi{i}$
    and
    $\slaphotgstolhotgsi{i}$
    that preserve and reflect the sharing order (Theorem~\ref{thm:corr:lhotgs:laphotgs});
 \item 
   the naive approach to implementing homomorphisms on theses classes 
   (ignoring all scoping information and using only the underlying first-order term graphs) 
   fails (Proposition~\ref{prop:forgetful}).
\end{itemize}
The latter was the motivation to consider first-order term graph implementations with scope delimiters:
\begin{itemize}\itemizeprefs
  \item 
    \lambdatg{s} in $\classltgsij{i}{j}$ 
    (with $i\in\setexp{0,1}$ and $j=2$ or $j=1$ for scope delimiter vertices with or without back-links, respectively),
    which are first-order term graphs without a higher-order concept, but for which correctness conditions are formulated via
    the existence of an \absprefix\ function. 
\end{itemize}
The most important results linking these classes with \lambdaaphotg{s} are:
\begin{itemize}\itemizeprefs
  \item 
    an `almost bijective' correspondence
    between the classes  $\classlaphotgsi{i}$ 
    and $\classltgsij{i}{j}$ 
    via mappings
    $\slaphotgstoltgsij{i}{j}$
    and
    $\sltgstolaphotgsij{i}{j}$
    that preserve and reflect the sharing order
    (Theorem~\ref{thm:corr:laphotgs:ltgs});
  \item 
    the subclass $\classeagltgsij{1}{2}$ of eager-scope \lambdatg{s} in $\classltgsij{1}{2}$
    is closed under homomorphism (Corollary~\ref{cor:closed:under:fun:bisim}).
\end{itemize}
The correspondences together with the closedness result allow us to derive
methods to handle homomorphisms between eager-scope higher-order term graphs in $\classlhotgsi{1}$ and $\classlaphotgsi{1}$
in a straightforward manner by implementing 
them via homomorphisms between first-order term graphs in $\classltgsij{1}{2}$. 


\begin{align*}  
\begin{gathered}[c]  
\begin{tikzpicture}[>=stealth]
\matrix[row sep=-5.35mm,column sep=1.2cm,ampersand replacement=\&]{
\node(tl){\phantom{I}};\&
\node(ml){\phantom{I}};\&
\node(bl){\phantom{I}};\\
\node(t){$\classeaglhotgsi{1}$};\&
\node(m){$\classeaglaphotgsi{1}$};\&
\node(b){$\classeagltgsij{1}{2}$};\\
\node(tr){\phantom{I}};\&
\node(mr){\phantom{I}};\&
\node(br){\phantom{I}};\\
};
\draw[->]($($(tl)!.5!(ml)$)!0.75cm!(tl)$) to node[above]{$\slhotgstolaphotgsi1$}  ($($(tl)!.5!(ml)$)!0.5cm!(ml)$);
\draw[->]($($(bl)!.5!(ml)$)!0.65cm!(ml)$) to node[above]{$\slaphotgstoltgsij12$}  ($($(bl)!.5!(ml)$)!0.6cm!(bl)$);
\draw[->]($($(br)!.5!(mr)$)!0.6cm!(br)$) to node[below]{$\sltgstolaphotgsij12$} ($($(br)!.5!(mr)$)!0.65cm!(mr)$);
\draw[->]($($(tr)!.5!(mr)$)!0.5cm!(mr)$) to node[below]{$\slaphotgstolhotgsi1$} ($($(tr)!.5!(mr)$)!0.75cm!(tr)$);
\end{tikzpicture}
\end{gathered}
& \hspace*{3.5ex} &
\begin{gathered}[c]
\begin{tikzpicture}[>=stealth]
\matrix[row sep=1.4cm,column sep=1.4cm,ampersand replacement=\&]{
\node(tl){$\alhotg$};\&
\node(ml){$\alaphotg'$};\&
\node(bl){$\altg$};\&\\
\node(tr){$\alhotgi{0}$};\&
\node(mr){$\alaphotgi{0}'$};\&
\node(br){$\altgi{0}$};\\
};
\draw[funbisim](tl) to node[right]{\scriptsize $\scollC{\classeaglhotgsi{1}}$}(tr);
\draw[funbisim](ml) to node[right]{\scriptsize $\scollC{\classeaglaphotgsi{1}} $}(mr);
\draw[funbisim](bl) to node[right]{\scriptsize $\scollC{\classeagltgsij{1}{2}} $}(br);
\draw[|->](tl) to node[above]{$\slhotgstolaphotgsi{1}$} (ml);
\draw[|->](ml) to node[above]{$\slaphotgstoltgsij{1}{2}$} (bl);
\draw[|->](br) to node[above]{$\sltgstolaphotgsij{1}{2}$} (mr);
\draw[|->](mr) to node[above]{$\slaphotgstolhotgsi{1}$} (tr);
\end{tikzpicture}
\end{gathered}
\end{align*}
For example, the property that a unique maximally shared form exists 
for \lambdatg{s} in $\classltgsij{1}{2}$ 
(which can be computed as the bisimulation collapse that is guaranteed to exist for first-order term graphs)
can now be transferred to eager-scope \lambdaaphotg{s} and \lambdahotg{s}
via the correspondence mappings (see the diagram above). 
For this to hold it is crucial that $\classeagltgsij{1}{2}$ is closed under homomorphism,
and that the correspondence mappings preserve and reflect the sharing order.  
Then the maximally shared form  
                                $\collC{\classeaglhotgsi{1}}{\alhotg}$ 
of an eager \lambdahotg\ $\alhotg$ can then be computed as:
\begin{center}
  $\collC{\classeaglhotgsi1}{\alhotg} =
   (\scompfuns{\slaphotgstolhotgsi1}{\scompfuns{\sltgstolaphotgsij12}{\scompfuns{\scollC{\classeagltgsij{1}{2}}}{\scompfuns{\slaphotgstoltgsij12}{\slhotgstolaphotgsi1}}}})(\alhotg)$.
\end{center}
where 
      $\scollC{\classeagltgsij{1}{2}}$ maps every \lambdatg\ in $\classltgsij{1}{2}$ to its bisimulation collapse. 
For obtaining 
              $\scollC{\classeagltgsij{1}{2}}\,$,
fast algorithms for computing the bisimulation collapse of first-order term graphs can be utilized.

While we have explained this result here only for term graphs with eager
scope-closure, the approach can be generalized to non-eager-scope term graphs as well
(see Remark~\ref{rem:generalization:to:non:eagscope}).
To this end scope delimiters have to be placed also subsequent to variable
vertices in some situations. Then variable occurrences do not implicitly close all open extended
scopes, but every extended scope that is open at some position must be
closed explicitly by scope delimiters on all (maximal) paths from that position.
The resulting graphs are fully back-linked, and then Theorem~\ref{thm:preserve:ltgs}
guarantees that the arising class of \lambdatg{s} is again closed under homomorphism.

For our original intent of getting a grip on maximal subterm sharing in the
\lambda-calculus with \stxtletrec\ or \mu, however, only eager scope-closure is
practically relevant, since it facilitates a higher degree of sharing.

Ultimately we expect that these results allow us to develop solid formalizations
and methods for subterm sharing in higher order languages with sharing
constructs. 

\vspace*{-1ex}
\paragraph{Acknowledgement.}
  We thank the reviewers of our submission for the workshop TERMGRAPH~2013,
  which led to the proceedings version \cite{grab:roch:2013:TERMGRAPH} on which this report is based,  
  for their encouraging comments, and for pointing out some inaccuracies in that submission.

\bibliography{ltgs}  

\end{document}